\newcommand{\refbubble}[1]{}
\let\accentvec\vec
\let\vec\accentvec
\let\originalleft\left
\let\originalright\right
\renewcommand{\left}{\mathopen{}\mathclose\bgroup\originalleft}
\renewcommand{\right}{\aftergroup\egroup\originalright}
\def\LC{\ensuremath{\textnormal{\textsf{LC}}}}
\def\LCY#1{\ensuremath{\textnormal{\textsf{LCY}}(#1)}}
\def\CS#1{\ensuremath{\textnormal{\textsf{CG}}(#1)}}
\def\Csmall{\ensuremath{\textnormal{\textsf{C}}_\textnormal{\textit{small}}}}
\def\LL{\ensuremath{\textnormal{\textsf{LL}}}}
\def\HT{\ensuremath{\textnormal{\textsf{HT}}}}
\def\MOS{\ensuremath{\textnormal{\textsf{MCOG}}}}
\def\SP{\ensuremath{\textnormal{\textsf{SP}}(L)}}
\def\RSP{\ensuremath{\textnormal{\textsf{RSP}}(L)}}
\def\PWT#1{\ensuremath{\textnormal{\textsf{PWT}}(#1)}}
\def\RWT#1{\ensuremath{\textnormal{\textsf{RWT}}(#1)}}
\def\MCWT#1{\ensuremath{\textnormal{\textsf{MCWT}}(#1)}}
\def\MC#1{\ensuremath{\textnormal{\textsf{MC}}(#1)}}
\def\WT#1{\ensuremath{\textnormal{\textsf{WT}}(#1)}}
\newcommand{\GG}{\mathcal{G}}
\newcommand{\RR}{\mathcal{Z}}
\newcommand{\HH}{\mathcal{H}}
\newcommand{\FF}{\mathcal{F}}
\newcommand{\rmTheta}{\mathrm{\Theta}}
\newcommand{\PAT}{P\v{a}tra\c{s}cu}
\newcommand{\ex}{\textrm{ex}}
\newcommand{\E}{\textnormal{E}}
\newcommand{\bG}{\ensuremath{G}}
\newcommand{\uG}{\ensuremath{G}}
\newcommand{\qed}{}
\def\url#1{{\tt #1}}
\newtheorem{claim}[theorem]{Claim}
\newtheorem{remark}[theorem]{Remark}
\DeclareMathOperator*{\argmin}{arg\,min}    
\title{A Simple Hash Class with Strong Randomness Properties in 
Graphs and Hypergraphs%
\thanks{This work was supported in part by DFG grant DI~412/10-1, DFG
grant DI~412/10-2, and by a Discovery Grant from the National Sciences and
Research Council of Canada (NSERC). Part of this work was done during a visit of
the second author to MPI Saarbr{\" u}cken, Germany, and during Dagstuhl seminars
07391, 08381, and 11121.
This work 
encompasses the results of the conference papers
\cite{DW2003a}, \cite{Woelfel06}, and---to some extent---\cite{AumullerDW12}, and adds
many new insights.}
}
\author{Martin Aum\"{u}ller\thanks{
IT University of Copenhagen, 2300 København, Denmark. ({\tt
maau@itu.dk})}\and
Martin
Dietzfelbinger\thanks{
Fakult\"{a}t f\"{u}r Informatik und Automatisierung, Technische
Universit\"{a}t Ilmenau, 98694 Ilmenau, Germany. ({\tt martin.dietzfelbinger@tu-ilmenau.de})} \and Philipp Woelfel\thanks{
Department of Computer Science,
University of Calgary, Calgary, Alberta T2N 1N4, Canada. ({\tt
woelfel@cpsc.ucalgary.ca})
}}
\begin{document}
\maketitle

\begin{abstract}
  We study randomness properties of graphs and hypergraphs generated by
  simple hash functions.  Several hashing applications can be analyzed by
  studying the structure of $d$-uniform random ($d$-partite) hypergraphs
  obtained from a set $S$ of $n$ keys and $d$ randomly chosen hash functions
  $h_1,\dots,h_d$ by associating each key $x\in S$  with a hyperedge
  $\{h_1(x),\dots, h_d(x)\}$.
  Often it is assumed that $h_1,\dots,h_d$ exhibit a high degree of independence.
  We present a simple construction of a hash class whose hash functions have
  small constant evaluation time and can be stored in sublinear space. We
  devise general techniques to analyze the randomness properties of the graphs and
  hypergraphs generated by these hash functions, and we show that they can
  replace other, less efficient constructions in cuckoo hashing (with and
  without stash), the simulation of a uniform hash function, the construction of 
  a perfect hash function, generalized cuckoo hashing
	and different load balancing scenarios.

\end{abstract}

\begin{keywords}
  Hashing, cuckoo hashing, randomized algorithms, random graphs, load balancing
\end{keywords}

\begin{AMS}   	
68P05, 68R10, 68W20, 05C80
\end{AMS}

 \section{Introduction}

We study randomness properties of graphs and hypergraphs generated by hash functions
of a particulary simple structure.
Consider a set $S$ of $n$ keys chosen from a finite set $U$, and a sequence 
$\vec{h} = (h_1,\dots,h_d)$, $d\geq2$, of random hash functions 
$h_i\colon U\rightarrow [m]=\{0,\dots,m-1\}$ for some positive integer $m$.
Then $S$ and $\vec{h}$ naturally define a $d$-partite $d$-uniform hypergraph $\uG(S,\vec{h}):=(V,E)$ with $V=V_{m,d}$, where
$V_{m,d}$ is the union of $d$ disjoint copies of $[m]$ and
$E=\bigl\{\bigl(h_1(x),\dots,h_d(x)\bigr)\mid x\in S\bigr\}$.

Properties of such (hyper-)graphs are essential in the analysis of a number of
randomized algorithms from a variety of applications, such as balanced
allocation \cite{SchickingerS00,stemann96}, shared memory and PRAM simulations
\cite{MadHSS96,KarpLH96}, perfect hashing \cite{FoxHCD92,MajewskiWHC96}, and
recent hashing based dictionaries \cite{FotakisPSS05,stash,cuckoo_hashing_pagh,Eppstein14,Khosla13}.

Often such algorithms are analyzed under the idealized \emph{uniform hashing assumption}, which says that every hash function $h$ employed by the algorithm is a truly random function.
For example, if all functions $h_1,\dots,h_d$ are truly random hash functions, then the graph $\uG(S,\vec{h})$ 
can be analyzed using the vast array of tools from random graph theory \cite{random_graphs}.
Since it is infeasible to store truly random functions, it is preferable to
use small sets of hash functions (called \emph{hash classes}), and sample
random hash functions from those sets.
These sets provide some weaker randomness guarantees.
Carter and Wegman \cite{carter_universal_hashing} defined a \emph{universal hash class} $\HH$ as one which guarantees that any two distinct keys $x,x'\in U$ are mapped by a random function $h\in \HH$ to the same function value only with a probability of $O(1/m)$.
A stronger notion is that of a \emph{$k$-wise independent} hash class $\HH$ \cite{WegmanC79}, which says that for any $k$ distinct keys $x_1,\dots,x_k$ and a random hash function $h\in \HH$ the vector $\bigl(h(x_1),\dots,h(x_k)\bigr)$ is uniformly distributed over $[m]^k$.  The canonical representation of a $k$-independent hash class
    is the class of all degree $k-1$ polynomials over some prime field. For
    the representation of such a polynomial, we just store its $k$ coefficients
    ($k$ words). The evaluation is possible in time $O(k)$.

Sometimes, ad-hoc analyses show that such limited randomness properties are sufficient for algorithms to exhibit the desired behavior.
For example, Pagh, Pagh, and Ru{\v z}i{\' c} \cite{pagh_lin_probing} showed that
closed hashing with linear probing works well using only 5-wise independence.
On the other hand, insufficient randomness can have subtle and unexpected
negative effects on some algorithms. For example, P{\v a}tra{\c s}cu and Thorup
\cite{PatrascuT15} showed that linear probing behaves badly when used with a certain 
artificial class of 4-wise
independent hash functions. In the same vein, it was experimentally observed in
\cite{cuckoo_hashing_pagh} and formally proved by Dietzfelbinger and Schellbach
in \cite{DS09b, DS09a} that cuckoo hashing using the multiplicative class of
hash functions from~\cite{DietzfelbingerHKP97}, although universal, 
does not work with high probability.  For many other applications, such as
    cuckoo hashing \cite{cuckoo_hashing_pagh} and $\varepsilon$-minwise independent hashing
    \cite{Indyk01}, we know that
    a logarithmic degree of independence suffices (in the size of the key set for the former, 
    in $1/\varepsilon$ for the latter). In that case, polynomials use
    logarithmic space and evaluation time. If one aims for constant evaluation
    time, there is the construction by Siegel \cite{Siegel04}---although 
    Siegel states that his construction has constant albeit
    impractical evaluation time---and, more recently, the more efficient
    constructions by Thorup \cite{Thorup13} and Christiani, Pagh, 
		and Thorup \cite{ChristianiPT15}.

Several techniques to circumvent the uniform hashing assumption have been proposed.
The most general one is to ``simulate'' uniform hashing.
The idea is to generate a class $\HH$ of hash functions at random such that 
for arbitrary given $S\subseteq U$ with
high probability $\HH$ is ``uniform'' on $S$, 
which means that a random
hash function $h\in \HH$ restricted to the domain $S$ is a truly random function.
Such a simulation was presented by Pagh and Pagh in
\cite{pagh_uniform_conference,pagh_uniform}, Dietzfelbinger and Rink \cite{DietzfelbingerR09}, and in the precursor work
\cite{DW2003a}. 
However, such simulations require at least a linear (in $|S|\cdot\log m$) number
of bits of additional space, which is often undesirable.

An alternative is the so-called \emph{split-and-share} technique \cite{FotakisPSS05,Dietzfelbinger07,DietzfelbingerR09}, in which $S$ is first partitioned by a top-level hash function into smaller sets of keys, called bins.
Then, a problem solution is computed for each bin, but all bins share the same hash functions.
Since the size of each bin is significantly smaller than the size of $S$, it is possible to use a hash function that behaves like a truly random function on each bin.
Finally, the problem solution of all bins is combined to a solution of the original problem.
This technique cannot be applied uniformly to all applications, as ad-hoc
algorithms depending on the application are required to merge the individual solutions for each bin to a solution of the original problem.
In some scenarios, e.g., balanced allocation with high loads, the small deviations in the bin sizes incurred by the top-level hash function are undesirable. Moreover, additional costs in space and time are caused by the top-level splitting hash function and by compensating for a larger failure probability in each of the smaller bins.

 Another perspective on uniform hashing is to assume that the key set $S = \{x_1, \ldots, x_n\}$ itself is ``sufficiently random''.
Specifically, Mitzenmacher and Vadhan showed in \cite{MitzenmacherV08} that
when the distribution that governs $\{x_1, \ldots,x_n\}$ has a low enough
collision probability, then even using a hash function $h$  from a $2$-wise
independent hash class $\mathcal{H}$
makes the sequence  $(h, h(x_1), \ldots, h(x_n))$ distributed close to the
uniform distribution on $\mathcal{H} \times R^n$ (see also
\cite{Dietzfelbinger12}).  

Besides these general techniques to circumvent the uniform hashing assumption, some research focuses on particular hash classes and their properties.
P{\v a}tra{\c s}cu and Thorup \cite{PatrascuT12} studied simple tabulation hashing, where each key is a tuple $(x_1,\dots,x_c)$ which is mapped to the hash value\footnote{$\oplus$ denotes the bit-wise XOR operation} $\bigl(f_1(x_1)\oplus\dots\oplus f_c(x_c)\bigr)\bmod m$   by $c$ uniform random hash functions $f_1,\dots,f_c$, each with a domain of cardinality $\lceil|U|^{1/c}\rceil$.
The authors showed that simple tabulation hashing has striking randomness
properties, and several applications (for example cuckoo hashing) exhibit good
behavior if such hash functions are used. One year later, the same authors
    introduced ``twisted tabulation hashing'' \cite{PatrascuT13}, which gives
    even stronger randomness properties in many applications. Furthermore, Dahlgaard and
    Thorup proved that twisted tabulation is $\varepsilon$-minwise independent
    \cite{DahlgaardT14}. Very recently, 
    Dahlgaard, Knudsen, Rotenberg, and Thorup 
    extended the use of simple tabulation hashing to load
    balancing \cite{DahlgaardKRT16}, showing that simple tabulation suffices for
    sequential load balancing with two choices. Simple tabulation hashing provides
    constant evaluation time with a description
    length that is polynomial (with exponent smaller than 1) in the size of the key set,
		just as with the hash functions studied in the present paper. 
		Each application of tabulation hashing requires its own analysis. 
	There are other approaches that trade higher evaluation time for 
    smaller description length. For example, 
    Reingold, Rothblum, and Wieder \cite{ReingoldRW14} showed that a class of hash
    functions introduced by Celis \emph{et al.} \cite{CelisRSW13} has strong
    enough randomness properties for running a slightly modified version of
    cuckoo hashing and sequential load balancing with two choices. While the
    hash class has non-constant evaluation time, its description length is
    notably smaller than what one gets using the standard polynomial approach
    for $\log n$-wise independence ($O(\log n \log \log n)$ vs. $O(\log^2 n)$
    bits) or tabulation hashing. 

    \paragraph{The Contribution} In this paper we focus on the properties of random graphs $\uG(S,\vec{h})$ generated by
simple hash functions. These hash functions have been described before by Aumüller, Dietzfelbinger, and Woelfel in \cite{AumullerDW14}. 
A function from their hash class, called $\RR$, combines simple $k$-independent hash functions with lookups in random tables. It can be evaluated efficiently in constant time, using a few arithmetic operations and table lookups.
Each hash function can be stored in sublinear space, more precisely using $O(n^\gamma)$ bits for some $\gamma<1$.
    To put our contribution in perspective,
    we first review some background. Building upon the
    work of Dietzfelbinger and Meyer auf der Heide \cite{DietzfelbingerH90},
    Aumüller, Dietzfelbinger, and Woelfel \cite{AumullerDW14} showed that hash functions from class $\RR$ 
		have randomness properties strong enough to run 
    cuckoo hashing with a stash with guarantees only known for fully random hash
    functions. To prove this result, Aumüller, Dietzfelbinger, and
    Woelfel studied the randomness properties of $G(S, h_1, h_2)$ when 
		the hash function pair $(h_1,h_2)$ is chosen randomly from $\RR$. They showed that the connected
    components of this graph behave, in some technical sense, very close to what
    is expected of the graph $G(S, h_1, h_2)$ when $(h_1, h_2)$ is fully random.
    
    Our contribution is that we provide a 
    general framework that allows us to analyze applications whose analysis 
    is based on arguments on the random graph described above when hash functions from $\RR$ are used
    instead of fully random hash functions. To argue that
    the hash class can run a certain application, only random graph theory
    is applied, no details of the actual hash class need to be considered.
    Using this framework, we show 
    that hash functions from $\RR$ 
    have randomness properties strong enough for many
    different applications, e.g., cuckoo hashing with a stash as described by
    Kirsch, Mitzenmacher, and Wieder in 
    \cite{stash:journal:09}; generalized cuckoo hashing as proposed by Fotakis,
    Pagh, Sanders, and Spirakis in 
     \cite{FotakisPSS05} with two recently discovered insertion
    algorithms due to Khosla \cite{Khosla13} and Eppstein, Goodrich, Mitzenmacher and Pszona 
    \cite{Eppstein14} (in a sparse setting); the construction of a perfect hash function of Botelho,
    Pagh and Ziviani
    \cite{BotelhoPZ13}; the simulation of a uniform hash function of Pagh and Pagh \cite{pagh_uniform};
    different types of load balancing as
    studied by Schickinger and Steger \cite{SchickingerS00}. 
    The analysis is done in a unified way which
    we hope will be of independent interest. We will find sufficient conditions under which it is possible
    to replace the full randomness assumption of a sequence of hash functions
    with explicit hash functions.  

    \paragraph{The General Idea} 

The analysis of hashing applications is often concerned with bounding (from
above) the probability that random hash functions $h_1,\dots,h_d$ map a given
set $S\subseteq U$ of keys to some ``bad'' configuration of hash function values.
These undesirable events can often be described by certain properties exhibited
by the random graph $\uG(S,\vec{h})$. (Recall the notation $\vec{h}=(h_1,\dots,h_d)$.)
For example, in cuckoo hashing a bad event occurs
when $\bG(S,h_1,h_2)$ contains a very long simple path or a connected
component with at least two cycles \cite{cuckoo_hashing_pagh,devroye}.

If $h_1,\dots,h_d$ are uniform hash functions, often a technique called
\emph{first moment method} (see, e.g., \cite{random_graphs}) is employed to
bound the probability of undesired events: In the standard analysis, one
calculates the expectation of the random variable $X$ that counts the number of
subsets $T\subseteq S$ such that the subgraph $\uG\bigl(T,\vec{h}\bigr)$ forms a
``bad'' substructure, as e.g., a connected component with two or more cycles.
This is done by summing  the probability that the
subgraph $\uG\bigl(T,\vec{h}\bigr)$ forms a ``bad'' substructure over all subsets $T\subseteq S$. One then shows that
$\E(X) = O(n^{-\alpha})$ for some $\alpha > 0$ and concludes that $\Pr(X >
0)$---the probability that an undesired event happens---is at most
$O(n^{-\alpha})$ by Markov's inequality. 

We state sufficient conditions allowing us to replace uniform hash
functions $h_1,\dots,h_d$ with hash function sequences from $\RR$ without
significantly changing the probability of the occurrence of certain undesired
substructures  $\uG\bigl(T,\vec{h}\bigr)$.  On a high level, the idea is  
as follows: We assume that for each $T \subseteq
U$ we can split $\RR$ into two disjoint parts: hash function sequences being
$T$-\emph{good}, and hash function sequences being $T$-\emph{bad}. Choosing
$\vec{h}=(h_1,\dots,h_d)$ at random from the set of $T$-good hash functions
ensures that the hash values $h_i(x)$ with $x\in T$ and $1 \leq i \leq d$ are
distributed fully randomly.
Fix some set $S \subseteq U$. We
identify some ``exception set'' $B_S \subseteq \RR$ (intended to be very small)
such that for all $T \subseteq S$ we have: If $G(T,\vec{h})$ has an undesired
property (e.g., a connected component with two or more cycles) and $\vec{h}$ is
$T$-bad, then $\vec{h} \in B_{S}$.

For $T \subseteq S$, disregarding the hash functions from $B_{S}$ will allow
us to calculate the probability that $G(T,\vec{h})$ has an undesired property as if
$\vec{h}$ were a sequence of fully random hash functions. 
Specifically, in \cite[Lemma 2]{AumullerDW14} it was already shown that 
$$\Pr\nolimits_{\vec{h} \in \RR}(X > 0) \leq \E(X) + \Pr\nolimits_{\vec{h} \in \RR}(B_S),$$
where the expectation is calculated assuming that $\vec{h}$ is a pair
of fully random hash functions. So, it is critical to find subsets $B_{S}$ of sufficiently small probability.
Whether or not this is possible depends on the substructures we are interested in. Here, we deviate from \cite{AumullerDW14} and provide general criteria that allow us to bound the size of $B_{S}$ from
above entirely by using graph theory. This means that details about the
hash function construction need not be known to argue that random hash functions from
$\RR$ can be used in place of uniform random hash functions for certain
applications.

\paragraph{Outline and Suggestions} Section~\ref{hashing:sec:basics} introduces
the considered class $\RR$ of hash functions and provides the general framework of our
analysis. Because of its abstract nature, the details of the framework might be
hard to understand. A simple application of the framework is 
provided in Section~\ref{hashing:sec:example:c:h}. There, we will
discuss the use of hash class $\RR$ in static cuckoo hashing.  
The reader might find it helpful to study the example first to get 
a feeling of how the framework is applied. 
Another way to approach the framework is to first read the
paper \cite{AumullerDW14}. This paper discusses one example of the framework with
an application-specific focus, which might be easier to understand. 

The following sections deal with applications of the hash function construction. 
Because these applications are quite diverse, the background of each one will be
provided in the respective subsection right before the analysis.

Section~\ref{hashing:sec:applications:simple:graphs}  deals with randomness
properties of $\RR$ on (multi-)graphs.  Here,
Subsection~\ref{hashing:sec:leafless} provides some groundwork for bounding the
impact of using $\RR$ in our applications.
The subsequent subsections discuss the use of $\RR$
in cuckoo hashing (with a stash), the simulation of a uniform hash function,
the construction of a perfect hash function, and the behavior of $\RR$ on 
connected components of $G(S, h_1, h_2)$. 

The next section (Section~\ref{hashing:sec:applications:hypergraphs}) deals with
applications whose analysis builds upon hypergraphs. As an introduction, we
study generalized cuckoo hashing with $d \geq 3$ hash functions when
the hash table load is low. Then, we consider two
recently described alternative insertion algorithms for generalized cuckoo hashing. Finally, we
prove that hash class $\RR$  provides randomness properties strong enough for
many different load balancing schemes.

In Section~\ref{hashing:sec:generalization} we show how our analysis generalizes
to the case that we use more involved hash functions as building blocks of hash
class $\RR$, which lowers the total number of hash functions needed and the
space consumption.

\section{Basic Setup and Groundwork}\label{hashing:sec:basics}
Let $U$ and $R$ be two finite sets with $1 < |R| \leq |U|$.
A \emph{hash function with range} $R$ is a mapping from $U$ to $R$. In our applications,
a hash function
is applied on some key set $S \subseteq U$ with $|S| = n$. Furthermore, the range of
the hash function is the set $[m] = \{0, \ldots, m - 1\}$ where often $m = \Theta(n)$.
In measuring space, we always assume that $\log |U|$ is a term so small
that it vanishes in big-Oh notation when compared with terms depending on $n$.
If this is not the case, one first applies
a hash function to collapse the universe to some size polynomial in $n$~\cite{Siegel04}.
We say that a pair $x,y \in U, x \neq y$ \emph{collides under a hash function} $g$ if $g(x)
= g(y)$.

The term \emph{universal hashing}, introduced by Carter and Wegman in
\cite{carter_universal_hashing}, refers to the technique of
choosing a hash function  at random from a \emph{hash class}
$\mathcal{H}_m \subseteq \{h \mid h\colon U \rightarrow [m]\}$.

\begin{definition}[\cite{carter_universal_hashing,CarterW79}]
    For a constant $c \geq 1$,
    a hash class $\HH$ with functions from $U$ to $[m]$ is called
$c$\emph{-universal} if for an arbitrary distinct pair of keys $x, y \in U$ we
have $$\Pr\nolimits_{h \in \HH}\bigl(h(x) = h(y)\bigr) \leq c/m.$$
\end{definition}%
In our constructions we will use $2$-universal classes of hash functions.
Examples for
$c$-universal hash classes can be found for example in
\cite{carter_universal_hashing,DietzfelbingerHKP97,Woelfel99}.
In the following, $\FF^c_m$
denotes an arbitrary $c$-universal hash class with domain $U$ and range $[m]$.

\begin{definition}[\cite{WegmanC79,WegmanC81}]
For an integer $\kappa\ge2$, a hash class $\HH$ with functions from $U$ to  $[m]$ is called a
$\kappa$\emph{-wise independent} hash class if for arbitrary distinct keys
$x_1,\ldots,x_\kappa\in U$ and for arbitrary $j_1,\ldots,j_\kappa\in [m]$ we have
$$\Pr\nolimits_{h\in\HH}\bigl(h(x_1)=j_1 \wedge \ldots \wedge h(x_\kappa)=j_\kappa\bigr) =
{1}/{m^\kappa}.$$
\end{definition}%
In other terms, choosing a hash function uniformly at random
from a $\kappa$-wise independent class of hash functions guarantees that the hash
values $h(x)$ are uniform in $[m]$ and that
that the hash values of an arbitrary set of at most
$\kappa$ keys are independent.  The classical construction of a $\kappa$-wise
independent hash class is based on polynomials of degree
$\kappa-1$ over a finite field~\cite{WegmanC79}. Another approach
is to use tabulation-based hashing, see \cite{ThorupZ12,KW2012a,PatrascuT12} for constructions using this approach.
Tabulation-based constructions
are often much faster  in
practice than polynomial-based hashing (\emph{cf.} \cite{ThorupZ12}) at the cost
of using slightly more memory. Throughout this work, $\HH^\kappa_m$ denotes an
arbitrary $\kappa$-wise independent hash class with domain $U$ and range $[m]$.

We remark that Section~\ref{sec:hash:class} and Section~\ref{sec:graph:properties} are quite natural generalizations of the basic definitions and observations made in \cite{AumullerDW14}
for pairs of hash functions. We give a full account for the convience of the reader and to provide consistent notation.

\subsection{The Hash Class}\label{sec:hash:class}
The hash class presented in this work draws ideas from many different papers. So, we first
give a detailed overview of related work and key concepts.

Building upon the work
on $k$-independent hash classes and two-level hashing strategies, e.g.,
the FKS-scheme of Fredman \emph{et al.} \cite{FredmanKS84},
Dietzfelbinger and Meyer
auf der Heide studied in \cite{DietzfelbingerH90,DietzfelbingerH92} randomness properties
of hash functions from $U$ to $[m]$ constructed in the following way: For given $k_1, k_2, m, n \geq 2$,
and $\delta$ with $0 < \delta < 1$, set $\ell = n^\delta$.
Let $f\colon U \to [m]$ be chosen from a $k_1$-wise independent hash class, and
let $g\colon U \to [\ell]$ be chosen from a $k_2$-wise independent hash class. Fill
a table $z[1..\ell]$ with random values from $[m]$. Given a key $x$, the hash function is evaluated
as follows:
\begin{align*}
    h(x) = f(x) + z[g(x)] \mod m.
\end{align*}
For $m = n$, the hash class of \cite{DietzfelbingerH90} had many randomness
properties that were only known to hold for fully random hash functions: When
throwing $n$ balls into $n$ bins, where each candidate bin is chosen by
``applying the hash function to the ball'', the expected maximum bin load is
$O(\log n/\log \log n)$, and conditioned on a ``good event'' that occurs
with probability $1-\frac{1}{\text{poly}(n)}$ the probability that a bin contains $i \geq 1$
balls decreases exponentially with $i$. Other explicit hash classes that share
this property were discovered by {\PAT} and Thorup \cite{PatrascuT12}
and Celis \emph{et al.} \cite{CelisRSW13} only about two decades later.

Our work studies randomness properties of the same hash class as
considered in Aumüller, Dietzfelbinger, and Woelfel \cite{AumullerDW14}.
For the convience of the reader we define this class
$\RR$ next. It is a generalization of the hash class proposed in \cite{DietzfelbingerH90}, modified so as to
obtain pairs $(h_1, h_2)$ of hash functions.
One could choose two $f$-functions (from a $k_1$-wise independent class), two $z$-tables, but \emph{only one}
$g$-function (from a $k_2$-wise independent class) that is shared among $h_1$ and $h_2$.
In~\cite{AumullerDW14}, this idea is further generalized so that
for a given $c \geq 1$ one uses $2c$ $z$-tables and $c$ $g$-functions.



We restrict the $f$-functions and the $g$-functions to be from very simple,
2-wise independent and 2-universal hash classes, respectively. This modification has two
effects: it simplifies the analysis and it seems to yield faster hash functions
in practice (see \cite[Section 7]{AumullerDW14}).

\begin{definition}
    Let $c\ge1$ and $d\ge2$.  For integers $m$, $\ell\ge 1$, and given
    $f_1,\ldots,f_d\colon U\to [m]$, $g_1,\ldots,g_c\colon U \to [\ell]$, and
    $d$ two-dimensional tables $z^{(i)}[1..c,0..\ell-1]$ with elements from $[m]$ for
    $i\in\{1,\ldots,d\}$, we let
    $\vec{h} = (h_1,\ldots,h_d) = (h_1,\ldots,h_d)\langle
    f_1, \allowbreak\ldots, f_d,g_1,\ldots,g_c,z^{(1)},\ldots, z^{(d)}\rangle$,
    where $$ {h_i(x) = \Bigl(f_i(x) +\sum_{1\le j \le c}
    z^{(i)}[j, g_j(x)]\Bigr) \bmod m\text{, for }x\in U, i\in\{1,\ldots,d\}.} $$

    \smallskip
    \noindent Let $\FF^2_\ell$ be an arbitrary two-universal class of hash functions from $U$ to $[\ell]$,
    and let $\HH^2_m$ be an arbitrary two-wise independent hash class from $U$ to $[m]$.
    Then $\RR^{c,d}_{\ell,m}(\FF^2_\ell, \GG^2_m)$ is the class of all sequences $(h_1,\ldots,h_d)\langle f_1, \ldots, f_d,
    g_1, \ldots, g_c, z^{(1)}, \ldots, z^{(d)}\rangle$ for $f_i \in \HH^2_m$ with $1 \leq i \leq d$ and
    $g_j \in \FF^2_\ell$ with $1 \leq j \leq c$.
    \label{def:family:R}
\end{definition}

If arbitrary
$\kappa$-wise independent hash classes are used as building blocks for the
functions $f_i$, for $1 \leq i \leq d$, and $g_j$, for $1 \leq j \leq c$,
one obtains the construction from~\cite{AumullerDW14}. However,
the simpler hash functions are much easier to deal with in the proofs of
this section. We defer the discussion of the general situation
to Section~\ref{hashing:sec:generalization}.

While this is not reflected in the notation, we consider $(h_1,\ldots,h_d)$ as a
structure from which
the components $g_1,\ldots,g_c$ and $f_i,z^{(i)}$, $i\in\{1,\ldots, d\}$, can be read off again.
It is class $\RR = \RR^{c,d}_{\ell,m}(\FF^2_\ell, \GG^2_m)$ for some $c \geq 1$ and $d \geq 2$, made
into a probability space by the uniform distribution, that we will study in the
following.
We usually assume that $c$ and $d$ are fixed and that $m$ and $\ell$ are known. Also,
the hash classes $\FF^2_\ell$ and $\GG^2_m$ are arbitrary (if providing
the necessary degree of universality or independence) and will not be mentioned explicitly below.

We will now discuss some randomness properties of hash class $\RR$. The central
lemma is identical to \cite[Lemma 1]{AumullerDW14}, but the proof is notably simpler because
of the restriction to simpler hash functions as building blocks. 

\begin{definition}
For $T \subseteq U$, define the random variable
$d_T$, the ``deficiency'' of $\vec{h} = (h_1,\ldots,h_d)$ with respect to $T$, by
$d_T(\vec{h})=|T| - \max\{|g_1(T)|,\ldots, |g_c(T)|\}$.
Further, let
\begin{enumerate}
    \item[\textnormal{(i)}] \mbox{\emph{$\text{bad}_T$} be the event that $d_T >
	1$\emph{;}}
    \item[{\textnormal{(ii)}}] \emph{$\text{good}_T$}{} be \emph{$\overline{\text{bad}_T}$},
	           i.e., the event that $d_T \le 1$\emph{;}
    \item[{\textnormal{(iii)}}] \emph{$\text{crit}_T$} be the event that $d_T =
	1$.
\end{enumerate}
Hash function sequences $(h_1,\ldots,h_d)$ in these events are called
\emph{``$T$-bad''}, \emph{``$T$-good''}, and \emph{``$T$-critical''}, respectively.\label{def:T:bad}
\end{definition}

It will turn out that if at least one of the functions $g_j$ is injective on a
set $T \subseteq U$,
then all hash values
on $T$
are independent. The deficiency $d_T$ of a sequence $\vec{h}$ of hash functions
measures how far away the hash function sequence is from this ``ideal'' situation.
 If $\vec{h}$ is $T$-bad,
then for each component $g_j$ there are at least two ``collisions'' on $T$,
i.e., there are at least two distinct pairs of keys from $T$ that collide. If $\vec{h}$ is
$T$-good, then there exists a $g_j$-component with at most one collision on $T$.
A hash function $\vec{h}$ is $T$-critical if for all
functions $g_j$ there is at least one collision and there exists at least one
function $g_j$ such that $g_j$ has exactly one collision on $T$. Note that the deficiency only
depends on the $g_j$-components of a hash function. In the following, we will first fix
these $g_j$-components when choosing a hash function. If $d_T(\vec{h}) \leq 1$ then the
unfixed parts of the hash function, i.e., the entries in the tables
$z^{(i)}$ and the $f$-functions,  are sufficient to guarantee strong randomness
properties of the hash function on $T$.

Our framework will build on the randomness properties of hash class $\RR$ that
are summarized in the next lemma. It comes in two parts. The first part
makes the role of the deficiency of a hash function sequence from $\RR$ precise,
as described above. The second part states that for a fixed set $T \subseteq S$
three parameters govern the probability of the events $\text{crit}_T$ or
$\text{bad}_T$ to occur: The size of $T$, the range $[\ell]$ of the
$g$-functions, and their number. To be precise, this probability is at most
$(|T|^2 / \ell)^c$, which yields two consequences.  When $|T|$ is much
smaller than $\ell$, the factor $1/\ell^c$ will make the probability of a hash
function behaving badly on a small key set vanishingly small. But when $|T|^2$ is
larger than $\ell$, the influence of the failure term of the hash class is
significant. We will see later how to tackle this problem.

\begin{lemma}
  Assume $d\ge2$ and $c\ge 1$.
  For $T\subseteq U$ the following holds\emph{:}
  \begin{enumerate}
      \item[\textnormal{(a)}] Conditioned on {$\textnormal{good}_T$} {(}or on {$\textnormal{crit}_T$}{)},
  the hash values  $(h_1(x), \ldots, h_d(x))$, $x\in T$, are
  distributed uniformly and independently in $[m]^d$.
      \item[\textnormal{(b)}] {$\Pr(\textnormal{bad}_T\cup\textnormal{crit}_T) \le \bigl(\left|T\right|^{2}/\ell\bigr)^c$}.
  \end{enumerate}\label{lem:random}
\end{lemma}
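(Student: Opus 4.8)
The plan is to prove part (a) by conditioning on the $g$-components of $\vec h$ and exploiting the $2$-wise independence of the $f_i$. Suppose $\vec h \in \textnormal{good}_T$, so there is some index $j_0 \in \{1,\dots,c\}$ with $|g_{j_0}(T)| \ge |T| - 1$. First I would fix all the $g$-functions $g_1,\dots,g_c$ (this determines whether we are in $\textnormal{good}_T$) and all table entries $z^{(i)}[j,\cdot]$ for $j \neq j_0$, leaving only the $f_i$ and the column $z^{(i)}[j_0,\cdot]$ random. For each $i$, write $h_i(x) = f_i(x) + z^{(i)}[j_0,g_{j_0}(x)] + (\text{fixed offset depending on }x) \bmod m$. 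If $g_{j_0}$ is injective on $T$, the values $z^{(i)}[j_0,g_{j_0}(x)]$, $x \in T$, are fresh independent uniform elements of $[m]$, so each $h_i$ restricted to $T$ is already uniform and independent over $T$, and the $d$ components are independent because they use disjoint tables $z^{(1)},\dots,z^{(d)}$; the $f_i$ are not even needed here. If $g_{j_0}$ has exactly one collision on $T$ (the critical case, $d_T = 1$), there is exactly one pair $x,x' \in T$ with $g_{j_0}(x) = g_{j_0}(x')$; now $z^{(i)}[j_0,g_{j_0}(\cdot)]$ supplies independent uniform randomness to all keys of $T$ except that $x$ and $x'$ receive the \emph{same} table value. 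Here the $2$-wise independence of $f_i$ rescues us: $f_i(x)$ and $f_i(x')$ are independent and uniform, so $h_i(x)$ and $h_i(x')$ are independent and uniform even though they share a $z$-entry, and for keys in $T \setminus \{x,x'\}$ the $z$-entries already give full independence. Summing over the (uniformly weighted) choices of the fixed $g$-parts that lie in the conditioning event preserves the conclusion.

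For part (b), I would bound $\Pr(d_T \ge 1)$. The event $\textnormal{bad}_T \cup \textnormal{crit}_T$ is exactly $\{d_T \ge 1\} = \{|g_j(T)| \le |T|-1 \text{ for all } j\}$, i.e., every $g_j$ has at least one collision on $T$. Since $g_1,\dots,g_c$ are chosen independently (they are $c$ independent draws from $\FF^2_\ell$), it suffices to bound $\Pr(g_j \text{ has a collision on } T)$ for a single $j$ and raise to the power $c$. By a union bound over the at most $\binom{|T|}{2} \le |T|^2/2$ pairs and the $2$-universality of $\FF^2_\ell$, $\Pr(g_j \text{ has a collision on } T) \le \binom{|T|}{2}\cdot \frac{2}{\ell} \le |T|^2/\ell$ (using $c=2$ in the definition of $2$-universality; a cleaner bound $\binom{|T|}{2}/\ell \le |T|^2/(2\ell)$ also works, and either suffices). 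Multiplying the $c$ independent factors gives $\Pr(\textnormal{bad}_T \cup \textnormal{crit}_T) \le (|T|^2/\ell)^c$.

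The main obstacle is the bookkeeping in part (a) for the critical case: one must be careful that the single $z$-column being exposed together with the $f_i$ really does deliver joint uniformity of the \emph{whole} vector $(h_i(x))_{x \in T, 1 \le i \le d}$, not just pairwise, and that conditioning on $\textnormal{good}_T$ (a property of the $g$'s alone) does not secretly bias the $f_i$ or the exposed $z$-column. The clean way to handle this is the standard "principle of deferred decisions" argument: condition on the entire $g$-part (which fixes the event and fixes $j_0$), note that the remaining randomness $\big(f_1,\dots,f_d, z^{(1)}[j_0,\cdot],\dots,z^{(d)}[j_0,\cdot], \text{other } z\text{-columns}\big)$ is independent of that conditioning, then for each fixed such $g$-part show the hash vector on $T$ is uniform on $[m]^{d|T|}$; since this holds for every $g$-part in the event, it holds after averaging. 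The rest is the routine computation sketched above; I would not grind through the indexing in the plan.
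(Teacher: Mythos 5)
Your proposal is correct and follows essentially the same route as the paper's proof: for (a), fix the $g$-part (and the $z$-columns other than $j_0$), handle the unique colliding pair via the $2$-wise independence of the $f_i$, and use the fresh, distinct cells of column $j_0$ for the remaining keys on which $g_{j_0}$ is injective; for (b), the identical union bound over pairs combined with the independence of the $c$ functions $g_j$. No gaps.
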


\begin{proof}
    Part (a): If $|T| \leq 2$, then $h_1,\dots,h_d$ are fully random on $T$
    simply because $f_1,\dots,f_d$ are drawn independently from a $2$-wise
    independent hash class. So suppose $|T| > 2$. First, fix an arbitrary
    $g$-part of $(h_1,\ldots,h_d)$ so that $\text{crit}_T$ occurs. (The
    statement follows analogously for $\text{good}_T$.) Let $j_0 \in
    \{1,\ldots,c\}$ be such that there occurs exactly one collision of keys in
    $T$ using $g_{j_0}$. Let $x,y \in T$, $x \neq y$, be this pair of keys (i.e.,
    $g_{j_0}(x) = g_{j_0}(y)$).  Arbitrarily fix all values in the tables
    $z^{(i)}[j, k]$ with $i \in \{1, \ldots, d\}$, $j \neq j_0$, and $0 \leq
    k \leq \ell - 1$. Furthermore, fix $z^{(i)}[j_0, g_{j_0}(x)]$ with $i \in
    \{1,\dots,d\}$.
    The hash functions $(h_1,\ldots,h_d)$ are fully random on $x$ and $y$ since $f_1,\dots,f_d$ are
    $2$-wise independent. Furthermore, the function $g_{j_0}$ is injective on
    $T - \{x, y\}$ and for each $x' \in (T - \{x, y\})$  the table
    cell $z^{(i)}[j_0, g_{j_0}(x')]$
    is yet unfixed, for $i \in \{1, \ldots, d\}$. Thus, the hash values
    $h_1(x'),\ldots,h_d(x')$, $x' \in T - \{x,y\}$, are distributed fully randomly
    and are independent of the hash values of $x$ and $y$.

    Part (b): Assume $|T| \geq 2$. (Otherwise the events $\text{crit}_T$
    or $\text{bad}_T$ cannot occur.) Suppose $\text{crit}_T$ (or $\text{bad}_T$)
    is true. Then for each component $g_i$, $1 \leq i \leq c$, there are keys
    $x,y \in T$, $x \neq y$, such that $g_i(x) = g_i(y)$. Since $g_i$ is chosen
    uniformly at random from a $2$-universal hash class, the probability
    that such a pair exists is at most $\binom{|T|}{2} \cdot 2/
    \ell\leq |T|^2/\ell$.
    Since all $g_i$-components are chosen independently, the statement follows.\qquad
\end{proof}

\subsection{Graph Properties and the Hash Class}\label{sec:graph:properties}
Here we describe how (hyper)graphs are built from a set of keys and a tuple
of hash functions. The notation is superficially different from \cite{AumullerDW14}, the
proof of the central lemma is identical.

We assume that the notion of a simple bipartite multigraph is known to the reader.
A nice introduction to graph theory is given by Diestel \cite{diestel}. We also
consider \emph{hypergraphs} $(V,E)$, which extend the notion of a graph by allowing edges
to consist of more than two vertices, i.e., the elements of $E$ are subsets of $V$
of size 2 or larger. For an integer $d \geq 2$,
a hypergraph is called \emph{$d$-uniform}
if each edge contains exactly $d$ vertices. It is called \emph{$d$-partite} if $V$ can
be split into $d$ sets $V_1, \ldots, V_d$ such that no edge contains
two vertices from the same class. A hypergraph $(V', E')$ is a \emph{subgraph}
of a hypergraph $(V, E)$ if $V' \subseteq V$ and if 
there is a one-to-one mapping $\varrho\colon E' \to E$ such that
$e'\subseteq \varrho(e')$ for each $e'\in E'$. 
(The reader should be aware that while this definition gives the usual
subgraphs for graphs, in the hypergraph setting it differs from
the standard notion of a subhypergraph. To obtain a subgraph of a hypergraph in our sense,
we may remove nodes and whole edges, but also delete nodes from single edges.) 
More notation for graphs and hypergraphs will be
provided in Section~\ref{hashing:sec:example:c:h} and
Section~\ref{hashing:sec:applications:hypergraphs}, respectively.

We build graphs and hypergraphs from a set of keys $S = \{x_1,\ldots,x_n\}$ and a sequence
of hash functions $\vec{h} = (h_1,\ldots,h_d)$, where $h_i\colon U \rightarrow [m]$ for $1\leq i \leq d$, in the following
way: The $d$-partite hypergraph $G(S, \vec{h}) = (V,E)$ has $d$ copies of $[m]$ as
vertex set and edge set $E = \{(h_1(x),\ldots,h_d(x)) \mid x \in S\}$.\footnote{In this paper,
whenever we refer to a graph or a hypergraph we mean a multi-graph or multi-hypergraph, i.e.,
the edge set is a multiset.
We also use the words ``graph'' and ``hypergraph'' synonymously in this section.
Finally, note that
our edges are tuples instead of sets to avoid problems with regard to the fact that
the hash functions use the same range. The tuple notation $(j_1, \ldots, j_d)$ for edges
is to be read as follows: $j_1$ is a vertex in the first copy of $[m]$, $\ldots$,
$j_d$ is a vertex in the $d$-th copy of $[m]$.}
Also, the edge
$(h_1(x_i),\ldots,h_d(x_i))$ is labeled ``$i$''.%
\footnote{We assume (w.l.o.g.) that the universe $U$ is ordered and that each set $S \subseteq
U$ of $n$ keys is represented as $S= \{x_1,\dots,x_n\}$ with $x_1 < x_2 < \dots
< x_n$.}   Since keys correspond to edges, the graph $G(S, \vec{h})$ has $n$ edges
and $d \cdot m$ vertices, which is the standard notation from a data structure point
of view, but is non-standard in graph theory. For a set $S$ and an
edge-labeled graph $G$, we let $T(G) = \{x_i \mid x_i \in S,
\text{$G$ contains an edge labeled $i$}\}$.

In the following, our main objective is to prove that with high probability certain subgraphs
do not occur in $G(S,\vec{h})$. Formally, for $n, m, d \in \mathbb{N}$, $d \geq 2$,
let $\mathcal{G}^d_{m, n}$ denote the set of all $d$-partite
hypergraphs with vertex set $[m]$ in each class of the
partition whose edges are labeled with distinct labels from $\{1,\ldots,n\}$.
A set $\mathsf{A} \subseteq \mathcal{G}^d_{m, n}$ is called a
\emph{graph property}. If for a graph $G$ we have that $G \in \mathsf{A}$, we say that
\emph{$G$ has property $\mathsf{A}$}. We shall always disregard isolated vertices.

For a key set $S$ of size $n$, a sequence $\vec{h}$ of
hash functions from $\RR$, and a graph property $\mathsf{A} \subseteq
\mathcal{G}^d_{m,n}$, we define the following random variables: For each $G \in \mathsf{A}$,
let $I_G$ be the indicator random variable that indicates whether $G$ is a subgraph
of $G(S, \vec{h})$ or not. (We demand the edge labels to coincide.) Furthermore, the
random variable $N^\mathsf{A}_S$ counts the number of graphs $G \in \mathsf{A}$ which
are subgraphs of $G(S, \vec{h})$, i.e., $N^{\mathsf{A}}_S = \sum_{G \in \mathsf{A}} I_G$.

Let $\mathsf{A}$ be a graph property.
Our main objective is then to estimate (from below) the probability that no subgraph
of $G(S,\vec{h})$ has property $\mathsf{A}$. Formally, for given $S \subseteq U$
we wish to bound (from above)
\begin{align}\label{eq:10}
    \Pr\nolimits_{\vec{h} \in \RR}\left(N^\mathsf{A}_S > 0\right).
\end{align}
In the analysis of a randomized algorithm, bounding \eqref{eq:10} is often
a classical application of the
\emph{first moment method}, which says that
\begin{align}\label{eq:11}
    \Pr\nolimits_{\vec{h} \in \RR} \left(N^\mathsf{A}_S > 0 \right) \leq
    \E_{\vec{h} \in \RR} \left(N^\mathsf{A}_S\right) = \sum_{G \in \mathsf{A}}
    \Pr\nolimits_{\vec{h} \in \RR}\left(I_G = 1\right).
\end{align}
However, we cannot apply the first moment method directly to bound
\eqref{eq:10}, since hash functions from $\RR$ do not guarantee full
independence on the key set, and thus the right-hand side of \eqref{eq:11} is
hard to calculate. However, we will
prove an interesting connection to the expected number of subgraphs
having property $\mathsf{A}$  when the hash function sequence $\vec{h}$ is
fully random.

To achieve this, we will start by collecting ``bad'' sequences of hash functions.
Intuitively, a sequence $\vec{h}$ of hash functions is \emph{bad} with respect
to a key set $S$ and a graph property $\mathsf{A}$ if $G(S, \vec{h})$ has a subgraph
$G$ with $G \in \mathsf{A}$ and for the keys $T \subseteq S$ which form $G$ the $g$-components
of $\vec{h}$ distribute $T$ ``badly''. (Recall the formal definition of ``bad''
from Definition~\ref{def:T:bad}.)
\begin{definition}
For $S \subseteq U$ and a graph property $\mathsf{A}$ let $B^{\mathsf{A}}_{S}
\subseteq \RR$  be the event
\begin{align*}
    \bigcup_{G \in \mathsf{A}} \left(\left\{I_G = 1\right\} \cap \mbox{\emph{bad}}_{T(G)}\right).
\end{align*}
\label{def:exception_set}
\end{definition}
This definition is slightly different from the corresponding definition in the paper
\cite[Definition 3]{AumullerDW14}, which considers one application of hash class $\RR$ with an application-specific
focus.%
\footnote{%
In \cite{AumullerDW14} we defined
$B^\mathsf{A}_S = \bigcup_{T \subseteq S} \left(\{G(T, \vec{h}) \text{ has property $A$}\} \cap
\text{bad}_T\right)$. This works well in the case that we only consider randomness properties
of the graph $G(S, h_1, h_2)$.
In the hypergraph setting,  ``important'' subgraphs of $G(S, \vec{h})$ often occur not in terms of
the graph $G(T, \vec{h})$, for some set $T \subseteq S$, but by removing some vertices from
the edges of $G(T, \vec{h})$. In Definition~\ref{def:exception_set},
we may consider exactly such subgraphs of $G(T, \vec{h})$ by defining $\mathsf{A}$ appropriately.
The edge labels of a graph are used to identify which keys of
$S$ form the graph.
}

In addition to the probability space $\RR$ together with the uniform
distribution, we also consider the probability space in which we use $d$ fully
random hash functions from $U$ to $[m]$, chosen independently.
From here on, we will denote probabilities of events and expectations of random variables in the
former case by $\Pr$ and $\E$; we will use ${\Pr}^\ast$ and
$\E^\ast$ in the latter. The next lemma shows that for
bounding $\Pr\left(N^\mathsf{A}_S > 0\right)$ we can use $\E^\ast\left(N^\mathsf{A}_S\right)$, i.e.,
the expected number of subgraphs having property $\mathsf{A}$  in the fully random case, and have to add
the probability that the event $B^\mathsf{A}_S$ occurs. We call this
additional summand the \emph{failure term of $\RR$ on $\mathsf{A}$}.


\begin{lemma} Let $S \subseteq U$ be given. For an arbitrary graph property $\mathsf{A}$
we have
\begin{equation}
    \Pr\left(N^\mathsf{A}_S > 0\right) \le \Pr\left(B^{\mathsf{A}}_{S}\right)
+\E^\ast\left(N^\mathsf{A}_S\right).
\label{eq:1000}
\end{equation}\label{lem:good:bad}
\end{lemma}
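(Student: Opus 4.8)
The plan is to prove the bound by splitting the event $\{N^\mathsf{A}_S > 0\}$ according to whether the "witnessing" subgraph comes with a $T$-bad or a $T$-good hash function sequence. Formally, if $N^\mathsf{A}_S > 0$, then there exists some $G \in \mathsf{A}$ with $I_G = 1$; fix such a $G$ and let $T = T(G)$. Either $\vec{h}$ is $T$-bad, in which case $\vec{h} \in \{I_G = 1\} \cap \text{bad}_{T(G)} \subseteq B^\mathsf{A}_S$ by Definition~\ref{def:exception_set}, or $\vec{h}$ is $T$-good. So
\begin{equation*}
    \Pr\left(N^\mathsf{A}_S > 0\right) \le \Pr\left(B^\mathsf{A}_S\right) + \Pr\left(N^\mathsf{A}_S > 0 \text{ and the witnessing } G \text{ has } \text{good}_{T(G)}\right).
\end{equation*}
The first summand is already in the desired form, so the whole task reduces to bounding the second summand by $\E^\ast(N^\mathsf{A}_S)$.

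For the second summand I would introduce, for each $G \in \mathsf{A}$, the indicator $J_G$ of the event $\{I_G = 1\} \cap \text{good}_{T(G)}$, and set $M^\mathsf{A}_S = \sum_{G \in \mathsf{A}} J_G$. The "witnessing-$G$-is-good" event is exactly $\{M^\mathsf{A}_S > 0\}$, so by Markov's inequality it has probability at most $\E(M^\mathsf{A}_S) = \sum_{G \in \mathsf{A}} \Pr(I_G = 1 \text{ and } \text{good}_{T(G)})$. Now I would apply Lemma~\ref{lem:random}(a): conditioned on $\text{good}_{T(G)}$, the hash values $(h_1(x),\dots,h_d(x))$ for $x \in T(G)$ are fully uniform and independent over $[m]^d$. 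Since the event $I_G = 1$ — that $G$ is a subgraph of $G(S,\vec h)$ with matching edge labels — depends only on those hash values $\{h_i(x) : x \in T(G), 1 \le i \le d\}$ (the labels of $G$ pick out exactly the keys in $T(G)$, and the subgraph relation is a condition purely on where those keys are mapped), we get
\begin{equation*}
    \Pr\left(I_G = 1 \mid \text{good}_{T(G)}\right) = \Pr\nolimits^\ast\left(I_G = 1\right),
\end{equation*}
and hence $\Pr(I_G = 1 \text{ and } \text{good}_{T(G)}) \le \Pr^\ast(I_G = 1)$. Summing over $G \in \mathsf{A}$ gives $\E(M^\mathsf{A}_S) \le \sum_{G \in \mathsf{A}} \Pr^\ast(I_G = 1) = \E^\ast(N^\mathsf{A}_S)$, which combined with the split above yields \eqref{eq:1000}.

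The step I expect to require the most care is the claim that $I_G = 1$ is a measurable function of precisely the hash values on $T(G)$, so that Lemma~\ref{lem:random}(a) can be invoked with no slack. One has to be careful that "$G$ is a subgraph of $G(S,\vec h)$ with matching labels" means: for every edge of $G$ labeled $i$, the key $x_i$ lies in $S$ and the edge $(h_1(x_i),\dots,h_d(x_i))$ of $G(S,\vec h)$ contains the vertices of that edge of $G$ (recall the hypergraph subgraph notion from Section~\ref{sec:graph:properties}, where an edge of $G$ need only be a subset of the corresponding edge of $G(S,\vec h)$). All the vertices and labels referenced are determined by $\{h_i(x_i) : i \text{ a label in } G\}$, and these labels are exactly $T(G)$; no hash value of a key outside $T(G)$ enters. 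Once this is nailed down, the conditional-distribution identity is immediate from Lemma~\ref{lem:random}(a), and the rest is just the two Markov-inequality applications and a union over the (disjoint-by-cases) two contributions. Note that we never need part (b) of Lemma~\ref{lem:random} here; it enters only later when one actually bounds $\Pr(B^\mathsf{A}_S)$ for concrete properties $\mathsf{A}$.
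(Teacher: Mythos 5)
Your proof is correct and follows essentially the same route as the paper's: split on $B^{\mathsf{A}}_S$ versus its complement, observe that on the complement any witnessing $G$ must have $\text{good}_{T(G)}$, apply a union bound over $G \in \mathsf{A}$, and replace $\Pr\left(I_G = 1 \mid \text{good}_{T(G)}\right)$ by $\Pr^\ast\left(I_G = 1\right)$ via the conditional full-randomness statement. You are also right that the step invokes part (a) of Lemma~\ref{lem:random}, not part (b) as the paper's proof mistakenly cites.
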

\begin{proof}
    We calculate:
    \begin{align*}
        \Pr\left(N^\mathsf{A}_S > 0\right) \leq \Pr\left(B^\mathsf{A}_S\right) +
        \Pr\left(\left\{N^\mathsf{A}_S > 0\right\} \cap \overline{B^{\mathsf{A}}_{S}}\right).
    \end{align*}
    We only have to focus on the second term on the right-hand side. Using the union bound,
    we continue as follows:
    \begin{align*}
        \Pr\left(\left\{N^\mathsf{A}_S > 0\right\} \cap \overline{B^{\mathsf{A}}_S}\right)
        &\leq \sum_{G \in \mathsf{A}} \Pr\left(\left\{I_G = 1\right\} \cap \overline{B^{\mathsf{A}}_S}\right)\\
        &= \sum_{G \in \mathsf{A}} \Pr\left(\left\{I_G = 1\right\} \cap \left(\bigcap_{G' \in \mathsf{A}}
        \left(\left\{I_{G'} = 0\right\} \cup \text{good}_{T(G')}\right)\right)\right)\\
        &\leq \sum_{G \in \mathsf{A}} \Pr\left(\left\{I_G = 1\right\} \cap \text{good}_{T(G)}\right)\\
        &\leq \sum_{G \in \mathsf{A}} \Pr\left(I_G = 1 \mid \text{good}_{T(G)}\right)\\
        &\stackrel{\text{(i)}}{=} \sum_{G \in \mathsf{A}} \Pr\nolimits^\ast\left(I_G = 1\right) = \E^\ast\left(N^\mathsf{A}_S\right),
    \end{align*}
where (i) holds by Lemma~\ref{lem:random}(b). \qquad
\end{proof}

This lemma provides the strategy for bounding $\Pr(N^\mathsf{A}_S > 0)$. The second summand in (\ref{eq:1000}) can be
calculated assuming full randomness and is often already known from the literature
if the original analysis was conducted using the first moment method.
The task of bounding the first summand is tackled separately in the next
subsection.

\subsection{A Framework for Bounding the Failure Term}
As we have seen, using hash class $\RR$ gives an additive failure term
(\emph{cf}.~\eqref{eq:1000}) compared to the case that we bound
${\Pr}^\ast\left(N^\mathsf{A}_S > 0\right)$ by the first moment method in the fully random
case. Calculating $\Pr\left(B^\mathsf{A}_S\right)$ looks difficult since we have to
calculate the probability that there exists a subgraph $G$ of $G(S, \vec{h})$
that has property $\mathsf{A}$ and where $\vec{h}$ is $T(G)$-bad. Since we know
the probability that $\vec{h}$ is $T(G)$-bad from Lemma~\ref{lem:random}(b), we
could tackle this task by calculating the probability that there exists such
a subgraph $G$ under the condition that $\vec{h}$ is $T(G)$-bad, but then we
cannot assume full randomness of $\vec{h}$ on $T(G)$ to obtain
a bound that a certain subgraph is realized by the hash values.
We avoid this difficulty by taking another approach. We will find suitable events that contain $B^\mathsf{A}_S$
and where $\vec{h}$ is guaranteed to behave well on the key set in question.

Observe the following relationship that is immediate from
Definition~\ref{def:exception_set}.

\begin{lemma}
    Let $S \subseteq U$, $|S| = n$, and let
    $\mathsf{A} \subseteq \mathsf{B} \subseteq \mathcal{G}^d_{m, n}$.
    Then $\Pr\left(B^\mathsf{A}_S\right) \leq
    \Pr\left(B^{\mathsf{B}}_S\right)$. \qed\label{lem:fail_superset}
\end{lemma}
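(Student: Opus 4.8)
The plan is to unwind the definitions and observe that enlarging the graph property only adds terms to the union that defines the exception event.

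First I would recall from Definition~\ref{def:exception_set} that for any graph property $\mathsf{C}$ the event $B^{\mathsf{C}}_S$ is the union $\bigcup_{G \in \mathsf{C}} \bigl(\{I_G = 1\} \cap \mathrm{bad}_{T(G)}\bigr)$, taken over all graphs $G$ belonging to $\mathsf{C}$. The only way $\mathsf{C}$ enters this expression is through the index set of the union.

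Next, since $\mathsf{A} \subseteq \mathsf{B}$, every graph $G \in \mathsf{A}$ also lies in $\mathsf{B}$, so the union defining $B^{\mathsf{A}}_S$ ranges over a sub-collection of the graphs indexing the union defining $B^{\mathsf{B}}_S$. Hence
\begin{equation*}
    B^{\mathsf{A}}_S = \bigcup_{G \in \mathsf{A}} \bigl(\{I_G = 1\} \cap \mathrm{bad}_{T(G)}\bigr)
    \subseteq \bigcup_{G \in \mathsf{B}} \bigl(\{I_G = 1\} \cap \mathrm{bad}_{T(G)}\bigr) = B^{\mathsf{B}}_S
\end{equation*}
as events (subsets of the probability space $\RR$). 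Monotonicity of probability then gives $\Pr\bigl(B^{\mathsf{A}}_S\bigr) \le \Pr\bigl(B^{\mathsf{B}}_S\bigr)$, which is the claim.

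There is essentially no obstacle here; the lemma is a bookkeeping observation whose sole purpose is to let later proofs replace an awkward property $\mathsf{A}$ by a more convenient superset $\mathsf{B}$ (typically one that is easier to enumerate or parametrize) when bounding the failure term. The only point deserving a word of care is that $\mathsf{A}$ and $\mathsf{B}$ are required to live in the same ambient set $\mathcal{G}^d_{m,n}$, so that the indicator variables $I_G$ and the key sets $T(G)$ are defined consistently for graphs drawn from either property; this is exactly the hypothesis $\mathsf{A} \subseteq \mathsf{B} \subseteq \mathcal{G}^d_{m,n}$ stated in the lemma.
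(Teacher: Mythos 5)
Your proof is correct and is exactly the argument the paper intends: the paper states the lemma as immediate from Definition~\ref{def:exception_set}, since $B^{\mathsf{A}}_S$ is a union indexed by $\mathsf{A}$ and enlarging the index set to $\mathsf{B}$ can only enlarge the event, whence monotonicity of probability gives the claim.
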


We will now introduce two concepts that will allow us to bound
the failure probability of $\RR$ for ``suitable'' graph properties $\mathsf{A}$.

\begin{definition}[Peelability]
    A graph property $\mathsf{A}$ is called \textbf{peelable} if for all $G=(V,E)
    \in \mathsf{A}$ with $|E| \geq 1$ there exists an edge $e \in E$ such that $(V,E-\{e\}) \in \mathsf{A}$.
\label{def:peelability}
\end{definition}

An example for a peelable graph property for
bipartite graphs, i.e., in the case $d = 2$, is the set of all connected
bipartite graphs (disregarding isolated vertices), because removing an edge
that lies on a cycle or an edge incident to a vertex of degree $1$ does not
destroy connectivity.

Peelable graph properties will help us in the following sense: Assume that
$B^{\mathsf{A}}_S$ occurs, i.e., for the chosen $\vec{h} \in \RR$ there exists
some graph $G \in \mathsf{A}$ that is a subgraph of $G(S,\vec{h})$ and
$\vec{h}$ is $T(G)$-bad. Let $T = T(G)$.
In terms of the ``deficiency'' $d_T$ of
$\vec{h}$ (\emph{cf.}  Definition~\ref{def:T:bad}) it holds that $d_T(\vec{h}) > 1$.
If $\mathsf{A}$ is peelable, we can iteratively remove edges from $G$
such that the resulting graphs still have property $\mathsf{A}$. Let $G'$ be a graph
that results from $G$ by removing a single edge. Then $d_{T(G)} - d_{T(G')} \in \{0,1\}$.
Eventually, because $d_\emptyset = 0$, we will obtain a subgraph $G' \in \mathsf{A}$ of $G$ such that $\vec{h}$ is
$T(G')$-critical. In this case, we can again make use of Lemma~\ref{lem:random}(b) and
bound the probability that $G'$ is realized by the hash function sequence by assuming
that the hash values are fully random.

However, peelability does not suffice to obtain low
enough bounds for failure terms $\Pr\left(B^\mathsf{A}_S\right)$; we
need the following auxiliary concept, whose idea will become clear in the proof
of the next lemma.

\begin{definition}[Reducibility]
    Let $c \in \mathbb{N}$, and let $\mathsf{A}$ and $\mathsf{B}$ be graph properties.
$\mathsf{A}$ is called
\textbf{$\mathsf{B}$-$2c$-reducible} if for all graphs $(V,E) \in \mathsf{A}$ and sets $E^\ast \subseteq E$
with $|E^\ast| \leq 2c$ we have the following: There exists an edge set $E'$
with $E^\ast \subseteq E' \subseteq E$ such that $(V,E') \in \mathsf{B}$.
\label{def:reducible}
\end{definition}

If a graph property $\mathsf{A}$ is $\mathsf{B}$-$2c$-reducible, we say that
$\mathsf{A}$ \emph{reduces to} $\mathsf{B}$. The parameter $c$ shows the connection to hash class $\RR$:
it is the same parameter as the number of $g_j$-functions in hash class $\RR$.

 To shorten notation, we let
\begin{align*}
    \mu^\mathsf{\mathsf{A}}_t := \sum_{\substack{G \in \mathsf{A}, |E(G)| = t}} {\Pr}^\ast\left(I_G = 1\right)
    \end{align*}
    be the expected number of subgraphs with exactly $t$ edges having
    property $\mathsf{A}$ in the fully random case.
    The following lemma is the central result of this section and encapsulates our
overall strategy to bound the additive failure term introduced by using hash
class $\RR$ instead of fully random hash functions.
\begin{lemma}
    Let $c \geq 1$, $S\subseteq U$ with $|S| = n$, and let $\mathsf{A}$,
    $\mathsf{B}$, and $\mathsf{C}$ be graph properties such that
    $\mathsf{A} \subseteq \mathsf{B}$, $\mathsf{B}$ is a peelable graph
    property, and $\mathsf{B}$ reduces to $\mathsf{C}$. Then
    $$\Pr\left(B^{\mathsf{A}}_S\right) \leq \Pr\left(B^{\mathsf{B}}_S\right) \leq
    \ell^{-c} \cdot \sum_{t = 2}^{n} t^{2c} \cdot
    \mu^\mathsf{C}_t.$$
\label{lem:fail_prob}
\end{lemma}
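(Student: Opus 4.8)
The plan is to chain the three concepts—exception sets, peelability, and reducibility—together exactly in the order the definitions were laid out, and to finish with a union bound that replaces the ``bad'' event by a ``critical'' event so that Lemma~\ref{lem:random}(a) can be applied. The first inequality $\Pr(B^{\mathsf A}_S)\le\Pr(B^{\mathsf B}_S)$ is immediate from Lemma~\ref{lem:fail_superset} since $\mathsf A\subseteq\mathsf B$, so the whole work is in bounding $\Pr(B^{\mathsf B}_S)$. By Definition~\ref{def:exception_set}, if $B^{\mathsf B}_S$ occurs then there is some $G\in\mathsf B$ with $I_G=1$ and $\vec h$ is $T(G)$-bad, i.e.\ $d_{T(G)}(\vec h)>1$. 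Here is where I would use peelability: starting from $G$ and repeatedly deleting an edge that keeps the graph in $\mathsf B$ (possible by Definition~\ref{def:peelability}), I get a descending chain of subgraphs $G=G_0\supseteq G_1\supseteq\cdots$ all in $\mathsf B$, with $T(G_{i+1})=T(G_i)$ or $T(G_i)$ minus one key, hence $d_{T(G_i)}-d_{T(G_{i+1})}\in\{0,1\}$. Since $d_{T(G_0)}>1$ and $d_\emptyset=0$, by discrete intermediate-value there is some $G'$ in the chain with $d_{T(G')}(\vec h)=1$, that is, $\vec h$ is $T(G')$-critical; moreover $I_{G'}=1$ because $G'$ is a subgraph of $G$, which is a subgraph of $G(S,\vec h)$. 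So $B^{\mathsf B}_S\subseteq\bigcup_{G'\in\mathsf B}\bigl(\{I_{G'}=1\}\cap\mathrm{crit}_{T(G')}\bigr)$.

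Next I bring in reducibility to control the size of the set that $\vec h$ must be critical on. Fix such a $G'=(V,E')\in\mathsf B$ with $\vec h$ $T(G')$-critical and $I_{G'}=1$. By Definition~\ref{def:T:bad}, ``critical'' means every $g_j$ has at least one collision on $T(G')$, so for each $j\in\{1,\dots,c\}$ pick a colliding pair of keys; the edges of $G'$ carrying these (at most) $2c$ keys form a set $E^\ast\subseteq E'$ with $|E^\ast|\le 2c$. Since $\mathsf B$ reduces to $\mathsf C$ (Definition~\ref{def:reducible}), there is $E'$ with $E^\ast\subseteq E''\subseteq E'$ and $G''=(V,E'')\in\mathsf C$. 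Crucially, $T(G'')\supseteq$ the $2c$ chosen keys, so every $g_j$ still has a collision on $T(G'')$; hence $d_{T(G'')}(\vec h)\ge 1$, i.e.\ $\vec h$ is $T(G'')$-bad or $T(G'')$-critical. Also $I_{G''}=1$ as $G''$ is a subgraph of $G'$. Writing $t=|E''|$ and noting $t\le |E'|$, while $E''$ is obtained by selecting $E^\ast$ first and then possibly adding more edges — the point is simply that $G''$ has at most $n$ edges and at least $2$ (it must contain the $2c\ge 2$ collision-witnessing edges, and $t\ge 2$ since a collision needs two keys). So
\[
  B^{\mathsf B}_S\subseteq\bigcup_{t=2}^{n}\ \bigcup_{\substack{G''\in\mathsf C\\ |E(G'')|=t}}\bigl(\{I_{G''}=1\}\cap(\mathrm{bad}_{T(G'')}\cup\mathrm{crit}_{T(G'')})\bigr).
\]

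Now I finish by a union bound over this cover, splitting each term as $\Pr(\{I_{G''}=1\}\cap(\mathrm{bad}_{T(G'')}\cup\mathrm{crit}_{T(G'')}))\le\Pr(\mathrm{bad}_{T(G'')}\cup\mathrm{crit}_{T(G'')})$ and applying Lemma~\ref{lem:random}(b) to get $\le(|T(G'')|^2/\ell)^c\le(t^2/\ell)^c=\ell^{-c}t^{2c}$, using $|T(G'')|\le t$. Summing over all $G''\in\mathsf C$ with $t$ edges introduces the factor $\mu^{\mathsf C}_t$ — but wait: $\mu^{\mathsf C}_t$ is defined with ${\Pr}^\ast(I_{G''}=1)$, the fully-random probability, not a count. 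So instead I should bound each term as $\Pr(\{I_{G''}=1\}\cap(\mathrm{bad}_{T}\cup\mathrm{crit}_{T}))\le\Pr(I_{G''}=1\mid\mathrm{bad}_T\cup\mathrm{crit}_T)\cdot\Pr(\mathrm{bad}_T\cup\mathrm{crit}_T)$, and here is the key point: conditioned on $\mathrm{bad}_T\cup\mathrm{crit}_T$ one does \emph{not} get full randomness. The cleaner route, which I expect is the intended one, is to observe that $\{I_{G''}=1\}\cap(\mathrm{bad}_T\cup\mathrm{crit}_T)\subseteq\mathrm{bad}_T\cup\mathrm{crit}_T$ bounds the probability by $\ell^{-c}t^{2c}$ per graph, but then the sum over $G''$ has no $\mu^{\mathsf C}_t$. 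So the actual argument must condition the \emph{other} way: bound $\Pr(\{I_{G''}=1\}\cap\mathrm{crit}_T)=\Pr(I_{G''}=1\mid\mathrm{crit}_T)\Pr(\mathrm{crit}_T)$ — no. Re-examining: the right split is to cover $B^{\mathsf B}_S$ by $\{I_{G''}=1\}\cap\mathrm{crit}_{T(G'')}$ (from the peeling step, before reduction expands $T$), then for the reduced graph use that being critical on $T(G')$ forces $d_{T(G'')}\ge\ldots$; honestly the cleanest is: $\Pr(\{I_{G''}=1\}\cap\mathrm{crit}_{T(G'')})\le\Pr(\mathrm{bad}_{T(G'')}\cup\mathrm{crit}_{T(G'')})$ gives $\ell^{-c}t^{2c}$ only without the $\mu$ factor, so one must instead write $\Pr(I_{G''}=1\text{ and }\mathrm{crit}_{T(G'')})\le\Pr(I_{G''}=1\mid\mathrm{good}_{T(G'')}\cup\mathrm{crit}_{T(G'')})=\Pr^\ast(I_{G''}=1)$ by Lemma~\ref{lem:random}(a), times a factor absorbing the bad case. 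The main obstacle—and the step I would be most careful about—is precisely this bookkeeping: how to simultaneously extract the $\ell^{-c}t^{2c}$ factor (from Lemma~\ref{lem:random}(b) on the $g$-part) and the $\mu^{\mathsf C}_t=\sum\Pr^\ast(I_{G''}=1)$ factor (from Lemma~\ref{lem:random}(a) on the table/$f$-part) without double-counting. The resolution is to first fix the $g$-components: over the (at most $\ell^{-c}t^{2c}$-probable) choices of $g$-part making $\vec h$ critical on some $2c$-witness set inside $T(G'')$, the remaining randomness (tables $z^{(i)}$, functions $f_i$) is fully random on $T(G'')$ by Lemma~\ref{lem:random}(a), so $\Pr(I_{G''}=1\mid g\text{-part critical on }T(G''))\le\Pr^\ast(I_{G''}=1)$; multiplying and summing over $t$ and over $G''\in\mathsf C$ with $t$ edges yields exactly $\ell^{-c}\sum_{t=2}^n t^{2c}\mu^{\mathsf C}_t$. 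I would write this conditioning-on-the-$g$-part argument out carefully, as it is the linchpin of the whole bound.
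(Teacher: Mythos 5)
Your overall route is the paper's: pass to $\mathsf B$ via Lemma~\ref{lem:fail_superset}, peel inside $\mathsf B$ down to a subgraph $G'$ on which $\vec h$ is critical, extract at most $2c$ collision-witnessing edges $E^\ast$, reduce to a graph $G''\in\mathsf C$ containing them, and finish by writing $\Pr(\{I_{G''}=1\}\cap\mathrm{crit}_{T(G'')})=\Pr(I_{G''}=1\mid\mathrm{crit}_{T(G'')})\cdot\Pr(\mathrm{crit}_{T(G'')})$, applying Lemma~\ref{lem:random}(a) to the first factor and Lemma~\ref{lem:random}(b) to the second. The ``condition on the $g$-part'' resolution you arrive at in the last paragraph is exactly how the paper closes the argument.

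There is, however, one genuine gap, and it sits exactly at the step you flag as uncertain. After the reduction you only establish $d_{T(G'')}(\vec h)\ge 1$, i.e.\ that $\vec h$ is $T(G'')$-bad \emph{or} $T(G'')$-critical, and your cover of $B^{\mathsf B}_S$ is stated with the event $\mathrm{bad}_{T(G'')}\cup\mathrm{crit}_{T(G'')}$. With that cover the final step fails for the reason you yourself observe: Lemma~\ref{lem:random}(a) gives full randomness only conditioned on $\mathrm{good}_T$ or on $\mathrm{crit}_T$, not on $\mathrm{bad}_T$, so $\Pr(I_{G''}=1\mid \mathrm{bad}_{T(G'')}\cup\mathrm{crit}_{T(G'')})$ cannot be replaced by ${\Pr}^\ast(I_{G''}=1)$ and the factor $\mu^{\mathsf C}_t$ never materializes. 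The missing observation is that the reduction preserves criticality \emph{exactly}: since $\vec h$ is $T(G')$-critical, some component $g_{j_0}$ has exactly one collision on $T(G')$; the two edges carrying that colliding pair are put into $E^\ast\subseteq E''$, so $g_{j_0}$ has at least one collision on $T(G'')$, and since $T(G'')\subseteq T(G')$ it has at most one. Hence $d_{T(G'')}(\vec h)=1$, i.e.\ $\vec h$ is $T(G'')$-critical, the cover can be taken with $\mathrm{crit}_{T(G'')}$ alone, and the conditioning argument you sketch at the end then goes through verbatim.
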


\begin{proof}
    By Lemma~\ref{lem:fail_superset} we have $\Pr\left(B^\mathsf{A}_S\right) \leq
    \Pr(B^\mathsf{B}_S) = \Pr\left(\bigcup_{G \in \mathsf{B}} (\{I_G = 1\} \cap
    \text{bad}_{T(G)})\right)$. Assume that $\vec{h}$ is such that $B^\mathsf{B}_S$ occurs.
    Then there exists a subgraph $G$ of $G(S,\vec{h})$ such that
    $G \in \mathsf{B}$ and $d_{T(G)}(\vec{h}) > 1$. Fix such a graph.

    Since $\mathsf{B}$ is peelable, we iteratively remove edges from $G$ until we obtain
    a graph $G' = (V, E')$ such that $G' \in \mathsf{B}$ and
    $\text{crit}_{T(G')}$ occurs. The latter is guaranteed, since
    $d_\emptyset(\vec{h}) = 0$ and since for two graphs $G$ and $G'$, where $G'$
    results from $G$ by removing a single edge, it holds that $d_{T(G)}(\vec{h}) -
    d_{T(G')}(\vec{h}) \in \{0 ,1\}$. Since
    $\text{crit}_{T(G')}$ happens, for each $g_i$-component of $\vec{h}, 1 \leq i
    \leq c$, there is at least one collision on $T(G')$. Furthermore, there exists
    one component $g_{j_0}$ with $j_0 \in \{1,\dots,c\}$ such that exactly one
    collision on $T(G')$ occurs. For each $g_i$, $i \in \{1,\dots,c\}$, let
		$x_i$ and $y_i$ be two distinct keys such that $G'$ contains the edges
		$e_{x_i}$ and $e_{y_i}$ labeled with these keys and such that $x_i$ and $y_i$ collide under $g_i$.
    Let $E^\ast = \bigcup_{1 \leq i \leq c} \{e_{x_i},e_{y_i}\}$.

    By construction $|E^\ast| \leq 2c$. Since
    $\mathsf{B}$ reduces to $\mathsf{C}$, there exists some set $E''$
    with $E^\ast \subseteq E'' \subseteq E'$
    such that $G'' = (V, E'') \in \mathsf{C}$. By
    construction of $E^\ast$, each $g_i$-component has at least one collision on
    $T(G'')$. Moreover, $g_{j_0}$ has exactly one collision on $T(G'')$.
    Thus, $\vec{h}$ is $T(G'')$-critical.

    We calculate:
    \begin{align*}
	\Pr\left(B^\mathsf{A}_S\right) {\leq} \Pr\left(B^\mathsf{B}_S\right)
	&{=} \Pr\left(\bigcup_{G \in \mathsf{B}} \!\left(\left\{I_G = 1\right\} \cap
    \text{bad}_{T(G)}\right)\right)\!
    \stackrel{\text{(i)}}{\leq} \Pr\left(\bigcup_{G' \in \mathsf{B}} \left(\left\{I_{G'} = 1\right\} \cap
    \text{crit}_{T(G')}\right)\right)\\
    &\stackrel{\text{(ii)}}{\leq} \Pr\left(\bigcup_{G'' \in \mathsf{C}} \!\left(\left\{I_{G''} = 1\right\} \cap
    \text{crit}_{T(G'')}\right)\right) {\leq} \sum_{G'' \in \mathsf{C}} \!\Pr\left(\left\{I_{G''} = 1\right\} \cap
    \text{crit}_{T(G'')}\right)\\
    &= \sum_{G'' \in \mathsf{C}} \Pr\left(I_{G''} = 1\mid
\text{crit}_{T(G'')}\right) \cdot \Pr\left(\text{crit}_{T(G'')}\right)\\
&\stackrel{\text{(iii)}}{\leq}\ell^{-c} \cdot \sum_{\substack{G'' \in \mathsf{C}}}
{\Pr}^\ast\left(I_{G''} = 1\right) \cdot |T(G'')|^{2c}\\
&= \ell^{-c}\cdot \sum_{t = 2}^{n} \Bigl(t^{2c}\cdot\!\!\sum_{\substack{G'' \in
\mathsf{C}\\|E(G'')| = t}} {\Pr}^\ast\left(I_{G''} = 1\right)\Bigr) = \ell^{-c} \cdot
\sum_{t = 2}^{n} t^{2c} \cdot \mu^{\mathsf{C}}_t,
    \end{align*}
    where (i) holds because $\mathsf{B}$ is peelable, (ii) is due to reducibility, and (iii) follows by Lemma~\ref{lem:random}. \qquad
\end{proof}

We summarize the results of Lemma~\ref{lem:good:bad} and Lemma~\ref{lem:fail_prob} in the
following proposition.
\begin{proposition}
    Let $c \geq 1$, $m \geq 1$, $S\subseteq U$ with $|S| = n$, and let $\mathsf{A}$,
    $\mathsf{B}$, and $\mathsf{C}$ be graph properties such that
    $\mathsf{A} \subseteq \mathsf{B}$, $\mathsf{B}$ is a peelable graph
    property, and $\mathsf{B}$ reduces to $\mathsf{C}$. Assume that
    there are constants $\alpha$, $\beta$ such that
    \begin{align}
        \label{rr:bound:1}
        \E^\ast\left(N^\mathsf{A}_S\right) := \sum_{t = 1}^n \mu^\mathsf{A}_t = O\left(n^{-\alpha}\right),
    \end{align}
    and
    \begin{align}
        \label{rr:bound:2}
        \sum_{t = 2}^n t^{2c} \mu^\mathsf{C}_t = O\left(n^\beta\right).
    \end{align}
    Then setting $\ell = n^{(\alpha + \beta)/c}$ and choosing $\vec{h}$ at random from $\RR^{c,d}_{\ell,m}$ yields
    \begin{align*}
        \Pr\left(N^\mathsf{A}_S > 0\right) = O\left(n^{-\alpha}\right).
    \end{align*}
    \label{prop:rr:bound}
\end{proposition}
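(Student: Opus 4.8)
The plan is to combine the two previously established lemmas and to choose $\ell$ so that the failure term of $\RR$ on $\mathsf{A}$ is absorbed into the target bound. Concretely, I would start from Lemma~\ref{lem:good:bad}, which gives
\[
    \Pr\left(N^\mathsf{A}_S > 0\right) \le \Pr\left(B^{\mathsf{A}}_{S}\right) + \E^\ast\left(N^\mathsf{A}_S\right).
\]
The second summand is, by hypothesis~\eqref{rr:bound:1}, already $O(n^{-\alpha})$, so nothing more needs to be done there. For the first summand I would invoke Lemma~\ref{lem:fail_prob}: since $\mathsf{A} \subseteq \mathsf{B}$, $\mathsf{B}$ is peelable, and $\mathsf{B}$ reduces to $\mathsf{C}$, we get
\[
    \Pr\left(B^{\mathsf{A}}_S\right) \leq \ell^{-c} \cdot \sum_{t = 2}^{n} t^{2c} \cdot \mu^\mathsf{C}_t.
\]

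Next I would plug in assumption~\eqref{rr:bound:2}, namely $\sum_{t=2}^n t^{2c} \mu^\mathsf{C}_t = O(n^\beta)$, which yields $\Pr(B^{\mathsf{A}}_S) = O\left(\ell^{-c} \cdot n^\beta\right)$. Now I would substitute the prescribed value $\ell = n^{(\alpha+\beta)/c}$, so that $\ell^{-c} = n^{-(\alpha+\beta)}$ and hence $\Pr(B^{\mathsf{A}}_S) = O\left(n^{-(\alpha+\beta)} \cdot n^\beta\right) = O(n^{-\alpha})$. Adding the two $O(n^{-\alpha})$ contributions gives $\Pr(N^\mathsf{A}_S > 0) = O(n^{-\alpha})$, which is the claim. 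I should also remark that since $\alpha, \beta$ are constants, $\ell$ is polynomial in $n$ (in particular $\ell \ge 1$ for $n$ large enough, so the hash class $\RR^{c,d}_{\ell,m}$ is well-defined), and one may round $\ell$ up to the nearest integer without affecting the asymptotics.

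There is essentially no hard step here: the proposition is a bookkeeping corollary that stitches Lemma~\ref{lem:good:bad} and Lemma~\ref{lem:fail_prob} together and records the correct exponent in the choice of $\ell$. The only points that warrant a sentence of care are (i) checking that the hypotheses of Lemma~\ref{lem:fail_prob} are exactly the hypotheses assumed in the proposition (they are, verbatim), and (ii) making sure the $O$-constants combine correctly — the sum of two functions that are each $O(n^{-\alpha})$ is again $O(n^{-\alpha})$, with the implied constant being the sum of the two implied constants. If one wanted to be fully careful about $\beta$ possibly being negative or zero, nothing changes: the substitution $\ell^{-c} n^\beta = n^{-\alpha}$ is an identity regardless of the sign of $\beta$, and the requirement $\ell \ge 1$ is satisfied for all sufficiently large $n$ as long as $\alpha + \beta \ge 0$, which holds because $\Pr(B^{\mathsf{A}}_S) \ge 0$ forces the bound in~\eqref{rr:bound:2} to be nonnegative and the interesting regime has $\alpha > 0$. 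Thus the proof is short, and I would present it as a direct chain of inequalities followed by the substitution.
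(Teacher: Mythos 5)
Your proof is correct and follows exactly the paper's own argument: the paper's proof is the one-line remark that the proposition follows by plugging the failure bound of Lemma~\ref{lem:fail_prob} into Lemma~\ref{lem:good:bad}, which is precisely the chain of inequalities you spell out. The additional remarks about the choice of $\ell$ and the combination of the $O$-constants are fine but not needed.
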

\begin{proof} The proposition follows immediately by plugging the failure probability bound from Lemma~\ref{lem:fail_prob}
into Lemma~\ref{lem:good:bad}. \qquad
\end{proof}

\begin{remark}
    In the statement of Lemma~\ref{lem:good:bad} and Proposition~\ref{prop:rr:bound} graph properties $\mathsf{B}$ and
    $\mathsf{C}$ can be the same graph properties, since every graph property reduces to itself.
    \label{remark:peelable}
\end{remark}

Proposition~\ref{prop:rr:bound} shows the power of our framework. The
conditions of this lemma can be checked without looking at the details of the
hash functions, only by finding suitable graph properties that have a low enough expected number of
subgraphs in the fully random case. Let us compare properties \eqref{rr:bound:1}
and \eqref{rr:bound:2}. Property \eqref{rr:bound:1} is the standard first moment
method approach. So, it can often be checked from the literature
whether a particular application seems suitable for an analysis with our
framework or not.
Property \eqref{rr:bound:2} seems very
close to a first moment method approach, but there is one important difference
to \eqref{rr:bound:1}. The additional factor $t^{2c}$, coming from the randomness properties
of the hash class, means that to obtain low enough bounds for \eqref{rr:bound:2},
the average number of graphs with property $\mathsf{C}$ must decrease rapidly, e.g.,
exponentially, fast in $t$. This will be the case for almost all graph properties considered in this thesis.

In the analysis for application scenarios, we will use Lemma~\ref{lem:good:bad}
and Lemma~\ref{lem:fail_prob} instead of Proposition~\ref{prop:rr:bound}. Often, one
auxiliary graph property suffices for many different applications and we think
it is cleaner to first bound the failure term of $\RR$ on this graph property
using Lemma~\ref{lem:fail_prob}; then we only have to care about the fully
random case and apply Lemma~\ref{lem:good:bad} at the end.

This concludes the development of the theoretical basis of this paper.

\subsection{Step by Step Example: Analyzing Static Cuckoo Hashing}
\label{hashing:sec:example:c:h}

We start by fixing graph-related notation for usual graphs:  We call an edge
that is incident to a vertex of degree 1 a \emph{leaf edge}. We call an edge a
\emph{cycle edge} if removing it does not disconnect any two nodes. A connected 
graph is called \emph{acyclic} if it does not contain cycles. It is called
\emph{unicyclic} if it contains exactly one cycle. 

Cuckoo hashing \cite{cuckoo_hashing_pagh} is a well known dictionary algorithm that stores
a (dynamically changing) set $S \subseteq U$ of size $n$ in two hash tables,
$T_1$ and $T_2$, each of size 
$m \geq (1 + \varepsilon)n$ for some $\varepsilon > 0$. 
It employs two hash functions $h_1$ and $h_2$ with $h_1, h_2\colon U \to [m]$. 
A key $x$ can be stored either in $T_1[h_1(x)]$ or in $T_2[h_2(x)]$, and all keys are stored in distinct table cells.
Thus, to find or remove a key it suffices to check these two possible locations. For
details on the insertion procedure we refer the reader to \cite{cuckoo_hashing_pagh}.

In this section, we deal with the static setting.
Here the question is whether or not a key set $S$ of 
size $n$ can be stored in the two tables of size $(1+\varepsilon) n$ each, for
some $\varepsilon > 0$, using a pair of hash functions $(h_1, h_2)$ according to 
the cuckoo hashing rules. To this end,
we look at the bipartite graph $G(S, h_1, h_2)$ built from $S$ and $(h_1, h_2)$. 
Recall that the vertices of $G$ are two copies of $[m]$ and that each key 
$x_i \in S$ gives rise to an edge $(h_1(x), h_2(x))$ labeled $i$. If $(h_1, h_2)$
allow storing $S$ according to the cuckoo hashing rules, i.e., independent
of the insertion algorithm, we call $(h_1,h_2)$ \emph{suitable} for $S$.


This section is meant as an introductory example for applying the framework. Already Pagh and Rodler
showed in \cite{cuckoo_hashing_pagh} that using a $\Theta(\log n)$-wise
independent hash class suffices to run cuckoo hashing. 
Since standard cuckoo hashing is a special 
case of cuckoo hashing with a stash, the results here can also be proven 
using the techniques presented in \cite{AumullerDW14}. However,
the proofs here are notably simpler than the proofs needed for the analysis
of cuckoo hashing with a stash. The central insight the reader should take away from the 
proof is that the analysis can be carried out by applying Lemma~\ref{lem:fail_prob} alone,
without reference to the inner structure of the hash functions. 

We will prove the following theorem:
\begin{theorem}
    Let $\varepsilon > 0$ and $0 < \delta <1$ be
given. Assume $c\ge  2/\delta$.  
    For $n\ge 1$ consider $m\ge(1+\varepsilon)n$ and $\ell=n^\delta$.  
    Let $S \subseteq U$ with $|S|=n$.
    Then for $(h_1,h_2)$ chosen at random from $\RR=\RR^{c,2}_{\ell,m}$ the
following holds:
    \begin{align*}
    \Pr\left( \left( h_1, h_2 \right) \text{ is not suitable for $S$} \right) = O(1/n).
\end{align*}
    \label{hashing:thm:cuckoo:hashing}
\end{theorem}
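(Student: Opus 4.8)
The plan is to apply the framework encapsulated in Lemma~\ref{lem:good:bad} and Lemma~\ref{lem:fail_prob} (equivalently, Proposition~\ref{prop:rr:bound}) with a carefully chosen triple of graph properties $\mathsf{A} \subseteq \mathsf{B}$ with $\mathsf{B}$ peelable and $\mathsf{B}$ reducing to some $\mathsf{C}$. The starting observation is the classical characterization (from \cite{cuckoo_hashing_pagh,devroye}): a pair $(h_1,h_2)$ is \emph{suitable} for $S$ if and only if no connected component of the bipartite graph $G(S,h_1,h_2)$ has more edges than vertices, i.e., if and only if $G(S,h_1,h_2)$ contains no subgraph that is connected with at least two independent cycles (``excess $\geq 2$''). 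Hence I would let $\mathsf{A}$ be the graph property consisting of all connected bipartite (multi-)graphs $G$ with $|E(G)| \ge |V(G)|+1$ that are edge-minimal with this property — these are the minimal ``bad'' witnesses. Then $\{(h_1,h_2)\text{ not suitable for }S\} = \{N^{\mathsf{A}}_S > 0\}$ (after discarding isolated vertices), so bounding $\Pr(N^{\mathsf{A}}_S>0)$ suffices.

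Next I would set up $\mathsf{B}$ and $\mathsf{C}$. A natural choice is to let $\mathsf{B} = \mathsf{C}$ be the property of all connected bipartite graphs (disregarding isolated vertices); as noted in the text right after Definition~\ref{def:peelability}, this property is peelable, and by Remark~\ref{remark:peelable} every property reduces to itself, so $\mathsf{B}$ reduces to $\mathsf{C}$ trivially. Clearly $\mathsf{A} \subseteq \mathsf{B}$ since the minimal bad witnesses are connected. Now I have to verify the two quantitative hypotheses. For \eqref{rr:bound:1}, I need $\E^\ast(N^{\mathsf{A}}_S) = \sum_t \mu^{\mathsf{A}}_t = O(n^{-\alpha})$ for some $\alpha>0$; this is exactly the standard first-moment computation in the cuckoo-hashing literature, which shows that the expected number of minimal ``excess-$\ge 2$'' subgraphs in $G(S,h_1,h_2)$ with $m \ge (1+\varepsilon)n$ fully random edges is $O(1/n)$, i.e., $\alpha = 1$. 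For \eqref{rr:bound:2}, I need $\sum_{t=2}^n t^{2c}\,\mu^{\mathsf{C}}_t = O(n^\beta)$ for some constant $\beta$; here $\mu^{\mathsf{C}}_t$ is the expected number of connected $t$-edge subgraphs of the fully random $G(S,h_1,h_2)$. A connected subgraph on $t$ edges has at most $t+1$ vertices; bounding the number of labeled such structures and the probability $(1/m)^{?}$ that the hash values realize them, one gets $\mu^{\mathsf{C}}_t = O(t)\cdot (\text{const})^t$ or at worst polynomially bounded per $t$, so the sum $\sum_{t\le n} t^{2c}\mu^{\mathsf{C}}_t$ is $O(n^{O(1)})$ — some explicit constant $\beta$ depending on $c$.

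With $\alpha=1$ and $\beta$ the constant just obtained, Proposition~\ref{prop:rr:bound} tells me to pick $\ell = n^{(\alpha+\beta)/c}$. Since the hypothesis gives $\ell = n^\delta$ with $c \ge 2/\delta$, I must check $(\alpha+\beta)/c \le \delta$, i.e.\ $\alpha+\beta \le c\delta$; this is where the precise value of $\beta$ and the constant $2$ in ``$c\ge 2/\delta$'' must fit together, and I would engineer the estimate on $\mu^{\mathsf{C}}_t$ so that $\alpha+\beta \le 2$ (for instance $\beta \le 1$), after which $c\delta \ge 2 \ge \alpha+\beta$ and the chosen $\ell = n^\delta$ is large enough (a larger $\ell$ only helps, since $\ell^{-c}$ shrinks). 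Plugging Lemma~\ref{lem:fail_prob} into Lemma~\ref{lem:good:bad} then yields $\Pr(N^{\mathsf{A}}_S>0) \le \ell^{-c}\sum_t t^{2c}\mu^{\mathsf{C}}_t + \E^\ast(N^{\mathsf{A}}_S) = O(n^{\beta - c\delta}) + O(1/n) = O(1/n)$, as claimed.

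The main obstacle I anticipate is the second bound, \eqref{rr:bound:2}: controlling $\sum_{t=2}^n t^{2c}\mu^{\mathsf{C}}_t$ requires that the expected number of connected $t$-edge subgraphs decay fast enough in $t$ to absorb the polynomial factor $t^{2c}$ and still sum to only $O(n^\beta)$ with $\beta$ small (ideally $\beta\le 1$). Connected subgraphs with a vertex-excess of $-1$ (trees) are the most numerous, so a naive count of labeled trees times the per-tree realization probability must be done carefully — this is precisely the ``$|T|^2$ larger than $\ell$'' regime flagged after Lemma~\ref{lem:random}, and getting the bookkeeping right (number of shapes, number of label assignments, and the $m \ge (1+\varepsilon)n$ savings) is the technical heart of the proof. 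Everything else — the peelability of connectedness, the trivial self-reduction, and the first-moment bound \eqref{rr:bound:1} — is either already noted in the excerpt or standard from the cuckoo-hashing literature.
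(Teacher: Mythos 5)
Your overall strategy — characterize unsuitability by minimal connected subgraphs of excess at least $2$, then invoke Lemma~\ref{lem:good:bad} and Lemma~\ref{lem:fail_prob} — is exactly the paper's, and your choice of $\mathsf{A}$ matches its property $\mathsf{MOG}$. The gap is in your choice of $\mathsf{B}=\mathsf{C}=$ all connected bipartite graphs. Condition \eqref{rr:bound:2} is simply false for that property at density $m=(1+\varepsilon)n$: the quantity $\mu^{\mathsf{C}}_t$ is dominated by trees, and the number of $t$-edge trees in $\mathcal{G}^2_{m,n}$ is of order $(\text{const})^{t}\, m^{t+1} n^{t}$ where the constant accounts for the exponentially many tree shapes (Cayley/Otter), so $\mu^{\mathsf{C}}_t \approx m\cdot\bigl(\text{const}/(1+\varepsilon)\bigr)^{t}$ with a constant strictly larger than $1+\varepsilon$. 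Hence $\sum_t t^{2c}\mu^{\mathsf{C}}_t$ grows exponentially in $n$, not as $O(n^{\beta})$, and no amount of careful bookkeeping rescues it. The paper itself flags this obstruction at the end of Section~3.5: the first-moment count over trees forces the weaker assumption $m\ge 6n$ there, precisely because the number of unlabeled trees is too large at density $(1+\varepsilon)n$.

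The fix, and the actual content of the paper's proof, is to choose a peelable superclass of $\mathsf{MOG}$ with only \emph{polynomially many} unlabeled shapes per edge count: the paper uses $\mathsf{RMOG}$ (simple paths, paths plus one chord at an endpoint, and the minimal obstruction graphs themselves), for which $\mu^{\mathsf{RMOG}}_t \le 6mt^2/(1+\varepsilon)^t$ decays geometrically, giving $\sum_t t^{2c}\mu^{\mathsf{RMOG}}_t = O(n)$ and thus $\Pr\bigl(B^{\mathsf{MOG}}_S\bigr)=O(n/\ell^{c})=O(1/n)$ for $c\ge 2/\delta$. (An alternative within the paper is the leafless-graph route of Section~3.1, where $\LC$ reduces to $\LCY{4c}$ to again cap the number of shapes.) So the missing idea is not bookkeeping but the selection of a sparse peelable/reducible intermediate property; with $\mathsf{C}$ taken to be all connected graphs the framework's hypothesis cannot be met.
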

In the following, all statements of lemmas and claims use the parameter settings of Theorem~\ref{hashing:thm:cuckoo:hashing}.

It is not hard to see that $(h_1, h_2)$ is suitable for $S$ if and only if every
connected component of $G(S, h_1, h_2)$ has at most one cycle \cite{devroye}.
So, if $(h_1, h_2)$ is not suitable then $G(S,h_1,h_2)$ has a connected component with
more than one cycle. This motivates considering the following graph property. 
\begin{definition}
Let $\mathsf{MOG}$ (``minimal
obstruction graphs'') be the set of all labeled graphs from $\mathcal{G}^2_{m, n}$ (disregarding isolated
vertices) that 
form either a cycle with a chord or two cycles connected by a path of length $t \geq 0$.
\end{definition}

These two types of graphs form minimal connected graphs with more than one cycle, see
Figure~\ref{hashing:fig:obstruction:sets}.
So, if $(h_1, h_2)$ is not suitable for $S$, then $G(S,h_1,h_2)$
contains a subgraph with property $\text{MOG}$.
\begin{figure}[t]
    \centering
    \scalebox{0.9}{\includegraphics{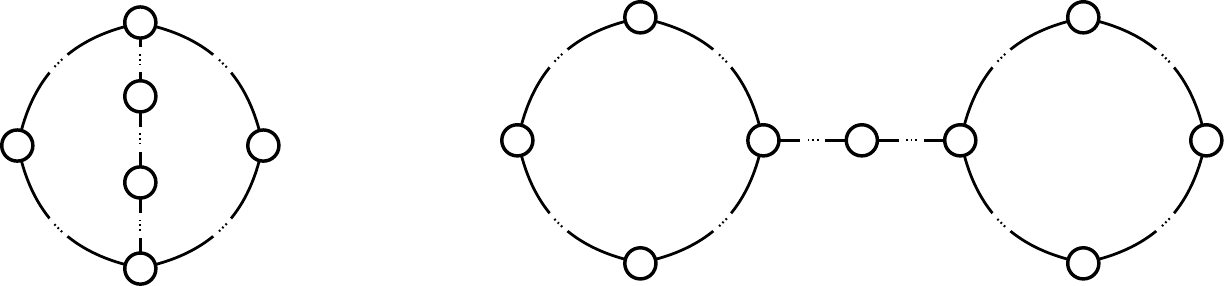}}
    \caption{The minimal obstruction graphs for cuckoo hashing, see also \cite{devroye}.}
    \label{hashing:fig:obstruction:sets}
\end{figure}
We summarize: 
\begin{align}
    \Pr\left( \left(h_1, h_2\right) \text{ is  not suitable for $S$}\right) = \Pr\left(N^\mathsf{MOG}_S > 0\right).
    \label{hashing:eq:cuckoo:hashing:1}
\end{align}
According to Lemma~\ref{lem:good:bad}, the probability on the right-hand side of \eqref{hashing:eq:cuckoo:hashing:1}
is at most
\begin{align}
    \Pr\left(N^\mathsf{MOG}_S > 0\right) \leq
    \Pr\left(B^\mathsf{MOG}_S\right)
    + \E\phantom{}^{\ast}\left(N^\mathsf{MOG}_S\right).
    \label{hashing:eq:cuckoo:hashing:2}
\end{align}
We first study the expected number of minimal obstruction graphs in the fully random case. 

\paragraph{Bounding $\E^{\ast}\left(N^\mathsf{MOG}_S\right)$}
The expected number of minimal obstruction graphs in the fully random case is well known from other work, 
see, e.g., \cite{cuckoo_hashing_pagh}. The proof is included for the convenience of the reader and follows
\cite{cuckoo_hashing_pagh}.
\begin{lemma}
    \begin{align*}
        \E^\ast\left(N^\mathsf{MOG}_S\right) = O(1/m).
    \end{align*}
    \label{hashing:lem:cuckoo:hashing:fully:random}
\end{lemma}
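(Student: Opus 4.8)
The plan is to bound $\E^\ast(N^\mathsf{MOG}_S)$ by a direct first-moment computation, splitting the sum over $G \in \mathsf{MOG}$ according to the two shapes of minimal obstruction graphs: a cycle with a chord, and two cycles joined by a path of length $t \geq 0$. For a fixed labeled graph $G$ with $k$ edges, since $\vec{h}$ is fully random, ${\Pr}^\ast(I_G = 1) = m^{-(\text{number of vertices of } G)}$ — each of the $k$ labeled edges $(h_1(x_i), h_2(x_i))$ must hit the prescribed pair of vertices, and distinct vertices correspond to independent uniform choices. The key structural fact I would use is that both families of obstruction graphs are connected graphs with exactly one more edge than vertices: a graph on $v$ vertices that is connected and has $v+1$ edges. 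So ${\Pr}^\ast(I_G = 1) = m^{-(k-1)}$ where $k$ is the number of edges.

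Next I would count, for each $k$, how many labeled graphs in $\mathcal{G}^2_{m,n}$ of the relevant shape have exactly $k$ edges. Here I would: (i) choose which $k$ of the $n$ key-labels appear, giving at most $n^k/k! \le n^k$ choices; (ii) choose the vertices — at most $(2m)^{k-1}$ ways since there are $k-1$ vertices to place in $V_{m,2}$, actually more carefully bounded by $m^{k-1}$ up to a constant depending on the bipartition split; (iii) choose the combinatorial structure (which edge is the chord / where the two cycles sit and how long the connecting path is, and how labels attach to edges), which contributes only a factor polynomial in $k$. Multiplying the count by $m^{-(k-1)}$, the $m^{k-1}$ factors cancel, leaving a bound of the form $\sum_{k} \mathrm{poly}(k)\cdot n^k / m^{k-1} \cdot C$; since $m \ge (1+\varepsilon)n$, the ratio $n/m \le 1/(1+\varepsilon) < 1$, so this is a convergent geometric-type sum dominated by its smallest terms. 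The smallest obstruction graph has a constant number $k_0$ of edges (e.g. $k_0 = 3$ for the smallest cycle-with-chord, $k_0=4$ for two triangles joined by a length-$0$ path), so the whole sum is $O(n^{k_0}/m^{k_0-1}) = O(n/ (1+\varepsilon)^{k_0-1}\cdot \text{const}) $ — wait, more precisely $n^{k_0} m^{-(k_0-1)} = m \cdot (n/m)^{k_0} \le m \cdot (1+\varepsilon)^{-k_0}$, which is $\Theta(n)$, not $O(1/m)$.

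That last point flags the main subtlety, and it is the step I expect to be the real obstacle: a naive count gives $O(n)$, not the claimed $O(1/m)$. The fix — standard in the cuckoo-hashing literature and the reason the lemma cites \cite{cuckoo_hashing_pagh} — is that one must account for the bipartite structure and the cycle constraints more carefully. In a bipartite graph, a cycle has even length, and fixing the cyclic structure forces the two "sides" of each placed vertex; the correct count of labeled structures of a given edge-type with $k$ edges is $O(m^{k-1}/m^2) = O(m^{k-3})$ for the cycle-with-chord family and similarly for the two-cycle family, because a connected graph with one extra edge has its number of spanning-tree-plus-one-edge freedoms reduced: intuitively, closing two independent cycles costs two extra vertex-coincidence constraints beyond a tree, each saving a factor $m$. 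So ${\Pr}^\ast(I_G=1)$ summed over all $G$ of edge-size $k$ is $O(n^k \cdot m^{k-1} / m^{k+1}) = O(n^k/m^2) \le O((1+\varepsilon)^{-k}/n^{?})$ — giving $\sum_k O(n^k / m^2) = O(1/m) \cdot \sum_k (n/m)^k \cdot (n/m)^{?}$, and since $n/m < 1$ this geometric sum converges to $O(1/m)$. I would carry this out by: first writing down the exact vertex count $v = k-1$ for each shape; then arguing that in $\mathcal{G}^2_{m,n}$ the number of ways to realize a fixed abstract obstruction graph on $k$ labeled edges is at most (combinatorial factor $\le c^k$ for structure and label assignment) $\times\, m^{k-1}$, but that summing ${\Pr}^\ast(I_G=1)=m^{-v}=m^{-(k-1)}$ over the cycle-length parameter $t$ of the "two cycles joined by a path" family introduces an extra $m^{-2}$ because the two cycle-closure events are what make it an obstruction; and finally bounding $\sum_{k \ge k_0} c^k n^k m^{-(k+1)} \le m^{-1} \sum_{k \ge k_0} (cn/m)^k = O(1/m)$, valid once $\varepsilon$ (equivalently $c/(1+\varepsilon) < 1$, which holds for the relevant small absolute constant $c$) is such that the ratio is below $1$; a cleaner route, which I would prefer, is simply to cite the computation verbatim from \cite[proof of the corresponding lemma]{cuckoo_hashing_pagh} since the fully-random case is exactly theirs, and only re-derive enough to make the paper self-contained.
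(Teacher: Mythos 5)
There is a genuine gap, and it sits exactly where you flagged trouble. Your per-graph probability is wrong: for a fixed \emph{fully labeled} graph $G$ (vertex labels in $[m]$ on both sides of the bipartition, edge labels identifying the keys) with $t$ edges, the event $I_G=1$ requires each of the $2t$ independent uniform hash values $h_1(x_i), h_2(x_i)$ to equal a prescribed element of $[m]$, so ${\Pr}^\ast(I_G=1)=m^{-2t}$ --- not $m^{-v}=m^{-(t-1)}$. The number of distinct vertices is irrelevant to this probability; it only enters through the \emph{count} of labeled graphs, which is at most $(\text{poly}(t))\cdot n^{t}\cdot m^{t-1}$ using your (correct) observation that a minimal obstruction graph with $t$ edges has exactly $t-1$ vertices. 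Multiplying count by probability gives $\text{poly}(t)\cdot n^{t}\cdot m^{-(t+1)} = \frac{\text{poly}(t)}{m}\left(\frac{n}{m}\right)^{t}$ per value of $t$, and since $n/m\le 1/(1+\varepsilon)<1$ the sum over $t\ge 3$ converges to $O(1/m)$. This is precisely the paper's computation. Your "each distinct vertex is an independent uniform choice" framing describes a different random model and is what forces you into the $\Theta(n)$ dead end.

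Your attempted repair then never becomes a proof. The intuition you state --- that an obstruction graph has two fewer vertices than a tree with the same number of edges, each missing vertex saving a factor of $m$ --- is the right one, but the formulas you write from it are inconsistent: $O(n^k\cdot m^{k-1}/m^{k+1})=O(n^k/m^2)$ summed over $k$ up to $n$ diverges (it should be $n^k m^{-(k+1)}$), and the subsequent displays contain literal unresolved exponents. The final fallback of citing \cite{cuckoo_hashing_pagh} verbatim is not a proof either. To close the gap you only need to replace $m^{-v}$ by $m^{-2t}$ and redo the two-line multiplication above; everything else in your structural analysis (connected, $t-1$ vertices, $t$ edges, polynomial-in-$t$ count of shapes) is sound and matches the paper.
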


\begin{proof} We start by counting unlabeled graphs with exactly $t$ edges that
    form a minimal obstruction graph.  Every minimal obstruction graphs consists
    of a simple path of exactly $t - 2$ edges and two further edges which connect the
    endpoints of this path with vertices on the path. Since a minimal
    obstruction graph with $t$ edges has exactly $t - 1$ vertices, there are no
    more than $(t-1)^2$ unlabeled minimal obstruction graphs having exactly $t$
    edges. Fix an unlabeled minimal obstruction graph $G$. First, there are two
    ways to split the vertices of $G$ into the two parts of the bipartition.
    When this is fixed, there are no more than $m^{t-1}$ ways to label the
    vertices with labels from $[m]$, and there are no more than $n^{t+1}$ ways
    to label the edges with labels from $\{1, \ldots, n\}$. Fix such a fully
    labeled graph $G'$. 

    Now draw $t$ labeled edges\footnote{The labels of these edges
    are equivalent to the edge labels of $G'$.} at random from $[m]^2$.
    The probability that these edges realize $G'$ is exactly $1/m^{2t}$.
    We calculate: 

\begin{align*}
    \E^\ast\left(N^\mathsf{MOG}_S\right) &\leq \sum_{t=3}^n
    \frac{2n^{t} \cdot m^{t - 1} \cdot (t-1)^2}{m^{2t}} 
    \leq \frac{2}{m}\cdot\sum_{t = 3}^n \frac{t^2 n^{t}}{m^{t}} =\frac{2}{m}\cdot
    \sum_{t=3}^n  \frac{t^2}{(1 + \varepsilon)^t}
    =O\Bigl{(}\frac{1}{m}\Bigr{)},
\end{align*}
where the last step follows from the convergence of the series $\sum_{t =
0}^{\infty} t^2/q^t$ for every $q > 1$. \qquad
\end{proof}

\noindent Combining Lemma~\ref{hashing:lem:cuckoo:hashing:fully:random} and 
\eqref{hashing:eq:cuckoo:hashing:2} we obtain:
\begin{align}
    \Pr\left(N^\mathsf{MOG}_S > 0\right) \leq
    \Pr\left(B^\mathsf{MOG}_S\right)
    + O\left(\frac{1}{m}\right). 
\end{align}
It remains to bound the failure term $\Pr\left(B^\mathsf{MOG}_S\right)$.

\paragraph{Bounding $\Pr\left(B^\mathsf{MOG}_S\right)$}
In the light of Definition~\ref{def:peelability}, we first note that $\mathsf{MOG}$ is not peelable. So, we first
find a peelable graph property that contains $\mathsf{MOG}$. Since paths are peelable, and a minimal
obstruction graph is ``almost path-like'' (\emph{cf.} proof of Lemma~\ref{hashing:lem:cuckoo:hashing:fully:random}),
we relax the notion of a minimal obstruction graph in the following way.

\begin{definition}
    Let $\mathsf{RMOG}$ (``relaxed minimal obstruction graphs'') consist of all graphs in $\GG^2_{m,n}$ that form
    either (i) a minimal obstruction graph, (ii) a simple path, or (iii) a simple path and exactly one edge which connects an
    endpoint of the path with a vertex on the path. (We disregard isolated vertices.)
    \label{hashing:def:rmog}
\end{definition}

\noindent By the definition, we obviously have that $\mathsf{MOG} \subseteq \mathsf{RMOG}$.
\begin{lemma}
    $\mathsf{RMOG}$ is peelable.
\end{lemma}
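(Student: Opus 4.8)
The plan is to verify Definition~\ref{def:peelability} directly: given a graph $G=(V,E)\in\mathsf{RMOG}$ with $|E|\ge1$, I must exhibit an edge $e\in E$ such that $(V,E-\{e\})\in\mathsf{RMOG}$. I would argue by cases according to which of the three types (i)--(iii) the graph $G$ falls into, so the proof is essentially a case analysis on the shape of $G$.

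First, the easy cases. If $G$ is a simple path (type (ii)), then removing a leaf edge (one of the two edges at an endpoint of the path, or the unique edge if $|E|=1$) yields a shorter simple path, which is again in $\mathsf{RMOG}$ as a type-(ii) graph (the empty graph counts as a degenerate path, since we disregard isolated vertices). If $G$ is a path plus one extra edge connecting an endpoint of the path to an internal vertex (type (iii)), then $G$ consists of a single cycle with a ``tail'' path hanging off one of the cycle's vertices (the tail may have length $0$). If the tail is nonempty, remove the leaf edge at the far end of the tail; this gives a shorter type-(iii) graph (or a bare cycle once the tail is exhausted, but note that a bare cycle is itself a type-(iii) graph with a length-$0$ tail). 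If the tail has length $0$, so $G$ is just a cycle, remove any cycle edge: the result is a simple path, i.e.\ type (ii). In all subcases the result stays in $\mathsf{RMOG}$.

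The main case is type (i), the minimal obstruction graphs themselves. Here, as recalled in the proof of Lemma~\ref{hashing:lem:cuckoo:hashing:fully:random}, $G$ is a simple path $P$ of $t-2$ edges together with two extra edges $e_1,e_2$ each joining an endpoint of $P$ to a vertex lying on $P$ (covering both the ``theta/chord'' shape and the ``two cycles joined by a path'' shape). The key observation is that removing either $e_1$ or $e_2$ leaves a simple path together with a single extra edge whose one endpoint is an endpoint of that path --- this is precisely type (iii) (or, when the remaining extra edge degenerates, type (ii)). One has to check the boundary configurations: when the connecting path between the two cycles has length $0$, or when an extra edge attaches at the very endpoint of $P$ making a short cycle; in each such configuration deleting the appropriate one of $e_1,e_2$ still produces a graph that is a path-plus-one-chord-at-an-endpoint, hence in $\mathsf{RMOG}$. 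The only genuine obstacle is to be careful that ``endpoint of the path'' is interpreted correctly after deletion --- i.e.\ that the surviving extra edge really does hit an \emph{endpoint} of the path that remains, not an internal vertex --- but this follows because in a minimal obstruction graph each of $e_1,e_2$ is incident to a distinct endpoint of $P$, so deleting one leaves the other still anchored at an endpoint of the (possibly extended, by absorbing the freed endpoint) path. Assembling the three cases completes the verification that $\mathsf{RMOG}$ is peelable.
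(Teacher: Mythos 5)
Your proof is correct and follows essentially the same route as the paper's: a case analysis on the three types in Definition~\ref{hashing:def:rmog}, exhibiting in each case an edge whose removal stays within $\mathsf{RMOG}$ (type (i) $\to$ (iii) $\to$ (ii) $\to$ shorter path). The only differences are cosmetic choices of which edge to delete (you peel the tail of a type-(iii) graph where the paper deletes a cycle edge, and you delete one of the two "extra" edges of a minimal obstruction graph where the paper deletes a cycle edge at a degree-3/4 vertex), and both choices are valid.
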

\begin{proof}
    Let $G \in \mathsf{RMOG}$. We may assume that $G$ has at least two edges.
    We distinguish three cases:

    \noindent Case 1: $G$ is a minimal obstruction graph. Let $G'$ be the graph that results from 
    $G$ when we remove an arbitrary cycle edge incident to a vertex of degree $3$ 
		or degree $4$ in $G$. Then
    $G'$ has property (iii) of Definition~\ref{hashing:def:rmog}.

    \noindent Case 2: $G$ has property (iii) of Definition~\ref{hashing:def:rmog}. 
    Let $G'$ be the graph that
    results from $G$ when we remove an edge in the following way: If $G$
    contains a vertex of degree $3$ then remove an arbitrary cycle edge incident
    to this vertex of degree $3$, otherwise remove an arbitrary cycle edge. Then
    $G'$ is a path and thus has property (ii) of
    Definition~\ref{hashing:def:rmog}. 

    \noindent Case 3: $G$ is a simple path. Let $G'$ be the graph that results from $G$ when we remove
    an endpoint of $G$ with the incident edge. $G'$ is a path and has property (ii) of Definition~\ref{hashing:def:rmog}. \qquad
\end{proof}

Standard cuckoo hashing is an example where we do not need every
component of our framework, because there are ``few enough'' 
graphs having property $\mathsf{RMOG}$ to obtain low enough failure probabilities. 

\begin{lemma}
\begin{align*}
    \Pr\left(B^\mathsf{MOG}_S\right) = O\left(\frac{n}{\ell^c}\right).
\end{align*}
\label{hashing:lem:mog:fail:prob}
\end{lemma}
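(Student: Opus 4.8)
The plan is to invoke the framework lemma, Lemma~\ref{lem:fail_prob}, with $\mathsf{A} = \mathsf{MOG}$ and $\mathsf{B} = \mathsf{C} = \mathsf{RMOG}$. We already know $\mathsf{MOG} \subseteq \mathsf{RMOG}$ and that $\mathsf{RMOG}$ is peelable, and by Remark~\ref{remark:peelable} every graph property reduces to itself, so all hypotheses of Lemma~\ref{lem:fail_prob} are satisfied. It therefore yields
$$\Pr\left(B^\mathsf{MOG}_S\right) \le \Pr\left(B^\mathsf{RMOG}_S\right) \le \ell^{-c}\sum_{t=2}^{n} t^{2c}\,\mu^\mathsf{RMOG}_t,$$
and the whole claim reduces to the purely combinatorial statement $\sum_{t=2}^{n} t^{2c}\,\mu^\mathsf{RMOG}_t = O(n)$, i.e., a first-moment-style count in the fully random model.

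For this count I would reuse the estimates from the proof of Lemma~\ref{hashing:lem:cuckoo:hashing:fully:random}. A graph in $\mathsf{RMOG}$ with $t$ edges has at most $t+1$ vertices (a simple path has $t+1$; a graph of type (iii) in Definition~\ref{hashing:def:rmog} has $t$; a minimal obstruction graph has $t-1$), and its underlying unlabeled bipartite shape is one of $O(t^2)$ possibilities ($O(1)$ shapes for simple paths, $O(t)$ for the type-(iii) graphs, and $O(t^2)$ for minimal obstruction graphs, the last bound being exactly the one used in Lemma~\ref{hashing:lem:cuckoo:hashing:fully:random}). Fixing a shape, there are at most $m^{t+1}$ ways to label its vertices from $[m]$ and at most $n^t$ ways to label its $t$ edges with distinct labels from $\{1,\dots,n\}$, and any fixed fully labeled such graph is a subgraph of $G(S,\vec{h})$ with probability exactly $m^{-2t}$ in the fully random case (each of the $t$ keys must hit a prescribed pair in $[m]^2$). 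Hence
$$\mu^\mathsf{RMOG}_t = O\!\left(t^2\, m^{t+1}\, n^t\, m^{-2t}\right) = O\!\left(t^2\, m\,(n/m)^t\right) \le O\!\left(\frac{t^2\, m}{(1+\varepsilon)^t}\right),$$
using $m \ge (1+\varepsilon)n$.

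Summing, $\sum_{t=2}^{n} t^{2c}\,\mu^\mathsf{RMOG}_t \le O(m)\sum_{t\ge 2} t^{2c+2}(1+\varepsilon)^{-t} = O(m)$, since $2c$ and $2c+2$ are constants and $\sum_{t} t^{k} q^{-t}$ converges for every fixed $q>1$. Therefore $\Pr\left(B^\mathsf{MOG}_S\right) \le \ell^{-c}\cdot O(m) = O(m/\ell^c) = O(n/\ell^c)$, where the last step uses that $m = O(n)$ in the cuckoo hashing application. The argument is essentially routine; the one point to watch is that the simple paths are the graphs in $\mathsf{RMOG}$ with the most vertices and hence contribute the leading term $\Theta\!\left(m\,(n/m)^t\right)$ --- a factor of $m$ rather than a constant, unlike for $\mathsf{MOG}$ itself --- so one must check that this extra $m$ is harmless, which it is because $m = O(n)$ and the geometric decay in $t$ swamps the polynomial factor $t^{2c+2}$.
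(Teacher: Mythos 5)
Your proposal is correct and is essentially the paper's own proof: the paper likewise invokes Lemma~\ref{lem:fail_prob} with $\mathsf{A}=\mathsf{MOG}$ and $\mathsf{B}=\mathsf{C}=\mathsf{RMOG}$, proves the same bound $\mu^{\mathsf{RMOG}}_t \le 6mt^2/(1+\varepsilon)^t$ by the same three-case count, and sums the geometric series in the same way. No substantive differences.
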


\emph{Proof}.
    We aim to apply Lemma~\ref{lem:fail_prob}, where $\mathsf{MOG}$ takes the role of $\mathsf{A}$ and $\mathsf{RMOG}$ takes
    the role of $\mathsf{B}$ and $\mathsf{C}$ (\emph{cf.} Remark~\ref{remark:peelable}), respectively,
    in the statement of that lemma.

    \begin{claim}
	For $t \geq 2$, we have 
        \begin{align*}
            \mu^{\mathsf{RMOG}}_t \leq \frac{6mt^2}{(1 + \varepsilon)^t}.
        \end{align*}
	\label{hashing:claim:rmog}
    \end{claim}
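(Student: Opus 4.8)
The plan is to bound $\mu^{\mathsf{RMOG}}_t = \sum_{G \in \mathsf{RMOG}, |E(G)| = t} \Pr^\ast(I_G = 1)$ by counting, exactly as in the proof of Lemma~\ref{hashing:lem:cuckoo:hashing:fully:random}, the number of fully labeled graphs of each of the three types described in Definition~\ref{hashing:def:rmog} and multiplying by the realization probability $1/m^{2t}$. First I would recall that a graph in $\mathsf{RMOG}$ with $t$ edges has either $t-1$ vertices (types (i) and (iii), which contain exactly one cycle-closing or chord-type edge beyond a spanning path structure) or $t+1$ vertices (type (ii), a simple path). So the vertex-labeling factor is at most $m^{t+1}$ in all cases, and the edge-labeling factor is at most $n^t \le m^t$ (we have $n$ labels to distribute among $t$ edges, so $n^t$ suffices; using $m \ge n$ we can also write $n^t$).

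Next I would bound the number of \emph{unlabeled} shapes of each type with $t$ edges. A simple path with $t$ edges has a unique unlabeled shape. A ``path plus one extra edge joining an endpoint of the path to an interior vertex'' (type (iii)) is determined by which interior vertex the extra edge hits, giving at most $t$ shapes (and after fixing the bipartition split, a constant factor $2$). A minimal obstruction graph with $t$ edges (type (i)) has at most $(t-1)^2$ shapes by the argument already given in Lemma~\ref{hashing:lem:cuckoo:hashing:fully:random}. Hence in all cases the number of unlabeled shapes is $O(t^2)$, and fixing the side of the bipartition costs another factor $2$. Putting the pieces together, the number of fully labeled $\mathsf{RMOG}$-graphs with $t$ edges is at most (say) $2 \cdot C t^2 \cdot 2 \cdot m^{t+1} \cdot n^t$ for a small absolute constant $C$; one then checks the constants work out to give the stated bound $6 m t^2$ after the factor $1/m^{2t}$ is applied, since
\begin{align*}
\mu^{\mathsf{RMOG}}_t \le \frac{(\text{const}) \cdot t^2 \cdot m^{t+1} \cdot n^t}{m^{2t}} = (\text{const}) \cdot m \cdot t^2 \cdot \left(\frac{n}{m}\right)^t \le (\text{const}) \cdot \frac{m t^2}{(1+\varepsilon)^t},
\end{align*}
using $m \ge (1+\varepsilon)n$. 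Tracking the constants carefully (two bipartition choices, $O(t^2)$ shapes, and the $m^{t+1}$ from the extra free vertex in the path case) is what pins down the explicit constant $6$; the leaf-edge type (ii) graphs contribute the $m^{t+1}$ worst case and dominate the constant.

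The main obstacle — really the only subtlety — is making sure the counting is done uniformly across all three sub-cases of $\mathsf{RMOG}$ and that the vertex count is right: the path case (ii) has one \emph{more} vertex than a minimal obstruction graph with the same number of edges, so it is the case that forces the $m^{t+1}$ (rather than $m^{t-1}$) factor and hence the linear-in-$m$ bound on $\mu^{\mathsf{RMOG}}_t$. Everything else is a routine repetition of the first-moment computation already carried out for $\mathsf{MOG}$, with the edge-label count loosened to $n^t$ and the shape count kept at $O(t^2)$. Once the claim is proved, the theorem follows by plugging $\mu^{\mathsf{RMOG}}_t \le 6mt^2/(1+\varepsilon)^t$ into Lemma~\ref{lem:fail_prob} with $\mathsf{A} = \mathsf{MOG}$ and $\mathsf{B} = \mathsf{C} = \mathsf{RMOG}$: the sum $\ell^{-c}\sum_{t=2}^n t^{2c}\mu^{\mathsf{RMOG}}_t \le 6m\ell^{-c}\sum_{t\ge 2} t^{2c+2}/(1+\varepsilon)^t = O(m/\ell^c) = O(n/\ell^c)$ by convergence of $\sum t^{2c+2} q^{-t}$ for $q>1$, which gives Lemma~\ref{hashing:lem:mog:fail:prob}, and then with $\ell = n^\delta$ and $c \ge 2/\delta$ we get $O(n/\ell^c) = O(n/n^{c\delta}) = O(n^{1 - c\delta}) = O(1/n)$, matching Theorem~\ref{hashing:thm:cuckoo:hashing}.
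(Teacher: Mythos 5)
Your proposal is correct and follows essentially the same route as the paper: count labeled graphs of each of the three $\mathsf{RMOG}$ types (at most $2t^2 n^t m^{t-1}$, $2n^t m^{t+1}$, and $2tn^t m^t$, respectively), multiply by the realization probability $1/m^{2t}$, and use $m \ge (1+\varepsilon)n$. One tiny slip: a type (iii) graph with $t$ edges has $t$ vertices, not $t-1$ (the extra edge adds no new vertex to a $(t-1)$-edge path), but this is harmless since you upper-bound the vertex-labeling factor by $m^{t+1}$ in every case anyway.
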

    \emph{Proof}.
        We first count labeled graphs with exactly $t$ edges having property
        $\mathsf{RMOG}$.  From the proof of
        Lemma~\ref{hashing:lem:cuckoo:hashing:fully:random} we know that there
        are fewer than $2 \cdot t^2 \cdot n^t \cdot m^{t - 1}$ labeled graphs
        that form minimal obstruction graphs ((i) of
        Def.~\ref{hashing:def:rmog}). Similarly, there are not more than $2
        \cdot n^t \cdot m^{t+1}$ labeled paths ((ii) of
        Def.~\ref{hashing:def:rmog}), and not more than $2 \cdot t \cdot n^t \cdot m^t$
        graphs having property (iii) of Def.~\ref{hashing:def:rmog}. Fix a labeled graph $G$
        with property $\mathsf{RMOG}$ having exactly $t$ edges. Draw $t$ labeled edges at random from
        $[m]^2$. The probability that these $t$ edges realize $G$ is exactly $1/m^{2t}$. We calculate:
        \begin{align*}
            \mu^\mathsf{RMOG}_t \leq \frac{6t^2n^t m^{t+1}}{m^{2t}} = \frac{6mt^2}{(1+\varepsilon)^t} \; . \qquad \endproof
        \end{align*}

    Using Lemma~\ref{lem:fail_prob}, we proceed as follows:
    \begin{align*}
        \Pr\left(B^\mathsf{MOG}_S\right) \leq \ell^{-c} \cdot \sum_{t = 2}^{n}t^{2c} \cdot \mu^{\mathsf{RMOG}}_t \leq 
        \ell^{-c} \cdot \sum_{t=2}^{n}\frac{6mt^{2(c+1)}}{(1+\varepsilon)^t} = O\left(\frac{n}{\ell^{c}}\right).\qquad \endproof
    \end{align*}

\paragraph{Putting Everything Together} Plugging the results of Lemma~\ref{hashing:lem:cuckoo:hashing:fully:random}
and Lemma~\ref{hashing:lem:mog:fail:prob} into \eqref{hashing:eq:cuckoo:hashing:2} gives:

\begin{align*}
    \Pr\left(N^\mathsf{MOG}_S > 0\right) \leq
    \Pr\left(B^\mathsf{MOG}_S\right)
    + \E\phantom{}^{\ast}\left(N^\mathsf{MOG}_S\right) = O\left(\frac{n}{\ell^c}\right) + O\left(\frac1m\right).
\end{align*}
Using that $m = (1+ \varepsilon) n$ and setting $\ell = n^\delta$ and $c \geq
2/\delta$ yields Theorem~\ref{hashing:thm:cuckoo:hashing}.  

This example gives insight into the situation in which our
framework can be applied. The graph property under consideration
($\mathsf{MOG}$)  is such that the expected number of subgraphs with
this property is polynomially small in $n$. The peeling
process, however, yields graphs which are much more likely to occur, e.g.,
paths of a given length. The key in our analysis is finding suitable graph
properties of ``small enough'' size. (That is the reason why the concept of
``reducibility'' from Definition~\ref{def:reducible} is needed in other
applications: It makes the
number of graphs that must be considered smaller.) The $g$-components of the hash functions
from $\RR$ provide a boost of $\ell^{-c}$, which is then used to make the
overall failure term again polynomially small in $n$. 

The reader might find it instructive  
to apply Proposition~\ref{prop:rr:bound} directly. Then, graph property
$\mathsf{MOG}$ plays the role of graph property $\mathsf{A}$ in that proposition;
graph property $\mathsf{RMOG}$ plays the role of $\mathsf{B}$ \emph{and}
$\mathsf{C}$.

    \section{Applying the Framework to Graphs}
\label{hashing:sec:applications:simple:graphs}

In this section, we will study different applications of our hash class in algorithms and data
structures whose analysis relies on properties of the graph $G(S,
h_1, h_2)$. We shall study four different applications: 
\begin{itemize}
    \item A variant of cuckoo hashing
called \emph{cuckoo hashing with a stash} introduced by Kirsch, Mitzenmacher, and Wieder in
\cite{stash}.
    \item A construction for the simulation of a uniform hash function due to
        Pagh and Pagh \cite{pagh_uniform}.
    \item  A construction of a (minimal) perfect hash function as described by Botelho, Pagh,
        and Ziviani
        \cite{BotelhoPZ13}.
    \item The randomness properties of hash class $\RR$ on connected components
        of $G(S, h_1, h_2)$.
\end{itemize}
We start by studying the failure term of the hash class on a graph property that plays a 
central role in the applications.

\subsection{Randomness Properties of Leaf\/less Graphs}\label{hashing:sec:leafless}
    In this section we study the additive failure term of hash functions from $\RR$ 
    on a graph property that will be a key ingredient in the
	applications to follow. First,
    we recall some graph notation and present a counting argument from \cite{AumullerDW14}.  
    Subsequently, we study the failure term of $\RR$ on the class of 
    graphs which contain no leaf edges, so-called ``leaf\/less graphs''.
    
    Leaf\/less graphs are at the core of the analysis of many randomized
    algorithms and data structures, such as cuckoo hashing (note that the
    minimal obstruction graphs from Figure~\ref{hashing:fig:obstruction:sets}
    have no leaves), the simulation of a uniform hash function as described by
    Pagh and Pagh in \cite{pagh_uniform}, and the construction of a perfect
    hash function from Botelho, Pagh, and Ziviani \cite{BotelhoPZ13}. As we
    shall demonstrate in the subsequent sections, analyzing the case that we replace
    fully random hash functions by hash functions from $\RR$ in these
    applications becomes easy when the behavior of the additive failure term of
    $\RR$ on leaf\/less graphs is known. The main result of this section says
    that the additive failure term of hash class $\RR^{c,2}_{\ell, m}$ on
    leaf\/less graphs is $O(n/\ell^c)$, and can thus be made as small as
    $O(n^{-\alpha})$ for $\ell = n^\delta$, $0 < \delta < 1$, and $c =
    \Theta(\alpha)$. On the way, we will use all steps of our framework
    developed in Section~\ref{hashing:sec:basics}. 

    We recall some graph notation. The cyclomatic number $\gamma(G)$ is the dimension of the \emph{cycle space} of a graph $G$.
It is equal to the smallest number of edges we have to remove from $G$ such that
the remaining 
graph is a forest (an acyclic, possibly disconnected graph) \cite{diestel}.
Also, let $\zeta(G)$ denote the number of connected components of $G$ (ignoring
isolated vertices).
\begin{definition}
Let $N(t,\ell, \gamma, \zeta)$ be the number of unlabeled (multi-)graphs with
$\zeta$ connected components and cyclomatic number $\gamma$ that have $t - \ell$ inner edges and 
$\ell$ leaf edges.
\end{definition}

The following bound is central in our analysis; it is taken from \cite[Lemma 4]{AumullerDW14}.
\begin{lemma}
$N(t,\ell,\gamma,\zeta) = t^{O(\ell + \gamma + \zeta)}.$
\label{lem:num_graphs}
\end{lemma}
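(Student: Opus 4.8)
The plan is to build an arbitrary unlabeled multigraph $G$ counted by $N(t,\ell,\gamma,\zeta)$ in stages, bounding the number of choices at each stage by $t^{O(1)}$ raised to a power that only grows with $\ell+\gamma+\zeta$. First I would strip off the $\ell$ leaf edges: what remains is a ``core'' graph $G_0$ with no degree-$1$ vertices, still having $\zeta$ components (a leaf edge cannot be the only edge of a component once we disregard isolated vertices, or if it is we account for it separately) and cyclomatic number $\gamma$. So it suffices to (a) count the leafless cores $G_0$, and (b) count the ways to reattach $\ell$ leaf edges to $G_0$; the latter is at most (number of vertices of $G_0$)$^\ell \le t^\ell = t^{O(\ell)}$, since each leaf edge is determined by the core vertex it hangs off of (leaves attach to leaves can be handled by iterating, each iteration paying a factor $t$).

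For step (a), the key observation is that a leafless multigraph with cyclomatic number $\gamma$ and $\zeta$ components is very ``thin'': every vertex has degree $\ge 2$, so the number of vertices and edges of the core is $O(\gamma+\zeta)$. Concretely, if the core has $v_0$ vertices and $e_0$ edges, then $2e_0 = \sum \deg \ge 2v_0$ gives $e_0 \ge v_0$, while $\gamma = e_0 - v_0 + \zeta$ gives $e_0 = v_0 - \zeta + \gamma \le v_0$ is false in general; rather one argues that vertices of degree exactly $2$ can be suppressed (contract them, turning a path through them into a single edge), leaving a multigraph $G_1$ in which every vertex has degree $\ge 3$. Such a graph has $\le 2(\gamma+\zeta)$ vertices and $\le 3(\gamma+\zeta)$ edges (from $3v_1 \le 2e_1$ and $\gamma = e_1 - v_1 + \zeta$). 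There are only $2^{O((\gamma+\zeta)^2)} = (\gamma+\zeta)^{O(\gamma+\zeta)}$ such small multigraphs up to isomorphism, and since $\gamma+\zeta \le t$ this is $t^{O(\gamma+\zeta)}$. Finally, to recover $G_0$ from $G_1$ one subdivides each of the $\le 3(\gamma+\zeta)$ edges of $G_1$ by choosing how many degree-$2$ vertices to place on it; the number of such choices is the number of ways to write $v_0 - v_1 \le t$ as an ordered sum over $\le 3(\gamma+\zeta)$ parts, which is at most $\binom{t + 3(\gamma+\zeta)}{3(\gamma+\zeta)} \le t^{O(\gamma+\zeta)}$.

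Multiplying the three contributions — $t^{O(\gamma+\zeta)}$ for the suppressed skeleton $G_1$, $t^{O(\gamma+\zeta)}$ for the subdivision pattern producing $G_0$, and $t^{O(\ell)}$ for reattaching the leaf edges — yields $N(t,\ell,\gamma,\zeta) = t^{O(\ell+\gamma+\zeta)}$, as claimed. The main obstacle I anticipate is the bookkeeping around degenerate cases: multigraphs with loops or with components that are single cycles (which have no degree-$\ge 3$ vertex at all, so the suppression step collapses them to a single loop or to nothing), and components consisting of a single leaf edge or a single isolated structure. Each of these contributes only $t^{O(\zeta)}$ extra choices (a bounded amount per component), so they do not change the bound, but they need to be enumerated carefully so that the induction/contraction argument is genuinely exhaustive. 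I would organize the proof so that loops and parallel edges are allowed throughout and the ``suppress degree-$2$ vertices'' operation is defined to also suppress a whole cycle-component down to a distinguished loop, keeping the skeleton count uniform.
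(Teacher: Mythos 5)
First, note that this paper does not prove Lemma~\ref{lem:num_graphs} itself; it imports it as \cite[Lemma~4]{AumullerDW14}. Your overall strategy --- reduce to a small ``kernel'' multigraph of size $O(\ell+\gamma+\zeta)$, count kernels, count subdivisions as compositions of $t$, and count leaf attachments --- is exactly the standard argument behind that cited lemma, and the core computations you give (the handshake inequality combined with $\gamma=e-v+\zeta$ to bound the kernel size, and $\binom{t+O(\gamma+\zeta)}{O(\gamma+\zeta)}=t^{O(\gamma+\zeta)}$ for the subdivision patterns) are correct.

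However, your first stage contains a step that fails as written. Removing the $\ell$ leaf edges from $G$ does \emph{not} leave a graph with no degree-$1$ vertices: in a simple path with $t$ edges we have $\ell=2$ and $t-2$ inner edges, and after deleting the two leaf edges the remaining path of length $t-2$ still has two leaves. Your proposed remedy --- iterate the stripping, ``each iteration paying a factor $t$'' --- would here require $\Theta(t)$ iterations and hence a bound of $t^{\Theta(t)}$, whereas the target is $t^{O(\ell+\gamma+\zeta)}=t^{O(1)}$ for this example. A star with three leaf edges shows a second degenerate case your parenthetical does not cover: there $G_0$ is empty, yet the graph is not a union of isolated edges. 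The clean fix is to skip the leaf-stripping stage entirely and define the kernel directly on $G$: take as kernel vertices all vertices of degree $\neq 2$ together with one marker vertex per $2$-regular cycle component. One has $|V_1|\le 2\ell$, and the handshake inequality $2e\ge |V_1|+2|V_2|+3|V_{\ge 3}|$ together with $e-v+\zeta=\gamma$ gives $|V_{\ge 3}|\le 2(\gamma-\zeta)+|V_1|= O(\ell+\gamma)$, so the kernel has $O(\ell+\gamma+\zeta)$ vertices and (again by the cyclomatic identity) $O(\ell+\gamma+\zeta)$ edges; $G$ is then recovered by subdividing kernel edges, which is exactly your composition count. With that reorganization your argument is complete and the degenerate cases (paths, stars, pure cycles, isolated edges) are all absorbed uniformly.
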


    We let $\LL \subseteq \GG^2_{m,n}$ consist of all bipartite graphs
    that contain no leaf edge.  
    It will turn out that for all our applications $\LL$ will 
    be a suitable ``intermediate'' graph property, i.e., for the graph
    property $\mathsf{A}$ interesting for the application it will hold
    $\mathsf{A} \subseteq \LL$, which will allow us to apply
    Lemma~\ref{lem:fail_superset}. (For example, graph property $\LL$ could have 
	been used instead of graph property $\mathsf{RMOG}$ 
    in the example of the previous section.) Hence our goal in this section is to show
    that there exists a constant
    $\alpha > 0$, which depends on the parameters $\ell$ and $c$ of the hash
    class $\RR^{c,2}_{\ell,m}$, such that 
    \begin{align*}
	\Pr\nolimits_{(h_1, h_2) \in \RR}\left(B^\LL_S \right) =
	O\left(n^{-\alpha}\right).
    \end{align*}
    Luckily, bounding $\Pr\left(B^\LL_S\right)$ is an example par
    excellence for applying Lemma~\ref{lem:fail_prob}. To use this 
    lemma we have to find a suitable peelable graph property (note that 
    $\LL$ is not peelable) and a suitable further graph property to which 
    that graph property reduces.

    We let $\LC$ consist of all graphs $G$ from $\GG^2_{m,n}$ that contain
    at most one connected component that has leaves, disregarding isolated vertices. 
    If such a component exists, we call it the 
    \emph{leaf component} of $G$.

    \begin{lemma}
        $\LC$ is peelable.
    \end{lemma}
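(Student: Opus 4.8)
The plan is to show directly that $\LC$ satisfies the peelability condition from Definition~\ref{def:peelability}: given any $G = (V,E) \in \LC$ with $|E| \ge 1$, I need to exhibit an edge $e \in E$ such that $(V, E-\{e\}) \in \LC$, i.e., after removing $e$ the graph still has at most one connected component with leaves.

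First I would split into cases according to whether $G$ has a leaf component or not. If $G$ has \emph{no} leaf component, then every connected component of $G$ is leafless. Pick any component $C$; since it is leafless, every vertex has degree at least $2$, so $C$ contains a cycle, hence a cycle edge $e$ (one whose removal disconnects nothing, in particular keeps $C$ connected). Removing $e$ from $C$ can create at most one new component with leaves (namely the modified $C$), while all other components stay leafless, so $(V, E-\{e\}) \in \LC$. If $G$ \emph{does} have a leaf component $C_0$, then I work inside $C_0$ and leave the other (leafless) components untouched: I want to remove an edge of $C_0$ so that the part of the graph outside $C_0$ together with whatever $C_0$ becomes still has at most one leaf component. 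There are two sub-cases. If $C_0$ has a cycle, remove a cycle edge $e$ of $C_0$ incident to a vertex of degree $\ge 3$ if one exists, or any cycle edge otherwise; then $C_0$ stays connected, so we still have exactly one component (the new $C_0$) that could possibly have leaves, and all other components remain leafless — so the new graph is in $\LC$. If $C_0$ is acyclic, it is a tree, and a tree with at least one edge has a leaf; remove a leaf edge $e$ of $C_0$ (deleting the degree-$1$ endpoint as an isolated vertex, which we disregard). This keeps $C_0$ connected (or empties it), introduces no new leaf component, and again yields a graph in $\LC$.

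The one point that needs a little care — and is the main obstacle, though a mild one — is verifying that removing an edge from the unique leaf component $C_0$ does not accidentally \emph{split} $C_0$ into two pieces both of which have leaves; that is exactly why one removes a \emph{cycle edge} in the cyclic case (cycle edges by definition do not disconnect) and a \emph{leaf edge} in the acyclic case (removing a pendant edge from a tree leaves a tree, which is connected). In both cases $C_0$ remains a single connected component, so the count of leaf components across the whole graph stays at most one. It is also worth observing, as in the $\mathsf{RMOG}$ peelability proof, that none of these deletions turns a previously leafless component into one with leaves, since we never touch those components at all.

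Thus in every case we produce the required edge $e$ with $(V, E - \{e\}) \in \LC$, which is exactly the peelability condition. I would write this up as a short case analysis mirroring the structure of the earlier $\mathsf{RMOG}$ peelability lemma, closing with \qed.
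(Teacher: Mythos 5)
Your proof is correct and follows essentially the same case split as the paper: if there is no leaf component, remove a cycle edge; otherwise operate inside the unique leaf component. The only real difference is that in the second case the paper simply removes a leaf edge of the leaf component (which always exists there and never disconnects it), so your sub-division into cyclic versus acyclic leaf components is sound but unnecessary; conversely, your insistence on choosing a genuine cycle edge in the leafless case is exactly the right level of care, since a leafless component may contain bridges whose removal would produce two leaf components.
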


    \begin{proof}
        Suppose $G \in \LC$ has at least one edge. If $G$ has no leaf
        component then all edges are cycle edges, and removing an arbitrary
        edge leaves a cycle component or creates a leaf component. So, the resulting graph has
        property $\LC$. If $G$ has a leaf component $C$, remove a 
        leaf edge. This makes the component smaller, but maintains 
        property $\LC$.  So, the
    resulting graph has again property $\LC$. \qquad
    \end{proof}
 
    We will also need the following auxiliary graph property:

\begin{definition}
    For $K \in \mathbb{N}$, let $\LCY{K} \subseteq \GG^2_{m,n}$ be the set of all
    bipartite graphs $G=(V,E)$ with the following properties (disregarding isolated vertices):
    \begin{enumerate} 
	\item at most one connected component of $G$ contains leaves (i.e., $\LCY{K}\subseteq \LC$);
	\item the number $\zeta(G)$ of connected components is bounded by $K$;
	\item if present, the leaf component of $G$ contains at most $K$ leaf and cycle edges;
	\item the cyclomatic number $\gamma(G)$ is bounded by $K$.
    \end{enumerate}
\end{definition}

\begin{lemma}
    Let $c \geq 1$. Then $\LC$ is $\LCY{4c}$-$2c$-reducible.
\label{lem:lcy:reducibility}
\end{lemma}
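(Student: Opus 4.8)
The plan is to build the required edge set $E''$ component by component. Given $G=(V,E)\in\LC$ and $E^\ast\subseteq E$ with $|E^\ast|\le 2c$, I partition $E^\ast$ according to the connected components of $G$ its edges lie in; call these components $C_1,\dots,C_r$ (so $r\le|E^\ast|\le 2c$) and set $F_j:=E^\ast\cap E(C_j)\ne\emptyset$. Since $G\in\LC$, at most one $C_j$ is the leaf component of $G$. For each $j$ I will construct a \emph{connected} subgraph $H_j\subseteq C_j$ with $F_j\subseteq E(H_j)$, and then put $E'':=\bigcup_j E(H_j)$. As the $C_j$ are vertex-disjoint, the connected components of $(V,E'')$ are exactly the $H_j$; hence $(V,E'')$ has at most $r\le 2c\le 4c$ components (second condition of $\LCY{4c}$), and the first and third conditions will follow once I ensure that only the $H_j$ arising from the leaf component of $G$ carries leaf edges and that it has few leaf-and-cycle edges.

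The main tool, used repeatedly, is a splicing step: if $X_1,\dots,X_q$ are pairwise vertex-disjoint connected subgraphs of a connected graph $C$, one can join them into a single connected subgraph by successively adding $q-1$ paths lying inside $C$ — at each stage pick, over all pairs of current components and all paths between them that are internally disjoint from those two components, a shortest such path. Minimality forces this path to be internally disjoint from \emph{all} current components (otherwise a proper prefix would give a shorter admissible path to a third component), so splicing it in merges exactly two components, introduces only new edges and vertices, and leaves the cyclomatic number unchanged. In particular the resulting connected subgraph has the same cyclomatic number as $X_1\cup\dots\cup X_q$, and every ``connecting'' edge added is an internal bridge of it.

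For a component $C_j$ that is \emph{not} the leaf component of $G$, $C_j$ is leafless, i.e.\ has minimum degree $\ge 2$. I start from the graph $(W_j,F_j)$ on the endpoints $W_j$ of $F_j$, connect its components by the splicing step inside $C_j$, and then repair leaves: while the current subgraph has a degree-$1$ vertex $w$, take an edge of $C_j$ incident to $w$ that is not yet used (one exists since $\deg_{C_j}(w)\ge 2$) and, following it, a shortest path back to the current subgraph; adding this path (or cycle) kills the leaf $w$, creates no new leaves because we only add edges, and raises the cyclomatic number by exactly one. The number of repairs is at most the number $D_0$ of degree-$1$ vertices of $(W_j,F_j)$, and a short count with $\gamma=|E|-|V|+(\#\text{components})$ together with $D_0\le|W_j|$ and $\#\text{components of }(W_j,F_j)\le|F_j|$ gives $\gamma((W_j,F_j))+D_0\le|F_j|+\#\text{components}\le 2|F_j|$. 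Hence the resulting $H_j$ is connected and leafless with $\gamma(H_j)\le 2|F_j|$.

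For the component $C^\ast$ that \emph{is} the leaf component of $G$ (if it carries edges of $E^\ast$), I only splice: $H^\ast$ is obtained from $(W^\ast,F^\ast)$ by connecting its pieces inside $C^\ast$, so $H^\ast$ is connected with $\gamma(H^\ast)=\gamma((W^\ast,F^\ast))\le|F^\ast|$, and I leave its leaves alone. The point is that, by the splicing step, every connecting edge is an internal bridge of $H^\ast$; therefore every leaf edge and every cycle edge of $H^\ast$ lies in $F^\ast$, so $H^\ast$ has at most $|F^\ast|\le 2c\le 4c$ leaf-and-cycle edges. Assembling $E''=\bigcup_j E(H_j)$: only $H^\ast$ can have leaf edges, so $(V,E'')$ has at most one leaf component, namely $H^\ast$, with $\le 4c$ leaf-and-cycle edges; and $\gamma(V,E'')=\sum_j\gamma(H_j)\le\sum_{j\ne\ast}2|F_j|+|F^\ast|\le 2\sum_j|F_j|=2|E^\ast|\le 4c$. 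Since also $E^\ast\subseteq E''\subseteq E$ by construction, all four conditions hold and $(V,E'')\in\LCY{4c}$. I expect the main obstacle to be the splicing step — making precise that a locally shortest connecting path avoids all other current components, so that the cyclomatic number stays under control — together with the small counting arguments that keep both the number of leaf repairs and the number of leaf-and-cycle edges of $H^\ast$ linear in $|E^\ast|$; the rest is routine bookkeeping.
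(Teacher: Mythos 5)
Your proof is correct, but it goes in the opposite direction from the paper's. The paper works top-down by deletion: it discards unmarked components, peels unmarked leaf and cycle edges out of the leaf component, and then, for each leafless component with $i$ marked edges, extracts a leafless subgraph of cyclomatic number at most $i+1$ via a spanning tree augmented by one non-tree edge per fundamental cycle of a marked cycle edge and one per cyclic piece left after cutting at marked bridges. You instead build bottom-up from $E^\ast$: you splice the marked edges of each component together with shortest connecting paths (correctly observing that a globally shortest admissible path is internally disjoint from \emph{all} current pieces, so each splice preserves the cyclomatic number and every spliced edge stays a bridge), and then, in leafless components only, you repair each surviving degree-one vertex by grafting on a return path, each graft costing exactly one in cyclomatic number. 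Your counting ($\gamma(H_j)\le 2|F_j|$ per leafless component, $\le|F^\ast|$ for the leaf component, at most $|F^\ast|$ leaf-and-cycle edges there since all spliced edges are internal bridges) gives totals of at most $2|E^\ast|\le 4c$ for both the cyclomatic number and the component count, so all four conditions of $\LCY{4c}$ hold. The per-component constants differ slightly (the paper gets $i+1$ where you get $2i$ for leafless components), but both land within the same $\LCY{4c}$ target, so nothing downstream changes. What your approach buys is that it avoids the fundamental-cycle machinery entirely and replaces the paper's separate graph-theoretic claim with one uniform splice-and-repair mechanism; what the paper's buys is a marginally tighter per-component bound and a shorter treatment of the leaf component (pure peeling rather than an argument about which edges can be leaf or cycle edges).
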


\begin{proof}
    Consider an arbitrary graph $G = (V,E) \in \LC$ and an arbitrary edge set 
    $E^\ast \subseteq E$ with $|E^\ast| \leq 2c$. We say that an edge that belongs to $E^\ast$ is \emph{marked}.
    $G$ satisfies
    Property $1$ of graphs from $\LCY{4c}$. We process $G$ in three stages:

\emph{Stage} 1: Remove all components of $G$ without marked edges. Afterwards
at most $2c$ components are left, and $G$ satisfies Property $2$.

\emph{Stage} 2: If $G$ has a leaf
component $C$, repeatedly remove unmarked leaf and cycle edges 
from $C$, while $C$ has such edges. The remaining leaf and cycle edges in $C$ are marked, 
and thus their number is at most $2c$; Property 3 is satisfied.

\emph{Stage} 3: If there is a leaf component $C$ with $z$ marked edges (where $z\le2c$),
then at least one of them is a leaf edge, and hence
$\gamma(C) \leq z - 1$. Now consider an arbitrary leaf\/less component $C'$ with cyclomatic number $z$.
We construct a suitable subgraph $C''$ of $C'$. For this, 
we need the following graph theoretic claim:
\begin{claim}\label{claim:leafless:reduce}
Every leaf{\kern0.5pt}less connected graph with $i$ marked edges has a leaf{\kern0.5pt}less connected subgraph with 
cyclomatic number $\le i {+} 1$ that contains all marked edges. 
\label{lem:cyclic_cyclo_peeling}
\end{claim}

\begin{proof}
Let $G=(V,E)$ be a leaf\/less connected graph with $i$ marked edges. If $\gamma(G) \le i+1$, there is nothing to prove. 
So suppose $\gamma(G) \ge i+2$. Choose an arbitrary spanning
tree $(V,E_0)$ of $G$.

There are two types of edges in $G$:
\emph{bridge edges} and \emph{cycle edges}. A bridge edge is an edge whose
deletion disconnects the graph,
cycle edges are those whose deletion does not disconnect the graph.

Clearly, all bridge edges are in $E_0$.  
Let $E_{\text{mb}}\subseteq E_0$ denote the set of marked bridge edges. 
Removing the edges of $E_{\text{mb}}$ from $G$ splits $V$ into $|E_{\text{mb}}|+1$ connected components  $V_1,\ldots,V_{|E_\text{mb}|+1}$;
removing the edges of $E_{\text{mb}}$ from the spanning tree $(V,E_0)$ will give exactly the same components.
For each \emph{cyclic} component $V_j$ we choose one edge $e_j\notin E_0$ that connects two nodes in $V_j$.
The set of these $|E_\text{mb}|+1$ edges is called $E_1$. Now each marked
bridge edge lies on a path connecting two cycles in $(V,E_0 \cup E_1)$.

Recall from graph theory~\cite{diestel} the notion of a fundamental cycle:
Clearly, each edge $e\in E-E_0$ closes a unique cycle with $E_0$.
The cycles thus obtained are called the fundamental cycles of $G$ w.\,r.\,t. the spanning tree $(V,E_0)$.
Each cycle in $G$ can be obtained as an XOR-combination of fundamental cycles.
(This is just another formulation of the standard fact that the fundamental cycles form a basis
of the ``cycle space'' of $G$, see~\cite{diestel}.)
From this it is immediate that every cycle edge of $G$ lies on some fundamental cycle. 
Now we associate an edge $e'\notin E_0$ with each marked cycle edge $e\in E_{\text{mc}}$.
Given $e$, let
$e'\notin E_0$ be such that $e$ is on the fundamental cycle of $e'$.
Let $E_2$ be the set of all edges $e'$ chosen in this way. Clearly, each $e \in
E_{\text{mc}}$ is a cycle edge in $(V,E_0\cup E_2)$.

Now let $G' =(V,E_0 \cup E_1 \cup E_2)$. Note that
$|E_1\cup E_2| \leq(|E_\text{mb}| + 1) +|E_\text{mc}| \le i+1$ and thus $\gamma(G') \leq i + 1$. 
In $G'$, each marked edge is on a cycle or on a path that connects two cycles. 
If we iteratively remove leaf edges from $G'$ until no leaf is left, none of the marked edges will be affected. 
Thus, we obtain the desired leaf\/less subgraph $G^\ast$ with $\gamma(G^\ast)=\gamma(G')\le i+1$. \qquad
\end{proof}

Claim~\ref{claim:leafless:reduce} gives us a leaf\/less subgraph $C''$ of our leafless component $C'$ 
with $\gamma(C'') \leq z+1$ that contains all marked edges of $C'$. 
We remove from $G$ all vertices and edges of $C'$ that are not in $C''$.
Doing this for all leaf\/less components
yields the final graph $G$. 
Summing contributions to the cyclomatic number of $G$
over all (at most $2c$) components, we see that $\gamma(G) \leq 4c$;
thus Property 4 is satisfied. \qquad 
\end{proof}

\noindent We now bound the additive failure term $\Pr\left(B^{\LL}_S\right)$.

\begin{lemma}
    Let $S \subseteq U$ with $|S| = n,$ $\varepsilon > 0,$ $c \geq
    1$, and let $\ell \geq 1$. Assume $m \geq (1
    + \varepsilon) n$. If $(h_1, h_2)$ are chosen at random from $\RR^{c,2}_{\ell,m}$, then
    \begin{align*}
        \Pr\left(B^\LL_S\right) \leq \Pr\left(B^{\LC}_S\right) = O\left(n/\ell^c\right).
    \end{align*}
\label{lem:failure_term_bound}
\end{lemma}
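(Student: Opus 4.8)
The plan is to apply Lemma~\ref{lem:fail_prob} with $\mathsf{A}=\LL$, $\mathsf{B}=\LC$, and $\mathsf{C}=\LCY{4c}$. We have already established everything needed to invoke that lemma: $\LL\subseteq\LC$ trivially (a graph with no leaf edges has at most one — in fact zero — leaf components); $\LC$ is peelable; and $\LC$ is $\LCY{4c}$-$2c$-reducible. Thus Lemma~\ref{lem:fail_prob} immediately gives
\begin{align*}
\Pr\left(B^\LL_S\right)\le\Pr\left(B^{\LC}_S\right)\le\ell^{-c}\cdot\sum_{t=2}^{n}t^{2c}\cdot\mu^{\LCY{4c}}_t.
\end{align*}
So the only real work is to show $\sum_{t=2}^{n}t^{2c}\mu^{\LCY{4c}}_t=O(n)$, i.e.\ to bound the expected number (in the fully random model) of subgraphs of $G(S,h_1,h_2)$ with exactly $t$ edges that lie in $\LCY{4c}$.

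First I would count unlabeled graphs in $\LCY{4c}$ with exactly $t$ edges. Here is where Lemma~\ref{lem:num_graphs} does the heavy lifting: a graph in $\LCY{4c}$ has at most $4c$ leaf edges in its (unique) leaf component, at most $4c$ connected components, and cyclomatic number at most $4c$, so by Lemma~\ref{lem:num_graphs} the number of such unlabeled graphs with $t$ edges is $t^{O(\ell+\gamma+\zeta)}=t^{O(c)}$ — polynomial in $t$ with a degree depending only on $c$. Then, for a fixed unlabeled graph with $t$ edges, I bound the number of ways to fully label it: there are at most $2$ choices for the bipartition, at most $m^{v}$ ways to assign vertex labels where $v$ is the number of vertices, and at most $n^t$ ways to assign the $t$ edge labels from $\{1,\dots,n\}$. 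Crucially, a graph in $\LCY{4c}$ with $t$ edges has $v\le t+1$ vertices (in fact $v = t - \gamma + \zeta \le t$ when there are no leaves, and at most $t+1$ in general — I would check this carefully, but $v\le t+4c$ is a safe bound that still works). The probability that $t$ fixed random edges from $[m]^2$ realize a fixed fully labeled graph is exactly $m^{-2t}$. Multiplying out, $\mu^{\LCY{4c}}_t = O\!\left(t^{O(c)}\cdot n^t m^{t+O(c)}\cdot m^{-2t}\right)=O\!\left(m^{O(c)}\cdot t^{O(c)}\cdot(n/m)^t\right)$, and since $m\ge(1+\varepsilon)n$ we get a geometrically decaying factor $(1+\varepsilon)^{-t}$ times a polynomial in $t$, with a leading $m^{O(c)}$.

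The slightly delicate point — and the one I'd expect to be the main obstacle — is getting the vertex count and hence the power of $m$ exactly right, because an extra unmatched factor of $m$ would break the $O(n)$ bound (we need the final sum to be $O(n)$, not $O(n^2)$). The key fact is that for a leafless graph $v \le t$ (since $t = v + \gamma - \zeta$ and $\gamma\ge\zeta$ for leafless components, or one argues each component has at least as many edges as vertices), and the at most $4c$ leaf edges contribute at most $4c$ extra vertices, so $v\le t + O(c)$, giving $m^{t+O(c)-2t}=m^{-t}\cdot m^{O(c)}$. Combined with $n^t$ this yields $(n/m)^t\cdot m^{O(c)}\le (1+\varepsilon)^{-t}\cdot m^{O(c)}$, so
\begin{align*}
\ell^{-c}\sum_{t=2}^n t^{2c}\mu^{\LCY{4c}}_t = O\!\left(\frac{m^{O(c)}}{\ell^c}\sum_{t=2}^n \frac{t^{O(c)}}{(1+\varepsilon)^t}\right) = O\!\left(\frac{m^{O(c)}}{\ell^c}\right).
\end{align*}
Since $c$ is a constant and $m=\Theta(n)$, this is $O(n^{O(1)}/\ell^c)$; one then observes that the bound we actually want, $O(n/\ell^c)$, follows once the constant hidden in $O(c)$ is pinned down — more precisely I would track the exponent of $m$ as something like $2c$ or $4c+O(1)$ and state the result as $O(n^{4c+1}/\ell^c)$ or absorb it, matching however the paper phrases it; the cleanest route is to note that the leading behavior in $t$ is $m\cdot(\text{poly in }t)/(1+\varepsilon)^t$ after the $m^{-2t}\cdot n^t \cdot m^{t}$ cancellation leaves exactly one factor $m$ (analogous to Claim~\ref{hashing:claim:rmog}), so the sum over $t$ converges and we are left with $O(m/\ell^c)=O(n/\ell^c)$. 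The inequality $\Pr(B^\LL_S)\le\Pr(B^{\LC}_S)$ is just Lemma~\ref{lem:fail_superset}, completing the proof.
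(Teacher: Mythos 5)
Your proposal follows the paper's proof exactly: Lemma~\ref{lem:fail_superset} for the first inequality, Lemma~\ref{lem:fail_prob} with $\mathsf{B}=\LC$ and $\mathsf{C}=\LCY{4c}$, and then the counting bound on $\mu^{\LCY{4c}}_t$ via Lemma~\ref{lem:num_graphs}. The one point you must not leave vague is the vertex count, and here your hedge goes the wrong way: the bound $v\le t+4c$ is \emph{not} ``a safe bound that still works.'' It leaves a factor $m^{4c}$ after the cancellation $m^{v}\cdot n^t\cdot m^{-2t}$, so the final sum becomes $O(n^{4c}/\ell^c)$ rather than $O(n/\ell^c)$, which is useless for the intended parameter settings. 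Likewise, ``the at most $4c$ leaf edges contribute at most $4c$ extra vertices'' is not the right accounting. The correct statement, which your ``cleanest route'' implicitly uses, is $v\le t+1$: every connected component of a graph in $\LCY{4c}$ other than the (at most one) leaf component is leaf\/less, hence contains a cycle, hence has at most as many vertices as edges; the single leaf component, being connected, has at most one more vertex than it has edges. This is exactly how the paper argues, and it is what leaves precisely one surviving factor of $m$, giving $\mu^{\LCY{4c}}_t\le 2n\,t^{O(1)}/(1+\varepsilon)^{t-1}$ and hence the claimed $O(n/\ell^c)$. With that repair your argument is the paper's argument.
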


\emph{Proof.}
    According to Lemma~\ref{lem:fail_prob} and Lemma~\ref{lem:lcy:reducibility} it holds that
    \begin{align*}
	\Pr\left(B^{\LL}_S\right) \leq \Pr\left(B^{\LC}_S\right) \leq \ell^{-c}\cdot\sum_{t = 2}^n t^{2c} \cdot  \mu^{\LCY{4c}}_t.
    \end{align*}

    \begin{claim}
        \begin{align*}
            \mu^{\LCY{4c}}_t = \frac{2 n \cdot t^{O(1)}}{(1+ \varepsilon)^{t-1}}.
        \end{align*}
    \end{claim}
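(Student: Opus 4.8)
The plan is to prove the claimed bound
$$
\mu^{\LCY{4c}}_t = \frac{2n\cdot t^{O(1)}}{(1+\varepsilon)^{t-1}}
$$
by the same counting-plus-first-moment recipe used for Claim~\ref{hashing:claim:rmog} and Lemma~\ref{hashing:lem:cuckoo:hashing:fully:random}, but this time relying on the graph-counting bound from Lemma~\ref{lem:num_graphs}. First I would recall that $\mu^{\LCY{4c}}_t = \sum_{G\in\LCY{4c},\,|E(G)|=t}\Pr^\ast(I_G=1)$, and that once a fully labeled graph $G$ with $t$ edges is fixed, drawing $t$ labeled edges from $[m]^2$ realizes it with probability exactly $1/m^{2t}$. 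So the whole estimate reduces to counting fully labeled bipartite graphs in $\LCY{4c}$ with exactly $t$ edges, and multiplying by $m^{-2t}$.

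Next I would count in the usual layered fashion. Write such a graph as having $\ell$ leaf edges and $t-\ell$ inner edges (here $\ell$ is the number of leaf edges, not the $g$-range; I would use a fresh letter to avoid a clash), with cyclomatic number $\gamma\le 4c$ and number of components $\zeta\le 4c$; moreover the leaf component has at most $4c$ leaf-plus-cycle edges, which forces $\ell\le 4c$ as well. By Lemma~\ref{lem:num_graphs} the number of \emph{unlabeled} such multigraphs is $t^{O(\ell+\gamma+\zeta)} = t^{O(c)} = t^{O(1)}$, since $c$ is a constant. For each unlabeled graph I then: choose the bipartition side of each vertex ($2$ ways per component, but $\le 2^{\zeta}=O(1)$ ways total); label the vertices from $[m]$, which costs at most $m^{v}$ where $v$ is the number of vertices; and label the $t$ edges with distinct labels from $\{1,\dots,n\}$, costing at most $n^{t}$ ways. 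The key structural fact is that a leaf\/less graph with $\gamma$ components and cyclomatic number $\gamma(G)$ satisfies $v \le$ (number of edges) $-\gamma(G)+\zeta(G)$; accounting for the one possible leaf component (whose at most $4c$ leaf edges each add one extra vertex) gives $v \le t-1+O(c) = t-1+O(1)$. Hence the number of labeled graphs is at most $t^{O(1)}\cdot n^{t}\cdot m^{t-1+O(1)}$.

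Putting this together,
$$
\mu^{\LCY{4c}}_t \;\le\; \frac{t^{O(1)}\, n^{t}\, m^{t-1+O(1)}}{m^{2t}}
\;=\; t^{O(1)}\cdot \frac{n^{t}}{m^{t+1-O(1)}}
\;=\; t^{O(1)}\cdot \frac{m^{O(1)}}{m}\cdot\Bigl(\frac{n}{m}\Bigr)^{t},
$$
and using $m\ge(1+\varepsilon)n$, i.e.\ $n/m\le 1/(1+\varepsilon)$, this is at most $2n\cdot t^{O(1)}/(1+\varepsilon)^{t-1}$ for $n$ large enough (the constant $2$ absorbing lower-order factors and the $m^{O(1)}/m = m^{O(1)-1}$ bookkeeping, which is $\le \text{const}\cdot n$ since $m=\Theta(n)$). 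Finally I would plug this into $\Pr(B^\LL_S)\le \ell^{-c}\sum_{t=2}^n t^{2c}\mu^{\LCY{4c}}_t \le \ell^{-c}\cdot 2n\sum_{t\ge2} t^{2c+O(1)}/(1+\varepsilon)^{t-1} = O(n/\ell^c)$, where convergence of $\sum_t t^{O(1)} q^{-t}$ for $q=1+\varepsilon>1$ finishes Lemma~\ref{lem:failure_term_bound}.

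The main obstacle — and the step deserving the most care — is the vertex-count bookkeeping: correctly bounding the number of vertices $v$ of a graph in $\LCY{4c}$ with $t$ edges by $t-1+O(1)$. This requires combining the Euler-type identity for each leaf\/less component ($\#\text{vertices} = \#\text{edges} - \gamma(C) + 1$) with the fact that the unique leaf component contributes at most $4c$ extra "pendant" structure, and checking that all the $O(1)$ slack (from $\gamma\le 4c$, $\zeta\le 4c$, and the leaf budget) really is absorbed into the constant in the exponent of $t$ and into the single factor of $n$ out front, rather than secretly costing a factor growing with $t$. Everything else is a routine repetition of the counting argument already rehearsed twice in the cuckoo-hashing example, now fed through Lemma~\ref{lem:num_graphs} instead of the ad hoc $(t-1)^2$ bound.
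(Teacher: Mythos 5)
Your overall strategy is exactly the paper's: bound the number of unlabeled graphs in $\LCY{4c}$ with $t$ edges by $t^{O(c)}=t^{O(1)}$ via Lemma~\ref{lem:num_graphs}, multiply by the choices of bipartition, vertex labels from $[m]$, and edge labels from $\{1,\dots,n\}$, and divide by $m^{2t}$. The one place where the two write-ups genuinely diverge is the vertex count, and that is precisely where your version, as written, does not close. You bound the number of vertices by $v\le t-1+O(c)$ and then assert that the resulting factor $m^{O(1)}/m=m^{O(1)-1}$ is at most $\mathrm{const}\cdot n$ because $m=\Theta(n)$. That assertion is false for a generic $O(1)$ exponent: if the hidden constant is, say, $4c$, you get a prefactor of order $n^{4c}$ rather than $n$, the claim's single factor of $n$ is lost, and downstream the bound $\Pr\bigl(B^{\LC}_S\bigr)=O(n/\ell^c)$ in Lemma~\ref{lem:failure_term_bound} degrades to $O(n^{O(c)}/\ell^c)$, which is useless for $\ell=n^\delta$ with $\delta<1$. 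The paper avoids this by pinning the vertex count down to $v\le t+1$ exactly: every leaf\/less component has minimum degree $2$ and hence at most as many vertices as edges, and the single leaf component (a connected graph) has at most one more vertex than edges. Your own Euler identity $v_C=e_C-\gamma(C)+1$ per component gives the same conclusion once you note that $\gamma(C)\ge1$ for each leaf\/less component, so the fix is entirely within your argument; but the step you yourself flagged as "deserving the most care" is the one you left at the level of an unquantified $O(1)$, and the subsequent absorption of $m^{O(1)-1}$ into $\mathrm{const}\cdot n$ is the step that would fail.
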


\emph{Proof.} 
By Lemma~\ref{lem:num_graphs}, there are at most $t^{O(c)} =
t^{O(1)}$ ways to choose a bipartite graph $G$ in $\LCY{4c}$ with $t$ edges. 
Graph $G$ cannot have more than $t+1$ nodes, since cyclic
components have at most as many nodes as edges, and in the single leaf component, if 
present, the number of nodes is at most one bigger than the number of edges.  
In each component of $G$, there are two ways to assign the vertices to the two sides of the
bipartition. After such an assignment is fixed, there are at most $m^{t+1}$ ways
to label the vertices with elements of $[m]$, and there are not more than $n^t$ ways
to label the $t$ edges of $G$ with labels from $\{1,\ldots,n\}$. Assume now such labels have been chosen for $G$.
Draw $t$ labeled edges according to the labeling of $G$ from $[m]^2$ uniformly at random.
The probability that they exactly fit the labeling of nodes and edges of $G$ is $1/m^{2t}$.  
Thus, 
\begin{align*}
    \mu^{\LCY{4c}}_t \leq \frac{2\cdot m^{t+1}\cdot n^t \cdot t^{O(1)}}{m^{2t}} \leq \frac{2n \cdot
    t^{O(1)}}{(1+\varepsilon)^{t-1}}.\qquad\endproof
\end{align*}
We use this claim to finish the proof of Lemma~\ref{lem:failure_term_bound} by the 
following calculation: 
\begin{align*}
    \Pr(B^\LC_S) &\leq \ell^{-c} 
    \sum_{t = 2}^n  t^{2c} \cdot \mu^{\LCY{4c}}_t
     \leq \frac{2n}{\ell^{c}} \cdot \sum_{t = 2}^n \frac{t^{O(1)} }{ (1
     + \varepsilon)^{t-1} } = O\left(\frac{n}{\ell^{c}}\right).\qquad\endproof
\end{align*}
    
We now turn our focus to cuckoo hashing with a stash. We reprove a result
from \cite{AumullerDW14} to demonstrate the power of the framework. 

\subsection{Cuckoo Hashing (with a Stash)}
Kirsch, Mitzenmacher, and Wieder \cite{stash} proposed augmenting the 
cuckoo hashing tables with a \emph{stash}, an
additional segment
of storage that can hold up to $s$ keys for some (constant) parameter $s$. They
showed that using a stash of size $s$ reduces the rehash probability to
${\rmTheta}(1/n^{s+1})$. For details of the algorithm, see \cite{stash:journal:09}.
 
We focus on the question whether the pair $(h_1,h_2)$ allows storing the key set $S$
in the two tables with a stash of size $s$.  
It is known from \cite{stash:journal:09,AumullerDW14} that a single parameter of $G=G(S,h_1,h_2)$
determines whether a stash of size $s$ 
is sufficient to store $S$ using $(h_1,h_2)$, namely 
the \emph{excess} $\text{ex}(G)$.
    The excess $\ex(G)$ of a graph $G$ is defined as the minimum number of edges
one has to remove from $G$ so that all connected components 
of the remaining graph are acyclic or unicyclic. In \cite{stash:journal:09} it is shown 
that the excess of a graph $G=(V,E)$ is $\ex(G) = \gamma(G) -
\zeta_{\textrm{cyc}}(G),$ where $\zeta_{\textrm{cyc}}(G)$ is the number of
cyclic connected components in $G$. The connection between the excess of a graph 
and the failure probability of cuckoo hashing with a stash is that $(h_1, h_2)$ 
are suitable for a key set $S$ if and only
if \emph{$\textnormal{ex}(G(S,h_1,h_2))\le s$}. 

The following theorem shows that one can replace the 
full randomness assumption of \cite{stash:journal:09} by hash functions
from hash class $\RR$. 

\begin{theorem}[\cite{AumullerDW14}]
    Let $\varepsilon > 0$ and $0 < \delta <1$, let  $s\ge0$  be
given. Assume $c\ge  (s+2)/\delta$.  
    For $n\ge 1$ consider $m\ge(1+\varepsilon)n$ and $\ell=n^\delta$.  
    Let $S \subseteq U$ with $|S|=n$.
    Then for $(h_1,h_2)$ chosen at random from $\RR=\RR^{c,2}_{\ell,m}$ the
following holds:
    $$\Pr(\ex(G(S,h_1,h_2)) \geq s+1) = O(1/n^{s+1}).$$
    \label{thm:r_excess_prob_zz}\samepage
\end{theorem}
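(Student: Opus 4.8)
The plan is to follow exactly the template laid out in the static cuckoo hashing example (Section~\ref{hashing:sec:example:c:h}), but now using the excess $\ex(G)$ instead of the number of cycles, and using the leafless machinery of Section~\ref{hashing:sec:leafless} rather than the ad-hoc $\mathsf{RMOG}$ property. First I would translate the event $\{\ex(G(S,h_1,h_2))\ge s+1\}$ into a statement about the existence of a subgraph with a suitable graph property. Since $\ex(G)=\gamma(G)-\zeta_{\mathrm{cyc}}(G)$, a graph has excess $\ge s+1$ iff it contains a subgraph consisting of $s+1$ ``excess cycles'' beyond a spanning structure; concretely, one wants a \emph{leafless} graph whose excess is exactly $s+1$. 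So I would define a graph property $\mathsf{EX}_{s+1}\subseteq\GG^2_{m,n}$ consisting of all leafless bipartite graphs (disregarding isolated vertices) with $\ex=s+1$, and argue that $\ex(G(S,h_1,h_2))\ge s+1$ implies $N^{\mathsf{EX}_{s+1}}_S>0$. The key structural point is that such a minimal leafless witness exists: remove leaf edges repeatedly (this does not change the excess of a cyclic component), then peel off excess cycles until exactly $s+1$ remain; each such witness has cyclomatic number and component count bounded by a constant depending only on $s$, so it has at most $O_s(t)$ nodes for $t$ edges.

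Next, I would apply Lemma~\ref{lem:good:bad} to get
\begin{align*}
\Pr\left(N^{\mathsf{EX}_{s+1}}_S>0\right)\le \Pr\left(B^{\mathsf{EX}_{s+1}}_S\right)+\E^\ast\left(N^{\mathsf{EX}_{s+1}}_S\right).
\end{align*}
For the fully random term, a standard first-moment computation using Lemma~\ref{lem:num_graphs} (there are $t^{O_s(1)}$ unlabeled such graphs with $t$ edges, at most $t - s$ or so nodes since excess-$(s{+}1)$ leafless graphs have roughly $s+1$ fewer nodes than edges, times $2^{O_s(1)}$ bipartition choices, times $m^{\#\text{nodes}}n^t$ labelings, times $m^{-2t}$) gives a geometric series summing to $O(1/n^{s+1})$; this bound is already essentially known from \cite{stash:journal:09,AumullerDW14}, so I would cite it and only sketch the counting. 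For the failure term, I would invoke the leafless framework directly: since $\mathsf{EX}_{s+1}\subseteq\LL$ (all its graphs are leafless), Lemma~\ref{lem:fail_superset} gives $\Pr(B^{\mathsf{EX}_{s+1}}_S)\le\Pr(B^{\LL}_S)$, and Lemma~\ref{lem:failure_term_bound} gives $\Pr(B^{\LL}_S)=O(n/\ell^c)$.

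Putting these together yields $\Pr(\ex(G)\ge s+1)=O(n/\ell^c)+O(1/n^{s+1})$, and with $\ell=n^\delta$ and $c\ge(s+2)/\delta$ we have $n/\ell^c = n^{1-c\delta}\le n^{-(s+1)}$, so both terms are $O(1/n^{s+1})$, which is the claimed bound. The main obstacle, and the only place where real work beyond the example in Section~\ref{hashing:sec:example:c:h} is needed, is the first-moment estimate $\E^\ast(N^{\mathsf{EX}_{s+1}}_S)=O(1/n^{s+1})$: one must verify that a leafless graph of excess $s+1$ with $t$ edges has at most $t-(s+1)$ nodes (equivalently, that collapsing degree-2 paths leaves a multigraph of constant size with the right node-deficit), so that the $m^{\#\text{nodes}}/m^{2t}$ factor contributes $m^{-(t+s+1)}$ and, combined with the $n^t$ edge labelings and $m\ge(1+\varepsilon)n$, produces the $(1+\varepsilon)^{-t}/n^{s+1}$ summand whose geometric sum is $O(1/n^{s+1})$. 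The subtlety is purely bookkeeping about how excess relates to the node/edge count in leafless multigraphs; everything about the hash class itself is already packaged in Lemma~\ref{lem:failure_term_bound}, so, as in the static case, no properties of $\RR$ beyond that lemma need to be touched.
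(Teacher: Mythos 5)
Your route is the same as the paper's: the paper defines an ``excess-$(s+1)$ core graph'' property $\CS{s+1}$ (leaf\/less, excess exactly $s+1$, \emph{and} every component has at least two cycles), cites \cite{AumullerDW14} for the facts that $\ex(G)\ge s+1$ forces such a subgraph and that $\E^\ast(N^{\CS{s+1}}_S)=O(1/n^{s+1})$, and then bounds the failure term exactly as you do, via $\CS{s+1}\subseteq\LL$ and Lemma~\ref{lem:failure_term_bound}. The parameter bookkeeping at the end is identical.

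One point in your sketch needs repair. Your witness property $\mathsf{EX}_{s+1}$ is defined only as ``leaf\/less with $\ex=s+1$'', and from that alone the number of connected components is \emph{not} bounded by a constant in $s$: unicyclic components are leaf\/less and contribute nothing to the excess, so a graph in your class with $t$ edges can have $\Theta(t)$ components, hence cyclomatic number $\gamma=\ex+\zeta_{\mathrm{cyc}}=s+1+\zeta$ growing linearly in $t$. Then the count of unlabeled witnesses is $t^{O(\gamma+\zeta)}$ by Lemma~\ref{lem:num_graphs}, not $t^{O_s(1)}$, and the bipartition factor is $2^{\zeta}$ rather than $2^{O_s(1)}$, so the geometric series in your first-moment estimate no longer obviously converges to $O(1/n^{s+1})$. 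The fix is exactly the extra clause in the paper's definition: require every component of the witness to contain at least two cycles, so that each contributes at least $1$ to the excess and therefore $\zeta\le s+1$ and $\gamma\le 2(s+1)$. You gesture at this when you assert bounded component count, but it must be built into the graph property (discard the acyclic and unicyclic components when extracting the witness, which does not change the excess). With that amendment your node-count identity $v=t-(s+1)$ for leaf\/less excess-$(s+1)$ graphs is correct and the rest of the argument goes through as in the paper.
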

\begin{proof}
As in \cite{AumullerDW14}, we define an \emph{excess-$(s+1)$ core graph} as 
a leaf\/less graph $G$ with excess
exactly $s + 1$ in which all connected components have at least two cycles.
By $\CS{s+1}$ we denote the set of all excess-$(s+1)$ core graphs in
$\mathcal{G}^2_{m, n}$. (For an illustration, see Figure~1 in \cite{AumullerDW14}.)

From \cite[Lemma 6]{AumullerDW14} we know that each $G = G(S, h_1, h_2)$ with $\ex(G) \geq s + 1$
contains an excess-$(s+1)$ core graph. Hence 
$\Pr(\ex(G(S, h_1, h_2)) \geq s + 1) \leq \Pr\left(N^{\CS{s+1}}_S > 0\right)$.
To prove Theorem~\ref{thm:r_excess_prob_zz}, it suffices to  show that
$\Pr\left(N^{\CS{s+1}}_S > 0\right) = O\left(1/n^{s+1}\right)$.
By Lemma~\ref{lem:good:bad}, we know that 
\begin{align}\label{eq:1005}
    \Pr\left(N^{\CS{s+1}}_S > 0\right) \leq \Pr\left(B^{\CS{s+1}}_S\right) +
\E^\ast\left(N^{\CS{s+1}}_S\right).
\end{align}
From \cite[Lemma 7]{AumullerDW14} we know that $\E^\ast\left(N^{\CS{s+1}}_S\right) = O(1/n^{s + 1})$. 
Since $\CS{s+1} \subseteq \LL$, we may apply Lemma~\ref{lem:failure_term_bound} and write
\begin{align}\label{eq:1010}
    \Pr\left(N^{\CS{s+1}}_S > 0\right) &\leq O\left(\frac{n}{\ell^c}\right) +
    O\left(\frac{1}{n^{s + 1}}\right)
= O\left(\frac{1}{n^{s+1}}\right),
\end{align}
for the parameters used in Theorem~\ref{thm:r_excess_prob_zz}. \qquad
\end{proof}

\subsection{Simulation of a Uniform Hash Function}

Consider a universe $U$ of keys and a finite set $R$. Suppose 
we want to construct a hash function that takes on fully random
values from $R$ on 
a key set $S \subseteq U$ of size $n$. 
The naïve construction just assigns a random hash value
to each key $x \in S$ and stores the key-value pair in a hash table that supports
lookup in constant time and construction in expected time $O(n)$, e.\,g., 
cuckoo hashing (with a stash). For information theoretical reasons, this 
construction needs space at least $n \log |R|$. (See, e.\,g., \cite[Lemma 5.3.1]{Rink14}.)
We will now see that we can achieve much more in (asymptotically) almost the same space.

By the term ``\emph{simulating uniform hashing for $U$ and $R$}'' we mean an
algorithm that does the following.  On input $n\in\mathbb{N}$, a randomized
procedure sets up a data structure DS$_n$ that represents a hash function
$h\colon U\to R$, which can then be evaluated efficiently for keys in $U$.  For
each set $S\subseteq U$ of cardinality $n$ there is an event $B_S$ that occurs 
with small probability such that conditioned on $\overline{B_S}$ the values $h(x)$, $x\in S$, are
fully random.  So, in contrast to the naïve construction from above, one $h$ can
be shared among many applications and works on each set $S \subseteq U$ of size
$n$ with high probability.  The quality of the algorithm is determined by the
space needed for DS$_n$, the evaluation time for $h$, and the probability of the
event $B_S$, which we call the \emph{failure probability of the construction}.
It should be possible to evaluate $h$ in constant time.  Again, the information
theoretical lower bound implies that at least $n\log|R|$ bits are needed to
represent DS$_n$.

The first constructions that matched this space bound up to constant factors were 
proposed independently by Dietzfelbinger and Woelfel \cite{DW2003a} and Östlin and Pagh 
\cite{pagh_uniform_conference}. 
In the following, let $R$ be the range of the hash function to be constructed, 
and assume that $(R,\oplus)$ is a commutative group. (For example, we 
could use $R=[t]$ with addition mod $t$.) We sketch the construction 
of \cite{pagh_uniform_conference} next.


The construction described in \cite{pagh_uniform_conference} builds upon 
the graph $G(S, h_1, h_2)$. Each vertex $v$ of $G(S, h_1, h_2)$ is associated with a random
element $x_v$ from $R$.  The construction uses a third hash function $h_3\colon
U \rightarrow R$.  All three hash functions have to be chosen from a
$n^\delta$-wise independent class.  Let $x \in U$ be an
arbitrary key and let $(v,w)$ be the edge that corresponds to $x$ in $G(S, h_1,
h_2)$. The hash value of $x$ is  $h(x) = x_v \oplus x_w \oplus h_3(x)$.  This
construction uses $8n \cdot \log |R| + o(n) + O(\log \log |U|)$ bits of space
and achieves a failure probability of $O(1/n^s)$ for each $s \geq 1$. (The
influence of $s$ on the description length of the data structure is in the
$o(n) + O(\log \log |U|)$ term. It is also in the construction time of the hash
functions $h_1, h_2, h_3$.) The evaluation time is dominated by the evaluation
time of the three highly-independent hash functions.  The construction of
\cite{pagh_uniform_conference} runs in time $O(n)$. In their full paper
\cite{pagh_uniform},  a general method to reduce the
description length of the data structure to $(1+\varepsilon) n \log |R| +
o(n) + O(\log \log |U|)$ bits was presented. This is essentially optimal. This technique
adds a summand of $O(1/\varepsilon^2)$ to the evaluation time. 

Another essentially space-optimal construction was presented by Dietzfelbinger and Rink
in~\cite{DietzfelbingerR09}. It is based on results of Calkin~\cite{Calkin97} and
the ``split-and-share'' approach. 
It uses $(1+\varepsilon)n \log |R| + o(n) + O(\log \log |U|)$ bits of space and
has evaluation time $O(\max\{\log^2(1/\varepsilon), s^2\})$ for failure probability
$O(n^{1-(s+2)/9})$.

The construction presented here is a modification of the construction
in~\cite{pagh_uniform}. We replace
the highly independent hash functions with functions from hash class $\RR$.
The data structure consists of a hash function pair $(h_1,h_2)$ 
from our hash class, two tables of size $m=(1+\varepsilon)n$ each, filled with random elements from $R$,
a two-wise independent hash function with range $R$,
$O(s)$ small tables with entries from $R$, and $O(s)$ two-independent hash functions 
to pick elements from these tables.
The evaluation time of $h$ is $O(s)$, 
and for $S\subseteq U$, $|S|=n$, the event $B_S$ occurs with probability $O(1/n^{s+1})$.
The construction requires roughly twice as much space 
as the most space-efficient solutions~\cite{DietzfelbingerR09,pagh_uniform}. 
However, it seems to be a good compromise combining simplicity and fast evaluation time with moderate space consumption.
\begin{theorem}
Given $n \geq 1, 0 < \delta < 1$, $\varepsilon > 0$, and $s\ge0$,
we can construct a data structure \emph{DS$_n$} that allows
us to compute a function $h\colon U \rightarrow R$ such that\emph{:}\\
\makebox[2em][r]{\textnormal{(i) }}For each $S \subseteq U$ of size $n$ there is an event
$B_S$ of probability $O(1/n^{s+1})$ \linebreak 
  \phantom{\makebox[2em][r]{\textnormal{(i) }}}such that conditioned on $\overline{B_S}$ the function $h$ is distributed uniformly on $S$.\\
\makebox[2em][r]{\textnormal{(ii) }}For arbitrary $x \in U$, $h(x)$ can be evaluated in time
$O(s/\delta)$.\\
\makebox[2em][r]{\textnormal{(iii) }}\emph{DS$_n$} 
comprises $2(1+\varepsilon)n \log |R| + o(n) + O(\log \log |U|)$ bits. 
\label{thm:uniform_hashing}
\end{theorem}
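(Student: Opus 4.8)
The plan is to follow the construction of Östlin and Pagh~\cite{pagh_uniform_conference,pagh_uniform}, but to build the ``cycle‑breaking'' third hash function out of the $g$‑components of the very pair $(h_1,h_2)\in\RR^{c,2}_{\ell,m}$ it is paired with, so that no fresh $2$‑universal functions are needed. Concretely, I would pick $(h_1,h_2)$ at random from $\RR^{c,2}_{\ell,m}$ with $\ell=n^\delta$ and $c=\lceil(s{+}2)/\delta\rceil$; let $g_1,\dots,g_c$ be its $g$‑components (Definition~\ref{def:family:R}); fill two tables $\tau_1,\tau_2\colon[m]\to R$ and $c$ further tables $w^{(1)},\dots,w^{(c)}\colon[\ell]\to R$ with independent uniform elements of $R$; pick a two‑wise independent $\phi\colon U\to R$; and set
\[
  h(x)=\tau_1\bigl[h_1(x)\bigr]\oplus\tau_2\bigl[h_2(x)\bigr]\oplus\phi(x)\oplus\bigoplus_{j=1}^{c}w^{(j)}\bigl[g_j(x)\bigr].
\]
Here $\tau_1,\tau_2$ play the role of the random vertex values of $G(S,h_1,h_2)$ in~\cite{pagh_uniform}, the $w^{(j)}$ are the ``$O(s)$ small tables'' and the $g_j$ the ``$O(s)$ two‑independent functions'' of the statement. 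One then defines $B_S$ to be the event that the ``augmented incidence vectors'' of the keys of $S$ (made precise next) are linearly dependent, and shows $\Pr(B_S)=O(1/n^{s+1})$ and that conditioned on $\overline{B_S}$ the restriction $h|_S$ is uniform.

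\emph{The reduction to leaf\/less graphs.} Assume w.l.o.g.\ $(R,\oplus)=(\mathbb F_2^k,\oplus)$. For $x\in U$ let $\rho_x\in\mathbb F_2^{V}$, where $V$ is the disjoint union of $[m]^{(1)}$, $[m]^{(2)}$ and $[\ell]^{(1)},\dots,[\ell]^{(c)}$, have a $1$ in coordinates $h_1(x)\in[m]^{(1)}$, $h_2(x)\in[m]^{(2)}$ and $g_j(x)\in[\ell]^{(j)}$, $1\le j\le c$. Viewing each table as an $R$‑valued vector over $V$, the tuple $\bigl(h(x)\bigr)_{x\in S}$ equals $M\cdot(\tau_1,\tau_2,w^{(1)},\dots,w^{(c)})\oplus\bigl(\phi(x)\bigr)_{x\in S}$, where $M$ has rows $\rho_x$, $x\in S$. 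Since $\phi$ is independent of the tables, $h|_S$ is uniform on $R^{S}$ once the rows $\{\rho_x:x\in S\}$ are linearly independent, and fails to be uniform (some $\bigoplus_{x\in T}h(x)$ being a function of $\phi$ alone) when they are dependent; moreover this dependence---the event $B_S$---depends only on $(h_1,h_2,g_1,\dots,g_c)$, hence is independent of $\tau_1,\tau_2,w^{(j)},\phi$. Now $B_S$ occurs iff there is a nonempty $T\subseteq S$ with $\sum_{x\in T}\rho_x=0$, i.e., such that (a) every vertex of $[m]^{(1)}\sqcup[m]^{(2)}$ has even degree in $G(T,h_1,h_2)$, and (b) each $g_j$ takes every value with even multiplicity on $T$. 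Condition (a) makes $G(T,h_1,h_2)$ an Eulerian bipartite multigraph: it is leaf\/less, so $G(T,h_1,h_2)\in\LL$, and (counting edges through one side of the bipartition) $|T|$ is even.

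\emph{Bounding $\Pr(B_S)$.} Split according to $|T|$. If $|T|\ge4$, then (b) forces $|g_j(T)|\le|T|-2$ for every $j$: injectivity is excluded, and $|g_j(T)|=|T|-1$ would leave $|T|-2\ge2$ values of odd multiplicity~$1$, contradicting (b). Hence $d_T\ge2$, so $\vec h$ is $T$‑bad (Definition~\ref{def:T:bad}), and this part of $B_S$ lies in $\bigcup_{G\in\LL}(\{I_G=1\}\cap\text{bad}_{T(G)})=B^{\LL}_S$, which is $O(n/\ell^c)$ by Lemma~\ref{lem:failure_term_bound}. If $|T|=2$, say $T=\{x,y\}$, then $\rho_x=\rho_y$, i.e., $h_1(x)=h_1(y)$, $h_2(x)=h_2(y)$ and $g_j(x)=g_j(y)$ for all $j$; by $2$‑universality and independence of the $g_j$ this has probability at most $(2/\ell)^c$, and conditioned on it the table terms in $h_i$ cancel so that $h_i(x)=h_i(y)$ reduces to $f_i(x)=f_i(y)$, with $f_i$ the corresponding summand of $h_i$ (Definition~\ref{def:family:R}), of probability $1/m$. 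A union bound over the at most $n^2$ pairs gives $O(n^2\ell^{-c}m^{-2})=O(\ell^{-c})$. Thus $\Pr(B_S)=O(n/\ell^c)$, which for $\ell=n^\delta$ and $c\ge(s{+}2)/\delta$ is $O(n^{1-\delta c})=O(1/n^{s+1})$, giving~(i). For~(ii), evaluating $h(x)$ uses the $c$ values $g_j(x)$, the two $\RR$‑evaluations $h_1(x),h_2(x)$ (each $O(c)$ operations and lookups), $c+2$ further table lookups and one evaluation of $\phi$, i.e., $O(c)=O(s/\delta)$ time. For~(iii), $\tau_1,\tau_2$ need $2m\log|R|=2(1+\varepsilon)n\log|R|$ bits, while the $w^{(j)}$, the $z$‑tables of $(h_1,h_2)$, the $g_j$, $\phi$, and the initial universe collapse (Section~\ref{hashing:sec:basics}) contribute only lower‑order terms, giving the $o(n)+O(\log\log|U|)$ bound.

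\emph{The main obstacle.} The only non‑routine part is the reduction just sketched: identifying the failure event of the Pagh--Pagh‑style construction \emph{exactly} with linear dependence of the vectors $\rho_x$, reading condition~(a) as membership of $G(T,h_1,h_2)$ in $\LL$, and---the delicate point---checking that for $|T|\ge4$ the parity condition (b) forces deficiency $d_T\ge2$ rather than merely $d_T\ge1$, so that $\text{bad}_{T}$ (and not just $\text{crit}_{T}$) holds and $B^{\LL}_S$ is the relevant event; the size‑$2$ exception, where only $d_T=1$ can be guaranteed, must be peeled off and bounded by hand. Once this is in place the estimate is a black‑box application of Lemma~\ref{lem:failure_term_bound}, with no reference to the internals of $\RR$, exactly as in the leaf\/less‑graph and cuckoo‑hashing analyses.
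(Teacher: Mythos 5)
Your construction is exactly the paper's (same tables, same formula for $h$, same parameters), and your probability bound funnels through the same Lemma~\ref{lem:failure_term_bound}, but the argument for uniformity is organized genuinely differently. The paper takes $B_S=B^{\LL}_S$ directly and then proves uniformity conditioned on $\overline{B^{\LL}_S}$ in two phases: full randomness on the keys $T$ of the $2$-core via $T$-goodness and the argument of Lemma~\ref{lem:random}(a), followed by the Pagh--Pagh induction that peels leaf edges and uses the fresh entries of $t_1,t_2$. You instead define $B_S$ as linear dependence of the augmented incidence vectors, which makes uniformity on $\overline{B_S}$ a one-line linear-algebra fact, and you shift the work into showing that a dependence forces an even-degree (hence leaf\/less) subgraph on which the parity of the $g_j$-multiplicities yields $d_T\ge 2$ for $|T|\ge 4$ --- so $\text{bad}_T$ in the sense of Definition~\ref{def:T:bad}, not merely $\text{crit}_T$ --- plus a hand-computed $O(\ell^{-c})$ bound for the $|T|=2$ case, which the paper instead absorbs into the $2$-core argument (a $2$-element set is always $T$-good, so $\text{bad}_T$ never covers it). Your route buys an exact algebraic characterization of when the table part fails to be uniform and avoids any induction; the paper's route buys generality in one respect that you lose: your opening step ``assume w.l.o.g.\ $(R,\oplus)=(\mathbb F_2^k,\oplus)$'' is not actually without loss of generality, since the paper allows an arbitrary commutative group structure on $R$ (e.g., $[t]$ with addition mod $t$), and for odd group orders $\mathbb F_2$-independence of $0/1$ incidence vectors neither implies nor is implied by surjectivity of the induced homomorphism $R^V\to R^S$. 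The leaf-peeling induction works over any abelian group because it exhibits a triangular system with fresh table cells; to cover general $R$ you would either need that argument back or restrict the theorem to $|R|$ a power of two. Apart from this caveat the proof is correct.
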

\begin{proof}
Choose an arbitrary integer $c\ge (s+2)/\delta$. Given $U$ and $n$, set up 
DS$_n$ as follows.  
Let $m=(1+\varepsilon)n$ and $\ell=n^\delta$, and choose and store a hash function pair
$(h_1,h_2)$ from $\RR=\RR^{c,2}_{\ell,m}$,
with component functions $g_1,\ldots,g_c$ from $\FF^{2}_\ell$.
In addition, choose two random vectors $t_1,t_2 \in R^m, c$ random vectors
$y_1,\dots,y_c \in R^\ell$, and choose $f$ at random from a $2$-wise
independent
class of hash functions from $U$ to $R$. 

Using DS$_n$, the mapping $h\colon U\to R$ is defined as follows: 
$$
h(x) = t_1[h_1(x)] \oplus t_2[h_2(x)] \oplus f(x) \oplus y_1[g_1(x)] \oplus \ldots \oplus y_c[g_c(x)].
$$  
DS$_n$ satisfies (ii) and (iii) of
Theorem~\ref{thm:uniform_hashing}. (If the universe is too large, it must 
be collapsed to size $n^{s+3}$ first.) We show that it satisfies (i) as well. For this, let $S \subseteq U$ with
$|S| = n$ be given.

First, consider only the hash functions $(h_1,h_2)$ from $\RR$.
By Lemma~\ref{lem:failure_term_bound} we have $\Pr(B^{\LL}_{S}) =
O(n/\ell^{c}) = O(1/n^{s+1})$.
Now fix $(h_1,h_2)\notin {B^{\LL}_{S}}$, which includes fixing the components $g_1,\ldots,g_c$.
Let $T\subseteq S$ be such that $G(T,h_1,h_2)$ is the 2-core of $G(S,h_1,h_2)$, i.e., the 
maximal subgraph with minimum degree at least two.
The graph $G(T,h_1,h_2)$ is leaf\/less, 
and since $(h_1,h_2)\notin {B^{\LL}_{S}}$, we have that $(h_1,h_2)$ is $T$-good.
Now we note that the part $f(x) \oplus \bigoplus_{1\le j \le c}y_j[g_j(x)]$ of
$h(x)$
acts exactly as one of our hash functions $h_1$ and $h_2$, where
$f$ and $y_1, \ldots, y_c$ are yet unfixed.
So, arguing as in the proof of Lemma~\ref{lem:random}
we see that $h$ is fully random on $T$.

Now assume that $f$ and the entries in the tables $y_1,\ldots,y_c$ are fixed. 
Following \cite{pagh_uniform}, we show that the random entries in $t_1$ and $t_2$ alone make
sure 
that $h(x)$, $x\in S-T$, is fully random. For an idea of the proof let $(x_1,\ldots,x_p)$ be the keys in $S \setminus T$,
ordered in such a way that the edge corresponding to $x_i$ is a leaf edge in 
$G(T \cup \{x_1,\ldots,x_i\}, h_1, h_2)$, for each $i \in \{1, \ldots, p\}$. 
To obtain such an ordering, repeatedly remove leaf edges  
from $G = G(S,h_1,h_2)$, as long as this is possible. The sequence of corresponding keys
removed in this way is $x_p, \ldots, x_1$. 
In \cite{pagh_uniform} it is shown by an induction argument that $h$ is uniform on $T \cup \{x_1,\ldots, x_p\}$. \qquad
\end{proof}

When this construction was first described in \cite{AumullerDW12}, it was the easiest to implement
data structure to simulate a uniform hash function in almost optimal space. Nowadays, 
the construction of Pagh and Pagh can use the 
highly-independent hash class construction of Thorup \cite{Thorup13} or 
Christiani, Pagh, and Thorup \cite{ChristianiPT15} instead 
of Siegel's construction. However, in the original analysis of Pagh and Pagh \cite{pagh_uniform}, 
the hash functions are required to be from an $n^\delta$-wise independent hash class. 
It remains to be demonstrated by experiments that the construction
of Pagh and Pagh in connection with the constructions mentioned above is efficient. 
We believe that using
hash class $\RR$ is much faster.

Applying the same trick as in \cite{pagh_uniform}, the data structure 
presented here can be extended to use only $(1+ \varepsilon)n$ words from $R$. The
evaluation time of this construction is  
$O\left(\max\{\frac{1}{\varepsilon^2}, s\}\right)$.

\subsection{Construction of a (Minimal) Perfect Hash Function}

A hash function $h\colon U \rightarrow [m]$ is perfect on $S
\subseteq U$ if it is injective (or $1$-on-$1$) on $S$.  A perfect hash function
is \emph{minimal} if $|S| = m$. Here, $S$ is assumed to be a static set.
Perfect hash functions are usually applied when a large set of items is
frequently queried; they allow fast retrieval and efficient memory storage in
this situation.  
We refer the reader to \cite{BotelhoPZ13,Dietzfelbinger07} for
surveys of constructions for (minimal) perfect hash functions. 

The first explicit practical construction of a
(minimal) perfect hash function which needs
only $O(n)$ bits is due to Botelho, Pagh, and Ziviani \cite{BotelhoPZ07} 
(full version \cite{BotelhoPZ13}) and 
is based on a hypergraph approach.  
Our construction is based on this work. 
In \cite{BotelhoPZ13}, explicit hash functions based on the ``split-and-share'' approach were used. 
This technique builds upon a general strategy described by Dietzfelbinger in
\cite{Dietzfelbinger07} and Dietzfelbinger and Rink in
\cite{DietzfelbingerR09} to make the ``full randomness assumption'' feasible in the construction
of a perfect hash function. Botelho \emph{et al.} showed in experiments that their construction is 
practical, even when realistic hash functions are used.  Our goal is to show that
hash functions from class $\RR$ can be used in a specific version of their construction as well. 
After proving the main result, we will speculate about differences in running time between the 
split-and-share approach of \cite{BotelhoPZ13} and hash class $\RR$. 

The construction of \cite{BotelhoPZ13} to build a perfect hash function mapping keys from a key set $S$
to $[2m]$ with $m = (1+\varepsilon)n$ works as follows. First, the graph $G(S, h_1, h_2)$ is built. 
If this graph contains a cycle,
new hash functions are chosen and the graph is built anew. If the graph is acyclic, a peeling algorithm 
is used to construct a one-to-one mapping $\sigma\colon S \to [2m]$
with $\sigma(x)\in\{h_1(x),m+h_2(x)\}$ for all $x\in S$.
The result of the peeling procedure also makes it possible
to construct two tables $t_1[0..m-1]$ and  $t_2[0..m-1]$ storing bits,
with the property that
\begin{equation*}
\sigma(x)=\genfrac{\{}{\}}{0pt}{0}{h_1(x)}{m+h_2(x)} \qquad \Longleftrightarrow \qquad  
t_1[h_1(x)] \oplus t_2[h_2(x)] = \genfrac{\{}{\}}{0pt}{0}{0}{1}\text{, for }x\in S.
\end{equation*}
It is then obvious how $\sigma(x)$ can be calculated in constant time,
given $t_1[0..m-1]$ and  $t_2[0..m-1]$.

If $(h_1, h_2)$ are fully random hash functions, the probability that the graph is acylic,  i.e., the probability that the construction succeeds, for $m = (1+ \varepsilon) n $ is 
\begin{align}\label{hashing:eq:50002}
    \sqrt{1 - \left(\frac{1}{1+\varepsilon}\right)^2}. 
\end{align}

We replace the pair $(h_1, h_2)$ of hash functions by functions from $\RR$.
The next lemma shows that for $m \geq 1.08n$ we can build a perfect hash function
for a key set $S$ by trying the construction of Botelho \emph{et al.}
a constant number of times (in expectation).
\begin{lemma}
    Let $S \subseteq U$ with $S = n$. Let $\varepsilon \geq 0.08$, and let $m \geq (1
    + \varepsilon) n$. Set $\ell = n^\delta$ and $c \geq 1.25/\delta $. Then for a randomly chosen pair
    $(h_1,h_2) \in \RR^{c,2}_{\ell,m}$ we have
    \begin{align}\label{hashing:eq:50001}
	\Pr(G(S,h_1,h_2) \text{ is acyclic}) 
	\geq 1 + \frac{1}{2} \ln \left(1 - \left(
	\frac{1}{1+\varepsilon}\right)^2\right) - o(1).
    \end{align}\label{lem:cycles}
\end{lemma}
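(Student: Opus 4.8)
The plan is to apply Lemma~\ref{lem:good:bad} with the graph property $\mathsf{A}$ taken to be the set of all graphs in $\GG^2_{m,n}$ that form a single simple cycle (disregarding isolated vertices). This is the natural choice because $G(S,h_1,h_2)$ fails to be acyclic exactly when it contains a cycle as a subgraph, so $\{G(S,h_1,h_2)\text{ is not acyclic}\} = \{N^\mathsf{A}_S > 0\}$. By Lemma~\ref{lem:good:bad} we then have $\Pr(N^\mathsf{A}_S > 0) \le \Pr(B^\mathsf{A}_S) + \E^\ast(N^\mathsf{A}_S)$.

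For the failure term, I would observe that a simple cycle is leafless, so $\mathsf{A} \subseteq \LL$, and invoke Lemma~\ref{lem:failure_term_bound} to get $\Pr(B^\mathsf{A}_S) \le \Pr(B^\LL_S) = O(n/\ell^c)$. With $\ell = n^\delta$ and $c \ge 1.25/\delta$ this is $O(n/n^{1.25}) = O(n^{-1/4}) = o(1)$. For the fully random term, I would carry out a standard first-moment computation: an unlabeled simple cycle with $t$ edges has $t$ vertices; there are $O(1)$ (indeed, $2$) ways to bipartition it and $O(t!)$ ways to choose cyclic orderings, but the cleaner route is to count directly the number of labeled cycles of length $t$ realized in $G(S,\vec h)$. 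Summing $\Pr^\ast(I_G = 1) = m^{-2t}$ over the roughly $\binom{n}{t}\cdot t!/(2t)\cdot m^t\cdot 2$ labeled cycles, one finds $\E^\ast(N^\mathsf{A}_S)$ converges, as a function of $\varepsilon$, to a finite limit; in fact the complement of \eqref{hashing:eq:50002} is $1 - \sqrt{1 - (1+\varepsilon)^{-2}}$, and the relevant refinement is that $\E^\ast(N^\mathsf{A}_S) \le -\tfrac12\ln\bigl(1 - (1+\varepsilon)^{-2}\bigr) + o(1)$, using $\sum_{t\ge 3} x^t/(2t) = -\tfrac12\ln(1-x) - \tfrac{x}{2} - \tfrac{x^2}{4}$ with $x = (1+\varepsilon)^{-2}$ and discarding the subtracted tail terms to get an upper bound. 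This is where I expect most of the bookkeeping to lie.

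Combining, $\Pr(N^\mathsf{A}_S > 0) \le o(1) - \tfrac12\ln\bigl(1 - (1+\varepsilon)^{-2}\bigr) + o(1)$, hence
\begin{align*}
\Pr(G(S,h_1,h_2)\text{ is acyclic}) \;=\; 1 - \Pr(N^\mathsf{A}_S > 0) \;\ge\; 1 + \tfrac12\ln\!\left(1 - \left(\tfrac{1}{1+\varepsilon}\right)^2\right) - o(1),
\end{align*}
which is exactly \eqref{hashing:eq:50001}. The condition $\varepsilon \ge 0.08$ is presumably what is needed to make the right-hand side a nontrivial (positive constant) bound, so that the expected number of restarts is $O(1)$; I would check that $1 + \tfrac12\ln(1 - 1.08^{-2}) > 0$, i.e. $\ln(1 - 1.08^{-2}) > -2$, which holds since $1 - 1.08^{-2} \approx 0.143 > e^{-2} \approx 0.135$.

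\textbf{Main obstacle.} The only real subtlety is getting the correct constant in the first-moment bound for $\E^\ast(N^\mathsf{A}_S)$: one must be careful about the automorphism factor $2t$ of a $t$-cycle (rotations and the single reflection) and about whether edges are tuples or sets in the bipartite model, and one must verify that dropping the lower-order terms of the logarithm series genuinely yields an \emph{upper} bound rather than a lower bound. The failure-term side is essentially a one-line appeal to Lemma~\ref{lem:failure_term_bound} once one notes $\mathsf{A}\subseteq\LL$.
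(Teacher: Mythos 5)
Your proposal is correct and follows essentially the same route as the paper: the paper takes $\mathsf{A}=\mathsf{CYC}$ (all cycles, necessarily of even length in the bipartite multigraph), bounds $\Pr(B^{\mathsf{CYC}}_S)=O(n/\ell^c)=o(1)$ via $\mathsf{CYC}\subseteq\LL$ and Lemma~\ref{lem:failure_term_bound}, and evaluates $\E^\ast(N^{\mathsf{CYC}}_S)\le\sum_{t\ge1}\frac{1}{2t(1+\varepsilon)^{2t}}=-\frac12\ln\bigl(1-(1+\varepsilon)^{-2}\bigr)$ by the Maclaurin expansion, exactly as you outline. The only cosmetic difference is that the paper's sum starts at cycles of two edges (parallel edges are cycles in the multigraph model) rather than your $t\ge3$, but since your final bound $-\frac12\ln(1-x)$ dominates the full series, the conclusion is unaffected.
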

Figure~\ref{fig:comparison:cycle:success:prob} depicts the difference between
the probability bounds in \eqref{hashing:eq:50002} and in \eqref{hashing:eq:50001}. The theoretical bound using a first moment approach
is close to the behavior in a random graph when $\varepsilon \geq 1$.

\begin{figure}
    \centering
    \scalebox{0.8}{
\begin{tikzpicture}
    \selectcolormodel{gray}
    \begin{axis}
	[enlarge x limits=false,
	domain=0.075:4,
	width = 14cm,
	height = 6cm,
	xmin = 0.075,
    xtick = {0.08, 0.5, 1, 1.5, 2, 2.5, 3, 3.5, 4},
    x tick label style = { /pgf/number format/.cd,
            fixed,
            fixed zerofill,
            precision=2,
        /tikz/.cd},
	xlabel = $\varepsilon$,
	ylabel = Success probability,
    legend cell align = left,
	legend style = {anchor = south east, at={(0.95,.03)}, draw=none}]
	\addplot[smooth]{sqrt(1 - (1/(1+x))^2)};
	\addplot[smooth,dashed]{1 + 0.5 * ln(1 - (1/(1+x))^2)};
    \legend{\raisebox{2.5ex}{$\sqrt{1 - \left(\frac{1}{1+\varepsilon}\right)^2}$},\raisebox{2.5ex}{$1 +
	\frac12 \ln\left(1
    - \left(\frac{1}{1+\varepsilon}\right)^2\right)$}}
\end{axis}
\end{tikzpicture}}
    \caption{Comparison of the probability of a random graph being acyclic and 
	the theoretical bound following from a first moment approach for values $\varepsilon \in [0.08, 4]$.}
    \label{fig:comparison:cycle:success:prob}
\end{figure}
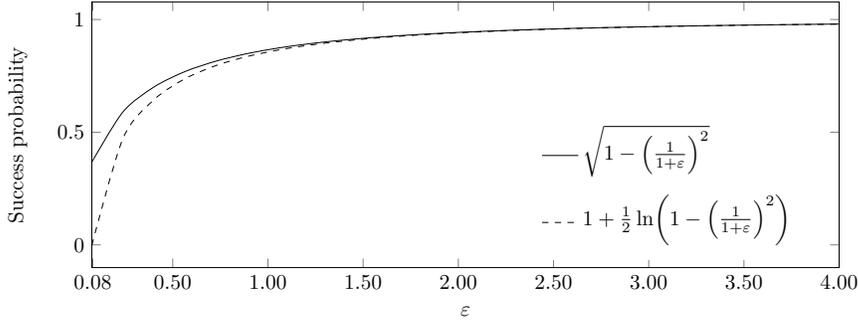

%

\begin{proof}[Of Lemma~\ref{lem:cycles}]
    Let
	$\mathsf{CYC}$ be the set of all cycles in $\GG^2_{m, n}$.
    Note that all these cycles have even length, since we consider bipartite
    graphs.
    By Lemma~\ref{lem:good:bad}, we may bound
    $\Pr(N^{\mathsf{CYC}}_S > 0)$ by $\Pr(B^\mathsf{CYC}_S) +
    \E^\ast(N^\mathsf{CYC}_S)$. Since $\mathsf{CYC} \subseteq \LL$, we
    know that
    $\Pr(B^\mathsf{CYC}_S) = O(n/\ell^{c})$, see
    Lemma~\ref{lem:failure_term_bound}. For the parameter choices
    $\ell = n^\delta$ and $c\geq 1.25/\delta$ we have $\Pr(B^\mathsf{CYC}_S) =
    o(1)$. We now
    focus on the second summand and calculate (as in \cite{BotelhoPZ13}):
    \begin{align*}
        \E^\ast\left(N^\mathsf{CYC}_S\right) &= \sum_{t = 1}^{n/2}
        \mu^\mathsf{CYC}_{2t} \leq \sum_{t = 1}^{n/2}\frac{\binom{n}{2t}(2t)! \cdot m^{2t}}{2t \cdot m^{2 \cdot 2t}}
	=\sum_{t = 1}^{n/2} \frac{\binom{n}{2t} \cdot (2t)! }{2t
    \cdot m^{2t}} \leq \sum_{t=1}^{n/2} \frac{n^{2t}}{2t \cdot m^{2t}}\\
    &= \sum_{t = 1}^{n/2}\frac{1}{2t \cdot (1+\varepsilon)^{2t}} \leq \sum_{t =
1}^\infty \frac{1}{2t \cdot (1 + \varepsilon)^{2t}} = -\frac{1}{2} \ln \left(1
- \left(\frac{1}{1+\varepsilon}\right)^2\right),
    \end{align*}
    where the last step is the Maclaurin expansion. \qquad
\end{proof}

According to Lemma~\ref{lem:cycles}, we can build a perfect hash function with 
range $[2.16n]$ with a constant number of constructions of $G(S, h_1, h_2)$ (in expectation).
To store the data structure we need $2.16n$ bits (to store $\texttt{g}_1$ and $\texttt{g}_2$), and
$o(n)$ bits to store the pair $(h_1,h_2)$ from $\RR$. For example, for a set of $n = 2^{32}$ keys,
i.e., about $4.3$ billion keys, the pair $(h_1, h_2)$ 
may consist of ten tables with $256$ entries each, five $2$-universal hash functions, 
and two $2$-independent
hash functions, see Lemma~\ref{lem:cycles} with parameters $c = 5$ and $\delta = 1/4$. This seems to 
be more practical than the split-and-share approach from \cite{BotelhoPZ13} which
uses more and larger tables per hash function, \emph{cf.} \cite[Section 4.2]{BotelhoPZ13}.
However, 
it remains future work to demonstrate in experiments how both approaches compare to each other. 
To obtain a minimal perfect hash function, one has to compress the perfect hash function
further. This roughly doubles the description length, see \cite{BotelhoPZ13} for details.

In their paper \cite{BotelhoPZ13}, Botelho \emph{et al.}
showed that minimal space usage is achieved when using three hash functions $h_1, h_2, h_3$
to build the hypergraph $G(S, h_1, h_2, h_3)$. In this case, one can construct a 
perfect hash function with range $[1.23n]$ with high probability. Since the $g$-values
must then index three hash functions, $1.23n \cdot \log_23 \approx 1.95n$ bits are needed 
to store the bit vectors. According to \cite{BotelhoPZ13}, the minimal perfect hash function 
needs about $2.62n$ bits. Our results with regard to hypergraphs do not lead to a construction
that can compete. 

\subsection{Connected Components of $G(S, h_1, h_2)$ are small}
As is well known from the theory of 
random graphs, for a key set $S \subseteq U$ of size $n$ and  
$m = (1 + \varepsilon)n$, for $\varepsilon > 0$, and fully 
random hash functions $h_1, h_2: U \rightarrow [m]$ the graph 
$G(S, h_1, h_2)$ w.h.p. contains  only components of at most logarithmic size
which are trees or unicyclic. (This is the central argument for 
standard cuckoo hashing to work.) We 
show here that hash class $\RR$ can provide this behavior if one is willing
to accept a density that is smaller by a constant factor. Such situations have been
considered in the seminal work of Karp \emph{et al.}
\cite{klmadh96} on the simulation of shared memory
in distributed memory machines.

We give the following result as a corollary. It has first appeared in \cite{klmadh96}, for 
a different class of ($\sqrt{n}$-wise independent) hash functions. Here we prove it for hash class $\RR$.
\begin{corollary}[{\cite[Lemma 6.3]{klmadh96}}]
    Let $S \subseteq U$ with $|S| = n$. Let $m \geq 6n$.  Then for each  
$\alpha \geq 1$, there are $\beta, \ell, c, s \geq 1$ such that for $G=G(S,h_1,h_2)$
with $(h_1,h_2) \in \RR^{c,2}_{\ell, m}$ we have that 
\begin{enumerate}
    \item[\textnormal{(}a\textnormal{)}] $\Pr(G \text{ has a
	connected component with at least $\beta \log n$ vertices}) =
	O(n^{-\alpha})$.
    \item[\textnormal{(}b\textnormal{)}] $\Pr(G \text{ has a
	connected component with $k$ vertices and $\geq k + s - 1$ edges}) =
	O(n^{-\alpha})$.
\end{enumerate}
\label{cor:connected:components}
\end{corollary}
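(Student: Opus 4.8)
The plan is to treat both parts by the now-standard recipe: identify a ``bad'' graph property $\mathsf{A}$, show that $\mathsf{A}$ is contained in a peelable, reducible property (here $\LL$ does the job, since both the relevant obstructions are leafless), then apply Lemma~\ref{lem:good:bad} together with Lemma~\ref{lem:failure_term_bound} to reduce everything to the fully random case, which is exactly what Karp, Luby, and Meyer auf der Heide proved in \cite[Lemma 6.3]{klmadh96}. Concretely, fix $\alpha \geq 1$. For part (a), a connected component with at least $\beta\log n$ vertices contains a connected leafless subgraph (take the $2$-core if it is cyclic, or a long simple path if it is a tree); in either case, after peeling we obtain a connected subgraph with either a path of length $\Omega(\beta\log n)$ or a cycle of length $\Omega(\beta\log n)$. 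Let $\mathsf{A}_a \subseteq \GG^2_{m,n}$ be the set of such subgraphs. For part (b), a component with $k$ vertices and $\geq k+s-1$ edges has cyclomatic number $\geq s$; deleting leaves repeatedly preserves the cyclomatic number, so it contains a leafless connected subgraph $G'$ with $\gamma(G')\geq s$. Let $\mathsf{A}_b$ be the set of all such leafless graphs. In both cases $\mathsf{A}_a, \mathsf{A}_b \subseteq \LL$, so Lemma~\ref{lem:failure_term_bound} gives $\Pr(B^{\mathsf{A}}_S) = O(n/\ell^c)$, and choosing $\ell = n^\delta$, $c = \lceil(\alpha+1)/\delta\rceil$ makes this $O(n^{-\alpha})$.

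It then remains to bound the fully random expectations $\E^\ast(N^{\mathsf{A}_a}_S)$ and $\E^\ast(N^{\mathsf{A}_b}_S)$, but these follow directly from the analysis in \cite{klmadh96} for $m \geq 6n$: there the constant $6$ is exactly what is needed so that the ``branching factor'' in the first-moment sum over paths and over excess-$s$ core graphs is a constant strictly less than $1$, making the geometric-type series converge and yielding $O(n^{-\alpha})$ once $\beta$ (resp.\ $s$) is chosen as a large enough multiple of $\alpha$. So I would: (i) set $\beta = \Theta(\alpha)$ and $s = \Theta(\alpha)$ according to the bounds of \cite{klmadh96}; (ii) invoke \cite[Lemma 6.3]{klmadh96} to get $\E^\ast(N^{\mathsf{A}_a}_S), \E^\ast(N^{\mathsf{A}_b}_S) = O(n^{-\alpha})$; (iii) apply Lemma~\ref{lem:good:bad} to get $\Pr(N^{\mathsf{A}_a}_S > 0) \leq \Pr(B^{\mathsf{A}_a}_S) + \E^\ast(N^{\mathsf{A}_a}_S) = O(n^{-\alpha})$, and likewise for $\mathsf{A}_b$; (iv) observe that the events in the corollary are subsets of $\{N^{\mathsf{A}_a}_S > 0\}$ and $\{N^{\mathsf{A}_b}_S > 0\}$ respectively, via the peeling arguments sketched above.

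The main obstacle is the bookkeeping in the peeling step, i.e.\ verifying that ``large component'' and ``component with $k$ vertices and $\geq k+s-1$ edges'' really do reduce to leafless subgraphs lying in a single peelable graph property with controlled parameters. For (b) this is precisely the kind of cyclomatic-number peeling already carried out in Claim~\ref{claim:leafless:reduce} and in the excess-core argument of Theorem~\ref{thm:r_excess_prob_zz}, so I would mirror that. For (a) the subtlety is that a large \emph{tree} component has no cycles, so one must instead peel it down to a long path; here it is cleanest to define $\mathsf{A}_a$ directly as ``long path or long cycle'' and note that any component on $\geq \beta\log n$ vertices contains one of these, then let $\LL$ (or an ad hoc ``at most one leaf component'' property as in Section~\ref{hashing:sec:leafless}) serve as the peelable superset. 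Everything else is a direct citation of \cite{klmadh96} plugged into Lemma~\ref{lem:good:bad}, so I do not expect any genuinely new calculation.
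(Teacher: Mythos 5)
Your treatment of part (b) is sound and essentially matches the paper: a component with $k$ vertices and at least $k+s-1$ edges forces excess at least $s-1$, and the paper simply cites Theorem~\ref{thm:r_excess_prob_zz} rather than redoing the leafless/cyclomatic peeling, but the content is the same.

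Part (a), however, contains a genuine gap. You reduce ``some component has at least $\beta\log n$ vertices'' to ``$G$ contains a path or cycle of length $\Omega(\beta\log n)$''. That implication is false: a component can have $\beta\log n$ vertices and contain neither a long path nor a long cycle (a star or any shallow tree has bounded diameter relative to its size and no cycles at all; a huge unicyclic component can have a $2$-core that is just a $4$-cycle). Relatedly, neither a simple path nor a tree is leafless, so these obstructions do not sit inside $\LL$; this is not a technicality but the reason the paper works with the peelable property $\LC$ (at most one leaf component) and the reduction target $\LCY{4c}$. The correct obstruction --- the one the paper uses --- is ``$G$ contains a subtree with exactly $k=\beta\log n$ vertices'', which every large component does contain. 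Trees with $k$ vertices lie in $\LC$, so Lemma~\ref{lem:failure_term_bound} still yields $\Pr(B^{\mathsf{T}}_S)=O(n/\ell^c)$; but the fully random expectation must then be computed over all labeled trees, of which there are $k^{k-2}$ by Cayley's formula, and this superexponential count is exactly where the hypothesis $m\ge 6n$ enters (the first-moment sum becomes $O(m^2 (e/3)^k)$, and the paper explicitly remarks that this step cannot be pushed down to $m=(1+\varepsilon)n$ because there are too many trees). Your plan of citing \cite{klmadh96} for a first-moment bound over long paths and long cycles therefore bounds the wrong event and skips the only genuinely delicate computation in the corollary.
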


\begin{proof}
    We start with the proof of (b). If $G = G(S,h_1,h_2)$ has a connected
    component $A$ with $k$ vertices and at least $k + s - 1$ edges, then $\ex(G) \geq s-1$.
    According to
    Theorem~\ref{thm:r_excess_prob_zz}, the probability that such a component appears 
    is $O(1/n^{\alpha})$, for $s = \alpha$ and $c \geq 2(\alpha + 1)$. 
    
    The proof of Part (a) requires more care. For this part, we may focus on the probability that $G$ contains a tree with
    $k = \beta \log n$ vertices. 
    We let $\mathsf{T}$ consist of all trees with $k$ vertices in $\GG_{m,n}$ and
    apply Lemma~\ref{lem:good:bad} to get
    \begin{align}\label{eq:10050}
        \Pr\left(N^\mathsf{T}_S > 0\right) \leq \Pr\left(B^{\mathsf{T}}_S\right)
	+ \E^\ast\left(N^\mathsf{T}_S\right).
    \end{align}
    Since $\mathsf{T} \subseteq \mathsf{LCY}$, we have that $\Pr(B^\mathsf{T}_S) =
    O(n/\ell^c)$, see Lemma~\ref{lem:failure_term_bound}. We now bound the second summand of
    \eqref{eq:10050}. Note that the calculations are essentially the same as the ones made in \cite{klmadh96} 
    to prove their Lemma 6.3. 
    By Cayley's formula we know that there are
    $k^{k-2}$ labeled trees with vertex set $\{1,\ldots,k\}$. Fix such a tree
    $T^\ast$. 
    We can label the edges of $T^\ast$ with $k - 1$ keys from $S$ in $\binom{n}{k-1} \cdot (k-1)!$ many ways. Furthermore,
    there are two ways to fix which vertices of $T^\ast$ belong to which side of the
    bipartition. After this, there are not more than $\binom{2m}{k}$ ways to assign 
    the vertices of $T^\ast$ to
    vertices in the bipartite graph $G(S,h_1,h_2)$. Once all these labels of $T^\ast$ are
    fixed, the probability that the hash values of $(h_1,h_2)$ realize
    $T^\ast$ is $1/m^{2(k-1)}$. We can thus calculate:
    \begin{align*}
	\E^\ast(N^\mathsf{T}_S) & \leq
	\frac{\binom{n}{k-1} \cdot k^{k-2} \cdot 2 \cdot (k-1)! \cdot
    \binom{2m}{k}}{m^{2(k-1)}}
    \leq \frac{2^{k+1}\cdot m^2 \cdot k^{k-2}}{6^k \cdot k!}\\
    &\leq \frac{2^{k+1}\cdot m^2 \cdot k^{k-2}}{6^k \cdot \sqrt{2\pi k}
\cdot (k/e)^{k}} \leq 2m^2 \cdot \left(\frac{e}{3}\right)^k\\
    \end{align*}
    Part ($a$) follows for $k = \Omega(\log n)$.  \qquad
\end{proof}

The authors of \cite{klmadh96} use the hash functions from \cite{DietzfelbingerH90} (precursor of $\RR$) 
combined with functions from Siegel's
hash class to obtain a hash class with high ($\sqrt{n}$-wise) independence.
They need this high level of independence in the proof of their Lemma 6.3, which
states properties of the connected components in the graph built from the key set
and these highly independent hash functions. Replacing Lemma $6.3$ in
\cite{klmadh96} with our Corollary~\ref{cor:connected:components} 
immediately implies that the results of \cite{klmadh96}, in particular, their Theorem~6.4, 
also hold when (only) hash functions from $\RR$ are used. 
In particular, in \cite{klmadh96} the sparse setting where $m$ is at least  $6n$ was considered as well. We remark
that statement (b) of Corollary~\ref{cor:connected:components} holds for $m \geq (1 + \varepsilon)n$. 

Moreover, this result could be applied to prove results for cuckoo hashing (with
a stash), and de-amortized cuckoo hashing of Arbitman \emph{et al.}
\cite{ArbitmanNS09}. However, note that while in the fully random case the
statement of Corollary~\ref{cor:connected:components} holds for $m = (1
+ \varepsilon) n$, here we had to assume $m \geq 6n$, which yields only very low
hash table load. We note that this result cannot be improved to $(1+
\varepsilon) n$ using the first moment approach inherent in our approach and the approach of
\cite{klmadh96} (for $\sqrt{n}$-wise independence), since
the number of unlabeled trees that have to be considered in the first moment
approach is too large \cite{Otter48}.  It remains open to show that graphs
built with our class of hash functions have small connected components for all
$\varepsilon > 0$.

\section{Applying the Framework on Hypergraphs}\label{hashing:sec:applications:hypergraphs}

In this section we will discuss some applications of hash class $\RR$
in the setting with more than two hash functions, i.e., 
each edge of $G(S, \vec{h})$ contains at least three vertices. 
We will study three different applications: Generalized cuckoo hashing 
with $d \geq 3$ hash functions as proposed by Fotakis, Pagh, Sanders, and Spirakis \cite{FotakisPSS05}, 
two recently described insertion algorithms 
for generalized cuckoo hashing due to Khosla \cite{Khosla13} and Eppstein, Goodrich, Mitzenmacher, and Pszona \cite{Eppstein14}, and
different schemes for load balancing as studied by Schickinger and Steger \cite{SchickingerS00}. 

For applications regarding generalized cuckoo hashing, 
we will study the failure term of $\RR$ on the 
respective graph properties directly. We will show that $\RR$ allows
running these applications efficiently. However, we have to assume that 
the load of the hash table is rather low. For the application with regard
to load balancing schemes, the failure term of $\RR$ will be analyzed by
means of  a very general graph property. 
However, it requires higher parameters when setting
up a hash function from $\RR$, which degrades the performance of these
hash functions. We will start by introducing some notation and making a small
generalization to the framework. 

\paragraph{Hypergraph Notation} A hypergraph extends the notion of an undirected 
graph by allowing edges 
to consist of more than two vertices. 
We use the hypergraph notation from \cite{ShamirS85,karonski02}.  
A hypergraph is called \emph{$d$-uniform}
if every edge contains exactly $d$ vertices. Let $H = (V, E)$ be a hypergraph. 
A \emph{hyperpath from $u$ to $v$} in $H$ 
is a sequence $(u = u_1, e_1, u_2, e_2, \ldots, $ $e_{t - 1}, u_t = v)$
such that $e_i \in E$ and $u_i, u_{i+1} \in e_i$, for $1 \leq i \leq t - 1$.
The hypergraph $H$ is \emph{connected} if for each pair of vertices $u,v \in V$ there exists a hyperpath
from $u$ to $v$.

The \emph{bipartite representation} of a hypergraph $H$ is 
the bipartite graph $\text{bi}(H)$
where vertices of $H$ are the vertices on the right side of the bipartition, the
edges of $H$ correspond to vertices on the left side of the
bipartition, and two vertices are connected by an edge in the bipartite graph if
the corresponding edge in the hypergraph contains the corresponding vertex.
Note that a hypergraph $H'$ is a subgraph of a hypergraph $H$
(as defined in Section~\ref{hashing:sec:basics})
if and only if $\text{bi}(H')$ is a subgraph of $\text{bi}(H)$ in the standard sense.

We will use a rather strict notion of cycles in hypergraphs. 
A connected hypergraph is called a \emph{hypertree} if $\text{bi}(H)$ is a tree. 
A connected hypergraph is called \emph{unicyclic} if $\text{bi}(H)$ is unicyclic.
A connected hypergraph that is neither a hypertree nor unicyclic is called \emph{complex}.
Using the standard formula to calculate the cyclomatic number of a graph\footnote{The cyclomatic number 
of a connected graph $G$ 
with $m$ vertices and $n$ edges is $n - m + 1$.} \cite{diestel}, we get the following (in)equalities for a
connected $d$-uniform hypergraph $H$ with
$n$ edges and $m$ vertices: $(d - 1) \cdot n = m - 1$ if $H$ is a hypertree, $(d - 1) \cdot n = m$ if $H$
is unicyclic, and $(d-1) \cdot n > m$ if $H$ is complex.

We remark that there are different notions with respect to cycles in 
hypergraphs. In other papers,
e.g., \cite{CzechHM97,Dietzfelbinger07,BotelhoPZ13}, a hypergraph is called
\emph{acyclic} if and only if there exists a sequence of repeated deletions of
edges containing at least one vertex of degree $1$ that yields a hypergraph
without edges. (Formally, we can arrange the edge set $E = \{e_1, \ldots, e_n\}$ 
of the hypergraph in a 
sequence $(e'_1, \ldots, e'_n)$ 
such that $e'_j - \bigcup_{s < j} e'_s \neq \emptyset$, for $1 \leq j \leq n$.) 
We will call this process of repeatedly removing edges incident to a vertex
of degree 1 the \emph{peeling process}, see, e.g., \cite{Molloy05}. With respect
to this definition, a hypergraph $H$ is acyclic if and only if the $2$-core of
$H$ is empty, where the $2$-core of $H = (V, E)$ is the largest set $E' \subseteq E$ such that 
each vertex in $(V,E')$ has minimum degree $2$, disregarding isolated vertices.
An acyclic
hypergraph in this sense can have unicyclic and complex
components according to the definition from above. In the analysis, we will
point out why it is important for our work to use the concepts introduced above. 

\paragraph{Hypergraph-Related Additions to the Framework} 
When working with hypergraphs, it will be helpful to allow removing single vertices from 
hyperedges. This motivates considering the following generalizations of the notions 
``peelability'' and ``reducibility'' for hypergraph properties. 

\begin{definition}[Generalized Peelability]
A hypergraph property
$\mathsf{A}$ is called \textbf{peelable} if for all $G=(V,E)
    \in \mathsf{A}$, $|E| \geq 1$, 
		there exists an edge $e \in E$ such that 
        \begin{enumerate}
            \item $(V,E\setminus \{e\}) \in \mathsf{A}$ or 
            \item there exists an edge $e' \subseteq e$ with $|e'| < |e|$ and $|e'| \geq 2$,
		where $e$ and $e'$ have the same label,
		such that $(V,(E\setminus\{e\})\cup\{e'\}) \in \mathsf{A}$.
\end{enumerate}
\label{def:peelability:gen}
\end{definition}

\begin{definition}[Generalized Reducibility]
Let $c \in \mathbb{N}$, and let $\mathsf{A}$ and $\mathsf{B}$ be hypergraph properties.
$\mathsf{A}$ is called
\textbf{$\mathsf{B}$-$2c$-reducible} if for all graphs $(V,E) \in \mathsf{A}$ and sets $E^\ast \subseteq E$ with
$|E^\ast| \leq 2c$ we have the following:
There exists a subgraph $(V, E')$ of $(V,E)$ with $(V, E') \in \mathsf{B}$ such that 
each edge $e' \in E'$ is a subset of some $e\in E$ with the same label and 
for each edge $e^\ast \in E^\ast$ there exists an edge $e' \in E'$ with $e' \subseteq e^\ast$ 
and $e'$ and $e^\ast$ having the same label.
\label{def:reducible:gen}
\end{definition}

In contrast to Definition~\ref{def:peelability} and Definition~\ref{def:reducible}, we can remove vertices from edges
in a single peeling or reduction step.
A proof analogous to the proof of Lemma~\ref{lem:fail_prob} shows that the statement of that lemma
is true in the hypergraph setting as well. 

\subsection{Generalized Cuckoo Hashing}\label{hashing:sec:generalized:cuckoo:hashing}

The obvious extension of cuckoo hashing is to use a sequence $\vec{h} = (h_1, \ldots, h_d)$ of $d \geq 3$ hash functions. 
For a given integer $d \geq 3$ and a key set $S \subseteq U$ with $|S| = n$, our
hash table consists of $d$ tables $T_1, \ldots, T_d$, each of size $m = O(n)$, and uses $d$ hash functions
$h_1, \ldots, h_d$ with $h_i\colon U \rightarrow [m]$, for $i \in \{1, \ldots, d\}$. A key $x$ must be stored
in one of the cells $T_1[h_1(x)], T_2[h_2(x)], \ldots,$ or $T_d[h_d(x)]$. Each table cell contains at most one key. Searching and removing a key works in the 
obvious way. For the insertion procedure, note that evicting a key $y$ from a table $T_j$ leaves, in contrast
to standard cuckoo hashing, $d-1$ other choices where to put the key.  To think about 
insertion procedures, it helps to introduce the concept of a certain directed graph. 
Given a set $S$ of keys stored in a cuckoo 
hash table with tables $T_1, \ldots, T_d$ using $\vec{h}$, 
we define the following (directed) cuckoo allocation graph $G = (V,E)$, see, e.g., \cite{Khosla13}:
The vertices $V$ correspond to the memory cells in $T_1, \ldots, T_d$. 
The edge set $E$ consists of all edges $(u,v) \in V \times V$ such that 
there exists a key $x \in S$ so that $x$ is stored in the table cell which
corresponds to vertex $u$ ($x$ \emph{occupies} $u$)  and $v$ corresponds to one of the $d-1$ other
choices of key $x$. If $u \in V$ has out-degree $0$, we call $u$ \emph{free}. 
(The table cell which corresponds to vertex $u$ does not contain a key.)
Standard methods proposed for inserting a key (see \cite{FotakisPSS05}) are \emph{breadth-first search}
to find a shortest eviction sequence or a \emph{random walk} approach.
In the next section, we will study two alternative insertion strategies that
were suggested recently. If an insertion fails, a new sequence of hash
functions is chosen and the data structure is built anew.

If the hash functions used are fully random 
it is now fully understood what table sizes $m$ make it possible w.h.p. to store 
a key set according to the cuckoo hashing rules for a given number of hash functions. In $2009$,\footnote{%
Technical report versions of all papers cited here
were published at \url{www.arxiv.org}. We refer to the final publications.} this case was
settled independently by Dietzfelbinger \emph{et al.}
\cite{DietzfelbingerGMMPR10}, Fountoulakis and Panagiotou
\cite{FountoulakisP10}, and Frieze and Melsted \cite{FriezeM12}.  
Later, the random walk insertion algorithm was partially analyzed by Frieze,
Melstedt, and Mitzenmacher \cite{FriezeMM11} and Fountoulakis, Panagiotou, and Steger 
\cite{FountoulakisPS13}. 

Here, we study the static setting in which we ask if $\vec{h}$ 
allows accommodating a given key set $S \subseteq U$ in the
hash table according to the cuckoo hashing rules. 
This is equivalent to the
question whether the hypergraph $G = G(S, \vec{h})$ built from $S$ and $\vec{h}$ is
$1$-orientable or not, i.e., whether there is an injective function
that maps each edge $e$ to a vertex on $e$ or not. 
If $G(S, \vec{h})$ is
$1$-orientable, we call $\vec{h}$ \emph{suitable} for $S$.

We now discuss some known results for random hypergraphs.  As for simple random
graphs \cite{ErdosR60} there is a sharp transition phenomenon for random
hypergraphs \cite{karonski02}. When a random hypergraph with $m$ vertices 
has at most $ (1
- \varepsilon) m/(d(d-1))$ edges, all components are small and all components
are \emph{hypertrees} or \emph{unicyclic} with high probability.  On the other
hand, when it has at least $ (1 + \varepsilon) m/(d(d-1))$ edges, there exists
one large, \emph{complex} component. We will analyze generalized cuckoo hashing
under the assumption that each table has size $m \geq (1 +  \varepsilon) (d-1)
n$, for $\varepsilon > 0$.  Note that this result is rather weak: The load of
the hash table is at most $1/(d(d-1))$, i.e., the more hash functions we use,
the weaker the edge density bounds we get for the hash functions to provably work. At the end of
this section, we will discuss whether this result can be improved or not with the
methodology used here.

We will show the following theorem.
\begin{theorem}
Let $\varepsilon > 0, 0 < \delta <1$, $d \geq 3$  be
given. Assume $c\ge  2/\delta$.  
    For $n\ge 1$, consider $m\ge(1+\varepsilon)(d-1)n$ and $\ell=n^\delta$.  
    Let $S \subseteq U$ with $|S|=n$.
    Then for $\vec{h} = (h_1, \ldots, h_d)$ chosen at random from $\RR=\RR^{c,d}_{\ell,m}$ the
following holds:
    $$\Pr\left( \vec{h} \text{ is not suitable for $S$}  \right) = O(1/n).$$
    \label{thm:generalized:cuckoo:hashing:static}
\end{theorem}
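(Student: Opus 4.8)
The plan is to mirror, step for step, the static cuckoo hashing argument of Section~\ref{hashing:sec:example:c:h}, only now for $d$-uniform hypergraphs. The first step is to pin down the right ``bad'' property. By Hall's theorem applied to $\mathrm{bi}(G(S,\vec h))$, the sequence $\vec h$ is not suitable for $S$ exactly when there is a set $F\subseteq S$ whose incident vertices $\bigcup_{x\in F}\{h_1(x),\dots,h_d(x)\}$ number strictly fewer than $|F|$; restricting to a connected deficient piece and peeling edges while preserving the deficiency, one obtains a minimal non-$1$-orientable connected full-edge subhypergraph of $G(S,\vec h)$. The structural claim to check here is the hypergraph analogue of the fact that minimal obstruction graphs are leafless and bicyclic: a minimal non-$1$-orientable connected $d$-uniform hypergraph $H$ is leafless and satisfies $|E(H)|=|V(H)|+1$ (if some $F\subsetneq E(H)$ were already deficient, one of its components would contradict minimality, and orientability of $H-e$ forces $|E(H)|-1\le|V(H)|$). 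Let $\mathsf{MOH}\subseteq\mathcal{G}^d_{m,n}$ denote the set of \emph{all} leafless connected $d$-uniform hypergraphs with $|E|=|V|+1$; this contains every minimal obstruction, so $\Pr(\vec h\text{ not suitable for }S)\le\Pr(N^{\mathsf{MOH}}_S>0)$, and the hypergraph form of Lemma~\ref{lem:good:bad} gives
\[
\Pr\!\left(N^{\mathsf{MOH}}_S>0\right)\le\Pr\!\left(B^{\mathsf{MOH}}_S\right)+\E^\ast\!\left(N^{\mathsf{MOH}}_S\right).
\]

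Second, I would bound $\E^\ast(N^{\mathsf{MOH}}_S)$ in the fully random model by a direct first-moment computation. A member of $\mathsf{MOH}$ with $t$ edges has exactly $t-1$ vertices, so, using a hypergraph counting estimate in the spirit of Lemma~\ref{lem:num_graphs} (only $t^{O(1)}$ unlabeled shapes), the expected number of realized copies is at most $\sum_{t} t^{O(1)}\cdot n^{t}\cdot m^{t-1}\cdot m^{-dt}=\tfrac1m\sum_{t}t^{O(1)}\bigl(n/m^{d-1}\bigr)^{t}$. Because $m\ge(1+\varepsilon)(d-1)n$ puts us deep in the strictly subcritical regime of random $d$-uniform hypergraphs, where w.h.p.\ every component is a hypertree or unicyclic, the ratio $n/m^{d-1}$ is $O(n^{-(d-2)})=O(n^{-1})$ for $d\ge3$, the series decays geometrically, and $\E^\ast(N^{\mathsf{MOH}}_S)=O(1/n)$. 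This is, in essence, the known expectation computation underlying the orientability threshold (cf.~\cite{FountoulakisP10,FriezeM12}) specialized to the sparse case.

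Third — and this is where the real work lies — I would bound the failure term $\Pr(B^{\mathsf{MOH}}_S)$ through the generalized framework (Definitions~\ref{def:peelability:gen} and~\ref{def:reducible:gen} together with the hypergraph form of Lemma~\ref{lem:fail_prob}). Since $\mathsf{MOH}$ is not peelable, I would embed it in a peelable superset $\mathsf{B}$ — the hypergraph counterpart of $\LC$: hypergraphs with at most one ``non-clean'' component, peeled either by deleting a leaf edge or by shrinking a hyperedge that contains a degree-$1$ vertex (precisely what generalized peelability allows). I would then show $\mathsf{B}$ is $\mathsf{C}$-$2c$-reducible for a property $\mathsf{C}$ playing the role of $\LCY{4c}$: a bounded number of components, a bounded cyclomatic number of $\mathrm{bi}(\cdot)$, and a bounded budget of leaf/cycle edges in the single non-clean component, the reduction being the marked-edge argument of Lemma~\ref{lem:lcy:reducibility} transported to the bipartite representation. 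The payoff is that every hypergraph in $\mathsf{C}$ with $t$ edges has at least $(d-1)t-O(c)$ vertices, hence $\mu^{\mathsf{C}}_t\le n\cdot t^{O(c)}(1+\varepsilon)^{-t+O(1)}$, the geometric decay coming exactly from $m\ge(1+\varepsilon)(d-1)n$. Plugging into Lemma~\ref{lem:fail_prob},
\[
\Pr\!\left(B^{\mathsf{MOH}}_S\right)\le\ell^{-c}\sum_{t=2}^{n}t^{2c}\,\mu^{\mathsf{C}}_t=O\!\left(n/\ell^{c}\right),
\]
and with $\ell=n^{\delta}$ and $c\ge2/\delta$ this is $O(n^{1-c\delta})=O(1/n)$; combining with the second step proves Theorem~\ref{thm:generalized:cuckoo:hashing:static}.

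I expect two obstacles. The first is getting the structural reduction right under the paper's edge-shrinking notion of subgraph: the implication ``not suitable $\Rightarrow$ obstruction present'' must be run with full $d$-edges (so that a deficient key set really is witnessed), whereas peelability and reducibility are only applied to the shrinkable superset $\mathsf{B}$; keeping these two regimes separate is exactly the role of Lemma~\ref{lem:fail_superset}. The second, more laborious obstacle is constructing the hypergraph analogues of $\LC$, $\LCY{4c}$, and of the counting bound Lemma~\ref{lem:num_graphs}, together with the transported reducibility proof, so that the $t^{2c}$ blow-up introduced by hash class $\RR$ is absorbed by geometric decay in $t$ — which is available here only because the construction forces the hash-table load down into the very sparse regime $m\ge(1+\varepsilon)(d-1)n$.
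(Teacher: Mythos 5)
Your reduction of non-suitability to the presence of a witness subhypergraph goes through a genuinely different door than the paper's. The paper never invokes Hall's theorem: it uses the one-directional fact that a hypergraph with no \emph{complex} component is $1$-orientable (Lemma~\ref{lem:no:complex:component}), together with the Karo\'nski--{\L}uczak observation that a complex component contains a Type~1 or Type~2 obstruction --- a strict path plus one or two additional edges. Those witnesses are \emph{sparse}: with $t+1$ edges they have roughly $(d-1)(t+1)$ vertices and their bipartite representation has cyclomatic number $2$, so the number of unlabeled shapes really is $t^{O(1)}$, and both the first-moment bound and the peelable superset $\mathsf{PX}$ are counted directly, with no reducibility step needed. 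Your witnesses --- minimal Hall-deficient edge sets, hence connected leafless hypergraphs with $|E|=|V|+1$ --- give an exact characterization of non-orientability, but they are \emph{dense}: $\mathrm{bi}(H)$ has $2t-1$ vertices, $dt$ edges, and cyclomatic number $(d-2)t+2$.

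That density is where the concrete gap sits. The claim that $\mathsf{MOH}$ has only $t^{O(1)}$ unlabeled shapes with $t$ edges is false for $d\ge 3$: Lemma~\ref{lem:num_graphs} applied to the bipartite representation (or any direct count of $t$ many $d$-tuples drawn from $t-1$ vertices) gives $t^{\Theta((d-2)t)}$ shapes, because the cyclomatic number grows linearly in $t$. With the correct shape count your displayed series becomes $\tfrac1m\sum_t \bigl(t^{O(1)}/n^{d-2}\bigr)^{t}$, which fails to converge once $t$ exceeds a fixed power of $n$. The bound $\E^\ast\bigl(N^{\mathsf{MOH}}_S\bigr)=O(1/n)$ is in fact true, but you cannot get it by multiplying shapes by $n^t$ edge labelings and $m^{t-1}$ vertex labelings; you need the factorial savings from $\binom{n}{t}$ and $\binom{dm}{t-1}$, i.e., bound directly the probability that $t$ chosen keys span at most $t-1$ vertices. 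A related caution applies to your third step: the reducibility target $\mathsf{C}$ with bounded bi-cyclomatic number is fine to count, but the peelable superset $\mathsf{B}$ of these dense objects and the hypergraph analogue of Claim~\ref{claim:leafless:reduce} are exactly the parts you defer, and they are not routine (deleting a whole hyperedge from a leafless component can create leaves in up to $d$ new components, so the generalized edge-shrinking peel must be designed with care). The paper's choice of sparse, path-like witnesses is precisely what makes all of this unnecessary.
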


Most of this subsection is devoted to the proof of this theorem. 

\begin{lemma}
    Let $H$ be a hypergraph.
    If $H$ contains no complex component then $H$ is $1$-orientable.
\label{lem:no:complex:component}
\end{lemma}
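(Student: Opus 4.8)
The plan is to reduce $1$-orientability of $H$ to a Hall-type condition on the bipartite representation $\text{bi}(H)$ and then to verify that condition using the cyclomatic-number (in)equalities for hypertree, unicyclic, and complex components recorded above.

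First I would note that a $1$-orientation of $H$ --- an injective map assigning to each edge $e$ a vertex contained in $e$ --- is exactly a matching of $\text{bi}(H)$ that saturates the part of the bipartition formed by the edge-vertices (the ``left'' side). By Hall's theorem, such a matching exists if and only if for every subset $F$ of the edge set of $H$ one has $|N(F)| \geq |F|$, where $N(F) = \bigcup_{e \in F} e$ is the set of vertices incident to some edge of $F$. Hence it suffices to prove this inequality for all $F$.

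So fix $F \subseteq E(H)$ and look at the induced subgraph $K = \text{bi}(H)[F \cup N(F)]$. Every connected component $K'$ of $K$ lies inside a single connected component of $\text{bi}(H)$, which --- because $H$ has no complex component --- is a tree or unicyclic and therefore has cyclomatic number at most $1$; since the cycle space of a subgraph embeds (by zero-extension of edge sets) into the cycle space of the ambient graph, $K'$ itself has cyclomatic number at most $1$, i.e.\ $K'$ has at most as many edges as vertices. Writing $f$ for the number of edge-vertices and $v$ for the number of original vertices of $K'$, and using that every edge of $H$ has at least $2$ vertices, all of which lie in $N(F)$, the number of edges of $K'$ is at least $2f$; thus $2f \leq f + v$, i.e.\ $v \geq f$. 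Summing over the components of $K$, whose original-vertex sets partition $N(F)$, gives $|N(F)| \geq |F|$, which is Hall's condition, and the lemma follows.

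I do not expect a genuine obstacle; the point to handle carefully is that passing to the induced subgraph $K$ cannot raise the cyclomatic number of any component --- so that each piece of $K$ still obeys the ``forest-or-unicyclic'' edge/vertex balance --- together with the bookkeeping that $N(F)$ contains every endpoint of every edge in $F$, which is what forces the edge count of $K'$ up to $2f$ (to $df$ in the $d$-uniform case). One could instead argue by induction, peeling a degree-$1$ original vertex and orienting its unique incident edge to it, but one would then have to strip the cycle edges of any unicyclic component before such a vertex is guaranteed to exist, so the Hall route is cleaner.
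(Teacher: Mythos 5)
Your proof is correct, but it takes a genuinely different route from the paper's. The paper argues constructively via the standard peeling process: in each hypertree or unicyclic component it finds an edge incident to a vertex of degree~$1$, orients that edge toward that vertex, removes it, and iterates. You instead reduce $1$-orientability to Hall's condition on $\text{bi}(H)$ and verify $|N(F)|\ge|F|$ by a cyclomatic-number count: each component $K'$ of the induced subgraph on $F\cup N(F)$ inherits cyclomatic number at most $1$ from the ambient tree-or-unicyclic component (the cycle space of a subgraph embeds into that of the ambient graph), so $K'$ has at most $f+v$ edges, while the $f$ edge-vertices contribute at least $2f$ edges because $N(F)$ contains every endpoint of every edge of $F$; hence $f\le v$, and summing over components gives Hall's condition. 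Both steps check out. What each approach buys: the paper's peeling argument directly yields the linear-time construction of the orientation that the application actually uses, whereas your Hall-type argument is a cleaner existence proof --- in particular, it sidesteps the point you yourself flag, namely that a unicyclic component whose bipartite representation is a bare cycle has no degree-$1$ vertex at all, so the paper's opening observation needs the small extra remark that one may first orient the cycle edges around the cycle (or that for $d\ge3$ uniform edges such a component necessarily has a degree-$1$ vertex). Your version needs no such case distinction.
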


\begin{proof}
    We may consider each connected component of $H$ separately.
    First, observe that a hypertree and a unicyclic component always contains an edge
    that is incident to a vertex of degree $1$. 

    Suppose $C$ is such a hypertree or a unicyclic component.
    A $1$-orientation of $C$ is obtained via the 
    well-known ``peeling process'', see, e.g., \cite{Molloy05}. It works
    by iteratively peeling edges incident to a vertex of degree $1$ and
    orienting each edge towards such a vertex. \qquad
\end{proof}

In the light of Lemma~\ref{lem:no:complex:component} we bound the probability of
$G(S, \vec{h})$ being $1$-orientable by the probability that $G(S,\vec{h})$
contains no complex component. A connected complex component of $H$ causes
$\text{bi}(G)$ to have at least two cycles. So, minimal obstruction hypergraphs that show
that a hypergraph contains a complex component are very much like the obstruction graphs 
that showed that a graph contains more than one cycle, see Figure~\ref{hashing:fig:obstruction:sets}
on Page~\pageref{hashing:fig:obstruction:sets}. For a clean definition of obstruction hypergraphs,
we will first introduce the concept of a \emph{strict path} in a hypergraph. 
A sequence 
$(e_1, \ldots, e_t)$ with $t \geq 1$ and $e_i \in E$, for $1 \leq i \leq t$, 
is a \emph{strict path} in $H$ if $| e_i \cap e_{i
+ 1}| = 1$ for $1 \leq i \leq t - 1$ and  $|e_{i} \cap e_{j}| = 0$ for $1 \leq i \leq t - 2$ and $i + 2 \leq j \le t$.  
According to \cite{karonski02}, a complex
connected component contains a subgraph of one of the following two types:

\begin{enumerate}
    \item[Type 1:] A strict path $e_1, \ldots, e_t, t \geq 1$, and an edge $f$ such that $\left| f \cap e_1 \right| \geq 1$,
        $\left| f \cap e_t \right| \geq 1$, and $\left| f \cap \bigcup_{i = 1}^t e_i \right| \geq 3$.
    \item[Type 2:] A strict path $e_1, \ldots, e_{t - 1}, t \geq 2$, and
        edges $f_1$, $f_2$ such that $\left| f_1 \cap e_1 \right| \geq 1$,
        $\left| f_2 \cap e_{t - 1} \right| \geq 1$, and $\left| f_j \cap
        \bigcup_{i = 1}^{t-1} e_i \right| \geq 2$, for $j \in \{1,2\}$.
\end{enumerate}
The  bipartite representation of a hypergraph of Type~$1$ contains a cycle with a chord,
the  bipartite representation of a hypergraph of Type~$2$ contains two cycles connected by a path of 
some nonnegative length.
We call a hypergraph $H$ in $\GG^{d}_{m,n}$ of Type $1$ or Type $2$
a \emph{minimal complex obstruction hypergraph}.  Let $\MOS$ denote the set of all
minimal complex obstruction hypergraphs in $\GG^d_{m, n}$,
with edges labeled by distinct elements from $\{1,\dots,n\}$. In the following, our
objective is to apply
Lemma~\ref{lem:good:bad}, which says that
\begin{align}
    \Pr\left(N^\MOS_S > 0  \right) \leq \Pr\left(B^\MOS_S\right) + \E^\ast\left(N^\MOS_S\right).
    \label{hashing:eq:generalized:1}
\end{align}

\paragraph{Bounding $\E^\ast\left(N^\MOS_S\right)$}

We prove the following lemma:
\begin{lemma}
    Let $S \subseteq U$ with $|S| = n, d\geq 3$, and $\varepsilon > 0$ be given. 
    Assume $m \geq (1 + \varepsilon) (d - 1)n$. Then 
    \begin{align*}
        \E^\ast\left(N^\MOS_S\right) = O(1/n).
    \end{align*}
    \label{lem:MOS:fully:random}
\end{lemma}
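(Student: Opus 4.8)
The plan is to follow the first-moment computation of Lemma~\ref{hashing:lem:cuckoo:hashing:fully:random} (the standard cuckoo hashing case), adapted to $d$-uniform hypergraphs. Recall that $\E^\ast\left(N^\MOS_S\right) = \sum_{G \in \MOS} \Pr\nolimits^\ast(I_G = 1) = \sum_{T=2}^{n} \mu^{\MOS}_T$, so it suffices to bound $\mu^{\MOS}_T$ and show that the resulting series converges, with sum $O(1/m) = O(1/n)$. First I would record the structural facts. A minimal complex obstruction hypergraph of Type~1 or Type~2 with $T$ edges consists of a strict path of $T-1$ (resp.\ $T-2$) edges together with one (resp.\ two) additional edges closing cycles; by the definitions its bipartite representation $\text{bi}(G)$ is connected and has cyclomatic number exactly $2$. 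Writing $v$ for the number of (hypergraph) vertices of $G$ and noting that $\text{bi}(G)$ has $\sum_{e \in E(G)} |e|$ edges and $T + v$ vertices, the identity $\gamma(\text{bi}(G)) = 2$ gives $\sum_{e} |e| = T + v + 1$; since $|e| \le d$ for every edge, this also yields $v \le (d-1)T - 1$.

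Next comes the counting. For a fixed $T$, I would bound the number of isomorphism types of such $d$-partite structures by $T^{O(1)} \cdot (d-1)^T$: there are only $T^{O(1)}$ choices for the underlying topology of $\text{bi}(G)$ --- this can be read off from Lemma~\ref{lem:num_graphs}, since $\text{bi}(G)$ has bounded cyclomatic number, a single component, and only $O(1)$ leaf edges --- while the extra factor $(d-1)^T$ accounts for the ways the strict-path backbone can weave through the $d$ partition classes (each of the at most $T$ junction vertices lies in one of the $d$ classes, with consecutive choices distinct). Having fixed an isomorphism type, there are at most $m^{v}$ ways to place its $v$ vertices in the (class-labeled) copies of $[m]$ and at most $n^{T}$ ways to label its $T$ edges by keys; and for a fully specified such $G$, the probability that $G(S,\vec h)$ contains $G$ as a subgraph in the fully random model is $\prod_{e} m^{-|e|} = m^{-(T+v+1)}$. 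Multiplying, the dependence on $v$ cancels and
\[
  \mu^{\MOS}_T \;\le\; T^{O(1)} (d-1)^{T} \cdot m^{v} \cdot n^{T} \cdot m^{-(T+v+1)}
  \;=\; \frac{T^{O(1)}}{m}\left(\frac{(d-1)n}{m}\right)^{T}
  \;\le\; \frac{T^{O(1)}}{m\,(1+\varepsilon)^{T}},
\]
using $m \ge (1+\varepsilon)(d-1)n$. Summing over $T \ge 2$ and invoking convergence of $\sum_{T} T^{O(1)} q^{T}$ for $q = 1/(1+\varepsilon) < 1$ gives $\E^\ast\left(N^\MOS_S\right) = O(1/m) = O(1/n)$, since $m \ge n$.

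The main obstacle is the counting step. One has to be careful to account for the class-colorings of the strict-path backbone, which contribute the $(d-1)^{T}$ blow-up in the number of isomorphism types, and then observe that this blow-up is exactly cancelled by the factor $d-1$ hidden in the density assumption $m \ge (1+\varepsilon)(d-1)n$, so that what survives is a geometric series of ratio $1/(1+\varepsilon)$ strictly below $1$. Everything else --- the structural identity $\sum_{e}|e| = T+v+1$, the vertex-placement and edge-labeling counts, and the per-edge probability $m^{-|e|}$ --- is routine bookkeeping, entirely analogous to the $d=2$ argument already carried out for $\mathsf{MOG}$ in Lemma~\ref{hashing:lem:cuckoo:hashing:fully:random}.
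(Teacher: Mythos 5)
Your proposal follows essentially the same route as the paper: a first-moment count of minimal complex obstruction hypergraphs with $T$ edges, bounding the number of structures by (number of topologies) $\times$ (vertex placements $m^{v}$) $\times$ (edge labelings $n^{T}$), multiplying by the realization probability $m^{-dT}$, and summing a geometric series whose ratio $\frac{(d-1)n}{m}\le\frac{1}{1+\varepsilon}$ absorbs the $(d-1)^{T}$ from the path's class-colorings. The paper simply imports the structure count $w_1(d,t+1)$ from Karo\'nski--{\L}uczak instead of rederiving it. Two points in your derivation need repair, though. First, the cyclomatic number of $\textrm{bi}(G)$ is only \emph{at least} $2$, not exactly $2$: the defining conditions of Type~1 and Type~2 are inequalities (e.g.\ $|f\cap\bigcup e_i|\ge 3$), so the chord edges may close more than the minimum number of cycles. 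This is harmless --- the realization probability is exactly $m^{-dT}=m^{-(T+v+\gamma-1)}\le m^{-(T+v+1)}$, so your cancellation still yields a valid upper bound --- but you should write ``$\ge 2$'' and ``$\le$''. Second, and more substantively, your justification of the $T^{O(1)}$ bound on the number of topologies via Lemma~\ref{lem:num_graphs} does not go through: for $d\ge 3$ the bipartite representation of a strict path has $\Theta((d-2)T)$ degree-one vertices (the non-shared vertices of each hyperedge), hence $\Theta(T)$ leaf edges, so that lemma only gives $(dT)^{O(T)}$, which is useless here. The claim itself is true, but it must be argued directly from the structure you already identified: a minimal complex obstruction hypergraph is a strict path plus one or two extra edges, so its topology is determined by the path length, the partition classes of the consecutive shared vertices (your $(d-1)^{T}$ factor), and $T^{O(1)}\cdot d^{O(1)}$ choices of attachment points for the extra edge(s) --- which is exactly how the bound \eqref{eq:w1} used in the paper is assembled.
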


\emph{Proof}.
    From the proof of \cite[Theorem 4, P. 128]{karonski02}, we know that 
    the number $w_1(d, t+ 1)$ of \emph{unlabeled} minimal complex obstruction hypergraphs with $t + 1$ edges is at most
    \begin{align}
        w_1(d, t+ 1) \leq d m^d  \left( \left( d - 1 \right) m^{d - 1}
        \right)^{t - 2} t^2d^4   \left( \left(  d -
        1 \right) m^{d - 1} m^{d - 3} + m^{2(d - 2)} \right)
    \label{eq:w1}
    \end{align}
    So,
    the number of minimal complex obstruction hypergraphs with $t + 1$ edges and edge labels from $\{1,\dots,n\}$
    is not larger than
%
%
   \begin{align*}
       &\left(\binom{n}{t + 1} \cdot (t + 1)!\right)  d m^d \left( \left( d - 1 \right) m^{d - 1} \right)^{t - 2} \cdot t^2d^4 \cdot
       \left( \left(  d - 1 \right) m^{d - 1} \cdot m^{d - 3} + m^{2(d - 2)} \right)\\
       &\leq n^{t + 1} \cdot d^6 \cdot m^{(d - 1)(t + 1) -1} \cdot t^2 \cdot (d-1)^{t-2}\\
       &\leq n^{ d (t+1) - 1} \cdot d^6 \cdot t^2 \cdot ( 1 + \varepsilon)^{(d-1)(t + 1) - 1} \cdot (d-1)^{(d-1) (t + 1) + t - 3} \\
       &= n^{ d (t+1) - 1}\cdot d^6 \cdot t^2 \cdot (1 + \varepsilon)^{(d-1)(t + 1) - 1} \cdot (d - 1)^{d  (t + 1) - 4}.
   \end{align*}
    Let $H$ be a labeled minimal complex obstruction hypergraph with $t+1$ edges.

   Draw $t+1$ edges at random from $[m]^d$, one for each labeled edge in $H$. 
   The probability that the hash values
   realize $H$ is $1/m^{d(t+1)} \leq 1/\left( (1
   + \varepsilon)  \left( d - 1\right)n \right)^{d(t + 1)}$. So,
   \begin{align*}
       \E^\ast\left(N^\MOS_S\right) &\leq \sum_{t = 1}^{n} \frac{d^6 \cdot t^2
       \cdot (1 + \varepsilon)^{(d - 1) \cdot (t+1) - 1} \cdot \left(
       d - 1 \right)^{d \cdot (t + 1)  - 4} \cdot n^{d \cdot
       \left( t + 1 \right) - 1}}{\left( (1+ \varepsilon) \left( d - 1 \right)n \right)^{d(t
       + 1)}}\\
   &\leq \frac{d^6}{(d-1)^4 n} \cdot \sum_{t + 1}^{n} \frac{t^2}{(1 + \varepsilon)^{t - 1}} = O\left(\frac{1}{n}\right).\qquad \endproof
   \end{align*}

\paragraph{Bounding $\Pr\left(B^\MOS_S\right)$}

We will now prove the following lemma.

\begin{lemma}
    Let $S \subseteq U$ with $|S| = n, d \geq 3,$ and  $\varepsilon > 0$ be given. 
    Assume $m \geq (1+\varepsilon)(d-1)n$.
    Let $\ell, c \geq 1$. Choose $\vec{h} \in \RR^{c, d}_{\ell, m}$ at random. Then
    \begin{align*}
        \Pr\left(B^\MOS_S\right) = O\left(\frac{n}{\ell^c}\right).
    \end{align*}
    \label{lem:mos:fail:prob}
\end{lemma}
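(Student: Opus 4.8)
The plan is to reproduce, for hypergraphs, the three‑step argument behind Lemma~\ref{lem:failure_term_bound}, using the generalized notions of peelability and reducibility (Definitions~\ref{def:peelability:gen} and~\ref{def:reducible:gen}) and the hypergraph version of Lemma~\ref{lem:fail_prob} noted after Definition~\ref{def:reducible:gen}. I would first introduce a peelable hypergraph property $\mathsf{B}$ with $\MOS\subseteq\mathsf{B}$: let $\mathsf{B}$ consist of all hypergraphs $H$ (with hyperedges allowed to have been shrunk to size between $2$ and $d$) such that, after repeatedly deleting vertices of degree $1$ from the hyperedges — i.e.\ after the generalized peeling of rule~(2) in Definition~\ref{def:peelability:gen} — at most one connected component of $\mathrm{bi}(H)$ still has leaves. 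Every minimal complex obstruction hypergraph lies in $\mathsf{B}$: once the $d-2$ free vertices of each path edge are peeled away, the bipartite representation of a Type~1 (resp.\ Type~2) hypergraph becomes a subdivided cycle‑with‑chord (resp.\ a subdivided pair of cycles joined by a path), which is connected and leafless. That $\mathsf{B}$ is peelable is routine: if $\mathrm{bi}(H)$ has a degree‑$1$ vertex inside a hyperedge of size $\ge 3$, shrink that edge; otherwise all hyperedges are ``tight'' and one peels a leaf edge of the unique leaf component, or — if no leaf component is present — an edge of a cyclic component, exactly as in the proof that $\LC$ is peelable.

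Next I would take $\mathsf{C}$ to be the hypergraph analogue of $\LCY{4c}$: all $H$ (edges of size $2$ to $d$) such that $\mathrm{bi}(H)$ has at most one leaf component, at most $4c$ connected components, a leaf component with at most $4c$ leaf and cycle edges, and cyclomatic number $\gamma(\mathrm{bi}(H))\le 4c$. The reduction $\mathsf{B}\to\mathsf{C}$ in the sense of Definition~\ref{def:reducible:gen} is the core of the argument and follows the three stages of Lemma~\ref{lem:lcy:reducibility}: given $G\in\mathsf{B}$ and a marked set $E^\ast$ with $|E^\ast|\le 2c$, discard all components containing no marked edge, peel unmarked leaf and cycle edges out of the leaf component, and apply the bipartite‑representation form of Claim~\ref{claim:leafless:reduce} to each leafless component to cut its cyclomatic number down while keeping all marked edges. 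The genuinely new bookkeeping compared with the graph case is that a ``marked edge'' produced by a $g_j$‑collision is a hyperedge of $G$ carrying a key, and the reduction is allowed to shrink it; what saves the argument is that the deficiency $d_T$ depends only on the set of labels present, so the shrinking steps of the generalized peeling leave $d_T$ unchanged while edge removals change it by $0$ or $1$ — precisely the monotonicity used in Lemma~\ref{lem:fail_prob} — and every marked key therefore still belongs to $T(G'')$, so the reduced graph $G''$ is $T(G'')$‑critical. I expect this reduction step, and in particular getting the bipartite‑representation version of Claim~\ref{claim:leafless:reduce} together with the label/vertex bookkeeping exactly right, to be the main obstacle.

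Finally I would bound $\mu^\mathsf{C}_t$ by a first‑moment count, the hypergraph twin of the claim inside Lemma~\ref{lem:failure_term_bound} and in the spirit of Lemma~\ref{lem:MOS:fully:random}. Applying Lemma~\ref{lem:num_graphs} to $\mathrm{bi}(H)$ (whose number of edges is $\sum_j|e_j|\le dt$, with at most $4c$ leaf edges and cyclomatic number and component count each at most $4c$) bounds the number of unlabeled shapes by $t^{O(c)}=t^{O(1)}$, and since only the single leaf component can be hypertree‑like while every other component has no more vertices than incidences, such an $H$ has at most $\sum_j|e_j|-t+1\le (d-1)t+1$ vertices. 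Fixing a bipartition side per component ($2^{4c}=O(1)$ choices), at most $m^{(d-1)t+1}$ vertex labels, at most $n^t$ key labels, and noting that a fixed labeled shape is realized by random hash values with probability $1/m^{\sum_j|e_j|}$, one obtains $\mu^\mathsf{C}_t\le t^{O(1)}\cdot m\cdot (n/m)^t$; since $m\ge(1+\varepsilon)(d-1)n$ forces $n/m\le 1/\bigl((1+\varepsilon)(d-1)\bigr)<1$, the series $\sum_{t\ge 2} t^{2c}\mu^\mathsf{C}_t$ converges and is $O(n)$. Plugging this into the hypergraph version of Lemma~\ref{lem:fail_prob} gives $\Pr(B^\MOS_S)\le\ell^{-c}\sum_{t=2}^n t^{2c}\mu^\mathsf{C}_t=O(n/\ell^{c})$, as claimed; the counting step here is essentially the $\LCY{4c}$ computation with $m$ replaced by $m^{d-1}$ in the vertex count.
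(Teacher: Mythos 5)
Your route is genuinely different from the paper's. The paper does not port the leafless/$\LC$/$\LCY{4c}$ machinery to hypergraphs at all: it exploits the path-like structure of $\MOS$ directly, relaxing it to a property $\mathsf{PX}$ (minimal complex obstruction hypergraphs, strict paths, and strict paths with one extra attached edge), observing that $\mathsf{PX}$ is peelable, taking $\mathsf{C}=\mathsf{B}=\mathsf{PX}$ (no reducibility step), and counting $\mathsf{PX}$ by hand with full $d$-edges only. Your plan instead shrinks hyperedges and reduces to a hypergraph analogue of $\LCY{4c}$ via the bipartite representation. That is closer in spirit to what the paper does for load balancing ($\Csmall$ and $\HT$), and if it worked with the bound you claim it would arguably be the more systematic argument.

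However, there is a concrete gap in your counting of $\mu^{\mathsf{C}}_t$, and it is not cosmetic. For $d=2$ the coordinate assignment of vertices costs only a factor $2$ per connected component, because a connected bipartite graph has exactly two proper $2$-colorings. For $d\ge 3$ with shrunk hyperedges this is false: a size-$2$ sub-edge of a $d$-partite hyperedge can place its new vertex in any of $d-1$ remaining coordinate classes, so the number of coordinate assignments is of order $(d-1)^{V}$ with $V=\Theta(t)$, not $2^{O(c)}$. Your stated bound $\mu^{\mathsf{C}}_t\le t^{O(1)}\cdot m\cdot(n/m)^t$ therefore does not follow from the argument given, and the naive repair diverges: using only $V\le(d-1)t+1$ one gets a factor $\bigl((d-1)^{d-1}n/m\bigr)^t\ge\bigl((d-1)^{d-2}/(1+\varepsilon)\bigr)^t$, which blows up for every $d\ge3$. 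The argument can be saved, but only by using the structure of $\mathsf{C}$ more aggressively: the bound of $4c$ on leaf edges of $\mathrm{bi}(H)$ means at most $O(c)$ vertices of $H$ have degree $1$, which together with $\gamma\le 4c$ forces $\sum_j|e_j|\le 2t+O(c)$ (all but $O(c)$ hyperedges have size exactly $2$), hence $V\le t+O(c)$; then the per-new-vertex coordinate factor $(d-1)$ is absorbed precisely by the $(d-1)$ in the hypothesis $m\ge(1+\varepsilon)(d-1)n$, giving $\mu^{\mathsf{C}}_t\le t^{O(1)}\cdot m\cdot(1+\varepsilon)^{-t+O(c)}$ and the claimed $O(n/\ell^c)$. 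You should also expect real bookkeeping work in the reducibility step (Claim~\ref{claim:leafless:reduce} must be applied to $\mathrm{bi}(H)$ so that every surviving edge-node keeps degree at least $2$ and every marked key survives as a sub-edge), but you flag that yourself; the counting omission is the step that, as written, is wrong rather than merely incomplete.
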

To use our framework from Section~\ref{hashing:sec:basics}, we have to find a suitable peelable 
hypergraph property that
contains $\MOS$. Since minimal complex obstruction hypergraphs are path-like, we relax 
that notion in the following way.
\begin{definition}
    Let $\mathsf{PX}$ 
		be the set of all hypergraphs $H$ from $\GG^d_{m,n}$ that fall in one of the following
    categories:
    \begin{enumerate}
        \item $H$ has hypergraph property $\MOS$.
        \item $H$ is a strict path.
        \item $H$ consists of a strict path $e_1, \ldots,e_t$, $t\geq 1$, 
				and an edge $f$ such that $|f \cap (e_1 \cup e_t)| \geq 1$ and
            $\left| f \cap
        \bigcup_{i = 1}^{t} e_i \right| \geq 2.$
    \end{enumerate}
    \label{hashing:def:p:ast}
\end{definition}
All hypergraphs in $\mathsf{PX}$ are extensions of paths. 
Note that property $3$ is somewhat artificial to deal with the case that a single
edge of a minimal complex obstruction hypergraph of Type $2$ is removed. Obviously, $\MOS$ is contained in 
$\mathsf{PX}$, and property $\mathsf{PX}$
is peelable. We can now prove Lemma~\ref{lem:mos:fail:prob}.  

\emph{Proof of Lemma~\ref{lem:mos:fail:prob}}.
We apply Lemma~\ref{lem:fail_prob} which says that 
\begin{align*}
    \Pr\left(B^{\mathsf{PX}}_S\right) \leq \frac{1}{\ell^c} \sum_{t = 1}^{n} t^{2c} \mu_t^{\mathsf{PX}}.
\end{align*}
    We start by counting unlabeled hypergraphs $G \in \mathsf{PX}$
    having exactly $t + 1$ edges.  For the hypergraphs having Property 1 of
    Definition~\ref{hashing:def:p:ast}, we may use the bound \eqref{eq:w1} in the proof of
    Lemma~\ref{lem:MOS:fully:random}.  Let $w_2(d, t + 1)$ be the number of
    such hypergraphs which are strict paths, i.e., have Property $2$ of
    Definition~\ref{hashing:def:p:ast}. We obtain the following bound:
    \begin{align*}
        w_2(d, t + 1) \leq &d m^d \cdot \left( \left(d - 1\right) \cdot 
        m^{d - 1} \right)^{t}.
    \end{align*}
    Let $w_3(d, t + 1)$ be the number of hypergraphs having Property $3$ of Definition~\ref{hashing:def:p:ast}.
    We observe that 
    \begin{align*}
        w_3(d, t + 1) \leq & d m^d \cdot \left( \left(d - 1\right) \cdot 
        m^{d - 1} \right)^{t - 1} \cdot 2d^2 \cdot t \cdot
        m^{d-2}.
    \end{align*}
    So, the number of fully labeled hypergraphs having exactly $t + 1$ edges is at most 
    \begin{align*}
        &\left(\binom{n}{t + 1} \cdot (t + 1)!\right)\cdot \left( w_1(d, t + 1)
       + w_2(d, t + 1) + w_3(d, t+1)\right)\\
       &\leq n^{t + 1} \cdot d m^d \cdot \left( (d - 1) \cdot 
        m^{d - 1} \right)^{t - 2} \, \cdot \\
        &\quad\left( d^2  
        m^{2(d - 1)}  + d^4 t 
        m^{2 (d - 2)} + d^4 t^2 m^{2(d  - 2)}
        + 2d^3 t
        m^{2d - 3} \right)\\
        &\leq 4\cdot d^5 \cdot t^2 \cdot n^{t+1} \cdot m^{(d - 1)(t + 1) + 1} \cdot (d-1)^{t-2}\\
        & \leq 4 \cdot d^5 \cdot t^2 \cdot n^{d(t+1) + 1} \cdot (1 + \varepsilon)^{(d - 1)( t + 1) + 1} \cdot (d - 1)^{d ( t + 1) - 2}.
    \end{align*}
    We may thus calculate:
    \begin{align*}
        \Pr\left(B^{\mathsf{PX}}_S\right) &\leq \frac{1}{\ell^c} \sum_{t = 1}^{n} t^{2c} \cdot \frac{4 \cdot d^5 \cdot t^2 \cdot (1 + \varepsilon)^{(d - 1)( t + 1) + 1} \cdot (d - 1)^{d ( t + 1) - 2} \cdot n^{d (t + 1) + 1}}{( ( 1 + \varepsilon) (d-1)n)^{d(t + 1)}}\\
                                              &\leq \frac{n}{\ell^c}\sum_{t = 1}^{n}\frac{4 \cdot d^5 \cdot t^{2 + 2c}}{(d-1)^2 \cdot (1+\varepsilon)^{t - 1}} = O\left( \frac{n}{\ell^c}\right).\qquad \endproof
    \end{align*}

 \paragraph{Putting Everything Together} Substituting the results of
Lemma~\ref{lem:MOS:fully:random} and Lemma~\ref{lem:mos:fail:prob} into
 \eqref{hashing:eq:generalized:1} yields
\begin{align*}
    \Pr\left(N^\MOS_S > 0 \right) = O\left(\frac{1}{n}\right) + O\left(\frac{n}{\ell^c}\right).
\end{align*}
Theorem~\ref{thm:generalized:cuckoo:hashing:static} follows by
setting $c \geq 2/\delta$ and $\ell = n^\delta$.

Theorem~\ref{thm:generalized:cuckoo:hashing:static} shows that 
given a key set of size $n$, if we use 
$d$ tables of size at least $(1+\varepsilon)(d-1)n$ and hash functions from $\RR$
there exists w.h.p. an assignment of the keys to memory 
cells according to the cuckoo hashing rules. Thus,
the load of the hash table is smaller than $1/(d(d-1))$. In the 
fully random case, the load of the hash table rapidly grows towards $1$,
see, e.g., the table on Page~$5$ of \cite{DietzfelbingerGMMPR10}. For example,
using $5$ hash functions allows the hash table load to be $\approx 0.9924$. 
The approach followed in this section cannot yield such bounds for the following
reason. When we look back at the proof of Lemma~\ref{lem:no:complex:component},
we notice that it 
gives a stronger result: It shows that when a graph does not contain
a complex component, it has an empty two-core, i.e., it does not contain
a non-empty subgraph in which each vertex has minimum degree $2$. It is known from 
random hypergraph theory that the appearance of a non-empty two-core becomes
increasingly likely for $d$ getting larger.\footnote{According to \cite[p. 418]{MezardM09} (see also \cite{MajewskiWHC96}) 
for large $d$ the $2$-core of a random $d$-uniform 
hypergraph with $m$ vertices and $n$ edges is empty with 
high probability if $m$ is bounded from below by $dn/\log d$.}  So, we cannot rely on hypergraphs with empty two-cores to 
prove bounds for generalized cuckoo hashing that improve for increasing values
of $d$.

Karo{\'n}ski and 
{\L}uczak showed in \cite{karonski02} that if a random $d$-uniform hypergraph has $m$ vertices and 
at most $(1
- \varepsilon) m/(d(d-1))$ edges, then all connected components have size $O(\log m)$
with high probability. 
On the other hand, if a random $d$-uniform hypergraph has at least $(1
+ \varepsilon) m/(d(d-1))$ edges, then there exists a unique connected component of size
$\Theta(m)$. So, for $d \geq 3$ the analysis of general cuckoo hashing takes
place in the presence of the giant component, which differs heavily from the analysis of standard 
cuckoo hashing. Whether or not hash class $\RR$ allows suitable bounds for the general case will
depend on whether or not there exist small subgraphs in the giant component which are sufficiently unlikely 
to occur in the fully random case.

Now that we turned our focus to hypergraphs, we again see that the analysis is
made without exploiting details of the hash function construction, only using
the general framework developed in Section~\ref{hashing:sec:basics} together
with random graph theory.

\subsection{Labeling-based Insertion Algorithms For Generalized Cuckoo Hashing}

In the previous section we showed that when the tables are large enough, the hash
functions allow storing $S$ according to  the cuckoo hashing rules with high
probability. In this section we prove that such an assignment
can be obtained (with high probability) with hash functions 
from $\RR$ using two recently described 
insertion algorithms. 

In the last section, we pointed out two natural insertion
strategies for generalized cuckoo hashing: breadth-first search and random walk,
described in \cite{FotakisPSS05}.
Very recently, Khosla \cite{Khosla13} (2013) and Eppstein \emph{et al.}
\cite{Eppstein14} (2014) presented two new insertion strategies, which will be
described next. 
In both algorithms, each table cell $i$ in  table $T_j$   has
a label (or counter) $l(j, i) \in \mathbb{N}$, where initially $l(j, i) = 0$ for all
$j \in \{1,\ldots,d\}$ and $i \in \{0, \ldots,m - 1\}$.  The insertion of a key $x$
works as follows: Both strategies find the table index $$j = \argmin_{j \in
\{1,\ldots,d\}} \{l(j, h_j(x))\}.$$ If $T_j[h_j(x)]$ is free then $x$ is stored in this cell and the insertion terminates successfully. Otherwise, let $y$ be the key that resides in $T_j[h_j(x)]$.
Store $x$ in $T_j[h_j(x)]$. The difference between the two algorithms 
is how they adjust the labeling. The algorithm of Khosla sets $$l(j, h_j(x)) \leftarrow
\min\{l(j', h_{j'}(x)) \mid j' \in \left(\{1,\ldots,d\} \setminus \{j\}\right)\}
+ 1,$$ while the algorithm of Eppstein \emph{et al.} sets $l(j, h_j(x))
\leftarrow l(j, h_j(x)) + 1$. Now insert $y$ in the same way. This is iterated until an empty
cell is found or it is noticed that the insertion cannot be performed successfully.\footnote{
Neither in \cite{Eppstein14} nor in \cite{Khosla13} it is described how 
this should be done in the cuckoo hashing setting. From the analysis presented there, 
when deletions are forbidden, one should do the following: Both algorithms  
have a counter \text{MaxLabel}, and if there exists a label $l(j,i) \geq \text{MaxLabel}$, then
one should choose new hash functions and re-insert all items. For Khosla's algorithm, 
$\text{MaxLabel} = \Theta(\log n)$; for the algorithm of Eppstein \emph{et al.}, one should 
set $\text{MaxLabel} = \Theta(\log \log n)$.}
In Khosla's algorithm, the content of the label $l(j, i)$ is a lower bound
for the minimal length of an eviction sequence that makes it possible to store a new element
into $T_j[i]$ by moving other elements around~\cite[Proposition 1]{Khosla13}. 
In the algorithm of Eppstein \emph{et al.}, the label
$l(j, i)$ contains the number of times the memory cell $T_j[i]$ has been overwritten.
According to \cite{Eppstein14}, it aims to minimize the 
number of write operations to a memory cell. 
Here we show that in the sparse setting with $m \geq (1+\varepsilon)(d-1)n$,
using class $\RR$ 
the maximum label in the algorithm of Eppstein \emph{et al.}
is $\log \log n + O(1)$ with high probability and the maximum label in the 
algorithm of Khosla is $O(\log n)$ with high probability.
This corresponds to results proved in these papers for fully random hash functions. 

Our result when using hash functions from $\RR$ is as follows. We only study
the case that we want to insert the keys from  a set $S$ sequentially without
deletions.
\begin{theorem}
Let $\varepsilon > 0$, $0 < \delta <1$, $d \geq 3$  be
given. Assume $c\ge  2/\delta$.  
    For $n\ge 1$ consider $m\ge(1+\varepsilon)(d-1)n$ and $\ell=n^\delta$.  
    Let $S \subseteq U$ with $|S|=n$.
    Choose $\vec{h} \in \RR^{c,d}_{\ell,m}$ at random. Insert all keys from $S$
		according to $\vec{h}$ in an arbitrary order, using the algorithm of Khosla. 
    Then with probability $1-O(1/n)$ (i) all key insertions are successful and (ii)
    $\max\{l(j,i) \mid i \in \{0, \ldots,m-1\}, j \in \{1, \ldots, d\}\} = O(\log n)$.
    \label{thm:khoslas:algorithm}
\end{theorem}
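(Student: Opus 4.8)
The plan is to reduce both conclusions of the theorem to two purely combinatorial statements about the random hypergraph $G=G(S,\vec{h})$, and then to settle the first by Theorem~\ref{thm:generalized:cuckoo:hashing:static} and the second by one more application of the framework. Fix $\vec{h}$; then Khosla's algorithm is deterministic, and whenever the update rule $l(j,h_j(x))\leftarrow \min_{j'\ne j} l(j',h_{j'}(x))+1$ makes a cell attain label $k$, its analysis exhibits keys $x_1,\dots,x_k\in S$ whose edges $e_{x_1},\dots,e_{x_k}$ in $G$ form a hyperpath of pairwise distinct edges, while an insertion can fail only if it would drive some label past $\text{MaxLabel}=\Theta(\log n)$. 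In a component of $G$ that is a hypertree or unicyclic, a hyperpath of $k$ distinct edges cannot have two non-consecutive edges (or two consecutive edges in two vertices) share a vertex without closing a second cycle in $\mathrm{bi}(G)$, so such a hyperpath is, up to an additive constant, a \emph{strict path}. Consequently, with $\beta=\beta(\varepsilon,d)$ a constant to be fixed, it suffices to show that with probability $1-O(1/n)$ the graph $G$ (i) contains no complex component and (ii) contains no strict path with at least $\beta\log n$ edges: on that event every component of $G$ is a hypertree or unicyclic, every hyperpath of distinct edges has length $O(\log n)$, hence every label is $O(\log n)$, and choosing $\text{MaxLabel}$ to be a large enough multiple of $\log n$ makes all insertions terminate successfully. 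Since labels only increase during the run, this applies to the process as actually executed with $\text{MaxLabel}$ fixed in advance, so there is no circularity between termination and the label bound.

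Event (i) is exactly $\{N^{\MOS}_S>0\}$, which Theorem~\ref{thm:generalized:cuckoo:hashing:static} bounds by $O(1/n)$ for the present parameters. For event (ii) I would instantiate the hypergraph versions of Lemma~\ref{lem:good:bad} and Lemma~\ref{lem:fail_prob} with $\mathsf{A}$ the set of strict paths in $\GG^d_{m,n}$ having $\lceil\beta\log n\rceil$ edges, $\mathsf{B}$ the set of \emph{all} strict paths, and $\mathsf{C}=\mathsf{B}$. Here $\mathsf{A}\subseteq\mathsf{B}$; $\mathsf{B}$ is peelable because deleting an end edge of a strict path leaves a shorter strict path; and $\mathsf{B}$ reduces to itself. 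The number of unlabeled strict paths with $t+1$ edges is at most $d\,m^{d}\bigl((d-1)m^{d-1}\bigr)^{t}$, the bound on $w_2(d,t+1)$ used in the proof of Lemma~\ref{lem:mos:fail:prob}, so, after also choosing edge labels and drawing the edges uniformly from $[m]^d$, $\mu^{\mathsf{B}}_t=m^{O(1)}(1+\varepsilon)^{-\Omega(t)}$ in the regime $m\ge(1+\varepsilon)(d-1)n$. Lemma~\ref{lem:fail_prob} then gives $\Pr(B^{\mathsf{A}}_S)\le\ell^{-c}\sum_{t\ge2}t^{2c}\mu^{\mathsf{B}}_t=O(n/\ell^{c})=O(1/n)$ since $c\ge2/\delta$ and $\ell=n^{\delta}$, while the same estimate yields $\E^\ast(N^{\mathsf{A}}_S)=\mu^{\mathsf{B}}_{\lceil\beta\log n\rceil}=m^{O(1)}(1+\varepsilon)^{-\Omega(\beta\log n)}=O(1/n)$ once $\beta$ is chosen large enough; this last step is just the standard subcritical estimate showing that the longest strict path of a sparse random $d$-uniform hypergraph is $O(\log n)$ with high probability. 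Plugging both bounds into Lemma~\ref{lem:good:bad} gives $\Pr(N^{\mathsf{A}}_S>0)=O(1/n)$, and a union bound over (i) and this event, together with the reduction of the first paragraph, proves the theorem.

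The step I expect to be the main obstacle is the deterministic reduction: one must extract, from a label of value $k$ in Khosla's algorithm, an honest length-$k$ hyperpath of \emph{distinct} keys in $G$. Khosla phrases the label as a lower bound on an eviction sequence, and an eviction sequence is a priori only a walk in the allocation graph, so one must argue, using the absence of complex components (event~(i)), that such a walk lives inside a hypertree or unicyclic component and can therefore be pruned to a strict path of comparable length; one also has to verify the bookkeeping that the $\text{MaxLabel}$ cutoff is never triggered on the good event. Once this bridging lemma is in place, the probabilistic part is a routine application of the framework that reuses the strict-path counting already carried out for generalized cuckoo hashing in the preceding subsection.
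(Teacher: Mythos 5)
Your probabilistic core coincides with the paper's: the paper also bounds the label by reducing to the non-existence of a long path structure, applies Lemma~\ref{lem:good:bad} and Lemma~\ref{lem:fail_prob} with the set of all (shorter) paths as the peelable relaxation, gets $\E^\ast = O(m d/(1+\varepsilon)^L)$ and a failure term $O(n/\ell^c)$, and sets $L=\Theta(\log n)$. Where you diverge is the deterministic reduction, and there your route is both more complicated than necessary and not quite in the right category of objects. The paper invokes Khosla's Proposition~1 directly: every label $l(j,i)$ is at most the distance $d_{G_p}(v,F_{G_p})$ from the corresponding cell to the set of free cells in the allocation graph. A shortest path realizing that distance is automatically a \emph{simple} path on distinct cells, hence the occupying keys are distinct, and one obtains a subgraph of $G(S,\vec h)$ in the property $\SP$ --- a path whose edges are the hyperedges \emph{reduced to the two relevant vertices each}. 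This is exactly why the paper sets up the generalized subgraph notion and generalized peelability/reducibility for hypergraphs: the witness is a path of $2$-element reductions of hyperedges, not a strict path of full $d$-edges. Your formulation via strict paths of full hyperedges does not capture all witnesses (non-consecutive full hyperedges on the path may intersect outside the two cells the path actually uses), which is precisely the difficulty you flag as the "main obstacle" and then propose to patch by conditioning on the absence of complex components and pruning. That patch can be made to work in a hypertree or unicyclic component (at most one extra adjacency can occur), but it is unnecessary: with Khosla's Proposition~1 and the reduced-edge property $\SP$, the label bound needs neither the no-complex-component event nor any walk-to-path pruning. The paper does use Theorem~\ref{thm:generalized:cuckoo:hashing:static} for part (i), as you do, and also notes that an unsuccessful insertion forces a label exceeding $n$, so (i) in fact follows from (ii); your separate treatment of the $\mathrm{MaxLabel}$ cutoff achieves the same thing. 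Your counting of strict paths via $w_2(d,t+1)$ and the resulting $O(n/\ell^c)+O(1/n)$ bound are consistent with the paper's Lemmas~\ref{lem:khosla:fully:random} and~\ref{lem:khosla:fail:prob}.
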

\begin{theorem}
Let $\varepsilon > 0$, $0 < \delta <1$, $d \geq 3$  be
given. Assume $c\ge  2/\delta$.  
    For $n\ge 1$ consider $m\ge(1+\varepsilon)(d-1)n$ and $\ell=n^\delta$.  
    Let $S \subseteq U$ with $|S|=n$.
    Choose $\vec{h} \in \RR^{c,d}_{\ell,m}$ at random. Insert all keys from $S$
		according to $\vec{h}$ 
    in an arbitrary order, using the algorithm of Eppstein \emph{et al.} 
    Then with probability $1-O(1/n)$ (i) all key insertions are successful and (ii)
    $\max\{l(j,i) \mid i \in \{0, \ldots,m-1\}, j \in \{1, \ldots, d\}\} = \log
    \log n + O(1)$.
    \label{thm:goodrichs:algorithm}
\end{theorem}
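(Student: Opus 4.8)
The plan is to reduce both statements of Theorem~\ref{thm:goodrichs:algorithm} to a single structural property of the random hypergraph $G=G(S,\vec{h})$ and then establish that property with the framework, exactly along the lines used for Theorem~\ref{thm:khoslas:algorithm}. Concretely, I would show that with probability $1-O(1/n)$ the hypergraph $G$ has no complex component \emph{and} every connected component of $G$ has at most $\beta\log n$ edges, where $\beta=\beta(\varepsilon,d)$ is a suitable constant fixed below. Granting this, Lemma~\ref{lem:no:complex:component} gives that $G$ is $1$-orientable, and the analysis of Eppstein \emph{et al.}~\cite{Eppstein14} shows that, for \emph{any} insertion order, if some cell ever receives label $L$ during the run, then the keys witnessing this span a connected subgraph of $G$ with $2^{L-O(1)}$ edges; hence every label stays bounded by $\log(\beta\log n)+O(1)=\log\log n+O(1)$. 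In particular no label ever reaches $\mathrm{MaxLabel}=\Theta(\log\log n)$, so no rehash is triggered, all insertions succeed (statement (i)), and (ii) is immediate. This is the same argument structure as for Theorem~\ref{thm:khoslas:algorithm}; the only difference is that Khosla's Proposition~1 gives a \emph{linear} relation between a label value and the size of the spanning subgraph it forces, which is what turns $\log\log n$ into $\log n$ there.

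For the structural claim I would use a union bound over two events. First, $\Pr(G\text{ has a complex component})=O(1/n)$: this is precisely $\Pr(N^{\MOS}_S>0)=O(1/n)$, already established in the proof of Theorem~\ref{thm:generalized:cuckoo:hashing:static} via Lemma~\ref{lem:MOS:fully:random} and Lemma~\ref{lem:mos:fail:prob}. Second, for $k:=\lceil\beta\log n\rceil$ I would bound the probability that some connected component of $G$ has more than $k$ edges. Let $\mathsf{K}$ be the hypergraph property consisting of all connected hypergraphs in $\GG^d_{m,n}$ with exactly $k$ edges whose bipartite representation has cyclomatic number at most $1$. If $G$ has no complex component but has a component with more than $k$ edges, then one can grow a connected sub‑hypergraph edge by edge inside that component (inheriting cyclomatic number $\le 1$ from the hypertree or unicyclic component it lies in), so $N^{\mathsf{K}}_S>0$; by the hypergraph form of Lemma~\ref{lem:good:bad} it therefore suffices to bound $\E^\ast(N^{\mathsf{K}}_S)$ and $\Pr(B^{\mathsf{K}}_S)$.

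For the fully random term, the standard subcritical first–moment estimate for $d$-uniform hypergraphs (as in Karo{\'n}ski and {\L}uczak~\cite{karonski02}; note that $m\ge(1+\varepsilon)(d-1)n$ keeps the number of edges below the giant‑component threshold, i.e.\ below $dm/(d(d-1))$) gives $\E^\ast(N^{\mathsf{K}}_S)=O\bigl((dm)\cdot k^{O(1)}\cdot q^{-k}\bigr)$ for some constant $q>1$, which is $O(1/n)$ once $\beta$ is chosen large enough in terms of $q$. For the failure term I would apply the hypergraph version of Lemma~\ref{lem:fail_prob}: as peelable superset $\mathsf{B}$ take the hypergraphs whose bipartite representation lies in $\LC$ (peelability uses the edge‑shrinking clause of Definition~\ref{def:peelability:gen}, which is available because for $d\ge3$ every hypertree or unicyclic component has a degree‑$1$ vertex), and let $\mathsf{B}$ reduce to $\mathsf{C}:=\{H:\text{bi}(H)\in\LCY{4cd}\}$ by marking, for each of the at most $2c$ marked hyperedges, all $d$ of its bipartite edges and invoking Lemma~\ref{lem:lcy:reducibility} on $\text{bi}(H)$. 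By Lemma~\ref{lem:num_graphs} there are only $t^{O(1)}$ unlabeled candidates with $t$ edges, and the realization probability contributes a geometrically decaying factor $(n/m)^{t}$, so $\mu^{\mathsf{C}}_t=O\bigl((dm)\,t^{O(1)}q^{-t}\bigr)$ and $\sum_{t}t^{2c}\mu^{\mathsf{C}}_t=O(m)=O(n)$; hence $\Pr(B^{\mathsf{K}}_S)\le\ell^{-c}\cdot O(n)=O(n^{1-\delta c})=O(1/n)$ since $c\ge 2/\delta$. A union bound over the three $O(1/n)$ bounds gives the structural claim, and the theorem follows.

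The main obstacle is the failure‑term step: one must pin down the right peelable and reducible \emph{hypergraph} properties and carry out the hypergraph analogues of Lemma~\ref{lem:fail_prob} and of the leaf\/less‑graph bookkeeping that are only asserted in the excerpt — in particular checking that shrinking an edge keeps a component connected without raising its cyclomatic number, and that the hypergraph counting estimate really behaves like Lemma~\ref{lem:num_graphs}. A secondary point is verifying that the analysis of Eppstein \emph{et al.}\ delivers the additive form $\log\log n+O(1)$ (and not merely $O(\log\log n)$) when fed the component‑size bound $\beta\log n$, and that it is insensitive to the order in which the keys of $S$ are inserted.
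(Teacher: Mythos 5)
There is a genuine gap at the heart of your reduction: the claim that, at density $m\ge(1+\varepsilon)(d-1)n$, a first-moment argument shows that every connected component of $G(S,\vec h)$ has at most $\beta\log n$ edges with probability $1-O(1/n)$. Your estimate $\E^\ast(N^{\mathsf K}_S)=O\bigl(dm\cdot k^{O(1)}\cdot q^{-k}\bigr)$ implicitly assumes that there are only $k^{O(1)}$ unlabeled connected hypergraphs with $k$ edges and cyclomatic number at most $1$. But Lemma~\ref{lem:num_graphs} gives $t^{O(\ell+\gamma+\zeta)}$ where $\ell$ counts \emph{leaf edges}; for hypertrees with unboundedly many leaves the number of shapes is exponential in $k$ (Cayley/Otter), and a direct count shows the expected number of connected sub-hypertrees with $k$ edges is, up to factors polynomial in $k$, of order $m\cdot\bigl(edn/m\bigr)^k$, whose base $ed/((1+\varepsilon)(d-1))>e/(1+\varepsilon)>1$ for every $d\ge3$ and small $\varepsilon$. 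This is exactly the obstruction the paper records after Corollary~\ref{cor:connected:components}: the logarithmic component-size bound goes through the first-moment machinery only at much lower densities (there, $m\ge6n$ for $d=2$), and it is precisely why the paper's proof of Theorem~\ref{thm:goodrichs:algorithm} does not route through component sizes at all. Your reduction of the label bound to ``labels force large connected subgraphs'' is fine in itself, but the structural fact it needs cannot be supplied by the proposed method.

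The paper instead works with the witness trees of Eppstein \emph{et al.}\ directly: a cell of wear $k$ forces an embedded witness tree for wear $k$, and---conditioned on $G(S,\vec h)$ having no complex component, which costs only the $\Pr(N^{\MOS}_S>0)=O(1/n)$ term you also use---a \emph{proper} witness tree for wear $k-1$. Proper witness trees are regular $(d-1)$-ary trees, so there is essentially one shape per depth, and the fully random expectation satisfies $\E^\ast(N^{\PWT{k-1}}_S)\le nd/((1+\varepsilon)(d-1))^{(d-1)^{k-2}}=O(1/n)$ already at $k=\log\log n+\Theta(1)$ (Lemma~\ref{lem:uwt:fully:random}); the failure term is handled by peeling witness trees within the relaxed class $\RWT{k-1}$ (Lemma~\ref{lem:rwt:failure:term}). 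Your secondary concerns (the additive $\log\log n+O(1)$ form, insensitivity to insertion order, the hypergraph analogues of peelability and reducibility) are all benign; the component-size step is the one that cannot be repaired within the first-moment framework at this density, so the argument has to be restructured around proper witness trees as in the paper.
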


For the analysis of both algorithms we assume that the insertion of an element 
fails if 
there exists a label of size $n + 1$. (In this case, new hash functions
are chosen and the data structure is built anew.) Hence,
to prove Theorem~\ref{thm:khoslas:algorithm} and Theorem~\ref{thm:goodrichs:algorithm} 
it suffices to show that statement (ii) holds. (An unsuccessful insertion yields
a label with value $>  n$.)

\paragraph{Analysis of Khosla's Algorithm} We first analyze the algorithm of Khosla. We remark that in our setting, Khosla's algorithm finds an assignment with high probability. (In \cite[Section 2.1]{Khosla13} Khosla gives an easy argument why her
algorithm always finds an assignment when this is possible. In the previous
section, we showed that such an assignment exists with probability $1-O(1/n)$.) 
It remains to prove that the maximum label has size $O(\log n)$.
We first introduce the notation used by 
Khosla in \cite{Khosla13}. Recall the definition of the cuckoo allocation graph 
from the beginning of Section~\ref{hashing:sec:generalized:cuckoo:hashing}. Let $G$ be a cuckoo allocation graph. Let
$F_G \subseteq V$ consist of all free vertices in $G$. Let $d_G(u,v)$ be the 
distance between $u$ and $v$ in $G$. Define
\begin{align*}
    d_G(u,F) := \min\left(\left\{d_G(u,v) \mid v \in F\right\} \cup \{\infty\}\right).
\end{align*}
Now assume that the key set $S$ is inserted in an arbitrary order. Khosla defines a 
\emph{move} as every action that writes an element into a table cell. 
(So, the $i$-th insertion is decomposed into $k_i \geq 1$ moves.)  The allocation graph at the end of the $p$-th move is denoted by $G_p = (V, E_p)$. Let $M$ denote the number of moves
necessary to insert $S$. (Recall that we assume that $\vec{h}$ is 
suitable for $S$.)
Khosla shows the following connection between labels and distances to a free
vertex.
\begin{proposition}[{\cite[Proposition 1]{Khosla13}}]
    For each $p \in \{0, 1, \ldots, M\}$ and each $v \in V$ it holds that 
        $d_{G_p}(v, F_{G_p}) \geq l(j, i)$,
    where  $T_j[i]$ is the table cell that corresponds to vertex $v$.
\end{proposition}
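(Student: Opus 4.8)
The plan is to reduce the proposition to two simple invariants on the pair (allocation graph $G_p$, labeling $l$) that I claim hold after \emph{every} move, and then to obtain the stated inequality by telescoping labels along a shortest path to a free vertex. For a vertex $v$ of $G_p$ corresponding to the cell $T_j[i]$, abbreviate $\ell_p(v):=l(j,i)$, its label after the $p$-th move. The invariants are: \textbf{(I1)} every free vertex $v\in F_{G_p}$ has $\ell_p(v)=0$; and \textbf{(I2)} every directed edge $(u,w)\in E_p$ satisfies $\ell_p(u)\le \ell_p(w)+1$.

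First I would establish (I1) and (I2) by induction on $p$. The base case $p=0$ is immediate: $E_0=\emptyset$, every cell is free, and every label equals $0$. For the step, recall that a move writes some floating key $x$ into the cell $v^\ast = T_{j^\ast}[i^\ast]$ with $j^\ast=\argmin_{j'}\ell_p(j',h_{j'}(x))$; in particular $\ell_p(v^\ast)\le \ell_p(j',h_{j'}(x))$ for all $j'$. If $v^\ast$ was \emph{free} in $G_p$, the algorithm does not touch its label, so $\ell_{p+1}(v^\ast)=\ell_p(v^\ast)=0$ by (I1); the only change to the graph is that $v^\ast$ gains out-edges to $x$'s $d-1$ other cells, and on each such edge (I2) reads $0\le \ell(\cdot)+1$. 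No other label or edge changes, and no occupied cell becomes free, so (I1) and (I2) persist. If $v^\ast$ was \emph{occupied} by a key $y$, then $v^\ast$ loses its old out-edges (to $y$'s other cells), gains out-edges to $x$'s other cells $\{(j',h_{j'}(x)):j'\ne j^\ast\}$, and its label becomes $\ell_{p+1}(v^\ast)=\min_{j'\ne j^\ast}\ell_p(j',h_{j'}(x))+1$. Since $\ell_p(v^\ast)\le \ell_p(j',h_{j'}(x))$ for every $j'$, we get $\ell_{p+1}(v^\ast)\ge \ell_p(v^\ast)+1$, i.e.\ the label of $v^\ast$ only \emph{increases}; hence (I2) still holds on the (unchanged) edges entering $v^\ast$. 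On each new out-edge $(v^\ast,u)$ with $u=(j',h_{j'}(x))\ne v^\ast$ we have $\ell_{p+1}(v^\ast)\le \ell_p(u)+1=\ell_{p+1}(u)+1$ (the label of $u$ was not updated), so (I2) holds there too; all remaining edges and labels are untouched, and the set of free vertices is unchanged, so (I1) persists as well.

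Given (I1) and (I2), the proposition is immediate. Fix $p$ and a vertex $v$. If no free vertex is reachable from $v$ in $G_p$ then $d_{G_p}(v,F_{G_p})=\infty$ and there is nothing to prove; otherwise let $v=v_0\to v_1\to\cdots\to v_k$ be a shortest directed path from $v$ to a free vertex $v_k\in F_{G_p}$, so that $k=d_{G_p}(v,F_{G_p})$. Applying (I2) along the $k$ edges of the path and then (I1) at its endpoint yields $\ell_p(v)=\ell_p(v_0)\le \ell_p(v_1)+1\le\cdots\le \ell_p(v_k)+k=0+k=d_{G_p}(v,F_{G_p})$, which is exactly the claim $d_{G_p}(v,F_{G_p})\ge l(j,i)$.

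The delicate point — and the step I expect to be the real obstacle — is the inductive step when the target cell $v^\ast$ is occupied: one must notice that the new out-edges of $v^\ast$ point exactly to the cells whose labels define the updated value $\ell_{p+1}(v^\ast)$, so that the ``increase by at most $1$'' condition (I2) is restored precisely on those edges, while the fact that $v^\ast$'s label can only go \emph{up} is what keeps (I2) intact on the edges pointing into $v^\ast$. Invariant (I1) is where the no-deletions assumption is used: with deletions a cell could become free while still carrying a large label, which would break both (I1) and the statement of the proposition.
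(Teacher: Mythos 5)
The paper does not prove this statement at all---it is imported verbatim as \cite[Proposition~1]{Khosla13}---so there is no in-paper proof to compare against. Your argument is correct and is, in substance, Khosla's original one: the induction on moves establishing that labels never decrease, that free cells keep label $0$, and that every out-edge increases the label by at most one, followed by telescoping along a shortest directed path, is exactly what is needed, and you correctly isolate the two delicate points (the new out-edges of the overwritten cell point precisely to the cells over which the minimum defining the new label is taken, and the no-deletion assumption is what keeps free cells at label $0$). The only thing worth making explicit is that $d_{G_p}(v,F_{G_p})$ must be read as \emph{directed} distance in the allocation graph, since invariant (I2) is one-sided; with that reading your proof is complete.
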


Now fix an integer $L \geq 1$.
Assume that there exists an integer $p$ with $0 \leq p \leq M$ and a vertex $v$ such that 
$d(v,F_{G_{p}}) = L$. Let $(v = v_0, v_1, \ldots, v_{L-1}, v_L)$ be a simple
path $p$ of length $L$ in $G_p$ such that $v_L$ is free. Let $x_0, \ldots, x_{L - 1} \subseteq S$ be 
the keys which occupy $v_0, \ldots, v_{L-1}$. Then the hypergraph $G(S, \vec{h})$ contains a subgraph $H$ that corresponds to $p$ in the obvious way.

\begin{definition}
    For given integers $L \geq 1, m \geq 1, n \geq 1, d \geq 3$, let $\SP$ (``simple path'') consist of all 
    hypergraphs $H = (V, \{e_1,\ldots,e_L\})$ in $\GG^d_{m,n}$ with the following
    properties:
    \begin{enumerate}
        \item For all $i \in \{1, \ldots, L\}$ we have that $|e_i| = 2$. (So, $H$ is a graph.)
        \item For all $i \in \{1, \ldots, L - 1\}$, $|e_i \cap e_{i + 1}| = 1$.
        \item For all $i \in \{1, \ldots, L - 2\}$ and $j \in \{i + 2, \ldots, L\}$,  $|e_i \cap e_j| = 0$.
    \end{enumerate}
\end{definition}
Our goal in the following is to show that there exists a constant $c$ such that
for all $L \geq c \log n$ we have $\Pr\left(N^{\SP}_S > 0\right) = O(1/n)$. From Lemma~\ref{lem:good:bad} we obtain the bound
\begin{align}
    \Pr\left(N^{\SP}_S > 0\right) \leq \E\phantom{}^\ast\left(N^{\SP}_S\right) + \Pr\left(B^{\SP}_S\right).
    \label{eq:khosla:1}
\end{align}

\paragraph{Bounding $\E^\ast\left(N^{\SP}_S\right)$}
We show the following lemma.
\begin{lemma}
    Let $S \subseteq U$ with $|S| = n, d \geq 3,$ and $\varepsilon >0$ be given. 
    Consider $m \geq (d - 1) (1 + \varepsilon) n$. Then
    \begin{align*}
        \E^\ast\left(N^{\SP}_S\right) \leq \frac{md}{(1+\varepsilon)^L}.
    \end{align*} 
    \label{lem:khosla:fully:random}
\end{lemma}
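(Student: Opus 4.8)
The plan is to bound $\E^\ast\bigl(N^{\SP}_S\bigr) = \sum_{H \in \SP} \Pr^\ast(I_H = 1)$ by the first moment method, which here amounts to a direct counting argument in the spirit of the proof of Lemma~\ref{lem:MOS:fully:random}: count the labeled simple paths with $L$ edges, and multiply by the probability that fully random hash values realize a fixed one of them.

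First I would count the ``unlabeled'' members of $\SP$, that is, $d$-partite graphs on the vertex set consisting of $d$ copies of $[m]$ whose shape is a simple path with $L$ size-$2$ edges $e_1,\dots,e_L$, but whose edges do not yet carry labels from $\{1,\dots,n\}$. Such a graph is a path on vertices $v_0,v_1,\dots,v_L$ with $e_i=\{v_{i-1},v_i\}$; each $v_i$ lies in one of the $d$ partition classes, and $v_{i-1}$ and $v_i$ must lie in distinct classes since the graph is $d$-partite. Building the path vertex by vertex, there are $d$ choices for the class of $v_0$ and $m$ choices for its value, and for each subsequent $v_i$ there are at most $d-1$ choices for its class (different from that of $v_{i-1}$) and $m$ choices for its value. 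Hence there are at most $dm\cdot\bigl((d-1)m\bigr)^{L}$ unlabeled members of $\SP$.

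Next I would attach edge labels: there are at most $\binom{n}{L}L!\le n^{L}$ ways to label the $L$ edges with distinct keys from $\{1,\dots,n\}$. Fix a fully labeled path $H$, with edge $e_i$ carrying key $x_i$; writing $e_i=\{v_{i-1},v_i\}$ with $v_{i-1}$ in class $a_i$ carrying value $w_{i-1}$ and $v_i$ in class $b_i\neq a_i$ carrying value $w_i$, the event $I_H=1$ requires $h_{a_i}(x_i)=w_{i-1}$ and $h_{b_i}(x_i)=w_i$ for every $i$. Under the fully random model each such event has probability $1/m^{2}$ (only two of the $d$ coordinates of the hyperedge of $x_i$ are pinned down, since a subgraph of $G(S,\vec h)$ is obtained by deleting $d-2$ vertices from each hyperedge), and the events for $i=1,\dots,L$ are independent because $x_1,\dots,x_L$ are distinct keys, so $\Pr^\ast(I_H=1)=1/m^{2L}$. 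Combining everything and using $m\ge(1+\varepsilon)(d-1)n$, which gives $(d-1)n/m\le 1/(1+\varepsilon)$,
$$
\E^\ast\bigl(N^{\SP}_S\bigr)\le n^{L}\cdot dm\cdot\bigl((d-1)m\bigr)^{L}\cdot\frac{1}{m^{2L}}
= dm\left(\frac{(d-1)n}{m}\right)^{L}\le\frac{dm}{(1+\varepsilon)^{L}},
$$
which is the claimed bound.

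The only place that requires a little care is the $d$-partite bookkeeping — tracking which partition class each path vertex belongs to and exploiting that consecutive vertices lie in different classes — together with the observation that each hyperedge contributes only two constrained coordinates after deletion; once this is set up, the estimate is routine and of exactly the same flavour as the ones already carried out in this section.
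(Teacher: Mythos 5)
Your proposal is correct and follows essentially the same route as the paper's proof: the same count of $d(d-1)^L m^{L+1}$ vertex-labeled paths, $n^L$ edge labelings, realization probability $1/m^{2L}$ (two pinned coordinates per hyperedge), and the same final estimate via $m\ge(1+\varepsilon)(d-1)n$. No gaps.
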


\emph{Proof}.
    We count fully labeled hypergraphs with property $\SP$. Let $P$
    be an unlabeled simple path of length $L$. There are
    $d \cdot (d-1)^L$ ways to label the vertices on $P$ with $\{1, \ldots, d\}$
    to fix the class of the partition they belong to. Then there are not more 
    than $m^{L + 1}$ ways to label the vertices with labels from $[m]$. There are fewer than $n^{L}$ ways to label the edges with labels from $\{1, \ldots, n\}$. Fix such a fully labeled path $P'$. Now draw
    $2L$ hash values from $[m]$ according to
    the labels of $P'$. The probability that these random choices realize $P'$ is $1/m^{2L}$. We calculate:
    \begin{align*}
        \E^\ast\left(N^{\SP}_S\right) 
            &\leq \frac{d \cdot \left(d - 1\right)^L \cdot m^{L + 1} \cdot n^L}
            {m^{2L}}
            = \frac{m \cdot d \cdot \left(d - 1\right)^L}
            {\left(\left(d - 1\right)\left(1 + \varepsilon\right)\right)^L}
        = \frac{md}{\left(1 + \varepsilon\right)^L}.\qquad \endproof
    \end{align*} 

\paragraph{Bounding $\Pr\left(B^{\SP}_S\right)$}
Note that $\SP$ is not peelable. We relax $\SP$ in the obvious way and define $\RSP = \bigcup_{0 \leq i \leq L} \SP$.
Graph property $\RSP$ is peelable. 
\begin{lemma}
    Let $S \subseteq U$ with $|S| = n$ and $d \geq 3$ be given. 
    For an $\varepsilon > 0$, set $m \geq (1+\varepsilon)(d-1)n$.
    Let $\ell, c \geq 1$. Choose $\vec{h} \in \RR^{c, d}_{\ell, m}$ at random.
    Then 
    \begin{align*}
        \Pr\left(B^{\SP}_S\right) = O\left(\frac{n}{\ell^c}\right).
    \end{align*}
    \label{lem:khosla:fail:prob}
\end{lemma}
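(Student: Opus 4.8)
The plan is to mirror the structure of the cuckoo-hashing example and of Lemma~\ref{lem:failure_term_bound}, applying the hypergraph analogue of Lemma~\ref{lem:fail_prob} with $\mathsf{A}=\SP$ and $\mathsf{B}=\mathsf{C}=\RSP$. We already know $\SP\subseteq\RSP$, that $\RSP$ is peelable, and $\RSP$ trivially reduces to itself (take $E'=E$ in Definition~\ref{def:reducible:gen}, just as in Remark~\ref{remark:peelable}). Moreover, every graph in $\SP$ and $\RSP$ is an ordinary graph (all edges have size $2$), so although we invoke the hypergraph version of the framework formally, no genuine hypergraph effects arise. This yields
\[
\Pr\left(B^{\SP}_S\right)\le\Pr\left(B^{\RSP}_S\right)\le\ell^{-c}\sum_{t=2}^{n}t^{2c}\cdot\mu^{\RSP}_t .
\]

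It then remains to bound $\mu^{\RSP}_t$, the expected number of length-$t$ simple paths occurring as subgraphs of $G(S,\vec{h})$ in the fully random case. This is exactly the count already carried out in the proof of Lemma~\ref{lem:khosla:fully:random}: there are at most $d(d-1)^t$ ways to assign the $t+1$ vertices of the path to the $d$ sides of the partition, at most $m^{t+1}$ ways to label them with elements of $[m]$, at most $n^t$ ways to label the $t$ edges with keys, and the probability that $t$ fully random edges realize a fixed such labeled path is $m^{-2t}$. Hence
\[
\mu^{\RSP}_t\le\frac{d(d-1)^t\,m^{t+1}\,n^t}{m^{2t}}=dm\,(d-1)^t\left(\frac{n}{m}\right)^{t}\le\frac{dm}{(1+\varepsilon)^t},
\]
using $m\ge(1+\varepsilon)(d-1)n$, i.e.\ $n/m\le 1/((1+\varepsilon)(d-1))$.

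Plugging this in gives
\[
\Pr\left(B^{\SP}_S\right)\le\frac{dm}{\ell^c}\sum_{t=2}^{n}\frac{t^{2c}}{(1+\varepsilon)^t}=O\!\left(\frac{m}{\ell^c}\right)=O\!\left(\frac{n}{\ell^c}\right),
\]
since $d$ is a fixed parameter, $m=O(n)$, and $\sum_{t\ge 2}t^{2c}/(1+\varepsilon)^t$ converges for every fixed $c$ and $\varepsilon>0$. There is no real obstacle here; the only points requiring care are (i) citing the hypergraph version of Lemma~\ref{lem:fail_prob} with $\mathsf{B}=\mathsf{C}$, and (ii) observing that $\SP$ and $\RSP$ are graph (not genuinely hypergraph) properties, so the counting argument from Lemma~\ref{lem:khosla:fully:random} transfers verbatim and the same convergent geometric-type series that appeared in the earlier proofs closes the argument.
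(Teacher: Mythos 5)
Your proposal is correct and follows essentially the same route as the paper's proof: both pass from $\SP$ to the peelable relaxation $\RSP$ (which reduces to itself), invoke Lemma~\ref{lem:fail_prob}, and reuse the path-counting bound $\mu^{\RSP}_t \le md/(1+\varepsilon)^t$ from Lemma~\ref{lem:khosla:fully:random} before summing the convergent series. No gaps.
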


\emph{Proof}.
    Since $\SP \subseteq \RSP$ and $\RSP$ is peelable, we may apply Lemma~\ref{lem:fail_prob} and obtain the bound
    \begin{align*}
        \Pr\left(B^{\SP}_S\right) &\leq \frac{1}{\ell^c} \cdot \sum_{t = 1}^n t^{2c} \cdot \mu^{\RSP}_t.
    \end{align*}
    By the definition of $\RSP$ and using the same counting argument as in
     the proof of Lemma~\ref{lem:khosla:fully:random}, we calculate:
     \begin{align*}
     \Pr\left(B^{\SP}_S\right) \leq \frac{1}{\ell^c} \cdot \sum_{t = 1}^n t^{2c} \cdot \frac{md}{(1+\varepsilon)^t} = O\left(\frac{n}{\ell^c}\right).\qquad \endproof
     \end{align*}

\paragraph{Putting Everything Together} Plugging the results of Lemma~\ref{lem:khosla:fully:random} and Lemma~\ref{lem:khosla:fail:prob} into
\eqref{eq:khosla:1} shows that
\begin{align*}
    \Pr\left(N^{\SP}_S > 0\right) \leq \frac{md}{(1+\varepsilon)^L} + O\left(\frac{n}{\ell^c}\right).
\end{align*}
Setting $L = 2\log_{1+\varepsilon}(n), \ell = n^\delta$, and $c \geq 2/\delta$ finishes the proof of Theorem~\ref{thm:khoslas:algorithm}.

\paragraph{Analysis of the Algorithm of Eppstein et al}

We now analyze the algorithm of Eppstein \emph{et al.} \cite{Eppstein14}.
We use the \emph{witness tree technique} to prove Theorem~\ref{thm:goodrichs:algorithm}.
This proof technique was introduced by Meyer auf der Heide, Scheideler, and Stemann
\cite{MadHSS96} in the context of shared memory simulations, and is one of the
main techniques to analyze load balancing processes (see, e.g., 
\cite{ColeFMMRSU98,ColeMHMRSSV98,stemann96,SchickingerS00,voecking}), which
will be the topic of the next section.

Central to our analysis is the notion of a  \emph{witness tree for wear $k$}, for an integer
$k \geq 1$.  (Recall that in the algorithm of Eppstein \emph{et al.}, the label $l(j, i)$ 
denotes the number of times the algorithm has put a key into the cell $T_j[i]$. This is also called the
\emph{wear} of the table cell.) For given values $n$ and $m$, a witness tree for wear $k$ is
a $(d-1)$-ary tree with $k + 1$ levels in which each non-leaf node is labeled
with a tuple $(j, i, \kappa)$, for $1 \leq j \leq d$, $0 \leq i \leq  m  - 1$, and
$1 \leq \kappa \leq n$, and each leaf is labeled with a tuple $(j, i)$, $1
\leq j \leq d$ and $0 \leq i \leq  m  - 1$.  Two children of a non-leaf node $v$ 
must have different first components
($j$-values) and, if they exist, 
third components ($\kappa$-values).
In addition, the $\kappa$-values of a node and its children must differ.

We call a witness tree
\emph{proper} if no two different non-leaf nodes have the same labeling. Further, we say
that a witness tree $T$ can be \emph{embedded into} $G(S, \vec{h})$ if for each
non-leaf node $v$ with label $(j_0, i_0, \kappa)$ with children labeled $(j_1, i_1),
\ldots, (j_{d-1}, i_{d-1})$ in the first two label components in $T$,
$h_{j_k}(x_\kappa) = i_k$, for each $0 \leq k \leq d - 1$. 
We can think of a proper witness tree 
as an edge-labeled hypergraph from $\GG^d_{m,n}$ by building from each non-leaf node labeled ($j_0, i_0, \kappa$) 
together with its $d-1$ children with label components $(j_1, i_1),\ldots,(j_{d-1}, i_{d-1})$ 
a hyperedge $(i'_0, \ldots, i'_{d-1})$ labeled ``$\kappa$'', where $i'_0, \ldots, i'_{d-1}$ are ordered according
to the $j$-values.

Suppose that there exists a label $l(j, i)$ with content $k$ for an integer $k
> 0$. We now argue about what must have happened that $l(j, i)$ has such
a label. In parallel, we construct the witness tree for wear $k$. Let $T$ be an
unlabeled $(d - 1)$-ary tree with $k + 1$ levels.  Let $y$ be
the key residing in $T_j[i]$. Label the 
root of $T$ with $(j, i, \kappa)$, where $y = x_\kappa \in S$. 
Then for all other choices of $y$ in tables $T_{j'}, j' \in \{1, \ldots, d\},  j' \neq j,$
we have $l(j',h_{j'}(y)) \geq k - 1$. (When $y$ was written into $T_j[i]$, $l(j,i)$
was $k - 1$ and this was minimal among all choices of key $y$. Labels are never
decreased.)
Let $x_1,\ldots,x_{d-1}$ be the keys in these $d - 1$ other choices of $y$. 
Label the children of the root of $T$ with the $d - 1$ tuples
$(j', h_{j'}(y)), 1 \leq j' \leq d, j' \neq j,$ and the respective key indices. 
Arguing in the same way as above, we see that for each key $x_i, i \in \{1,
\ldots, d - 1\}$, its $d - 1$ other table choices must 
have had a label of at least $k - 2$.  Label the children of the node corresponding to key $x_i$ 
on the second level of $T$ with the $d - 1$ other choices, for each $i \in \{1, \ldots, d - 1\}$. 
(Note that already the third level may include nodes with the same label.) Proceeding with 
this construction on the levels $3, \ldots, k$ gives the witness tree $T$ for wear $k$. By
construction, this witness tree can be embedded into $G(S, \vec{h})$.  

So, all we have to do to prove Theorem~\ref{thm:goodrichs:algorithm} is to obtain a (good enough) bound on the probability that a witness tree for wear $k$ can be embedded into 
$G(S, \vec{h})$. If a witness tree is not proper, it seems difficult to calculate the probability that 
this tree can be embedded into $G(S, \vec{h})$, because different parts of the witness tree
correspond to the same key in $S$, which yields dependencies among hash values.  
However, we know from the last section
that when $G(S, \vec{h})$ is sparse enough, i.e., $m \geq  (1+ \varepsilon) (d
- 1) n$, it contains only hypertrees and unicyclic components with probability
$1 -  O(1/n)$. Using a basic pruning argument, Eppstein \emph{et al.} show that this
simplifies the situation in the following way.
\begin{lemma}[{\cite[Observation 2 and Observation 3]{Eppstein14}}]
    Let $H$ be a  hypergraph that consists only of hypertrees and unicyclic components.
    Suppose $H$ contains an embedded witness tree for wear $k$. Then there exists 
    a proper witness tree for wear $k-1$ that can be embedded into $H$.
\end{lemma}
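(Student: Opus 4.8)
The plan is to carry out the pruning argument of Eppstein et al.\ (their Observations~2 and~3), phrased in the language of embeddings of witness trees into $G(S,\vec{h})$. First I would dispose of two trivial reductions. If $k=1$ the witness tree for wear $k$ has only one non-leaf node (its root), hence is automatically proper, and deleting its leaf level gives a proper witness tree for wear $0$. Likewise, if the given embedded witness tree $T$ for wear $k$ happens to be proper, truncating the bottom (leaf) level already yields a proper witness tree for wear $k-1$ that embeds into $H$: the truncated tree is still $(d-1)$-ary with $k$ levels, uses a subset of the non-leaf nodes of $T$, and the embedding condition is inherited from the embedding of $T$. So from now on assume $k\ge 2$ and that $T$ is not proper.

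Next I would pin down \emph{why} impropriety is so constrained when every component of $H$ is a hypertree or unicyclic. Since $T$ is connected, its image under the embedding lies in a single connected component $C$ of $H$, and $\gamma(\text{bi}(C))\le 1$ (it is $0$ for a hypertree, $1$ for a unicyclic component, and the cycle space of a subgraph of $\text{bi}(C)$ has dimension at most that of $\text{bi}(C)$). Building the image of $T$ in $\text{bi}(H)$ by processing the non-leaf nodes of $T$ in BFS order and adding the corresponding hyperedges, one sees that the cyclomatic number of the partial image starts at $0$ and is unchanged whenever the newly added hyperedge introduces only genuinely new vertices, and otherwise increases by the number of identifications made. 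Since the final image is a subgraph of $\text{bi}(C)$, the total increase is at most $1$. Hence there is essentially a single ``defect'': a topmost pair $\{u,v\}$ of non-leaf nodes of $T$ carrying the same label $(j,i,\kappa)$, and all further coincidences are the forced copies inside the (determined) subtrees hanging below $u$ and $v$, which add nothing to the cyclomatic number. Checking this localization carefully — in particular that a second, independent coincidence would force $\gamma(\text{bi}(C))\ge 2$ — is the main obstacle in the proof, and is the content of Observation~2 of~\cite{Eppstein14}.

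With the defect localized, the pruning is short. Let $r$ be the root of $T$ and $T_1,\dots,T_{d-1}$ the subtrees rooted at its $d-1$ children; each $T_i$ is by itself a legal witness tree for wear $k-1$ (it has $k$ levels, is $(d-1)$-ary, and inherits the constraints on the $j$- and $\kappa$-values of a node and its children), and it embeds into $H$ via the restriction of the embedding of $T$. Every coincident pair of non-leaf nodes of $T$ — the pair $\{u,v\}$ together with its forced descendant copies — either lies entirely inside one $T_{i_0}$, or straddles two of the $T_i$'s, or involves $r$ itself; a short case check shows that in all of these situations at most one of $T_1,\dots,T_{d-1}$ can contain a coincident pair \emph{of its own}. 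Since $d-1\ge 2$, some $T_i$ contains none, i.e.\ is proper, and that $T_i$ is the desired proper witness tree for wear $k-1$ that embeds into $H$. This is Observation~3 of~\cite{Eppstein14}, and it completes the argument.
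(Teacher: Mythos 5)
The paper does not actually prove this lemma: it is imported verbatim, with a citation, from Eppstein et al.\ (their Observations 2 and 3), so there is no in-paper argument to compare yours against. Judged on its own terms, your reconstruction follows the intended route (remove the root and argue that at least one of the $d-1$ subtrees is a proper witness tree for wear $k-1$), and the easy parts are correct: each subtree rooted at a child of the root is a legal witness tree for wear $k-1$ (all constraints in the definition are local to a node and its children, so they are inherited, as is the embedding condition), and the reductions for $k=1$ and for $T$ already proper are fine.

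The genuine gap is exactly where you place it: the ``localization'' claim that a cyclomatic number at most $1$ of the bipartite representation of the ambient component forces all label coincidences among non-leaf nodes to stem from a single topmost coincident pair $\{u,v\}$ together with ``forced copies'' below $u$ and $v$. You assert this, call it ``the main obstacle'', and then cite Observation 2 of Eppstein et al.\ for it --- which is circular, since that observation is half of the very statement you are meant to prove. Two things are missing. First, the claim is not yet well posed: equal labels on $u$ and $v$ do not determine the subtrees hanging below them (the definition of a witness tree does not force the children of $u$ and of $v$ to carry the same $j$-sets or $\kappa$-values; only the $i$-values of matching $j$'s are pinned down by the embedding), so ``forced copies'' must be replaced by a precise assertion, e.g.\ that every coincident pair consists of descendants of $u$ and of $v$ at equal relative depths. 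Second, one must actually show that any coincidence violating this pattern produces a second independent cycle in the image --- the closed walks obtained from the tree paths between coincident nodes are not obviously independent in the cycle space, and this is where the real work lies. Granting the localization claim, your concluding case analysis (defect inside one $T_{i_0}$, straddling two subtrees, or involving the root) does correctly show that at most one $T_i$ can be improper, and $d-1\ge 2$ finishes the proof; so the argument is structurally sound but rests on one unproved, and nontrivially formalizable, structural claim.
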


Let $W_{S,k}$ 
be the event that there exists a witness tree 
for wear $k$ that can be embedded into $G(S, \vec{h})$.   
To prove Theorem~\ref{thm:goodrichs:algorithm}, we have to
show that for the parameter choices in Theorem~\ref{thm:goodrichs:algorithm}  the probability that $W_{S,k}$ 
occurs is $O(1/n)$.

We separate the cases that $G(S, \vec{h})$ contains a complex component
and that this is not so.
Let
$\PWT{k}$  
be the set of all hypergraphs in $G^d_{m, n}$ that correspond to 
proper witness trees for wear $k$. 
Using Theorem~\ref{thm:generalized:cuckoo:hashing:static}, we may bound:
\begin{align}
    \label{eq:pwt:1}
    \Pr\left(W_{S,k}\right) 
    &\leq \Pr\left(N^{\PWT{k-1}}_S
    > 0\right) + \Pr\left(N^{\MOS}_S > 0\right)\notag\\
                                            &\leq
                                            \Pr\left(B^{\PWT{k-1}}_S\right)
                                            + \E^\ast\left(N^{\PWT{k-1}}_S\right)
                                            +  \Pr\left(N^{\MOS}_S > 0\right).
\end{align}
The last summand on the right-hand side of this inequality is handled by
Theorem~\ref{thm:generalized:cuckoo:hashing:static}, so we may concentrate
on the hypergraph property $\PWT{k-1}$.

\paragraph{Bounding $\E^\ast\left(N^{\PWT{k-1}}_S\right)$}
We start by proving that the expected number of proper witness trees in $G(S, \vec{h})$ is
$O(1/n)$ for the parameter choices in Theorem~\ref{thm:goodrichs:algorithm}. We use a different
proof method than Eppstein \emph{et al.} \cite{Eppstein14}, because we cannot use 
the statement of \cite[Lemma~1]{Eppstein14}. 
We remark here that the following analysis could be extended to obtain
bounds of $O(1/n^{s})$, for $s \geq 1$. However, the last summand of \eqref{eq:pwt:1} is 
$O(1/n)$, so this does not improve the bounds for \eqref{eq:pwt:1}. 
\begin{lemma}
    Let $S \subseteq U$ with $|S| = n$ and $d \geq 3$ be given. 
    For an $\varepsilon > 0$, set $m \geq (1+\varepsilon)(d-1)n$.
    Then there exists a value $k = \log \log n + \Theta(1)$ such that 
        \begin{align*}
            \E^\ast\left(N^{\PWT{k-1}}_S\right) = O\left(\frac{1}{n}\right).
        \end{align*}
    \label{lem:uwt:fully:random}
\end{lemma}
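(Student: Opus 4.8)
The plan is to count, for each relevant value of $k$, the number of edge-labeled hypergraphs in $\PWT{k-1}$ with a given number of edges, and then weight each such hypergraph by the probability $1/m^{d t}$ that $t$ randomly drawn edges from $[m]^d$ realize it. A proper witness tree for wear $k-1$ is a $(d-1)$-ary tree with $k$ levels; if it has $t$ non-leaf nodes, then (since it is proper, so all non-leaf labels are distinct) the corresponding hypergraph has exactly $t$ hyperedges and at most $t(d-1)+1$ vertices. First I would bound the number of \emph{unlabeled} shapes: a $(d-1)$-ary tree with at most $k$ levels has at most $(d-1)^{k-1}$ leaves, hence at most $\sum_{i=0}^{k-2}(d-1)^i < (d-1)^{k-1}$ non-leaf nodes, so $t \le (d-1)^{k-1}$; the number of such tree shapes is bounded by a function of $k$ and $d$ only, something like $2^{O((d-1)^{k-1})}$. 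Then I would count labelings: at most $d^{t}$ ways to assign the $j$-components to the hyperedges' coordinate slots, at most $m^{t(d-1)+1}$ ways to label the vertices with elements of $[m]$, and at most $n^{t}$ ways to label the $t$ edges with keys from $\{1,\dots,n\}$.

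**The key calculation.** Putting this together, the expected number of embeddable proper witness trees with $t$ non-leaf nodes is at most
\begin{align*}
    (\text{\#shapes}) \cdot \frac{d^{t}\cdot m^{t(d-1)+1}\cdot n^{t}}{m^{dt}}
    \le (\text{\#shapes}) \cdot m \cdot \Bigl(\frac{d\, n}{m^{\,1}}\Bigr)^{t}
    = (\text{\#shapes}) \cdot m \cdot \Bigl(\frac{d}{(1+\varepsilon)(d-1)}\Bigr)^{t},
\end{align*}
using $m \ge (1+\varepsilon)(d-1)n$. The crucial point is that for $d \ge 3$ the ratio $d/((1+\varepsilon)(d-1))$ is a constant; it is $< 1$ once $(1+\varepsilon)(d-1) > d$, i.e. for $d\ge 3$ and suitable $\varepsilon$, but even when it is $\ge 1$ the decisive gain comes from $t$ being \emph{forced to be large}. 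A proper witness tree for wear $k-1$ has $k-1$ full levels of branching below the root, so it contains a complete $(d-1)$-ary subtree down to depth $k-2$ at least along one branch — more usefully, the number of non-leaf nodes is at least $1 + (d-1) + \dots + (d-1)^{k-3} = \Theta((d-1)^{k-2})$ (here I would pin down the exact lower bound $t \ge t_0(k,d)$ from the definition of a proper witness tree, being careful that propriety and the branching condition force genuine growth). Hence the whole sum over admissible $t$ is dominated by its smallest term, giving a bound of the form $m \cdot q^{\,t_0(k,d)}$ for a constant $q<1$ (after absorbing the shape count, which is itself only exponential in $(d-1)^{k-1}$, a higher-order effect that needs $q$ strictly bounded away from $1$ — this is where choosing $\varepsilon$-dependent constants matters). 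Setting $t_0(k,d) = \Theta((d-1)^{k})$ and solving $m\cdot q^{\,\Theta((d-1)^k)} = O(1/n)$, i.e. $(d-1)^{k} = \Theta(\log m) = \Theta(\log n)$, yields $k = \log_{d-1}\log n + \Theta(1) = \log\log n + \Theta(1)$, which is exactly the claimed value.

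**The main obstacle.** The delicate part is not the geometric-series bookkeeping but \textbf{reconciling the number of tree shapes with the per-shape probability}: the count of unlabeled $(d-1)$-ary trees with $\le k$ levels is roughly $2^{\Theta((d-1)^{k-1})}$, and $t$ is also $\Theta((d-1)^{k-1})$, so the shape count is $2^{\Theta(t)}$ — comparable in magnitude to the $q^{-t}$ loss one could afford. One must check that $\bigl(2\cdot d/((1+\varepsilon)(d-1))\bigr)^{t}$ — shape count times per-edge ratio — is still $q^{t}$ for some $q<1$, which is \emph{not} automatic and may force the statement's constants (and the precise $\Theta(1)$ in $k = \log\log n + \Theta(1)$) to depend on $\varepsilon$ and $d$. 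If the crude shape bound is too lossy, I would instead count \emph{labeled} witness-tree hypergraphs directly level by level (each non-leaf node on level $i$ spawns $d-1$ children, each child contributing a choice of $j'$-value and, for its key, at most $n$ options but constrained hash values), which ties the combinatorial count to the probability more tightly and avoids double-counting the branching structure. I expect the proof to go through with $q = \bigl(1 + \varepsilon/2\bigr)^{-1}$ or similar after this more careful accounting, and then the conclusion $k = \log\log n + \Theta(1)$ with $\E^\ast(N^{\PWT{k-1}}_S) = O(1/n)$ follows by the same "dominated by the first term" argument as in Lemma~\ref{lem:khosla:fully:random}.
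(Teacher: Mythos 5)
Your overall strategy---a first-moment count of labeled witness trees weighted by the probability that random hash values realize them, ending with $(d-1)^{k-2}=\Theta(\log n)$ and hence $k=\log\log n+\Theta(1)$---is the same as the paper's, and you have correctly located where the work is. But two points need fixing. First, the ``main obstacle'' you identify (a shape count of order $2^{\Theta(t)}$ competing with the geometric gain) does not arise: by the paper's definition a proper witness tree for wear $k-1$ is a \emph{complete} $(d-1)$-ary tree with $k$ levels, so its shape is unique and its number of non-leaf nodes is exactly $e_{k-1}=\sum_{i=0}^{k-2}(d-1)^i$; there is no sum over $t$ and no shape enumeration. (Variable shapes only enter for the relaxed trees $\RWT{k-1}$ treated in the following lemma.)

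Second, and more seriously, your per-edge ratio $dn/m\le d/((1+\varepsilon)(d-1))$ can exceed $1$ for small $\varepsilon$ (e.g.\ $d=3$, $\varepsilon=0.1$ gives about $1.36$), and your fallback---that ``the decisive gain comes from $t$ being forced to be large''---is backwards in that regime: a base at least $1$ raised to a large power only gets worse. The spurious $d^{t}$ comes from over-counting table-index assignments: in a witness tree the $j$-value of each non-root node is forced to be one of the $d-1$ values distinct from its parent's, and siblings are distinguished by their $j$-values, so only the root contributes a factor of $d$. The paper counts level by level ($n\cdot d\cdot m$ labelings of the root, $(nm)^{d-1}$ per batch of children), giving at most $n^{e_{k-1}}\cdot d\cdot m^{v_{k-1}}$ labeled trees, each realized with probability $1/m^{v_{k-1}+e_{k-1}-1}$ since each non-leaf node imposes $d$ hash-value constraints and $d\cdot e_{k-1}=v_{k-1}+e_{k-1}-1$. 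This yields
\begin{align*}
\E^\ast\left(N^{\PWT{k-1}}_S\right)\le \frac{n\cdot d}{\left((1+\varepsilon)(d-1)\right)^{e_{k-1}-1}}\le \frac{n\cdot d}{\left((1+\varepsilon)(d-1)\right)^{(d-1)^{k-2}}},
\end{align*}
whose base $(1+\varepsilon)(d-1)$ exceeds $1$ unconditionally for $d\ge 3$. This is exactly the ``more careful level-by-level accounting'' you defer to at the end; without carrying it out, the argument does not close for all $\varepsilon>0$.
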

\begin{proof}
    We first obtain a bound on the number of proper witness trees for wear $k-1$. 
Let $T$ be an unlabeled $(d-1)$-ary tree with $k$ levels.
The number $v_{k-1}$ of vertices of such a tree is 
$$v_{k-1} = \sum_{i = 0}^{k-1} (d-1)^i = \frac{(d-1)^k - 1}{d - 2}.$$
For the number $e_{k-1}$ of non-leaf nodes of such a tree, we have 
$$e_{k-1} = \sum_{i = 0}^{k - 2} (d-1)^i = \frac{(d-1)^{k - 1} - 1}{d - 2}.$$
There are $n \cdot d \cdot m$ ways to label the root of $T$. There are not more than
$n^{d-1} \cdot m^{d-1}$ ways to label the second level of the tree. Labeling the
remaining levels in the same way, we see that in total there are fewer than
    $n^{e_{k - 1}} \cdot d \cdot m^{v_{k-1}}$
proper witness trees for wear $k - 1$. Fix such a fully labeled witness tree $T$. Now
draw $d \cdot e_{k - 1} = v_{k - 1}  + e_{k - 1} - 1$
values randomly from $[m]$ according to the labeling of the nodes in $T$. The
probability that these values realize $T$ is exactly $1/m^{v_{k - 1} + e_{k - 1}
- 1}$.  We obtain the following bound:
\begin{align*}
    \E^\ast\left(N^{\PWT{k-1}}_S\right) &\leq \frac{
    n^{e_{k - 1}} \cdot d \cdot \left( \left( 1 + \varepsilon \right) (d - 1) n  \right)^{v_{k-1}}}
    {\left( \left( 1 + \varepsilon \right) (d - 1) n \right)^{v_{k-1} + e_{k-1} - 1}}
    = \frac{n \cdot d }{\left( (1 + \varepsilon) (d - 1) \right)^{e_{k-1} - 1}}\\
    &\leq \frac{n \cdot d}{\left( (1 + \varepsilon) (d - 1) \right)^{(d - 1)^{k - 2}}},
\end{align*}
which is $O(1/n)$ for $k = \log \log n + \Theta(1)$. \qquad
\end{proof}

\paragraph{Bounding $\Pr\left(B^{\PWT{k-1}}_S\right)$} 
We first relax the notion of a witness tree in the following way.
\begin{definition}
    Let $\RWT{k-1}$ (relaxed witness trees) be the set of all hypergraphs
    which can be obtained in the following way:
    \begin{enumerate}
	\item Let $T \in \PWT{k'}$ be an arbitrary proper witness tree for wear $k'$, with $k'
	    \leq k - 1$. Let $\ell$ denote the number of nodes on level $k' - 1$, i.e., the level
            prior to the leaf level of $T$.
        \item Arbitrarily choose $\ell' \in \mathbb{N}$ with  $\ell' \leq \ell - 1$.
	\item Choose $\kappa = \lfloor \ell' / ( d - 1)\rfloor $ arbitrary distinct non-leaf nodes on level $k' - 2$. 
            For each such node, remove all its children together with their $d - 1$ children from $T$.
            Then remove from a group of $d - 1$ siblings on level $k' - 1$ the
            $\ell' - (d - 1) \cdot \kappa$ siblings with the largest $j$-values together with their leaves.
    \end{enumerate}
\end{definition}
Note that $\RWT{k-1}$ is a peelable hypergraph property, for we can iteratively remove non-leaf nodes that 
correspond to edges in the hypergraph until 
the whole leaf level is removed. Removing these nodes as described in the third property makes sure that 
there exists at most one non-leaf node at level $k' - 2$ that has fewer than $d - 1$ children. 
Also, it is clear what the first
components in the labeling of the children of this node are. Removing nodes in a more arbitrary fashion would give 
more labeling choices and thus more trees with property $\RWT{k-1}$.
\begin{lemma}
    Let $S \subseteq U$ with $|S| = n$ and $d \geq 3$ be given. 
    For an $\varepsilon > 0$, set $m \geq (1+\varepsilon)(d-1)n$.
    Let $\ell, c \geq 1$. Choose $\vec{h} \in \RR^{c, d}_{\ell, m}$ at random.
    Then 
    \begin{align*}
        \Pr\left(B^{\RWT{k-1}}_S\right) = O\left(\frac{n}{\ell^c}\right).
    \end{align*}
    \label{lem:rwt:failure:term}
\end{lemma}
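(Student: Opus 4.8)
The plan is to apply Lemma~\ref{lem:fail_prob} with $\mathsf{A} = \mathsf{B} = \mathsf{C} = \RWT{k-1}$ (using Remark~\ref{remark:peelable}, since $\RWT{k-1}$ has already been noted to be peelable and every graph property reduces to itself). This gives
\begin{align*}
    \Pr\left(B^{\RWT{k-1}}_S\right) \leq \frac{1}{\ell^c} \cdot \sum_{t = 1}^{n} t^{2c} \cdot \mu^{\RWT{k-1}}_t,
\end{align*}
so the whole task reduces to a first-moment count of hypergraphs in $\RWT{k-1}$ with exactly $t$ edges, weighted by the probability $1/m^{dt}$ that fully random hash values realize a fixed fully-labeled such hypergraph.

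First I would bound the number of \emph{unlabeled} hypergraphs in $\RWT{k-1}$ with $t$ edges. Since each such hypergraph arises from a proper witness tree (a $(d-1)$-ary tree) by pruning whole sub-branches near the leaf level and then trimming one group of siblings, the underlying shape is still essentially a bounded-degree rooted tree with $t$ internal (edge-carrying) nodes; there are at most $2^{O(t)}$ such shapes, and the pruning at the penultimate level contributes at most a factor linear (or polynomial) in $t$ for the choice of $\ell'$. Crucially — this is the point of the careful third clause in the definition of $\RWT{k-1}$ — the first components ($j$-values) of the trimmed siblings are forced, so the pruning does not blow up the labeling count. Hence there are $t^{O(1)} \cdot 2^{O(t)}$ unlabeled shapes, a vertex count of at most $v \le dt - t + 1 = (d-1)t+1$ (a tree-like hypergraph), at most $d \cdot (d-1)^{?}$-style choices for the partition classes of the vertices bounded by $d^{O(t)}$, at most $m^{(d-1)t+1}$ ways to label vertices from $[m]$, and at most $n^t$ ways to label edges from $\{1,\dots,n\}$. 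Multiplying by $1/m^{dt}$ and using $m \ge (1+\varepsilon)(d-1)n$ yields, after the $n^t / m^{(d-1)t} = (1+\varepsilon)^{-t}(d-1)^{-t}$ cancellation and absorbing the $d^{O(t)}$ and $(d-1)^{-t}$ against a constant, a bound of the shape
\begin{align*}
    \mu^{\RWT{k-1}}_t \leq \frac{m \cdot t^{O(1)}}{(1+\varepsilon)^{\Omega(t)}}
\end{align*}
(or, being safe, with a slightly smaller base $q>1$ in place of $1+\varepsilon$). This mirrors exactly the bounds derived for $\mu^{\mathsf{RMOG}}_t$, $\mu^{\LCY{4c}}_t$, and $\mu^{\RSP}_t$ in the earlier applications.

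Plugging this into the displayed sum gives
\begin{align*}
    \Pr\left(B^{\RWT{k-1}}_S\right) \leq \frac{1}{\ell^c} \sum_{t=1}^n t^{2c} \cdot \frac{m \cdot t^{O(1)}}{q^t} = O\left(\frac{m}{\ell^c}\right) = O\left(\frac{n}{\ell^c}\right),
\end{align*}
since $m = \Theta(n)$ and $\sum_t t^{O(1)}/q^t$ converges for any $q > 1$; the constant $c$ entering $t^{2c}$ just changes the implied constant via the convergent series. The main obstacle — and the only place requiring genuine care rather than routine bookkeeping — is verifying that the relaxation $\RWT{k-1}$ does not admit too many unlabeled shapes and, more importantly, that the pruning step forces enough of the $j$-component labels so that the edge- and vertex-labeling count stays $m^{(d-1)t+1}\cdot n^t \cdot (\text{mild})$; if the siblings could be removed in an arbitrary order one would gain spurious labeling freedom and the geometric decay in $t$ could be lost. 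Once that combinatorial bound is in hand, everything else is the same convergent-geometric-series argument used throughout Section~\ref{hashing:sec:applications:simple:graphs}.
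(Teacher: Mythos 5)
Your proposal follows the paper's proof in structure: apply Lemma~\ref{lem:fail_prob} with $\RWT{k-1}$ serving as its own peelable and reducible property (Remark~\ref{remark:peelable}), bound $\mu^{\RWT{k-1}}_t$ so that it decays geometrically in $t$, and sum the convergent series against the $t^{2c}$ factor. The one step that does not close as written is the counting bound. You estimate the unlabeled shapes by $t^{O(1)}\cdot 2^{O(t)}$ and the partition-class choices by $d^{O(t)}$, and then hope to absorb these factors into the $\bigl((1+\varepsilon)(d-1)\bigr)^{t}$ coming from $n^{t}/m^{t-1}$; but $2^{O(t)}d^{O(t)}$ with unspecified constants in the exponents need not be dominated by $\bigl((1+\varepsilon)(d-1)\bigr)^{t}$, so your "slightly smaller base $q>1$" could in fact drop below $1$ and the series would diverge. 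The paper sidesteps this by reusing the exact level-by-level count from Lemma~\ref{lem:uwt:fully:random}: a relaxed witness tree with $t$ non-leaf nodes (hence $t$ hyperedges) and $v$ vertices has $v+t-1$ hash-value constraints and admits at most $d\cdot n^{t}\cdot m^{v}$ labelings, because the rigid $(d-1)$-ary structure and the canonical pruning (forced $j$-values, as you correctly flag) leave no extra exponential freedom; this gives
\begin{align*}
\mu^{\RWT{k-1}}_t \;\le\; \frac{d\, n^{t}\, m^{v}}{m^{v+t-1}} \;=\; \frac{nd}{\bigl((1+\varepsilon)(d-1)\bigr)^{t-1}},
\end{align*}
with base exactly $(1+\varepsilon)(d-1)>1$ and a constant prefactor. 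With that bound in place, the remainder of your calculation is precisely the paper's.
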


\emph{Proof.}
We apply Lemma~\ref{lem:fail_prob}, which says that 
\begin{align*}
    \Pr\left(B^{\RWT{k-1}}_S\right) \leq \frac{1}{\ell^c} \cdot \sum_{t = 2}^{n} t^{2c} \mu^{\RWT{k-1}}_t.
\end{align*}
Using the same line of argument as in the bound for $\E^\ast\left(N^{\PWT{k-1}}_S\right)$, the expected number of 
witness trees with property $\RWT{k-1}$ with exactly $t$ edges, i.e., exactly $t$ non-leaf nodes, 
is at most $n \cdot d \cdot / \left( (1 + \varepsilon) (d - 1) \right)^{t - 1}$. We calculate:
\begin{align*}
    \Pr\left(B^{\RWT{k-1}}_S\right) &= \frac{1}{\ell^c}\cdot\sum_{t = 1}^{n}
    t^{2c} \frac{n \cdot d}{\left( (1 + \varepsilon) (d - 1) \right)^{t - 1}}
= O\left(\frac{n}{\ell^c}\right). \qquad \endproof
\end{align*}

\paragraph{Putting Everything Together} Using \eqref{eq:pwt:1} and
Lemmas~\ref{lem:uwt:fully:random} and~\ref{lem:rwt:failure:term}, we conclude that 
\begin{align*}
    \Pr\left(W_{S,k}\right)  &\leq
    \Pr\left(B^{\PWT{k-1}}_S\right)
    + \E^\ast\left(N^{\PWT{k-1}}_S\right) +  \Pr\left(N^{\MOS}_S > 0\right)\\
    &= O(1/n) + O(n/\ell^c).
\end{align*}
Theorem~\ref{thm:goodrichs:algorithm} follows for $\ell=n^{\delta}$ and $c \geq 2/\delta$. 


With respect to the algorithm of Eppstein \emph{et al}., our result shows that $n$ insertions take time
$O(n \log \log n)$ with high probability when using hash functions from
$\RR$.  With an analogous argument to the one given by Khosla in \cite{Khosla13}, the algorithm of Eppstein \emph{et al}. of course
finds an assignment of the keys whenever this is possible. However,
the bound of $O(\log \log n)$ 
on the maximum label is only known for $m \geq (1 + \varepsilon)
(d - 1) n$ and $d \geq 3$, even in the fully random case.  Extending the analysis on the 
maximum label size to denser hypergraphs is an interesting open question. 

\subsection{Load Balancing}\label{sec:load_balancing}
In this section we apply hash class $\RR$ in the area
of load balancing schemes.  In the discussion at the end of this section, we will present 
a link of our results w.r.t. load balancing to the space utilization of generalized cuckoo hashing
in which each memory cell can hold $\kappa \geq 1$ items.

In randomized load balancing we want to allocate a set of jobs
$J$ to a set of machines $M$ such that a condition, e.g., there exists no machine
with ``high'' load, is satisfied with high probability. To be consistent with the notation
used in our framework and previous applications, $S$ will denote the set of jobs, and 
the machines will be numbered $1, \ldots, m$. In this section we
assume $|S| = n = m$, i.e., we allocate $n$ jobs to $n$ machines.  

We use the following approach to load balancing: For an integer $d \geq 2$, we
split the $n$ machines into groups of size $n/d$ each. For simplicity, we assume
that $d $ divides $n$. Now a job chooses $d$ candidate machines by choosing
exactly one machine from each group. This can be
modeled by using $d$ hash functions $h_1,\dots,h_d$ with $h_i\colon
S \rightarrow [n/d], 1 \leq i \leq d$, such that machine $h_i(j)$ is the
candidate machine in group $i$ of job $j$. 

In load balancing schemes, the
arrival of jobs has been split into two models: parallel and sequential arrival.
We will focus on parallel job arrivals and come back to the sequential case at the
end of this section.

In the parallel arrival model, all jobs arrive at the same time.
They communicate with
the machines in synchronous rounds. In these rounds, decisions on the allocations of jobs
to machines are made. The $\tau$-collision protocol is one algorithm 
to find such an assignment.
This protocol was studied in the context of distributed memory machines  
by Dietzfelbinger and Meyer auf der Heide \cite{DietzfelbingerMadH93}. 
As a method for load balancing the allocation algorithm was analyzed by
Stemann in \cite{stemann96}. The $\tau$-collision protocol works in the following way:
First, each job chooses one candidate machine from each of the $d\geq 2$
groups. Then the following steps are repeated until all jobs are assigned to
machines:
\begin{enumerate}
\item Synchronously and in parallel, each unassigned job sends an allocation
request to each of its candidate machines.
\item Synchronously and in parallel, each machine sends an acknowledgement to all
requesting jobs if and only if it got at most $\tau$ allocation requests in this
round. Otherwise, it does not react.
\item Each job that gets an acknowledgement is assigned to one of the machines 
    that has sent an acknowledgement. Ties are broken arbitrarily.
\end{enumerate}
Note that the number of rounds is not bounded. However, 
in \cite{DietzfelbingerMadH93} and~\cite{stemann96} it was shown that w.h.p. the
$\tau$-collision protocol will terminate after a small number
of rounds, if suitable hash classes are used. We will show that this
also holds when class $\RR$ is used.

There exist several analysis techniques for load balancing, e.g., 
layered induction, fluid limit
models and witness trees \cite{handbook_of_randomized_computing}. 
We will focus on the witness
tree technique. 
We use the variant studied by
Schickinger and Steger in \cite{SchickingerS00} in
connection with hash class $\RR$. The main contribution of \cite{SchickingerS00} 
is that it provides a unified analysis for several load balancing algorithms. This
allows us to show that hash class $\RR$ is suitable in all of these situations
as well, with only little additional work. 

At the core of the analysis in \cite{SchickingerS00} is the so-called \emph{allocation graph}. In our
setting, where each job chooses exactly one candidate machine in each of the $d$
groups, the allocation graph is a bipartite graph $G = ([n],[n],E)$, where the
jobs are on the left side of the bipartition, and the machines are on the right
side, split into groups of size $n/d$. Each job vertex is adjacent to its $d$
candidate machines. As already discussed in
Section~\ref{hashing:sec:generalized:cuckoo:hashing}, the allocation graph is
equivalent to the hypergraph $G(S, \vec{h})$.
Recall that we refer to the bipartite representation of 
a hypergraph $G = (V,E)$ 
as bi$(V,E)$. We 
call the vertices on the left side \emph{job
vertices} and the vertices on the right side \emph{machine vertices}.

If a machine has high load we can find a subgraph in the allocation graph that
shows the chain of events in the allocation process that led to this situation,
hence ``witnessing'' the high load of this machine. (This is similar to the wear of
a table cell in the algorithm of Eppstein \emph{et al.}~\cite{Eppstein14} in the previous
section.) Such witness trees might differ greatly in structure, depending on the
load balancing scheme.

In short, the approach of Schickinger and Steger works as follows.\footnote{This approach
has a lot in common with our analysis of insertion algorithms for generalized cuckoo
hashing. However, the analysis will be much more complicated here, since the 
hypergraph $G(S, \vec{h})$ has exactly as many vertices as edges.}
\begin{enumerate}
\item They show that high load leads to the existence of a ``witness graph''
and describe the properties of such a graph for a given load balancing scheme.
\item For their analysis to succeed they demand that the witness graph from
above is a tree in the standard sense. They show that with
high probability a witness graph can be turned into a cycle-free witness tree by
removing a small number of edges at the root.
\item They show that it is unlikely that the
allocation graph contains such a witness tree.
\end{enumerate}
We will give a detailed description of this approach after stating the main
result of this section.

The following theorem represents one
selected result from \cite{SchickingerS00}, replacing the full randomness
assumption with hash functions from $\RR$ to choose candidate machines for
jobs.  We simplify the theorem by
omitting the exact parameter choices calculated in \cite{SchickingerS00}. All
the other examples considered in \cite{SchickingerS00} can be analyzed in an
analogous way, resulting in corresponding theorems. We discuss this claim
further in the discussion part of this section. 

\begin{theorem}
For each constant $\alpha > 0, d \geq 2$, there exist constants $\beta, c > 0$ (depending on $\alpha$ and $d$), such that for
each $t$ with $2
\leq t \leq (1/\beta) \ln \ln n,
\ell = n^{1/2}$ and $\vec{h} = (h_1, \ldots, h_d) \in \RR^{c,d}_{\ell, n}$,
the
$\tau$-collision protocol described above with threshold $\tau = O\left(
((\ln n)/(\ln \ln n))^{1/(t-2)}/(d-1)\right)$ finishes after $t$ rounds with
probability $1 - O(n^{-\alpha})$.
\label{thm:parallel_schickinger}
\end{theorem}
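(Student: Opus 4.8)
The theorem concerns the $\tau$-collision protocol analyzed via the witness-tree method of Schickinger and Steger. The plan is to follow exactly the same three-step strategy that Schickinger and Steger use in the fully random case (high load $\Rightarrow$ existence of a witness graph; witness graph $\Rightarrow$ cycle-free witness tree after pruning; witness trees are unlikely), but to replace the final probabilistic estimate — where full randomness of $h_1,\dots,h_d$ is invoked — by our framework, i.e., by Lemma~\ref{lem:good:bad} together with Lemma~\ref{lem:fail_prob}. Concretely, if $\mathsf{WT}$ denotes the hypergraph property ``contains (a subgraph isomorphic to) a witness tree of the relevant shape'' associated with the event that the protocol runs for more than $t$ rounds, then the Schickinger–Steger analysis already establishes $\E^\ast(N^{\mathsf{WT}}_S) = O(n^{-\alpha})$ for suitable $\beta$ and $\tau$; we invoke Lemma~\ref{lem:good:bad} to get
\begin{align*}
    \Pr\left(N^{\mathsf{WT}}_S > 0\right) \leq \Pr\left(B^{\mathsf{WT}}_S\right) + \E^\ast\left(N^{\mathsf{WT}}_S\right),
\end{align*}
and the remaining task is to bound the failure term $\Pr(B^{\mathsf{WT}}_S)$ by $O(n/\ell^c) = O(n^{1-c/2})$, which is $O(n^{-\alpha})$ once $c \geq 2(\alpha+1)$.

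Bounding the failure term is the technical heart of the proof and proceeds by the reduce-and-peel machinery of Section~\ref{hashing:sec:basics}. First I would define a peelable superproperty $\mathsf{B}$ containing $\mathsf{WT}$ — a ``relaxed witness tree'' property, analogous to $\RWT{k-1}$ in the Eppstein \emph{et al.}\ analysis, obtained by allowing the bottom levels of the tree to be truncated in a canonical left-to-right order so that peeling is well-defined. Then I would exhibit a further property $\mathsf{C}$ to which $\mathsf{B}$ reduces (again canonically; as in Remark~\ref{remark:peelable} one could even try $\mathsf{C} = \mathsf{B}$ if the counting is already good enough), and show the counting bound: the number of fully-labeled hypergraphs in $\mathsf{C}$ with exactly $t$ edges, weighted by the realization probability $m^{-dt}$ in the fully random case, decays geometrically in $t$ — here one uses that a witness tree with $t$ non-leaf nodes has $O(t)$ vertices and that $m = n$, $d$-uniform, and the $(d-1)$-ary branching structure together give something like $\mu^{\mathsf{C}}_t \leq n \cdot d \cdot t^{O(1)} / ((1+\varepsilon')(d-1))^{t}$ for an appropriate slack $\varepsilon'$ — and conclude via Lemma~\ref{lem:fail_prob} that
\begin{align*}
    \Pr\left(B^{\mathsf{WT}}_S\right) \leq \ell^{-c} \sum_{t=2}^n t^{2c}\, \mu^{\mathsf{C}}_t = O\left(\frac{n}{\ell^c}\right).
\end{align*}

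**The main obstacle.** Two points require care. The delicate one is step~1–2 of Schickinger–Steger: because here $G(S,\vec{h})$ has as many vertices as edges (the graph is \emph{dense} in the sense $n = m$, unlike the sparse generalized-cuckoo-hashing applications where $m \geq (1+\varepsilon)(d-1)n$), we cannot simply argue that the graph contains only hypertrees and unicyclic components, so the pruning-to-a-tree argument must be imported wholesale from \cite{SchickingerS00} and its failure probability folded into the $O(n^{-\alpha})$ bound — this is why the theorem only claims the parallel-arrival result and why the constants $\beta,c$ depend on both $\alpha$ and $d$. The second, more mechanical, obstacle is defining $\mathsf{B}$ and $\mathsf{C}$ so that peelability and $2c$-reducibility genuinely hold while keeping the number of partially-truncated witness trees under control: one must ensure, exactly as in the $\RWT{k-1}$ construction, that there is at most one node on the relevant level with a deficient number of children and that its children's first-coordinate labels ($j$-values) are forced, since otherwise the label-counting blows up and $\mu^{\mathsf{C}}_t$ no longer decays fast enough to absorb the $t^{2c}$ factor. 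Once the bookkeeping is set up correctly, the geometric decay of $\mu^{\mathsf{C}}_t$ makes the sum converge and the rest is routine; choosing $\ell = n^{1/2}$ and $c$ large enough in terms of $\alpha$ then yields the claimed $1 - O(n^{-\alpha})$ bound, and since every estimate used only random-graph counting, no property of the internal structure of $\RR$ beyond Lemma~\ref{lem:random} is needed.
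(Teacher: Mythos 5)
Your skeleton matches the paper's: the same Schickinger--Steger three-step reduction, then Lemma~\ref{lem:good:bad} to split off $\E^\ast(N^{\mathsf{A}}_S)$, then Lemma~\ref{lem:fail_prob} for the failure term. But there are two genuine gaps. First, the pruning step cannot simply be ``imported wholesale'' with its failure probability ``folded in'': under $\RR$ the event that pruning fails is itself a subgraph-occurrence event and must go through the framework. The paper handles this by defining $(k,t)$-multicycles, setting $\tau\ge 2k+1$, and working with the combined property $\MCWT{k,t}$ (multicycle \emph{or} proper witness tree $T_{t-1}$), so that both the fully random expectation and the failure term are bounded for this one property.

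Second, and more seriously, your claimed bound $\mu^{\mathsf{C}}_t \leq n\cdot d\cdot t^{O(1)}/((1+\varepsilon')(d-1))^{t}$ is false here. Geometric decay of that kind holds in the sparse regime $m\ge(1+\varepsilon)(d-1)n$ used for generalized cuckoo hashing, but in this application $n=m$: the hypergraph has as many edges as vertices, and once a witness tree is peeled down to a small residual structure (a truncated tree or path), its expected count \emph{grows} with the number of edges rather than decaying --- the saving factors $(\tau!)^{j/\tau}((d-1)!)^{j}$ in the fully random witness-tree bound come from the full branching structure and large $\tau$, and they disappear under peeling. You flag the density issue yourself as ``the main obstacle'' but then still assert the decay, which is inconsistent. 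The paper's actual route is different: it shows witness trees and multicycles have at most $O(\log n)$ edges (Lemma~\ref{lem:witness_tree_size}), takes as peelable superproperty $\Csmall(K,d)$ (all connected hypergraphs with at most $K\log n$ edges), reduces to hypertrees $\HT(K,d,2c)$ with at most $2c$ leaves, accepts the \emph{growing} bound $\mu^{\HT(K,d,2c)}_t \le t^{O(1)}\, n\, d^{(d+1)t}$, and compensates by capping $t\le K\log n$ and choosing $c=2(2+\alpha+(d+1)\log d)$ so that $\ell^{c}=n^{c/2}$ dominates the resulting $n^{K(d+1)\log d+2}$. Without either that size cap or some substitute for the geometric decay, your sum $\sum_t t^{2c}\mu^{\mathsf{C}}_t$ does not converge to $O(n)$ and the failure-term bound collapses.
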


We will now analyze the $\tau$-collision protocol using hash functions from class
$\RR$. Most importantly, we have to describe the probability of the event that
the $\tau$-collision protocol does not terminate after $t$ rounds in the form of a
hypergraph property. To achieve this, we start by describing the structure of witness
trees.

In the setting of the $\tau$-collision protocol in parallel arrival, a witness tree
has the following structure. Using the notation of \cite{SchickingerS00},
a machine is \emph{active in round $t$} if there exists at least one job
that sends a request to this machine in round $t$. If no such job exists, 
the machine is \emph{inactive in round $t$}. 
Assume that after round $t$ the collision protocol has
not yet terminated. Then there exists a machine $y$ that is active 
in round $t$ and that received  more
than $\tau$ allocation requests. Arbitrarily choose $\tau$ of these requests.  
These requests were sent by $\tau$ unallocated jobs in
round $t$. The vertex that corresponds to machine $y$ is the root of the witness tree, 
the $\tau$ job vertices are its children. In round $t$, each of
the $\tau$ unallocated jobs sent allocation requests to $d-1$ other machines.
The corresponding machine vertices are the children
of each of the $\tau$ job vertices in the witness tree.
By definition, the machines that correspond to these machine vertices are also active in round $t$, and so they 
were active in round $t - 1$ as well. So, there are 
$\tau \cdot (d-1)$ machine vertices that correspond to machines that are active
and receive requests from one of the jobs on level $t$ in round $t - 1$. We must
be aware that among these
machine vertices the same machine might appear more than once, because unallocated jobs may have 
chosen the same candidate machine. So, there may exist vertices in the witness tree that
correspond to the same machine. 
For all these $\tau \cdot(d-1)$ machines the same argument holds in round $t-1$. Proceeding
with the construction for rounds $t-2, t-3,\ldots, 1$, we build the \emph{witness tree $T_t$
with root $y$}.  It exhibits a regular recursive
structure, depicted abstractly in Figure~\ref{fig:witness_tree}. Note that all
leaves, i.e., vertices on level 0, correspond to machine vertices, since no allocation requests 
are sent in round $0$.
\begin{figure}[tb]
\centering
\def\svgwidth{\textwidth}
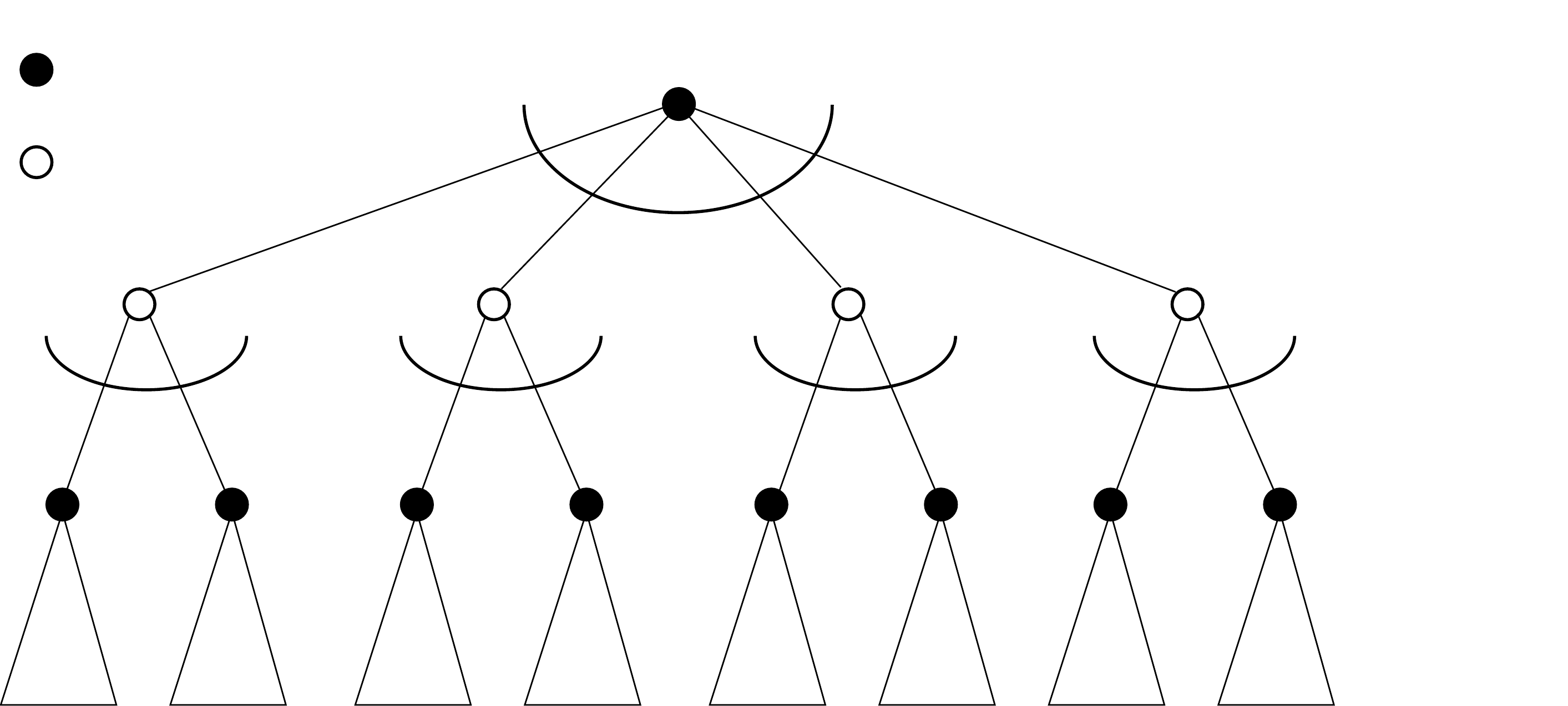
\caption{Structure of a witness tree $T_t$ with root $v$ after $t$ rounds if the $\tau$-collision
protocol with $d$ candidate machines has not yet terminated.}
\label{fig:witness_tree}
\end{figure}

As we have seen, such regular witness trees do not need to be subgraphs of
the allocation graph since two vertices of a witness tree might be embedded to the same
vertex. Hence, the witness tree is ``folded together'' to a subgraph in the
allocation graph. In the embedding of a witness tree as a subgraph
of the allocation graph, edges do not occur independently and the analysis becomes
difficult, even in the fully random case.

Schickinger and Steger found the following way to analyze this situation. 
For a connected undirected graph $G = (V, E)$ a \emph{shortest path tree $T$ 
rooted at node $s \in V$} is 
a tree in $G$ in which the unique paths from $s$ to all other vertices are
shortest paths in $G$.  (Such a tree can be obtained by starting a breadth-first search in $G$ from $s$.)
Schickinger and Steger introduced the notion of a \emph{multicycle} that describes an ``almost
tree-like'' graph. 
\begin{definition}
    Let $k,t \geq 1$. Let $G = (V,E)$ be an undirected graph. 
    Let $s \in V$.
    A \emph{$(k,t)$-multicycle of depth at most $t$ at node $s$
    in $G$} is a connected subgraph $G' = (V', E')$ of $G$ together with a shortest path tree $(V', T')$ of $G'$ 
    rooted at node $s$ with the following properties:
    \begin{enumerate}
        \item $G'$ includes vertex $s$.
        \item $G'$ has cyclomatic number $k$ (\emph{cf.} Section~\ref{hashing:sec:leafless}).
        \item For each vertex $v$ in $T'$, the distance between $s$ and $v$ in $T'$ is at most $t$. 
	\item Each leaf in $T'$ is incident to an edge in $G'$ that is not in $T'$. 
    \end{enumerate}
\end{definition}
Multicycles will be used to reason in the following way: When $G$ does not contain a certain
$(k,t)$-multicycle at node $s$ as a subgraph, removing only a few edges in $G$ incident to node $s$ makes the neighborhood that
includes all vertices of distance at most $t$
of $s$ in the remaining graph acyclic.  As we shall see, the proof that this is possible
will be quite easy when 
using shortest path trees. 

One easily checks that a $(k,t)$-multicycle $M$ with $m$
vertices and $n$ edges satisfies  
$n = m + k - 1$, because the cyclomatic number of a connected graph is exactly 
$n - m  + 1$ \cite{diestel}. Furthermore, it has at most $2kt + 1$ vertices, because there can be at most
$2k$ leaves that each have distance at most $t$ from $s$, and all vertices of the spanning tree lie on 
the unique paths from $s$ to the leaves. 
We will later see that for the parameters given in
Theorem~\ref{thm:parallel_schickinger}, a $(k,t)$-multicycle is with high
probability not a subgraph of the allocation graph.
\begin{lemma}[{\cite[Lemma 2]{SchickingerS00}}]
Let $k,t \geq 1$. Assume that a graph $G=(V,E)$ contains no $(k',t)$-multicycle,
for $k' > k$. 
Furthermore, given a vertex $v \in V$, consider the induced subgraph $H=(V',E')$ of $G$
that contains all vertices $w\in V$ with distance at most $t$ from $v$ in $G$.
Then we can remove at most $2k$ edges incident to $v$ in $H$ to get a graph $H^\ast$ such that
the connected component of $v$ in $H^\ast$ is a tree. 
\end{lemma}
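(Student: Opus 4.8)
The plan is to work with the shortest path tree and remove, for each independent cycle of the relevant neighborhood, a single well-chosen ``extra'' edge incident to $v$, so that what remains around $v$ becomes acyclic. First I would set up the local picture: let $H=(V',E')$ be the subgraph induced by all $w\in V$ at distance $\le t$ from $v$ in $G$, and let $H_v$ be the connected component of $v$ in $H$. Build a breadth-first-search (equivalently, shortest path) tree $T$ of $H_v$ rooted at $v$; every vertex of $H_v$ then lies within distance $t$ of $v$ in $T$, since BFS distances in $H_v$ agree with those in $G$ for vertices that are close enough (this is the standard property of the ball of radius $t$, and is exactly why shortest path trees are used). Let $\kappa'=\gamma(H_v)$ be the cyclomatic number of $H_v$, i.e.\ the number of non-tree edges of $H_v$ w.r.t.\ $T$.

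Next I would argue $\kappa'\le 2k$. Consider the subgraph $M$ of $H_v$ obtained by taking $T$ together with the $\kappa'$ non-tree edges, then repeatedly deleting leaves of $T$ that are not incident to any non-tree edge. The result is a $(\kappa',t')$-multicycle at $v$ for some $t'\le t$: it is connected, contains $v$, has cyclomatic number $\kappa'$ (we only removed tree-leaves not touching cycles), every vertex is within distance $\le t$ of $v$ along the surviving tree, and by construction every surviving leaf is incident to a non-tree edge. By hypothesis $G$ contains no $(k'',t)$-multicycle with $k''>k$; since a multicycle of depth $\le t'$ is in particular one of depth $\le t$, we get $\kappa'\le k$ — actually the statement allows the looser bound $\kappa'\le 2k$, and either way the point is that $\kappa'$ is bounded by $k$ (the factor $2$ being slack that makes the edge-removal step below comfortable).

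Now for the edge removal. For each of the $\kappa'$ non-tree edges $e_i$ of $H_v$, the fundamental cycle $C_i$ of $e_i$ with respect to $T$ passes through $v$? — this is not automatic, so instead I would do the following: pick, for each $i$, an edge $f_i$ on the fundamental cycle $C_i$ that is incident to $v$; such an edge exists because $v$ is the root and $T$ restricted to $C_i$ is the union of the two tree-paths from $v$ down to the endpoints of $e_i$, which meet at $v$ (after contracting the common initial segment we may assume the two tree-paths diverge at $v$; if they do not, we re-root the analysis at the lowest common ancestor, but since we only need to destroy cycles, deleting any tree edge incident to $v$ that lies on $C_i$ suffices, and at least one such exists once duplicates are merged). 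Removing $f_1,\dots,f_{\kappa'}$ — at most $2k$ edges, all incident to $v$ — kills a spanning set of the cycle space restricted to the component of $v$, so the component of $v$ in $H^\ast=H-\{f_1,\dots,f_{\kappa'}\}$ is a tree.

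The main obstacle I anticipate is the bookkeeping in the last paragraph: making precise that one can always choose the deleted edges to be incident to $v$ and that $\kappa'$ such deletions suffice to make the $v$-component acyclic. The clean way is to phrase it via the cycle space: the fundamental cycles form a basis, each contains a tree-path through $v$ after merging the shared prefix, and deleting one $v$-incident edge per basis cycle reduces the cyclomatic number of the $v$-component to $0$. The multicycle-freeness hypothesis is used only once, to bound $\kappa'$; everything else is elementary graph theory about BFS trees, so I would keep that part terse.
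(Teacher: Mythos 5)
The paper itself does not prove this lemma (it is quoted from Schickinger and Steger), so I can only judge your argument on its own merits. The first half is sound: the ball $H$ of radius $t$ around $v$ is connected, the BFS tree $T$ rooted at $v$ has depth at most $t$, and pruning from $T$ the leaves not incident to non-tree edges yields a $(\kappa',t)$-multicycle at $v$ with $\kappa'=\gamma(H)$, whence $\kappa'\le k$. This is exactly where the hypothesis is used, and your construction is correct (including the point that $T$ restricted to the pruned vertex set remains a shortest path tree of the pruned subgraph).

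The second half has a genuine gap. The fundamental cycle of a non-tree edge $e=\{a,b\}$ closes at the lowest common ancestor of $a$ and $b$ in $T$, which in general is \emph{not} $v$; such a cycle contains no edge incident to $v$ whatsoever, so no choice of $f_i$ exists, and your fallbacks do not repair this: you are not allowed to ``re-root at the LCA'' because the lemma forces the removed edges to be incident to $v$, and ``merging duplicates'' does not create an edge of $C_i$ at $v$ where there is none. Moreover, even when such edges exist, removing one edge per fundamental cycle does not in general make a graph acyclic (in a theta graph, deleting the shared path edge meets both fundamental cycles but leaves their symmetric difference intact). The way out is to notice that you only need the \emph{component of $v$} to be a tree, so cycles that avoid $v$ should be \emph{disconnected} from $v$ rather than destroyed: since $T$ is a BFS tree, every edge of $H$ incident to $v$ is a tree edge to a child of $v$; delete the edge $\{v,c\}$ for every child $c$ whose subtree $T_c$ contains an endpoint of a non-tree edge. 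Charging each such child injectively to a (non-tree edge, endpoint) pair shows there are at most $2\kappa'\le 2k$ deletions, and afterwards the component of $v$ consists of $v$ together with subtrees containing no non-tree-edge endpoints, hence is a subtree of $T$. This is also why the bound in the lemma is $2k$ and not $k$: one cycle can force two distinct children of $v$ to be cut.
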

%

In the light of this lemma, we set $\tau \geq 2k + 1$.
Then we know that if the allocation graph contains a witness tree after $t$ rounds, 
it contains a $(k,t)$-multicycle or a regular witness tree $T_{t-1}$. This observation motivates considering the following hypergraph property:
\begin{definition}
    Let $k, t \in \mathbb{N}$. Then $\MCWT{k,t} \subseteq \GG^{d}_{n/d,n}$
    is the set of all hypergraphs $H$ such that \emph{bi}$(H)$ forms either a $(k,t)$-multicycle
    or a witness tree $T_{t-1}$.
\end{definition}

If we use hash class $\RR$ and set $\tau \geq 2k + 1$, for a set $S$ of jobs, 
we have, by the discussion above:
\begin{align}
    \Pr(\text{the $\tau$-collision protocol does not terminate after $t$ rounds})
    \leq \Pr\left(N^{\MCWT{k,t}}_{S} > 0\right).
    \label{hashing:eq:lb:1}
\end{align}
By Lemma~\ref{lem:good:bad} we may bound the probability on the right-hand side of \eqref{hashing:eq:lb:1} by 
    \begin{align}
        \Pr\left(N^{\MCWT{k,t}}_{S} > 0\right) \leq
    \Pr\left(B^{\MCWT{k,t}}_S\right) + \E^\ast\left(N^{\MCWT{k,t}}_S\right).
    \label{hashing:eq:load:balancing:goal}
\end{align}

\paragraph{Bounding $\E^\ast(N^{\MCWT{k,t}}_S)$} 
We first bound the expected number of subgraphs that form multicycles or 
witness trees when the hash functions are fully random. The following lemma 
is equivalent to Theorem~$1$ in \cite{SchickingerS00}. However,
our parameter choices are slightly different because in \cite{SchickingerS00}
each of the $d$ candidate machines is chosen from the set $[n]$, while
we split the $n$ machines into $d$ groups of size $n/d$. The proof of the lemma 
can be found in Appendix~\ref{app:schickinger:steger:proof}.
\begin{lemma}
    Let $\alpha \geq 1$ and $d \geq 2$.  Set $\beta = 2d(\alpha + \ln d + 3/2)$ and $k = \alpha + 2$.
    Consider $t$ with $2 \leq t \leq (1/\beta) \ln \ln n$. Let 
    \begin{align*}
	\tau = \max\left\{\frac{1}{d-1}\left(\frac{\beta t \ln n}{\ln
	\ln n}\right)^{\frac{1}{t-2}}, d^{d+1}e^d + 1, 2k + 1\right\}. 
\end{align*}
    Then
    $$\E^\ast\left(N^{\MCWT{k,t}}_S\right) = O(n^{-\alpha}).$$
\label{lem:fully_random_100}
\end{lemma}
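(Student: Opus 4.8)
The plan is to run the first-moment computation of \cite{SchickingerS00} (their Theorem~1), adjusted to the fact that here each of the $d$ candidate machines of a job is drawn from a separate group of $n/d$ machines, so every $h_i$ has range $[n/d]$ and a single hyperedge of $G(S,\vec h)$ is realized with probability $(d/n)^d$. By linearity of expectation $\E^\ast\bigl(N^{\MCWT{k,t}}_S\bigr)=\sum_{G\in\MCWT{k,t}}\Pr\nolimits^\ast(I_G=1)$, and I would split this sum according to whether $\text{bi}(G)$ is a $(k,t)$-multicycle or a witness tree $T_{t-1}$, bounding the two parts separately.

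For the multicycle part, write $j=|E(G)|$; if $\text{bi}(G)$ is a $(k,t)$-multicycle then it is connected, leafless, has cyclomatic number (at least) $k$, and has at most $2kt+1$ vertices, so from $|E(\text{bi}(G))|=dj$ one gets $V_m=(d-1)j+1-k$ machine vertices and the size bound forces $j=O(kt)$. By Lemma~\ref{lem:num_graphs} there are only $(O(kt))^{O(k)}=n^{o(1)}$ admissible unlabeled shapes; turning one into a labeled hypergraph in $\GG^d_{n/d,n}$ costs at most $d^{V_m}$ (group assignment), $(n/d)^{V_m}$ (machine labels) and $n^{j}$ (edge labels), and it is realized with probability $(n/d)^{-dj}$. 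The powers of $n/d$ collapse to $(n/d)^{V_m-dj}\,n^{j}=(n/d)^{1-k}d^{j}$, so the contribution of each $j$ is $n^{o(1)}(n/d)^{1-k}$; summing over the $O(kt)$ values of $j$ (and over cyclomatic numbers $\ge k$, which only helps) and using $k=\alpha+2$ yields $O(n^{-\alpha})$.

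For the witness-tree part, $T_{t-1}$ is a genuine tree, so $\text{bi}(T_{t-1})$ has $J$ hyperedges (the job vertices) and $V_m=(d-1)J+1$ machine vertices, and since each non-leaf machine has $\tau$ job children and each job has $d-1$ machine children, $J=\Theta\bigl((\tau(d-1))^{t-1}\bigr)$ grows doubly-exponentially in $t$. I would count fully labeled such hypergraphs top-down: $n$ choices for the root machine, and for each job vertex at most $n$ choices for its key and $(n/d)^{d-1}$ choices for its non-parent machine endpoints, with a correction of $1/\tau!$ for each of the $J/\tau$ machines whose $\tau$ unordered job children could have been enumerated in any order. This gives at most $n\bigl(n(n/d)^{d-1}\bigr)^{J}/(\tau!)^{J/\tau}$ labeled witness trees, each realized with probability $(n/d)^{-dJ}$, so the contribution is at most $n\,d^{J}/(\tau!)^{J/\tau}$, which by $(\tau!)^{1/\tau}\ge\tau/e$ is at most $n\,(ed/\tau)^{J}$. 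It therefore suffices to show $J\ln\!\bigl(\tau/(ed)\bigr)\ge(\alpha+1)\ln n+O(\log n)$. The two constant terms $d^{d+1}e^d+1$ and $2k+1$ in the definition of $\tau$ keep $\tau/(ed)$ bounded away from $1$ and keep $\tau$ large relative to $\alpha$, while the first term forces $(\tau(d-1))^{t-2}\ge\beta t\ln n/\ln\ln n$ and hence $J=\Omega\bigl(\tau\,\beta t\ln n/\ln\ln n\bigr)$; combining these—and in the regime where the first term is dominated (so $\tau$ is a constant) using that then $t=\Omega\bigl(\ln\ln n/\ln(\tau(d-1))\bigr)$, so that $J=(\ln n)^{\omega(1)}$—one checks the inequality holds with room to spare over the whole range of $t$, so the witness-tree contribution is $O(n^{-\alpha})$ as well.

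The main obstacle is this last verification: accurately accounting for the combinatorial over-counting of labeled witness trees (the $(\tau!)^{J/\tau}$ factor from unordered job children, and the $d$-partite tuple structure that pins each machine vertex to a fixed group and position), and then checking uniformly over the stated range of $t$ and for the three-term maximum defining $\tau$ that the doubly-exponential growth of $J$ beats the $d^{J}$ blow-up by the needed $\Theta(\log n)$ margin—essentially reproducing Schickinger and Steger's Theorem~1 estimate with the extra bookkeeping caused by the machine grouping (which only replaces $n$ by $n/d$ throughout and produces the harmless powers of $d$ above). I would relegate the routine arithmetic to the appendix, as the statement promises.
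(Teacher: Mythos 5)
Your decomposition of $\MCWT{k,t}$ into multicycles and witness trees, the multicycle count (collapsing to $(n/d)^{1-k}d^{j}$ times $n^{o(1)}$ shapes, with $k=\alpha+2$), and the witness-tree count are all essentially the paper's argument, which in turn follows Schickinger and Steger. Your bookkeeping for the witness trees is in fact slightly cleaner than the paper's: by assigning the $d-1$ non-parent machines of each job one per group you avoid the separate $d^{r}$ edge-labelling factor and the $((d-1)!)^{J}$ correction, and you arrive at a per-tree bound $n(ed/\tau)^{J}$ where the paper gets $n\bigl(e^{d}d^{d+1}/\tau\bigr)^{j}$; both are valid and yours has a smaller base.

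The gap is in the step you call routine arithmetic, which is the only place where the specific $\beta$ and $\tau$ of the statement enter, and your sketch of it is wrong in two ways. First, the relevant tree is $T_{t-1}$, not $T_{t}$: its number of job vertices is $j_{t-1}=\frac{\tau^{t-1}(d-1)^{t-2}-\tau}{\tau(d-1)-1}=\Theta\bigl((\tau(d-1))^{t-2}\bigr)$, so the correct lower bound is $J\ge(\tau(d-1))^{t-2}/(2d)\ge\beta t\ln n/(2d\ln\ln n)$ \emph{without} your extra factor of $\tau$. Second, in the regime where a constant branch of the max would pin $\tau$ down, your claim $J=(\ln n)^{\omega(1)}$ is false: there $J$ can be as small as $\Theta(t\ln n/\ln\ln n)=O(\ln n)$, and with $\ln(\tau/(ed))$ merely a constant the product $J\ln(\tau/(ed))$ does not visibly exceed $(\alpha+1)\ln n$. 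There is no room to spare: $\beta=2d(\alpha+\ln d+3/2)$ is calibrated so the final exponent is exactly $-\alpha$. The missing idea is to use the first branch of the max as a lower bound on $\tau$ \emph{uniformly over all $t$ in range} (it holds whether or not that branch attains the max), giving $\ln(\tau(d-1))\ge\frac{1}{t-2}\ln\frac{\beta t\ln n}{\ln\ln n}\approx\frac{\ln\ln n}{t-2}$, and to pair this with $J\ge\frac{\beta t\ln n}{2d\ln\ln n}$, so that
\begin{align*}
J\ln\tau \;\gtrsim\; \frac{\beta t\ln n}{2d\ln\ln n}\cdot\frac{\ln\ln n}{t-2}\;\ge\;\frac{\beta}{2d}\ln n\;=\;(\alpha+\ln d+3/2)\ln n,
\end{align*}
which is exactly enough to absorb the $n^{3/2+\ln d}$ prefactor and leave $O(n^{-\alpha})$; the constant branches of $\tau$ serve only to guarantee that the base $e^{d}d^{d+1}/\tau$ (or your $ed/\tau$) is below $1$. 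Without this pairing the verification does not go through over the whole stated range of $t$.
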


\paragraph{Bounding $\Pr(B^{\MCWT{k,t}}_S)$}
To apply
Lemma~\ref{lem:fail_prob}, we need a peelable hypergraph property that contains
$\MCWT{k,t}$. We will first calculate the size of witness trees to see that
they are small for the parameter settings given in
Theorem~\ref{thm:parallel_schickinger}.

\paragraph{The Size of Witness Trees}\label{subsec:load_balancing_size_of_wt}

Let $T_t$ be a witness tree after $t$ rounds. 
As above, the number of job vertices $j_t$ in $T_t$ is given by
\begin{align*}
j_t = \frac{\tau^t(d-1)^{t-1}-\tau}{\tau(d - 1) - 1}.
\end{align*}
We bound the size of the witness tree. 
\begin{lemma}
If $\alpha$, $d$, $\beta$, $k$, $t$, and $\tau$ are as in Lemma~\ref{lem:fully_random_100}, we have
$j_t < \log n$.
\label{lem:witness_tree_size}
\end{lemma}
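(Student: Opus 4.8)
The plan is to bound $j_t$ directly by plugging in the parameter choices from Lemma~\ref{lem:fully_random_100} and estimating crudely. First I would observe that from the closed-form expression
\begin{align*}
j_t = \frac{\tau^t(d-1)^{t-1}-\tau}{\tau(d-1)-1}
\end{align*}
we have the simple upper bound $j_t \le \tau^t (d-1)^{t-1}$, since the numerator is at most $\tau^t(d-1)^{t-1}$ and the denominator is at least $1$ (as $\tau \ge 2k+1 \ge 7$ and $d \ge 2$). Multiplying by an extra factor of $d-1$ for slack, it suffices to show $(\tau(d-1))^t \le \log n$, i.e. $t \cdot \ln(\tau(d-1)) \le \ln\log n$.

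Next I would control $\tau(d-1)$. By definition $\tau$ is the maximum of three quantities; the two constant ones ($d^{d+1}e^d+1$ and $2k+1$) contribute only an additive $O_d(1)$ to $\ln(\tau(d-1))$, so the dominant term is $\tau(d-1) \le \left(\frac{\beta t \ln n}{\ln\ln n}\right)^{1/(t-2)} + O_d(1)$. Taking logarithms, $\ln(\tau(d-1)) \le \frac{1}{t-2}\ln\!\left(\frac{\beta t \ln n}{\ln\ln n}\right) + O_d(1) \le \frac{1}{t-2}\bigl(\ln\ln n + \ln(\beta t)\bigr) + O_d(1)$. Since $t \le (1/\beta)\ln\ln n$ we have $\ln(\beta t) \le \ln\ln\ln n$, which is negligible compared to $\ln\ln n$, so $\ln(\tau(d-1)) = O\!\left(\frac{\ln\ln n}{t-2}\right)$. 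Then $t\cdot\ln(\tau(d-1)) = O\!\left(\frac{t}{t-2}\ln\ln n\right) = O(\ln\ln n)$ for $t \ge 2$ (handling $t=2$, $t=3$ separately where $t/(t-2)$ is large or undefined — for $t=2$ the expression for $\tau$ has exponent $1/0$, so one checks the definition of $\tau$ degenerates and $j_2 = \tau$ is a constant, and for $t$ slightly larger the ratio $t/(t-2)$ is bounded). Finally $O(\ln\ln n) \le \ln\log n = \ln\ln n - \ln\ln 2$ for $n$ large enough, giving $j_t < \log n$.

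The main obstacle, such as it is, will be bookkeeping the edge cases in $t$ near $2$ (where the exponent $1/(t-2)$ blows up or is undefined) and making sure the $O_d(1)$ additive constants coming from the $\max$ in the definition of $\tau$, together with the factors $d-1$, really are absorbed by the slack between $\ln\ln n$ and $\ln\log n$ for all sufficiently large $n$; this is why the lemma is stated asymptotically. No step requires anything beyond elementary estimates, so I would not expect genuine difficulty — the content of the lemma is simply that the witness trees are subpolylogarithmic in size, which is forced by the threshold $\tau$ being only polyloglog and the depth $t$ being only $O(\ln\ln n)$.
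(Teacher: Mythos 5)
Your outline coincides with the paper's: bound $j_t$ by a power of $\tau(d-1)$, treat the three branches of the $\max$ defining $\tau$ separately, and use $t\le(1/\beta)\ln\ln n$. The gap is in the closing step, and it is exactly where the content of the lemma sits. You reduce to showing $t\cdot\ln(\tau(d-1))\le\ln\log n$, establish that the left-hand side is $O(\ln\ln n)$, and conclude since $\ln\log n=\ln\ln n-\ln\ln 2$ ``for $n$ large enough.'' But the target $\ln\log n$ is $\ln\ln n+O(1)$, i.e.\ it grows at exactly the rate of your upper bound, so the comparison $C\ln\ln n\le\ln\ln n+O(1)$ holds for large $n$ only if $C\le 1$; for any $C>1$ it fails for all large $n$. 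Your derivation produces constants strictly larger than $1$: in the branches where $\tau$ is one of the two constants, the constant is $\ln(\tau(d-1))/\beta$, and one must actually verify it is at most $1$ --- this is precisely where the numerical value $\beta=2d(\alpha+\ln d+3/2)$ from Lemma~\ref{lem:fully_random_100} enters (the paper checks $(\ln\tau+\ln d)/\beta\le 1$ explicitly); in the dominant branch your constant is $t/(t-2)$, which equals $3$ at $t=3$ and never drops to $1$. The extra factor of $\tau(d-1)$ you granted yourself ``for slack'' by using exponent $t$ rather than $t-1$ makes matters worse, since for small $t$ that single factor is of order $\ln n/\ln\ln n$, far too large to absorb. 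So as written the argument does not yield $j_t<\log n$; it yields only $j_t\le(\log n)^{O(1)}$.

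The paper avoids the big-O comparison altogether: it bounds $j_t\le(\tau(d-1))^{t-1}$ and then works with the powers directly rather than with logarithms --- in the two constant branches it writes $(\tau(d-1))^{t-1}\le(\tau(d-1))^{(1/\beta)\ln\ln n}=(\ln n)^{(\ln\tau+\ln d)/\beta}$ and checks that the exponent is at most $1$, and in the remaining branch it pins the power against the exponent $1/(t-2)$ in the definition of $\tau$ together with $\beta t\le\ln\ln n$. To repair your version you would need to (a) keep the exponent tied to $t-2$ (the exponent appearing in the definition of $\tau$) rather than $t$, and (b) replace every ``$=O(\ln\ln n)$'' with an explicit bound whose leading constant is at most $1$, which cannot be done without invoking the specific value of $\beta$.
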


\begin{proof}
Observe the following upper bound for the number of jobs in a witness tree after $t$ rounds:
\begin{align*}
    j_t = \frac{\tau^t (d-1)^{t-1}-\tau}{\tau(d-1) - 1} 
    \leq \frac{\tau(\tau(d-1))^{t-1}}{2\tau-1} \leq
(\tau(d-1))^{t-1}.
\end{align*}
Now observe
that for $\tau \in \{d^{d+1}e^d + 1, 2k + 1\}$ we have
\begin{align*}
(\tau(d-1))^{t-1} \leq (\tau(d-1))^{\frac{1}{\beta}\ln \ln n} \leq (\ln n)^{\frac{\ln \tau + \ln d}{\beta}}
\leq \ln n,
\end{align*}
since $\frac{\ln \tau}{\beta} \leq 1$ for the two constant choices for $\tau$ in Lemma~\ref{lem:fully_random_100}.
Furthermore, for $\tau = (\beta t (\ln n)/\ln \ln n)^{\frac{1}{t-2}}/(d-1)$ we have
\begin{align*}
((d-1)\tau)^{t - 1} \leq \frac{\beta t \ln n}{\ln
\ln n} \leq \ln n,
\end{align*}
and hence $j_t \leq \ln n < \log n$. \qquad
\end{proof}

A $(k,t)$-multicycle has at most $2kt + k - 1$ edges,
hence such multicycles are
smaller than witness trees for
$t = O(\log \log n)$ and a constant $k \geq 1$.

\paragraph{A Peelable Hypergraph Property}\label{subsec:load_balancing_graph_prop}

To apply Lemma~\ref{lem:fail_prob}, we have to find a peelable hypergraph property
that contains all subgraphs that have property $\MCWT{k,t}$ 
(multicycles or witness trees for $t-1$ rounds).
Since we know from above 
that witness trees and multicycles are contained in small connected subgraphs of
the hypergraph $G(S, \vec{h})$, we will use the following hypergraph property.

\begin{definition}
Let $K > 0$ and $d \geq 2$ be constants. Let $\Csmall(K,d)$ contain all connected
$d$-partite hypergraphs $(V,E) \in \GG^d_{n/d, n}$ with $|E| \leq K
\log n$ disregarding isolated vertices.
\end{definition}

The following central lemma shows how we can bound the failure term of 
$\RR$.

\begin{lemma}
Let $K > 0$, $c \geq 1$, $\ell \geq 1$, and $d \geq 2$ be constants. Let $S$ be the set of jobs with $|S|
= n$. Then 
\begin{align*}
    \Pr\left(B^{\MCWT{k,t}}_S\right) \leq \Pr\left(B^{\Csmall(K,d)}_S\right) =  O\left(\frac{n^{K (d + 1) \log d+2}}{\ell^{c}}\right).
\end{align*}\label{lem:fail_prob_csmall}
\end{lemma}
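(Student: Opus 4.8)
The plan is to apply Lemma~\ref{lem:fail_prob} with $\mathsf{A} = \MCWT{k,t}$, $\mathsf{B} = \mathsf{C} = \Csmall(K,d)$, where $K$ is chosen as a sufficiently large constant so that $\MCWT{k,t} \subseteq \Csmall(K,d)$. This containment follows from the size bounds established just above: by Lemma~\ref{lem:witness_tree_size} a witness tree $T_{t-1}$ has fewer than $\log n$ job vertices (hence $O(\log n)$ edges), and a $(k,t)$-multicycle has at most $2kt+k-1 = O(\log\log n)$ edges for constant $k$ and $t = O(\log\log n)$; both are connected $d$-partite hypergraphs in $\GG^d_{n/d,n}$, so both lie in $\Csmall(K,d)$ once $K$ exceeds the appropriate constant. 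Since $\Csmall(K,d)$ is trivially peelable in the generalized hypergraph sense (repeatedly delete a leaf edge, or if leafless delete any edge — the result stays connected or splits into pieces each of which we keep the one containing the marked structure; more simply, peeling a leaf edge keeps connectivity and only decreases $|E|$) and reduces to itself (Remark~\ref{remark:peelable} combined with the fact that a connected subgraph containing a prescribed small edge set $E^\ast$ with $|E^\ast| \le 2c$ and at most $K\log n$ edges still has property $\Csmall(K,d)$), the hypotheses of Lemma~\ref{lem:fail_prob} are met.

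Applying Lemma~\ref{lem:fail_prob} then yields
\begin{align*}
\Pr\left(B^{\MCWT{k,t}}_S\right) \leq \Pr\left(B^{\Csmall(K,d)}_S\right) \leq \ell^{-c} \cdot \sum_{t'=2}^{n} t'^{2c} \cdot \mu^{\Csmall(K,d)}_{t'},
\end{align*}
where I write $t'$ for the summation index to avoid clash with the round parameter $t$. The main work is bounding $\mu^{\Csmall(K,d)}_{t'}$, the expected number of connected $d$-partite hypergraphs with exactly $t'$ edges in the fully random model, and then showing the whole sum is $O(n^{K(d+1)\log d + 2})$. For the counting: a connected $d$-uniform hypergraph with $t'$ edges has at most $(d-1)t' + 1$ vertices (the hypertree bound, with equality iff it is a hypertree, and fewer otherwise). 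The number of \emph{unlabeled} connected $d$-uniform multigraphs on $t'$ edges is at most something like $t'^{O(t')}$ (or one can build it edge by edge: there are roughly $(d \cdot (\text{current vertex count}))^{d}$ ways to attach each new edge, giving a crude bound of the form $(d t')^{d t'}$), there are at most $d^{(d-1)t'+1}$ ways to assign vertices to the $d$ partition classes, at most $(n/d)^{(d-1)t'+1}$ ways to label vertices with elements of $[n/d]$, and at most $n^{t'}$ ways to label the $t'$ edges with distinct keys; finally the probability that $t'$ uniformly random edges from $[n/d]^d$ realize a fixed fully-labeled such hypergraph is $(d/n)^{d t'}$. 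Multiplying, the $n$-powers combine to roughly $n^{(d-1)t'+1} \cdot n^{t'} / n^{dt'} = n$, so $\mu^{\Csmall(K,d)}_{t'}$ is at most $n \cdot t'^{O(t')} \cdot (\text{constant}^{t'})$, and the factor $(n/d)^{\cdots}$ versus $(d/n)^{dt'}$ leaves no decaying geometric term — this is the crucial difference from the sparse settings in earlier sections.

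The hard part — and the reason the bound in the lemma is only polynomial in $n$ rather than $o(1)$ — is precisely that in the load-balancing regime $m = n/d$, so the hypergraph $G(S,\vec h)$ has exactly as many vertices as edges and sits at the critical density; there is no $(1+\varepsilon)^{-t'}$ damping, and the term $t'^{O(t')}$ from counting unlabeled connected multigraphs genuinely grows. Since the summation index $t'$ only ranges up to $K\log n$ (this is what $\Csmall(K,d)$ enforces — contributions from $t' > K\log n$ are zero), the factor $t'^{O(t')}$ evaluated at $t' = K\log n$ is $(\log n)^{O(\log n)} = n^{O(\log\log n)}$, which is $n^{o(1)}$ if one is careless, but a tighter accounting of the constants in the exponent is what produces the stated $n^{K(d+1)\log d}$: the dominant contribution is from the largest terms $t' \approx K\log n$, where the vertex-labeling count $(n/d)^{(d-1)t'+1}$ against the realization probability $(d/n)^{dt'}$ and the edge-labeling count $n^{t'}$ gives $n^{t' \cdot((d-1)+1-d)} \cdot (\text{stuff}) = n \cdot (\text{stuff})$, while the $d^{(d-1)t'}$ partition-assignment factor contributes $d^{(d-1)t'} \le d^{(d+1)t'} = n^{(d+1)\log_n d \cdot t'}$... tracking this at $t' = K\log n$ yields the $n^{K(d+1)\log d}$ factor, the extra $t'^{2c}$ from the hash-class boost contributes only a polylogarithmic factor absorbed into the $+2$ in the exponent (together with the leading $n$ from the vertex/edge count mismatch), and dividing by $\ell^c = n^{c/2}$ gives the final $O(n^{K(d+1)\log d+2}/\ell^c)$. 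I would carry out this bookkeeping carefully, isolating the geometric-in-$t'$ part (which is dominated by its largest term since $t' \le K\log n$) from the constant-factor parts, to arrive at the claimed expression.
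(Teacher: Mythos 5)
There is a genuine gap, and you have in fact put your finger on it yourself without resolving it: taking $\mathsf{C}=\Csmall(K,d)$ forces you to bound $\mu^{\Csmall(K,d)}_{t'}$, and the count of \emph{unlabeled} connected multi-hypergraphs with $t'$ edges really is of order $t'^{\Theta(t')}$ (the cyclomatic number and the number of leaves are unbounded, so Lemma~\ref{lem:num_graphs} only gives $t'^{O(t')}$ after summing over them). At $t'=K\log n$ this is $(\log n)^{\Theta(\log n)}=n^{\Theta(\log\log n)}$, which is \emph{superpolynomial} in $n$ --- it is not ``$n^{o(1)}$ if one is careless,'' and no tighter accounting of constants absorbs it into $n^{K(d+1)\log d+2}$. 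So the sum you write down does not evaluate to the claimed bound, and the proof as proposed does not go through. This is exactly the situation the reducibility mechanism of Definition~\ref{def:reducible:gen} was designed for: its whole purpose is to shrink the family of graphs whose expected count must be summed.

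The paper's proof keeps $\mathsf{B}=\Csmall(K,d)$ as the peelable property but reduces it to the auxiliary property $\HT(K,d,2c)$: hypergraphs whose bipartite representation is a \emph{tree with at most $2c$ leaf edges} (and leaves only on the job side). The reduction (Lemma~\ref{lem:csmall_reducibility}) takes a spanning tree of $\mathrm{bi}(G)$ and strips off all leaves not corresponding to the $\le 2c$ marked edges. Because the number of leaf edges is now a constant and the cyclomatic number is $0$, Lemma~\ref{lem:num_graphs} gives only $z^{O(c)}=z^{O(1)}$ unlabeled shapes with $z$ edges, so $\mu^{\HT(K,d,2c)}_{t'}\le t'^{O(1)}\cdot n\cdot d^{(d+1)t'}$ (Lemma~\ref{lem:prob_t(c,d,2k)}); summing $t'^{2c}\mu^{\HT(K,d,2c)}_{t'}$ over $t'\le K\log n$ then yields $O(n^{K(d+1)\log d+2}/\ell^c)$, with the $d^{(d+1)t'}$ factor at $t'=K\log n$ producing the exponent $K(d+1)\log d$ exactly as you anticipated. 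Your containment $\MCWT{k,t}\subseteq\Csmall(K,d)$ and the peelability of $\Csmall(K,d)$ are fine; the missing step is replacing ``reduce to itself'' by the reduction to bounded-leaf trees.
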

For proving this bound, we need the following
auxiliary hypergraph property. Note that some of the considered graphs are not $d$-uniform. 

\begin{definition}
Let $K > 0$,  $\ell \geq 1$, and $d \geq 2$ be constants. Let $n \geq 1$ be
given.  Then $\HT(K, d,\ell)$ (``hypertree'') is the set of all $d$-partite hypergraphs
$G=(V,E)$ in $\GG^d_{n/d, n}$ with $|E| \leq K \log n$ for which
$\textnormal{\textrm{bi}}(G)$ (disregarding isolated vertices) is a tree, has at most $\ell$ leaf edges and
has leaves only on the left (job) side.

\end{definition}
We will now establish the following connection between
$\Csmall(K,d)$ and $\HT(K,d,\ell)$.

\begin{lemma}
Let $K > 0$, $d \geq 2$ and $c \geq 1$ be constants. Then $\Csmall(K,d)$ is
$\HT(K,d,2c)$-$2c$-reducible, \emph{cf.} Definition~\ref{def:reducible:gen}.
\label{lem:csmall_reducibility}
\end{lemma}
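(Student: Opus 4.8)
The plan is to follow the same three-stage strategy used in the proof of Lemma~\ref{lem:lcy:reducibility}, but adapted to the hypergraph (bipartite-representation) setting. Fix a connected $d$-partite hypergraph $G=(V,E)\in\Csmall(K,d)$ with $|E|\le K\log n$, and a set $E^\ast\subseteq E$ with $|E^\ast|\le 2c$; call the edges in $E^\ast$ \emph{marked}. We work entirely inside $\text{bi}(G)$, which is a connected bipartite graph with job vertices on the left and machine vertices on the right, where each edge of $G$ appears as a left vertex of degree $d$. The goal is to produce a subgraph $(V,E')$ with $(V,E')\in\HT(K,d,2c)$ such that every marked edge $e^\ast$ of $G$ contains some $e'\in E'$ with the same label; equivalently, in $\text{bi}(G)$ every marked left vertex survives (possibly with a smaller edge-neighborhood, but still of size $\ge 2$), and the surviving graph is a tree with at most $2c$ leaf edges, all of whose leaves are job vertices.

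First I would reduce to a tree: since $G$ is connected, $\text{bi}(G)$ is connected; apply Claim~\ref{claim:leafless:reduce}-style reasoning — more precisely, the spanning-tree / fundamental-cycle argument from the proof of Claim~\ref{lem:cyclic_cyclo_peeling} — to extract from $\text{bi}(G)$ a connected subgraph that contains all $\le 2c$ marked left vertices (each still incident to $\ge 2$ of its original edges, so representable as an edge of size $\ge 2$ in $G$), but whose cyclomatic number is bounded. Actually, here we want a genuine tree, so the cleaner route is: take a spanning tree $T_0$ of $\text{bi}(G)$, then for each marked left vertex $v$ that lost edges when passing to $T_0$, note $v$ retains $\deg_{T_0}(v)\ge 1$ edges — but we need $\ge 2$; so instead add back, for each marked $v$, one extra incident edge $e_v\notin T_0$, obtaining $T_0\cup\{e_v : v\text{ marked}\}$ with cyclomatic number $\le 2c$, and then run the fundamental-cycle bookkeeping exactly as in Claim~\ref{lem:cyclic_cyclo_peeling} to kill all cycles while keeping every marked vertex on a cycle-or-path structure, finally peeling to a tree $T$. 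Second, once we have a tree $T$ containing all marked left vertices with degree $\ge 2$, we must control the number and location of leaves. Repeatedly delete any leaf of $T$ that is \emph{not} forced — i.e., delete leaf \emph{machine} vertices freely, and delete leaf \emph{job} (left) vertices that are unmarked — together with their incident edge; also, if deleting a degree-$d$ left vertex would leave it with degree $\ge 2$ we may instead just trim one incident edge. This shrinks $T$ while keeping it a tree, keeping every marked left vertex present with degree $\ge 2$, and ensuring that in the end every leaf is either a machine vertex forced by connectivity, or a marked job vertex. Since there are $\le 2c$ marked vertices, a short argument bounds the number of job-side leaves, and since $\text{bi}(\cdot)$ of a hypertree has leaves only on the job side in $\HT$'s definition we must be a little careful: a machine leaf would violate the property, so we keep peeling machine leaves until none remain — this terminates because the graph is finite and peeling a machine leaf can only create a new leaf that is a job vertex or reduces a job vertex's degree. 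The resulting tree has all leaves on the job side, at most $2c$ leaf edges (each leaf job vertex being marked, or bounded in number by the structure forced by the $\le 2c$ marked edges), $|E|\le K\log n$ edges inherited from $G$, and every marked edge represented — so it lies in $\HT(K,d,2c)$.

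The main obstacle I anticipate is the bookkeeping in the cycle-killing step: unlike the graph case of Claim~\ref{lem:cyclic_cyclo_peeling}, here the "marked edges" are marked \emph{left vertices} of $\text{bi}(G)$, and when we trim a marked hyperedge down to $\ge 2$ vertices we are free to choose \emph{which} $2$ of its $d$ incident machine vertices to keep — this freedom is exactly what lets us route marked vertices onto the surviving tree, but it makes the invariant ("every marked edge still contains a same-labeled surviving edge of size $\ge 2$") slightly delicate to maintain through all three stages simultaneously. A secondary subtlety is ensuring the final tree genuinely has \emph{no} machine leaves while still having $\le 2c$ job leaves; one must check that the machine-leaf peeling does not blow up the job-leaf count, which follows because removing a machine leaf decreases the degree of exactly one job vertex, hence can create at most the job leaves already "charged" to marked vertices. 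Modulo this care, the argument is a direct hypergraph adaptation of Lemma~\ref{lem:lcy:reducibility} together with Claim~\ref{lem:cyclic_cyclo_peeling}, and the edge-count bound $|E'|\le|E|\le K\log n$ is immediate since we only ever delete edges or shrink them.
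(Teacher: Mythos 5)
Your plan contains the paper's entire proof as its \emph{second} stage, but the first stage rests on a misreading of the target and would in fact fail. You try to guarantee that every marked job vertex keeps degree $\ge 2$ in the final tree (``still of size $\ge 2$''), and to achieve this you add back one extra edge per marked vertex on top of a spanning tree and then invoke the fundamental-cycle machinery of Claim~\ref{claim:leafless:reduce} to destroy the resulting cycles. This is not required: Definition~\ref{def:reducible:gen}, unlike Definition~\ref{def:peelability:gen}, imposes no lower bound on $|e'|$, and $\HT(K,d,2c)$ \emph{explicitly} places its leaves on the job side --- i.e., a marked hyperedge is allowed to shrink to a single machine vertex. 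Worse, the requirement is unachievable in general: if two hyperedges with the same two machine vertices are both marked (for $d=2$ this is a $4$-cycle in $\mathrm{bi}(G)$), no tree can give both of them degree $\ge 2$. And the detour is self-defeating anyway --- any edges you add beyond a spanning tree must be removed again to kill the cycles you created. Finally, if all marked job vertices really did have degree $\ge 2$, your machine-leaf peeling would never reach a fixed point short of the empty graph, since a nonempty tree always has a leaf.

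The paper's proof is exactly your trimming step, with nothing in front of it: take any spanning tree $T$ of $\mathrm{bi}(G)$ (which exists since $G\in\Csmall(K,d)$ is connected), and repeatedly delete leaf vertices together with their incident edge as long as the leaf is not a job vertex corresponding to an edge of $E^\ast$. The process terminates with a tree whose every leaf is a marked job vertex; hence there are at most $|E^\ast|\le 2c$ leaf edges, all on the job side, the edge count is inherited from $G$, and each marked hyperedge survives as a same-labeled sub-edge (possibly a singleton). You should drop the cycle-augmentation stage entirely and, in the trimming stage, also delete \emph{unmarked} job leaves rather than only machine leaves, which is what bounds the leaf count by $2c$.
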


\begin{proof}
Assume $G = (V,E)\in \Csmall(K,d)$.
Arbitrarily choose $E^\ast \subseteq E$ with $|E^\ast| \leq 2c$. We have to show
that there exists an edge set $E'$ such that $(V,E') \in
\HT(K,d,2c)$, $(V, E')$ is a subgraph of $(V,E)$, and for each edge $e^\ast \in E^\ast$
there exists an edge $e' \in E'$ such that $e' \subseteq e^\ast$ and $e'$ and $e^\ast$
have the same label.

Identify an arbitrary spanning tree $T$ in bi$(G)$. Now repeatedly remove leaf
vertices with their incident edges, as long as these leaf vertices do not correspond to edges from
$E^\ast$.  Denote the resulting tree by $T'$. In the hypergraph representation,
$T'$ satisfies all 
properties from above. \qquad 
\end{proof}

From Lemma~\ref{lem:fail_prob} it follows that we have
$$\Pr\left(B^{\MCWT{k,t}}_S\right) \leq \Pr\left(B^{\Csmall(K,d)}_S\right) \leq \ell^{-c} \cdot \sum_{t = 2}^{n}
t^{2c} \mu^{\HT(K,d,2c)}_t.$$

\begin{lemma}
Let $K > 0, d \geq 2,$ and $c \geq 1$ be constants. If $t \leq K
\cdot \log n,$ then
\begin{align*}
\mu^{\HT(K,d,2c)}_t \leq t^{O(1)}\cdot n \cdot d^{(d + 1)t}. 
\end{align*}
For $t > K \log n$ it holds that $\mu^{\HT(K,d,2c)}_t = 0$.
\label{lem:prob_t(c,d,2k)}
\end{lemma}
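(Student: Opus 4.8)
The plan is to bound $\mu^{\HT(K,d,2c)}_t$ by a direct first-moment counting argument over labeled hypertrees, exactly parallel to the counting already performed in Lemma~\ref{lem:khosla:fully:random} and in the proof of Corollary~\ref{cor:connected:components}. Recall that a graph $G$ contributing to $\mu^{\HT(K,d,2c)}_t$ has exactly $t$ edges, and $\textnormal{\textrm{bi}}(G)$ is a tree with leaves only on the job side and at most $2c$ leaf edges. The trivial observation that kills the $t > K\log n$ case is that membership in $\HT(K,d,2c)$ requires $|E| \le K\log n$; so I would dispose of that first in one sentence. For $t \le K\log n$, I would count as follows: first count unlabeled shapes, then count ways to label vertices by $[n/d]$ within each of the $d$ groups, then ways to label the $t$ edges by $\{1,\dots,n\}$, then multiply by the probability $1/m^{?}$ that fully random hash values realize a fixed fully-labeled graph.

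The key structural point is the vertex count. Since $\textnormal{\textrm{bi}}(G)$ is a tree with $t$ job-vertices (one per edge of the hypergraph) and some number of machine-vertices, and each hyperedge has exactly $d$ endpoints (so $\textnormal{\textrm{bi}}(G)$ has $dt$ edges on the bipartite side), a tree with $dt$ edges has $dt+1$ vertices, hence the number of machine-vertices is $dt+1-t = (d-1)t+1$. So $G$ has $(d-1)t+1$ (non-isolated) machine-vertices. The number of unlabeled hypertree shapes with $t$ edges, $O(1)$ leaves, and bounded cyclomatic number ($\gamma = 0$ here, one component) is $t^{O(c)} = t^{O(1)}$ by Lemma~\ref{lem:num_graphs}. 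Once a shape is fixed, assigning the $(d-1)t+1$ machine-vertices to machine labels: each machine-vertex lies in one of the $d$ groups, there are at most $d^{(d-1)t+1}$ ways to decide group membership consistently with the $d$-partite structure, and then at most $(n/d)^{(d-1)t+1}$ ways to pick actual labels in $[n/d]$. Edge labels: at most $n^t$ ways. The realization probability for a fully labeled $G$ under fully random $\vec h$ is $1/m^{dt} = 1/(n/d)^{dt}$ since each of the $t$ edges independently hits a fixed $d$-tuple with probability $(n/d)^{-d}$. Multiplying:
$$
\mu^{\HT(K,d,2c)}_t \le t^{O(1)} \cdot d^{(d-1)t+1} \cdot (n/d)^{(d-1)t+1} \cdot n^t \cdot (n/d)^{-dt}.
$$
The powers of $(n/d)$ combine to $(n/d)^{(d-1)t+1-dt} = (n/d)^{1-t}$, which together with $n^t$ gives $n \cdot d^{t-1}$; folding in the $d^{(d-1)t+1}$ yields a clean $t^{O(1)}\cdot n \cdot d^{(d+1)t}$ bound (absorbing the constant $d$-powers and adjusting exponents slightly, which is where the ``$O(1)$'' and the precise exponent $(d+1)$ come from).

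The main obstacle I anticipate is purely bookkeeping: getting the exponent of $d$ to come out to exactly $(d+1)t$ rather than $(2d-1)t$ or similar, which requires being careful about whether isolated machine-vertices are being counted, whether the ``leaves only on the job side, at most $2c$ of them'' constraint actually reduces the shape count (it does, via Lemma~\ref{lem:num_graphs}, but one must check the hypothesis of that lemma applies with $\ell = 2c$, $\gamma = 0$, $\zeta = 1$), and whether the group-assignment factor should be $d^{(d-1)t+1}$ or can be tightened. I would also double-check that the bound is stated for \emph{all} $t \le K\log n$ including small $t$, where the $t^{O(1)}$ absorbs low-order terms harmlessly. Since the statement only asks for an upper bound of the stated shape (and the next lemma in the paper will presumably plug this into $\ell^{-c}\sum_t t^{2c}\mu^{\HT}_t$ and need the geometric-type decay to be dominated — here it is $n\cdot d^{(d+1)t}$, which is \emph{growing}, so the decay must come entirely from $\ell^{-c}$ with $\ell = n^{1/2}$ and the constraint $t \le K\log n$ making $d^{(d+1)t} \le n^{K(d+1)\log d}$, matching exactly the exponent in Lemma~\ref{lem:fail_prob_csmall}), the argument is tight and no slack is lost. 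No further ingredients beyond Lemma~\ref{lem:num_graphs} and elementary counting are needed.
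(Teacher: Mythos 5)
Your overall strategy --- count unlabeled shapes via Lemma~\ref{lem:num_graphs}, then vertex labels, then edge labels, then multiply by the realization probability --- is exactly the paper's, and your bookkeeping for the case you treat does land on the claimed bound. But there is one genuine gap: you assume every hyperedge of a graph in $\HT(K,d,2c)$ has exactly $d$ endpoints, so that $\textnormal{bi}(G)$ has exactly $dt$ edges and hence exactly $(d-1)t+1$ machine-vertices. That is false for this graph property, and the paper explicitly flags it (``Note that hypergraphs with property $\HT(K, d, 2c)$ may not be $d$-uniform''). The members of $\HT(K,d,2c)$ arise from the generalized reduction in Lemma~\ref{lem:csmall_reducibility}, which takes a spanning tree of $\textnormal{bi}(G)$ and strips leaf \emph{vertices}; stripping a machine-leaf deletes a vertex from a hyperedge, so the surviving hyperedges can have anywhere from $1$ to $d$ endpoints. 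Your vertex count, your group-assignment factor, and your realization probability $1/m^{dt}$ all silently depend on uniformity; for an edge retaining only $k<d$ of its vertices the realization probability is $(d/n)^{k}$, not $(d/n)^{d}$.

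The fix is mechanical and is what the paper does: introduce the number $z$ of edges of $\textnormal{bi}(G)$ as a parameter ranging over $\{t,\dots,dt\}$, so the tree has $z+1-t$ machine-vertices, at most $(n/d)^{z+1-t}$ vertex labelings, at most $d^{z}$ group assignments, and realization probability $(d/n)^{z}$; each term of the resulting sum collapses to $z^{O(1)}\cdot n\cdot d^{z+t-1}\le (dt)^{O(1)}\cdot n\cdot d^{(d+1)t}$, and summing the at most $dt$ values of $z$ is absorbed into $t^{O(1)}$. Your computation is the single summand $z=dt$. Everything else in your write-up --- the disposal of $t>K\log n$, the use of Lemma~\ref{lem:num_graphs} with $\gamma=0$, $\zeta=1$, $\ell=2c$, and the observation about how the bound feeds into Lemma~\ref{lem:fail_prob_csmall} --- is correct.
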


\emph{Proof}.
It is trivial that $\mu^{\HT(K,d,2c)}_t = 0$ for $t > K \log n$, since
a hypergraph with more than $K \log n$ edges contains too many edges to have property $\HT(K,d,2c)$.

Now suppose $t \leq K \log n$. We first count labeled hypergraphs having property $\HT(K, d, 2c)$
consisting of $t$ job vertices and 
$z$ edges, for some fixed $z \in \{t, \ldots, dt\}$, in the \emph{bipartite representation}.
(Note that hypergraphs with property $\HT(K, d, 2c)$ may not be $d$-uniform. Thus, in the bipartite
representation not all vertices on the left side have $d$ neighbors on the right side.) 

There are at most $z^{O(2c)} = z^{O(1)}$ unlabeled trees with $z$ edges and
at most $2c$ leaf edges (Lemma~\ref{lem:num_graphs}). Fix one such tree $T$. 
There are not more than
$n^t$ ways to label
the job vertices of $T$, and there are at most $d^z$ ways to assign each 
edge a label from $\{1, \ldots, d\}$. Once these labels are fixed, 
there are at most $(n/d)^{z + 1 - t}$ ways to assign the
right vertices to machines. Fix such a fully labeled tree $T'$.

Now draw $z$ hash values at random from $[n/d]$ and build a graph
according to these hash values and the labels of $T'$. 
The probability that these random choices realize $T'$ is exactly $1/(n/d)^z$.
Thus we may estimate:

\begin{align*}
	\mu^{\HT(K,d,2c)}_t &\leq \sum_{z = t}^{dt}\frac{(n/d)^{z + 1 -
	t} \cdot z^{O(1)} \cdot n^t \cdot d^z}{(n/d)^{z}}
        = \sum_{z = t}^{dt}z^{O(1)} \cdot n \cdot d^{z - 1 + t}\\
        &< dt \cdot (dt)^{O(1)} \cdot n \cdot d^{(d + 1)t}
        =t^{O(1)}\cdot n \cdot d^{(d + 1) t}.\qquad \endproof
    \end{align*}

We can now proceed with the proof our main lemma.

\emph{Proof of Lemma~\ref{lem:fail_prob_csmall}}.
By Lemma~\ref{lem:fail_prob}, we know that 
$$\Pr\left(B^{\MCWT{k,t}}_S\right) \leq \ell^{-c} \cdot \sum_{t = 2}^{n}
t^{2c} \mu^{\HT(K,d,2c)}_t.$$

Applying the result of Lemma~\ref{lem:prob_t(c,d,2k)}, we calculate
   \begin{align*}
       \Pr\left(B^{\MCWT{k,t}}_S\right)      &\leq \ell^{-c} \cdot \sum_{t = 2}^{K \log n}
       t^{2c}
        t^{O(1)}\cdot n \cdot d^{(d + 1) t}
       = n \cdot \ell^{-c} \cdot (K \log n)^{O(1)}
       \cdot d^{(d+1) K \log n}\\
       &= O(n^2 \cdot \ell^{-c})\cdot  d^{K(d+1) \log n} = 
       O(n^{K(d+1)\log d+2} \cdot \ell^{-c}).\qquad \endproof
   \end{align*}

\paragraph{Putting Everything Together}

The previous lemmas allow us to complete the proof of the main theorem.

\emph{Proof of Theorem~\ref{thm:parallel_schickinger}}.

    We plug the results of  Lemma~\ref{lem:fully_random_100} and Lemma~\ref{lem:fail_prob_csmall} into \eqref{hashing:eq:load:balancing:goal} and get
    $$\Pr\left(N^{\MCWT{k,t}}_S > 0\right) \leq O\left(\frac{n^{K (d + 1)\log d + 2}}{\ell^{c}}\right) + O\left(\frac{1}{n^{\alpha}}\right).$$
For the case of parallel arrival with $d \geq 2$ hash functions, we calculated that witness trees do
not have more than $\log n$ edges (Lemma~\ref{lem:witness_tree_size}). So, we set $K = 1$.
Setting $\ell = n^{1/2}$ and $c = 2 (2 + \alpha + (d + 1) \log d)$ finishes the
proof of the  theorem. \qquad \endproof

\paragraph{Discussion on Load Balancing} We remark that the graph property $\Csmall(K,d)$ provides a very general result
on the failure probability of $\RR$ on hypergraphs $G(S, \vec{h})$. It can be applied for 
all results from \cite{SchickingerS00}. We will exemplify this statement by discussing what needs to be 
done to show that $\RR$ works in the setting of Voecking's ``Always-Go-Left'' sequential allocation algorithm \cite{voecking}. 
By
specifying explicitly how to break ties (always allocate the 
job to the ``left-most'' machine), Voecking's algorithm decreases the 
maximum bin load (w.h.p.)
in sequential load balancing with $d \geq 2$ hash functions from $\ln \ln n/\ln d + O(1)$ (arbitrary
tie-breaking)
\cite{AzarBKU99} to
$\ln \ln n/(d \cdot \ln \Phi_d) + O(1)$, which is an exponential improvement in $d$.
Here $\Phi_d$ is defined as follows. Let $F_d(j) = 0$ for
$j \leq 0$ and $F_d(1) = 1$. For $j \geq 2$, $F_d(j) = \sum_{i = 1}^d F_d(j - i)$.
(This is a generalization of the Fibonacci numbers.) Then 
$\Phi_d = \lim_{j \to \infty} F_d(j)^{1/j}$.
It holds that $\Phi_d$ is a constant with $1.61 \leq \Phi_d \leq 2$, see \cite{voecking}.
(We refer to \cite[Section 5.2]{SchickingerS00} and \cite{voecking} for
details about the description of the algorithm.) In the unified witness tree approach
of Schickinger and Steger, the main difference between the analysis of parallel arrivals and the sequential algorithm of Voecking
is in the definition of the
witness tree. 
Here, the analysis in \cite{SchickingerS00} also assumes that 
the machines are split into $d$ groups of size $n/d$. This means that we can just re-use their 
analysis in the fully random case. For bounding the failure term of hash class $\RR$, we have to show 
that the witness trees in the case of Voecking's ``Go-Left'' algorithm (see \cite[Fig. 6]{SchickingerS00})
have at most $O(\log n)$ jobs, i.e., that they are contained in
small connected components. Otherwise, we cannot apply Lemma~\ref{lem:fail_prob_csmall}.  

According to \cite[Page 84]{SchickingerS00}, the number of
job vertices $j_\ell$ in a witness tree for a bin with load $\ell$ is bounded by 
\begin{align}
    \label{eq:wt:leaves}
j_\ell \leq 4 h_\ell + 1,
\end{align}
where $h_\ell$ is the number of leaves in the witness tree. Following Voecking \cite{voecking}, 
Schickinger and Steger show that 
setting $\ell$ as large as $\ln \ln n / (d \ln \Phi_d) + O(1)$ is sufficient to 
bound the expected number 
of witness trees by $O(n^{-\alpha})$. Such witness trees have only $O(\log n)$ many job nodes.
\begin{lemma}
Let $\alpha > 0$ and $\ell = \log_{\Phi_d} (4 \log n^\alpha)/d$. Then $j_\ell
\leq 33 \alpha \log n$.
\end{lemma}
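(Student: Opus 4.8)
The plan is to combine the given relation $j_\ell \le 4h_\ell + 1$ (inequality~\eqref{eq:wt:leaves}), where $h_\ell$ is the number of leaves of the Always-Go-Left witness tree for a bin of load $\ell$, with an exponential estimate for $h_\ell$, and then to plug in the stated value of $\ell$. Concretely, it suffices to establish a bound of the shape $h_\ell \le 2\,\Phi_d^{\,d\ell}$; everything after that is arithmetic.

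First, I would recall the recursive structure of Voecking's witness tree from \cite{voecking} (used in the unified framework of \cite[Section~5.2]{SchickingerS00}): unrolling the eviction history of a bin of load $\ell$ under the ``always go left'' rule, a hole at one level gives rise to holes in the $d$ groups lying to its left, so that the number of leaves obeys, up to a bounded shift of the index and an absolute multiplicative constant, the generalized Fibonacci recursion defining $F_d$. This yields $h_\ell \le c_0\, F_d(d\ell)$ for an absolute constant $c_0$.

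Second, I would prove the elementary inequality $F_d(j) \le \Phi_d^{\,j-1}$ for all integers $j$, by induction on $j$. For $j \le 1$ it is immediate ($F_d(j)=0$ for $j\le 0$, and $F_d(1)=1=\Phi_d^{0}$). For $j\ge 2$, using that $\Phi_d$ is the positive root of $x^{d}=x^{d-1}+\dots+x+1$, equivalently of $\sum_{i=1}^{d}x^{-i}=1$,
\[
F_d(j)=\sum_{i=1}^{d}F_d(j-i)\;\le\;\sum_{i=1}^{d}\Phi_d^{\,j-i-1}\;=\;\Phi_d^{\,j-1}\sum_{i=1}^{d}\Phi_d^{-i}\;=\;\Phi_d^{\,j-1},
\]
where the terms with $j-i\le 0$ contribute $0$ on the left and, harmlessly, a positive quantity on the right, and the remaining terms are handled by the induction hypothesis. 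Hence $h_\ell \le c_0\,\Phi_d^{\,d\ell-1} \le c_0\,\Phi_d^{\,d\ell}$.

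Third, substitute $\ell = \frac1d\log_{\Phi_d}\!\bigl(4\log n^{\alpha}\bigr)=\frac1d\log_{\Phi_d}\!\bigl(4\alpha\log n\bigr)$, so that $\Phi_d^{\,d\ell}=4\alpha\log n$. Then $h_\ell \le 4 c_0\,\alpha\log n$, and by \eqref{eq:wt:leaves} we get $j_\ell \le 16 c_0\,\alpha\log n + 1$; with the constant from Voecking's construction satisfying $c_0 \le 2$, and $n$ large enough that $\alpha\log n\ge 1$ (which we may assume, the bound being asymptotic in $n$), this is at most $33\,\alpha\log n$, as claimed. The main obstacle is the first step: verifying from Voecking's explicit construction the precise multiplicative constant $c_0$ and the index shift in $h_\ell \le c_0\,F_d(d\ell)$ — i.e., that a depth-$\ell$ left-most eviction tree has at most $2\,\Phi_d^{\,d\ell}$ leaves. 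Once that is pinned down, the Fibonacci bound and the substitution are routine.
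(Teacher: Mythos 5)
Your proof follows essentially the same route as the paper's: combine $j_\ell \le 4h_\ell + 1$ with the fact that $h_\ell$ equals $F_d(d\ell+1)$ (the paper simply cites this from \cite[Page 84]{SchickingerS00}, so the ``main obstacle'' you flag is resolved by citation rather than re-derivation), bound $F_d$ by a power of $\Phi_d$, and substitute $\Phi_d^{d\ell} = 4\alpha\log n$. Your self-contained induction $F_d(j)\le\Phi_d^{\,j-1}$, using that $\sum_{i=1}^{d}\Phi_d^{-i}=1$, is a slightly cleaner justification than the paper's appeal to the monotonicity of $F_d(j)^{1/j}$, but the decomposition and the final arithmetic are the same.
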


\begin{proof}
It holds $h_\ell = F_d(d \cdot \ell + 1)$, see \cite[Page 84]{SchickingerS00}, 
and $F_d(d \cdot \ell + 1) \leq \Phi_d^{d \cdot \ell + 1}$, since $F_d(j)^{1/j}$ is monotonically increasing. 
We obtain the bound  
\begin{align*}
j_\ell &\leq 4 \cdot h_\ell + 1 \leq 4 \cdot \Phi_d^{d \cdot \ell + 1} + 1 \leq
4\cdot \Phi_d^{\log_{\Phi_d}(4 \log n^\alpha) + 1} + 1\\ 
&= 16\cdot \Phi_d \cdot \alpha \log n + 1 \leq 33 \alpha \log n,
\end{align*}
using $\Phi_d \leq 2$ and assuming $\alpha \log n \geq 1.$
\end{proof}

Thus, we know that a witness tree in the setting of Voecking's algorithm is contained in a connected hypergraph with 
at most $33 \alpha \log n$ edges. Thus, we may apply Lemma~\ref{lem:fail_prob_csmall} in the same way as we did 
for parallel arrival. The result is that for given $\alpha > 0$ we can
choose $(h_1,\ldots,h_d) \in \RR^{c, d}_{\ell, n}$ with 
$\ell = n^\delta$, $0 < \delta < 1,$ and $c \geq (33\alpha(d+1) \log d + 2 + \alpha) / \delta$
and know that the maximum load is $(\ln \ln n) / (d \cdot \ln \Phi_d) + O(1)$ with probability $1 - O(1/n^\alpha)$. 
So, our general analysis using small connected hypergraphs makes it very easy to show that hash class $\RR$ suffices to run a specific algorithm with load guarantees.

When we are interested in making the parameters for setting up a 
hash function as small as possible, 
one should take care when bounding the constants in the logarithmic bound on
the number of edges in
the connected hypergraphs. (According to \cite{SchickingerS00}, \eqref{eq:wt:leaves}
can be improved by a more careful argumentation.) More promising
is a direct approach 
to witness trees,
as we did in the analysis of the algorithm of Eppstein \emph{et al.} in the previous subsection,
i.e., directly peeling the witness tree. 
Using such an approach,  Woelfel showed in \cite[Theorem 2.1 and its discussion]{Woelfel06}
that smaller parameters for the hash functions $\RR$ are sufficient
to run Voecking's algorithm.

\paragraph{Application to Generalized Cuckoo Hashing} We further remark that
the analysis of the $\tau$-collision protocol makes 
it possible to
analyze the space utilization of generalized cuckoo hashing using $d \geq 2$
hash functions and buckets which hold up to  $\kappa \geq 2$ keys in each table
cell, as proposed by Dietzfelbinger and Weidling in \cite{DietzfelbingerW07}.
Obviously, a suitable assignment of keys to table cells is equivalent to
a $\kappa$-orientation of $G(S, \vec{h})$.  It is well-known that any graph
that has an empty $(\kappa + 1)$-core, i.e., that has no subgraph in which all
vertices have degree at least $\kappa + 1$, has a $\kappa$-orientation, see,
e.g., \cite{DevroyeM09} and the references therein. The $(\kappa + 1)$-core of a graph can be
obtained by repeatedly removing vertices with degree at most $\kappa$ and their
incident hyperedges. The precise study of this process is due to Molloy
\cite{Molloy05}. The $\tau$-collision protocol is the parallel variant of this
process, where in each round all vertices with degree at most $\tau$ are removed
with their incident edges. (In the fully random case, properties of this
process were recently studied by Jiang, Mitzenmacher, and Thaler in \cite{JiangMT14}.) In terms of 
orientability, Theorem~\ref{thm:parallel_schickinger} with the exact parameter choices from 
Lemma~\ref{lem:fully_random_100} shows that for $\tau = \max\{e^\beta, d^{d+1}e^d + 1, 2k + 1\}$ 
there exists (w.h.p.) an assignment of the $n$ keys 
to $n$ memory cells when each cell can hold $\tau$ keys. (This is equivalent to a hash table
load of $1/\tau$.) It is open to find good space bounds for generalized cuckoo hashing
using this approach. However, we think that it suffers from the same general problem as the analysis for generalized 
cuckoo hashing with $d \geq 3$ hash functions and one key per table cell:
Since the analysis
builds upon a process which requires an empty $(\kappa + 1)$-core in the hypergraph
to succeed, space utilization seems to decrease for $d$ and $\kappa$ getting larger. Table~\ref{tab:core:thresholds}
contains space utilization bounds for static generalized cuckoo hashing with $d \geq 3$ hash functions and $\kappa$ 
elements per table cell when the assignment is obtained via a process that requires the $(\kappa + 1)$-core to be empty.
These calculations clearly support the conjecture that space utilization decreases for larger values of $d$ and $\kappa$. 

\begin{table}[t!]
    \centering
    \begin{tabular}{c | c | c| c | c | c | c }
        $_{\kappa + 1}\backslash^d$ & $3$ & $4$ & $5$ & $6$ & $7$ & $8$ \\ \hline
        $2$ & $0.818$ & $0.772$ & $0.702$ & $0.637$ & $0.582$ & $0.535$ \\ \hline
        $3$ & $0.776$ & $0.667$ & $0.579$ & $0.511$ & $0.457$ & $0.414$ \\ \hline 
        $4$ & $0.725$ & $0.604$ & $0.515$ & $0.450$ & $0.399$ & $0.359$ \\ \hline
        $5$ & $0.687$ & $0.562$ & $0.476$ & $0.412$ & $0.364$ & $0.327$ \\ \hline
        $6$ & $0.658$ & $0.533$ & $0.448$ & $0.387$ & $0.341$ & $0.305$ \\
    \end{tabular}
    \caption{Space utilization thresholds for generalized cuckoo hashing with $d
    \geq 3$ hash functions and $\kappa  + 1$ keys per cell, for $\kappa \geq 1$, based on the
    non-existence of the $(\kappa + 1)$-core. Each table cell gives the maximal
    space utilization achievable for the specific pair $(d, \kappa + 1)$. These
    values have been obtained using Maple\textsuperscript{\textregistered} to evaluate the formula from Theorem~$1$ of
    \cite{Molloy05}. }
    \label{tab:core:thresholds}
\end{table}

\section{A Generalized Version of the Hash
Class}\label{hashing:sec:generalization}
In this short section we present a generalized version of our hash class that
uses arbitrary $\kappa$-wise independent hash classes as building blocks.

\subsection{The Generalized Hash Class}
The following definition is a generalization of Definition~\ref{def:family:R} to functions
with higher degrees of independence than two.
\begin{definition}
    Let $c\ge1$, $d\ge2$, and $\kappa \geq 2$.  For integers $m$, $\ell\ge 1$, and given
    $f_1,\ldots,f_d\colon U\to [m]$, $g_1,\ldots,g_c\colon U \to [\ell]$, and
    $d$ two-dimensional tables $z^{(i)}[1..c,0..\ell-1]$ with elements from $[m]$ for
    $i\in\{1,\ldots,d\}$, we let
    $\vec{h} = (h_1,\ldots,h_d) = (h_1,\ldots,h_d)\langle
    f_1, \ldots, f_d,g_1,\ldots,g_c,\allowbreak z^{(1)},\ldots, z^{(d)}\rangle$,
    where $$ {h_i(x) = \Bigl(f_i(x) +\sum_{1\le j \le c}
    z^{(i)}[j, g_j(x)]\Bigr) \bmod m\text{, for }x\in U, i\in\{1,\ldots,d\}.} $$ 

    \smallskip
    \noindent Let $\HH^{\kappa}_m$ [$\HH^{\kappa}_\ell$] be an arbitrary $\kappa$-wise independent hash class with
    functions from $U$ to $[m]$ [from $U$ to $[\ell]$].
    Then $\RR^{c,d,\kappa}_{\ell,m}(\HH^\kappa_\ell, \HH^\kappa_m)$ is the class of
    all sequences $(h_1,\ldots,h_d)\langle f_1, \allowbreak\ldots, f_d,
    g_1, \ldots, g_c,\allowbreak z^{(1)}, \ldots, z^{(d)}\rangle$ for $f_i \in \HH^\kappa_m$ with $1 \leq i \leq d$ and
    $g_j \in \HH^\kappa_\ell$ with $1 \leq j \leq c$.
    \label{def:family:R:gen}
\end{definition}

We consider $\RR^{c,d,2k}_{\ell,m}\left(\HH^{2k}_\ell,
\HH^{2k}_m\right)$ for some fixed $k \in \mathbb{N}, k \geq 1$. For the
parameters $d = 2$ and $c = 1$, this is the hash class used by Dietzfelbinger
and Woelfel in \cite{DW2003a}. We first analyze the properties of this hash
class by stating a definition similar to Definition~\ref{def:T:bad} and a lemma
similar to Lemma~\ref{lem:random}. 
We hope that comparing the proofs of
Lemma~\ref{lem:random} and Lemma~\ref{lem:random:gen} shows the (relative)
simplicity of the original analysis.
\begin{definition}
For $T \subseteq U$, define the random variable 
$d_T$, the ``deficiency'' of $\vec{h} = (h_1,\ldots,h_d)$ with respect to $T$, by 
$d_T(\vec{h})=|T| - \max\{k, |g_1(T)|,\ldots, |g_c(T)|\}$.
\emph{(}Note\emph{:} $d_T$ depends only on the $g_j$-components of
$(h_1,\ldots,h_d)$.\emph{)} 
Further, define
\begin{enumerate}
    \item[\textnormal{(i)}] \mbox{\emph{$\text{bad}_T$} as the event that $d_T >
	k$\emph{;}}
    \item[{\textnormal{(ii)}}] \emph{$\text{good}_T$}{} as \emph{$\overline{\text{bad}_T}$}, 
	           i.e., the event that $d_T \le k$\emph{;}
    \item[{\textnormal{(iii)}}] \emph{$\text{crit}_T$} as the event that $d_T =
	k$.
\end{enumerate}
Hash function sequences $(h_1,\ldots,h_d)$ in these events are called 
\emph{``$T$-bad''}, \emph{``$T$-good''}, and \emph{``$T$-critical''}, resp.
\label{def:T:bad:gen}
\end{definition}

The following lemma is identical to \cite[Lemma 1]{AumullerDW14}.
\begin{lemma}
  Assume $d\ge2$, $c\ge 1,$ and $k\ge 1$. 
  For $T\subseteq U$, the following holds\emph{:}
  \begin{enumerate}
      \item[\textnormal{(a)}] \emph{$\Pr(\text{bad}_T\cup\text{crit}_T) \le
	  \bigl(\left|T\right|^{2}/\ell\bigr)^{ck}$}.
      \item[\textnormal{(b)}] Conditioned on \emph{$\text{good}_T$} \emph{(}or on \emph{$\text{crit}_T$}\emph{)},
  the hash values  $(h_1(x), \ldots, h_d(x))$, $x\in T$, are 
  distributed uniformly and independently in $[r]^d$.
  \end{enumerate}
\label{lem:random:gen}
\end{lemma}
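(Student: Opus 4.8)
The plan is to mimic the proof of Lemma~\ref{lem:random} but with the threshold $k$ replaced by $1$ throughout, and exploiting $\kappa = 2k$-wise independence where the original used $2$-wise independence. The key observation is the following: if $d_T(\vec{h}) \le k$, i.e., some $g_{j_0}$-component satisfies $|g_{j_0}(T)| \ge |T| - k$, then the number of keys of $T$ involved in collisions under $g_{j_0}$ is small. More precisely, if $|g_{j_0}(T)| \ge |T|-k$, then there is a subset $T_0 \subseteq T$ with $|T_0| \le 2k$ such that $g_{j_0}$ is injective on $T \setminus T_0$ and $g_{j_0}(T \setminus T_0) \cap g_{j_0}(T_0) = \emptyset$; in fact one can choose $T_0$ to consist of all keys that share a $g_{j_0}$-value with some other key of $T$, and a counting argument (each ``excess collision'' merges at least two keys into one bucket, and there are at most $k$ such merges) bounds $|T_0| \le 2k$.

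For part (b): First I would handle the trivial case $|T| \le 2k$, where full randomness on $T$ follows immediately because $f_1, \dots, f_d$ are drawn independently from a $\kappa = 2k$-wise independent class, so the vectors $(f_i(x))_{x \in T}$ are already uniform and independent for each $i$, and hence so are the $h_i$. For $|T| > 2k$, fix a $g$-part realizing $\mathrm{crit}_T$ (the argument for $\mathrm{good}_T$ is analogous), pick $j_0$ with $|g_{j_0}(T)| = |T| - k$, and let $T_0$ be the set of colliding keys as above, so $|T_0| \le 2k$. Fix all table entries $z^{(i)}[j, \cdot]$ for $j \ne j_0$ and fix $z^{(i)}[j_0, g_{j_0}(x)]$ for $x \in T_0$. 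Then on $T_0$ the hash values $(h_1(x), \dots, h_d(x))_{x \in T_0}$ are fully random because $|T_0| \le 2k = \kappa$ and the $f_i$ are $\kappa$-wise independent; and for each $x' \in T \setminus T_0$ the cell $z^{(i)}[j_0, g_{j_0}(x')]$ is still unfixed and distinct across such $x'$, so those hash values are uniform and independent of everything else. This gives full randomness on all of $T$.

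For part (a): Suppose $\mathrm{bad}_T \cup \mathrm{crit}_T$ holds, i.e., $d_T(\vec{h}) \ge k$, meaning $|g_i(T)| \le |T| - k$ for every $i \in \{1, \dots, c\}$. For a single fixed $g_i$ drawn from a $\kappa$-wise (hence in particular $2$-universal, as $\kappa \ge 2$) independent class, I would bound $\Pr(|g_i(T)| \le |T| - k)$. The event that $g_i$ loses at least $k$ values on $T$ implies there exist $k$ distinct ``witnessing'' pairs giving a suitable collision structure; the clean way (matching the claimed bound $(|T|^2/\ell)^{ck}$) is to say that this event implies there is a set of $k$ pairwise-disjoint pairs $\{x_1,y_1\}, \dots, \{x_k, y_k\}$ from $T$ with $g_i(x_j) = g_i(y_j)$ for all $j$ --- one such family exists by iteratively extracting collisions from the multiset of colliding buckets. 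Since $g_i$ is $2k$-wise independent, the $2k$ values $g_i(x_1), g_i(y_1), \dots, g_i(x_k), g_i(y_k)$ are fully independent and uniform, so the probability that $g_i(x_j) = g_i(y_j)$ for all $j$ simultaneously is exactly $1/\ell^k$. Union-bounding over the at most $\binom{|T|}{2}^k \le |T|^{2k}/\ell^k \cdot \ell^k \cdot \ell^{-k}$... more carefully, over at most $|T|^{2k}$ choices of the $k$ pairs, gives $\Pr(|g_i(T)| \le |T|-k) \le (|T|^2/\ell)^k$. Finally, since the $g_1, \dots, g_c$ are chosen independently, $\Pr(\bigwedge_{i=1}^c \{|g_i(T)| \le |T|-k\}) \le (|T|^2/\ell)^{ck}$, which is the claimed bound.

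The main obstacle will be the combinatorial bookkeeping in two places: (i) verifying that $d_T(\vec{h}) \le k$ implies the set $T_0$ of colliding keys has size at most $2k$ (equivalently, that a function losing $\le k$ values has at most $2k$ non-injectivity witnesses when collisions are counted with the right multiplicity), and (ii) verifying that losing $\ge k$ values yields $k$ \emph{disjoint} colliding pairs so that $\kappa = 2k$-wise independence can be invoked cleanly to get the exact factor $1/\ell^k$ per component. Both are elementary pigeonhole-style arguments --- each ``lost value'' corresponds to at least one merge of two distinct buckets into one --- but one must be careful that the disjointness of the extracted pairs is maintained; the safe route is to build a forest on $T$ whose edges are collision-witnessing pairs, take a spanning structure with exactly $|T| - |g_i(T)| \ge k$ edges, and extract a matching of size $\ge \lceil k/ \text{something}\rceil$ --- but actually since we only need \emph{some} collection of collision-forcing events whose joint probability is $\le 1/\ell^k$ and whose count over $T$ is $\le |T|^{2k}$, it suffices to pick $k$ edges from a spanning forest of the ``collision graph'' that form a path-free (and thus we may just take any $k$ edges and note their endpoints involve $\le 2k$ keys, and $2k$-wise independence of $g_i$ makes those $\le 2k$ hash values independent, so the probability all $k$ specified equalities hold is $\le \ell^{-k}$). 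I would present this latter, simpler counting rather than insisting on a matching.
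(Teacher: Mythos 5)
Your proposal is correct and follows essentially the approach the paper intends: Lemma~\ref{lem:random:gen} is stated without proof here (it is delegated to the cited reference), and the intended argument is exactly the generalization of the proof of Lemma~\ref{lem:random} that you carry out, with the $2k$-wise independent $f_i$ handling the at most $2k$ colliding keys in part (b) and the $2k$-wise independent $g_j$ giving the factor $(|T|^2/\ell)^k$ per component in part (a). Your care in extracting the $k$ collision pairs as edges of a spanning forest of the collision graph (so that the $k$ equalities jointly cost exactly $\ell^{-k}$) is precisely the right way to close the one genuine subtlety, and the bookkeeping $|T_0|\le 2k$ for part (b) is verified correctly.
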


\subsection{Application of the Hash Class} 

The central lemma to bound the impact of using our hash class in contrast to
fully random hash functions was Lemma~\ref{lem:fail_prob}. One can reprove this
lemma in an analogous way for the
generalized version of the hash class, using the probability bound from
Lemma~\ref{lem:random:gen}(a) to get the following result.

\begin{lemma}
    Let $c \geq 1$, $k \geq 1$, $S\subseteq U$ with $|S| = n,$ and let $\mathsf{A}$ be a graph property. 
    Let $\mathsf{B} \supseteq \mathsf{A}$ be a peelable graph property. Let
    $\mathsf{C}$ be a graph property such that $\mathsf{B}$ is
    $\mathsf{C}$-$2ck$-reducible. Then  
    $$\Pr\left(B^{\mathsf{A}}_S\right) \leq \Pr\left(B^{\mathsf{B}}_S\right) \leq
    \ell^{-ck} \sum_{t = 2k}^{n} t^{2ck} \cdot
   \mu^\mathsf{C}_t.$$
\label{lem:fail_prob:gen}
\end{lemma}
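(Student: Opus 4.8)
The strategy is to mimic the proof of Lemma~\ref{lem:fail_prob} verbatim, substituting the generalized deficiency from Definition~\ref{def:T:bad:gen} and the generalized probability bound from Lemma~\ref{lem:random:gen}(a) at the appropriate places. First I would invoke Lemma~\ref{lem:fail_superset} to reduce to bounding $\Pr(B^{\mathsf{B}}_S)$, so that we may work with the peelable property $\mathsf{B}$. Then, assuming $\vec{h}$ is such that $B^{\mathsf{B}}_S$ occurs, fix a subgraph $G\in\mathsf{B}$ of $G(S,\vec{h})$ with $d_{T(G)}(\vec{h})>k$. Peel edges one at a time; since each removal drops the deficiency by $0$ or $1$ and $d_\emptyset(\vec{h})=0$, we reach a subgraph $G'=(V,E')\in\mathsf{B}$ that is $T(G')$-critical, i.e. $d_{T(G')}(\vec{h})=k$.

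The key combinatorial point is to extract the right marked edge set. When $\vec{h}$ is $T(G')$-critical with the generalized deficiency, the condition is $|T(G')| - \max\{k,|g_1(T(G'))|,\ldots,|g_c(T(G'))|\}=k$, which means every $g_j$ has at least $|T(G')|-k$ collisions' worth of deficiency and at least one $g_{j_0}$ achieves exactly $\max\{k,|g_{j_0}(T(G'))|\}$; unpacking this, each $g_j$-component has at least $k$ ``collision-witnessing'' keys on $T(G')$ beyond injectivity, so for each $j$ one selects $2k$ keys exhibiting $k$ collisions (a collision structure forcing $|T(G')|-|g_j(T(G'))|\geq k$), giving $|E^\ast|\leq 2ck$. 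Since $\mathsf{B}$ is $\mathsf{C}$-$2ck$-reducible, there is $E''$ with $E^\ast\subseteq E''\subseteq E'$ and $G''=(V,E'')\in\mathsf{C}$ that preserves all the marked collisions, so $\vec{h}$ is still $T(G'')$-critical. Then, exactly as in Lemma~\ref{lem:fail_prob}, a union bound and conditioning give
\begin{align*}
\Pr\left(B^{\mathsf{B}}_S\right) &\leq \sum_{G''\in\mathsf{C}} \Pr\left(I_{G''}=1 \mid \text{crit}_{T(G'')}\right)\cdot\Pr\left(\text{crit}_{T(G'')}\right)\\
&\leq \ell^{-ck}\sum_{G''\in\mathsf{C}} {\Pr}^\ast\left(I_{G''}=1\right)\cdot |T(G'')|^{2ck}
= \ell^{-ck}\sum_{t=2k}^{n} t^{2ck}\cdot\mu^{\mathsf{C}}_t,
\end{align*}
where we used Lemma~\ref{lem:random:gen}(a) for $\Pr(\text{crit}_{T(G'')})\leq(|T(G'')|^2/\ell)^{ck}$ and part~(b) for the conditional probability. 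The lower summation index is $2k$ because a graph with fewer than $2k$ edges cannot support $k$ collisions in any single $g_j$-component, hence $\mu^{\mathsf{C}}_t$ contributes nothing for $t<2k$.

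The main obstacle I anticipate is the bookkeeping in the critical step: one must verify carefully that $T(G')$-criticality under the \emph{generalized} deficiency (which has the $\max\{k,\cdot\}$ truncation) still yields a marked edge set of size at most $2ck$ whose preservation keeps $G''$ critical — in particular that the $\max$ with $k$ does not spoil the ``exactly one collision'' bookkeeping for the distinguished component $g_{j_0}$. This is where the parameter $2ck$ (rather than $2c$) and the shift of the summation range enter, and it is the only genuinely new ingredient relative to the proof of Lemma~\ref{lem:fail_prob}; everything else is a transcription. I would state this reduction step as a small sub-claim and check it directly from Definition~\ref{def:T:bad:gen} before assembling the displayed chain above.
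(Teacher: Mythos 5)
Your route is the paper's own: the paper offers no separate argument for Lemma~\ref{lem:fail_prob:gen} beyond the remark that one reproves Lemma~\ref{lem:fail_prob} with Definition~\ref{def:T:bad:gen} and Lemma~\ref{lem:random:gen} substituted in, and your outline (peel to a critical subgraph, mark at most $2ck$ edges, reduce, union bound, split off $\Pr(\text{crit}_{T(G'')})$) is exactly that transcription. You have also correctly located the only genuinely new point, namely the interaction of the $\max\{k,\cdot\}$ truncation with the reduction step.

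But the sub-claim you defer does conceal a real issue, and your justification for the lower summation index is not the right one. Under Definition~\ref{def:T:bad:gen}, $\text{crit}_T$ means $|T|-\max\{k,|g_1(T)|,\dots,|g_c(T)|\}=k$, which forces $|T|\ge 2k$. After reduction, $T(G'')$ contains your marked keys, but a witness set $W_j$ certifying $g_j$-excess at least $k$ can consist of a single collision class of only $k+1$ keys, and the sets $W_1,\dots,W_c$ may coincide; so $|T(G'')|$ can be as small as $k+1<2k$. In that case $d_{T(G'')}=|T(G'')|-k<k$, so $\vec{h}$ is $T(G'')$-\emph{good} rather than critical, the inclusion into $\bigcup_{G''\in\mathsf{C}}\bigl(\{I_{G''}=1\}\cap\text{crit}_{T(G'')}\bigr)$ fails, and these are precisely the graphs excluded from the sum $\sum_{t\ge 2k}$. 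The fix is small but must be stated: pad $E^\ast$ with arbitrary additional edges of $G'$ (which has $|T(G')|\ge 2k$ by its criticality) until $|E^\ast|\ge 2k$; since $2k\le 2ck$ this respects the reducibility budget, every $g_j$ still has excess at least $k$ on $T(G'')$, the distinguished $g_{j_0}$ has excess exactly $k$ there by monotonicity between $W_{j_0}$ and $T(G')$ (or else $|T(G')|=2k$ and one takes $E^\ast=E'$), and now $|T(G'')|\ge 2k$ yields $d_{T(G'')}=k$ as required. Relatedly, the sum starts at $t=2k$ not because ``fewer than $2k$ edges cannot support $k$ collisions'' --- a single class of size $k+1$ already has excess $k$ --- but because $\text{crit}_T$ is impossible for $|T|<2k$ by the truncation, a fact you can only invoke after the padding step. (Minor: $d_\emptyset=-k$, not $0$, under Definition~\ref{def:T:bad:gen}; the peeling argument is unaffected since the deficiency still descends through the value $k$.)
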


\subsection{Discussion} One can now redo all the calculations from
Section~\ref{hashing:sec:applications:simple:graphs} and Section~\ref{hashing:sec:applications:hypergraphs}. We discuss the
differences. Looking at Lemma~\ref{lem:fail_prob:gen}, we notice the
$(t^{2ck})$-factor in the sum instead of $t^{2c}$. Since $k$ is fixed, this factor does not
change anything in the calculations that always used $t^{O(1)}$ (see, e.g., the
proof of Lemma~\ref{lem:failure_term_bound}). The factor $1/\ell^{ck}$ (instead
of $1/\ell^c$) leads to lower values for $c$ if $k \geq 2$. E.g., in cuckoo hashing
with a stash, we have to set $c \geq (s+2)/(\delta k)$ instead of $c \geq
(s+2)/\delta$. This improves the space usage, since we need less tables
filled with random values. However, the higher degree of independence needed for
the $f$- and $g$-components leads to a higher evaluation time of a single
function.

    \section{Conclusion and Open Questions}  We have
    described a general framework for analyzing hashing-based algorithms 
    and data structures whose
    analysis depends on properties of the random graph $G(S, \vec{h})$, where
    $\vec{h}$ comes from a certain class $\RR$ of simple hash functions. This class
    combined lookups in small random tables with the evaluation
    of simple $2$-universal or $2$-independent hash functions. 

    We developed a framework that allowed us to consider what happens to certain 
    hashing-based algorithms or data structures when fully random hash functions
    are replaced by hash functions from hash class $\RR$. If the analysis
    works using the so-called first-moment method in the fully random case, 
    the framework makes it possible to analyze the situation without exploiting details
    of the hash function construction. Thus, it requires no knowledge of the 
    hash function and only expertise in random graph theory.
    
    Using this framework we showed that hash functions from class $\RR$ can be used in such diverse applications
    as cuckoo hashing (with a stash), generalized cuckoo hashing, the simulation
    of uniform hash functions, the construction of a perfect hash function, and
    load balancing. Particular choices for the parameters to set
    up hash functions from $\RR$ provide hash functions that can be evaluated efficiently. 
    
    We collect some pointers for future work. Our method is tightly connected
    to the first moment method.
    Unfortunately, some properties of random graphs cannot be proven
    using this method. For example, the classical proof that the
    connected components of the random graph $G(S, h_1, h_2)$ for $m
    = (1+\varepsilon) |S|$, for $\varepsilon > 0$, with fully random
    hash functions have size $O(\log n)$
    uses a Galton-Watson process (see, e.g., \cite{random_graphs}).
    From previous work \cite{DietzfelbingerH90,DietzfelbingerH92} 
    we know that hash class $\RR$ has some classical properties 
    regarding the balls-into-bins game. In the hypergraph setting this
    translates to a degree distribution of the vertices 
    close to the fully random 
    case. It would be very interesting to see if such a framework is also possible for
        other hash function constructions such as \cite{PatrascuT12,CelisRSW13}.
	The analysis of generalized cuckoo hashing could succeed (asymptotically)
        using hash functions from $\RR$. For this, one has to extend the analysis
        of the behavior of $\RR$ on small connected hypergraphs to connected 
        hypergraphs with super-logarithmically many edges.
        Witness trees are another approach to tackle the analysis of
            generalized cuckoo hashing. 
            We presented initial results in Section~\ref{sec:load_balancing}.
            It is open whether this approach yields good bounds
            on the space utilization of generalized cuckoo hashing.
        In light of the new constructions of Thorup \cite{Thorup13} and
        Christiani, Pagh, and Thorup \cite{ChristianiPT15}, it would
        be interesting to see whether or not highly-independent hash classes
        with constant evaluation time are efficient in practice.  Moreover, it 
        would be nice to prove 
        that hash class $\RR$ allows running linear probing robustly
        or to show that it is
        $\varepsilon$-minwise independent (for suitable
        $\varepsilon$).
    
\bibliographystyle{siam}
\bibliography{lit}

\appendix

\section{Proof of Lemma~\ref{lem:fully_random_100}}
\label{app:schickinger:steger:proof}
\begin{proof}
    For the sake of the analysis, we regard $\MCWT{k,t}$ as the union of
    two graph properties $\MC{k,t}$, hypergraphs that form $(k,t)$-multicycles, 
    and $\WT{t-1}$, hypergraphs that form witness trees for the parameter
    $t-1$. We show the lemma by proving $\E^\ast\left(N^{\MC{k,t}}_S\right) =
    O(n^{-\alpha})$ and $\E^\ast\left(N^{\WT{t-1}}_S\right) = O(n^{-\alpha})$. 
    In both cases, we consider the bipartite representation of hypergraphs. 
    Our proofs follow \cite[Section~$4$]{SchickingerS00}.

    We start by bounding $\E^\ast\left(N^{\MC{k,t}}_S\right)$. As we have seen, a
    $(k,t)$-multicycle is a connected graph that has at most $2kt$ vertices and 
    cyclomatic number $k$. We start by counting
    $(k,t)$-multicycles with exactly $s$ vertices and $s+k-1$ edges, for 
    $s \leq 2kt$. In this case, we have to
    choose $j$ and $u$ (the number of jobs and machines, resp.) such that $s = j
    + u$. By Lemma~\ref{lem:num_graphs} there are at most $(s + k - 1)^{O(k)}$
    unlabeled $(k,t)$-multicycles. Fix such an unlabeled $(k,t)$-multicycle 
    $G$. There are two ways
    to split the vertices of $G$ into the two sides of the
    bipartition. ($G$ can be assumed to be bipartite since we consider $(k,t)$-multicycles
    that are subgraphs of the allocation graph.) Once this bipartition is fixed,
    we have $n^{j}$ ways to choose the job vertices and label vertices of $G$
    with these jobs. There are $d^{s+k - 1}$ ways to label the edges of $G$ with
    labels from $1,\dots,d,$ which represent the request modeled by an edge
    between a job vertex and a machine vertex. Once this labeling is fixed,
    there are $(n/d)^{u}$ ways to choose machine vertices and label the
    remaining vertices of $G$. Fix such a fully labeled graph $G'$.

    For each request $r$ of a job $w = 1, \ldots, j$, choose  a machine from
    $[n/d]$ at random and independently.
    The probability that this machine is the same machine that $w$ had chosen 
in $G'$ is $d/n$. Thus, the
    probability that $G'$ is realized by the random choices is $(d/n)^{s+k-1}$.
    By setting $k = \alpha + 2$ and using the parameter choice $t = O(\ln \ln n)$ we calculate
    \begin{align*}
	\E^\ast\left(N^{\MC{k,t}}_S\right) 
	&\leq \sum_{s = 1}^{2kt} \sum_{u + j = s} 2 \cdot n^{j} \cdot (n/d)^u \cdot d^{s+k-1}
    \cdot (s + k - 1)^{O(k)} \cdot (d/n)^{s + k - 1}\\
    &\leq n^{1 - k} \sum_{s = 1}^{2kt} 2s \cdot d^{2(s+k-1)} \cdot (s+k-1)^{O(1)}\\
    &\leq n^{1 - k} \cdot 2kt \cdot 4kt \cdot d^{2(2kt + k -1)} \cdot (2kt+k -
    1)^{O(1)}\\
    &\leq n^{1 - k} \cdot (\ln \ln n)^{O(1)} \cdot (\ln n)^{O(1)}
    = O(n^{ 2 - k}) = O(n^{-\alpha}).
    \end{align*}
Now we consider $\E^\ast\left(N^{\WT{t-1}}_S\right)$. 
    By the simple recursive structure of witness trees, a witness tree of depth $t-1$ has $j =
    \frac{\tau^{t-1}(d-1)^{t-2} - \tau}{\tau(d-1) - 1}$ job vertices and $u =
    \frac{\tau^{t-1}(d-1)^{t-1} -
    1}{\tau(d-1) - 1}$ machine vertices. Let $T$ be an unlabeled witness tree
    of depth $t - 1$. $T$ has 
    $r = d \cdot j$ edges. There are at most
    $n^j$ ways to choose $j$ jobs from $S$ and label the job vertices of $T$ and
    at most $d^r$ ways to label the edges with a label from
    $\{1,\dots,d\}$. Once this labeling is fixed, there are at most $(n/d)^u$ ways to choose
    the machines and label the machine vertices in the witness tree. With these rough estimates,
    we over-counted the number of witness trees by at least a factor of $(\tau!)^{j/\tau} \cdot
    ((d-1)!)^j$. (See Figure~\ref{fig:witness_tree}. For each job vertex, there are $(d-1)!$ 
    labelings which result in the same witness tree. Furthermore, for each non-leaf machine vertex, there
    are $\tau!$ many labelings which yield the same witness tree.)
    Fix such a fully
    labeled witness tree $T'$.  

    For each request of a job $w = 1, \ldots, j$ choose at random 
    a machine from $[n/d]$.
    The probability that the edge matches the edge in $T'$ is $d/n$. Thus, the
    probability that $T'$ is realized by the random choices is $(d/n)^{r}$. 
    We calculate
    \begin{align*}
        \E^\ast\left(N^{\WT{t-1}}_S\right) 
&\leq n^j \cdot d^r \cdot (n/d)^u \cdot \left(\frac{1}{\tau!}\right)^{j/\tau} \cdot
\left(\frac{1}{(d-1)!}\right)^j \cdot (d/n)^r\\
&\leq n \cdot d^{2r} \cdot \left(\frac{1}{\tau!}\right)^{j/\tau} \cdot
\left(\frac{1}{(d-1)!}\right)^j
\leq n \left[\frac{e}{\tau} \cdot \left(\frac{e}{d-1}\right)^{d-1}\cdot
d^{2d}\right]^j\\
&\leq n \left(\frac{e^d \cdot d^{d+1}}{\tau} \right)^j.
    \end{align*}
    Observe that
    \begin{align*}
    j = \frac{\tau^{t-1}(d-1)^{t-2} - \tau}{\tau(d-1) - 1} \geq
    \frac{\tau^{t-2}(d-1)^{t-2} - \tau}{d} \geq \frac{\left(\tau(d-1)\right)^{t-2}}{2d}.
    \end{align*}
    For the parameter settings assumed in the lemma we get
    \begin{align*}
	\E^\ast\left(N^{\WT{t-1}}_S\right) &\leq n \left(\frac{e^d \cdot d^{d+1}}{\tau}
	\right)^{\frac{\left(\tau(d-1)\right)^{t-2}}{2d}}
	= n \left(\frac{e^d \cdot d^{d+1}}{\tau}
	\right)^{\frac{\beta t \ln n}{2d \ln \ln n}}\\
	&\leq n \left(e^d \cdot d^{d+2} \cdot \left(\frac{\beta \ln \ln n}{t \ln n}\right)^{\frac{1}{t-2}}
	\right)^{\frac{\beta t \ln n}{2d \ln \ln n}}\\
	&\leq n \cdot \left(\left(e^d \cdot d^{d+2}\right)^t \left(\frac{\beta \ln \ln n}{t \ln n}\right)\right)^
	{\frac{\beta \ln n}{2d \ln \ln n}}\\
	&\leq  n \cdot \left(\left(e^d \cdot d^{d+2}\right)^{\frac{1}{\beta}\ln \ln n} \cdot \frac{1}{\ln n}\right)^
	{\frac{\beta \ln n}{2d \ln \ln n}}\\
	&\leq n^{3/2 + \ln d - \beta / 2d}.
    \end{align*}
    Setting $\beta = 2d (\alpha + \ln d + 3/2)$ 
    suffices to show that $\E^\ast(N^{\MCWT{k,t}}_S) = O(n^{-\alpha})$.\qquad
\end{proof}

\end{document}